\definecolor{shade}{HTML}{F6F6FF}
\newcommand\shademath[1]
\newlength{\mylength}
\newenvironment{frameqn}%
{\setlength{\fboxsep}{4pt}
\setlength{\mylength}{\linewidth}%
\addtolength{\mylength}{-2\fboxsep}%
\addtolength{\mylength}{-2\fboxrule}%
\Sbox
\minipage{\mylength}%
$$}%
{$$\endminipage\endSbox
{
\[\fbox{\TheSbox}\]
}}
\newenvironment{frametxt}%
{
\setlength{\fboxsep}{4pt}
\setlength{\mylength}{\linewidth}%
\addtolength{\mylength}{-2\fboxsep}%
\addtolength{\mylength}{-2\fboxrule}%
\Sbox
\minipage{\mylength}%
}%
{\endminipage\endSbox
{
\[\fbox{\TheSbox}\]
}}
\newcommand\XD{X_{\hspace{-.15ex}\scalebox{.6}{$D$}}}
\newcommand\YD{Y_{\hspace{-.25ex}\scalebox{.6}{$D$}}}
\newcommand\fa{\f{fa}}
\newcommand\fv{\f{fv}}
\newcommand\fU{\f{fU}}
\newcommand\bigact[0]{\raisebox{0.3ex}{\scalebox{.5}{$\hspace{1pt}\bullet$}}}
\newcommand\nontriv{\f{nontriv}}
\newcommand\dom{\f{dom}}
\newcommand\img{\f{img}}
\newcommand\nopi{{\scalebox{.6}{\sout{$\pi$}}}}
\newcommand\nopicent{\mathrel{\cent^{\hspace{-.95ex}\raisebox{.5pt}{\nopi}}}} 
\newcommand\nopiment{\mathrel{\ment^{\hspace{-.95ex}\raisebox{1pt}{\nopi}}}} 
\newcommand\holcent{\mathrel{\cent^{\hspace{-.95ex}\raisebox{.5pt}{\scalebox{.55}{$\lambda$}}}}}
\newcommand\holment{\mathrel{\ment^{\hspace{-.95ex}\raisebox{1.5pt}{\scalebox{.55}{$\lambda$}}}}}
\newcommand\ns[1]{\mathsf{#1}^{\scalebox{.5}{$\circlearrowright$}}}
\newcommand\rs[1]{\mathsf{#1}^{\scalebox{.5}{$\rightrightarrows$}}}
\newcommand\ren{\f{ren}}
\newcommand\Ren[1]{\ren(#1)}
\newcommand\supp{\f{supp}}
\def\equiv{=}
\newcommand\raws{{{s}}}
\newcommand\rawr{{{r}}}
\newcommand\rawt{{{t}}}
\newcommand\rawu{{{u}}}
\newcommand\rawphi{{\phi}}
\newcommand\rawpsi{{\psi}}
\newcommand{\hol}[2]{{\lfloor} #2 {\rfloor}^{\hspace{-.25ex}\raisebox{-.25ex}{\scalebox{.4}{$#1$}}}}
\newcommand{\denot}[3]{\llbracket #3 \rrbracket_{\scalebox{.6}{$#2$}}^{\hspace{-.1ex}\raisebox{.05ex}{\scalebox{.4}{$#1$}}}}
\newcommand\GammaX{{D\cap\pmss(X)}}
\newcommand\GammaY{{D\cap\pmss(Y)}}
\newcommand\smtf[1]{{\scalebox{.45}{$\tf{#1}$}}}
\newcommand\hiden{{\scalebox{.4}{$\mathcal H$}}}
\newcommand\iden{{\scalebox{.4}{$\mathcal I$}}}
\newcommand\hden{{\scalebox{.4}{$\mathcal H$}}}
\newcommand\pmss{\f{pms}}
\newcommand\sort{\f{sort}}
\newcommand\type{\f{type}}
\newcommand\somerel{\mathrel{\mathcal R}}
\newcommand\theory[1]{\ensuremath{\mathsf{#1}}}
\newcommand\Chi{\raisebox{.15em}{\large$\chi$}}    
\newbox\tempa
\newbox\tempb
\newdimen\tempc
\def\mud#1{\hfil $\displaystyle{\mathstrut #1}$\hfil}
\def\rig#1{\hfil $\displaystyle{#1}$}
\def\irulehelp#1#2#3{\setbox\tempa=\hbox{$\displaystyle{\mathstrut #2}$}%
		        \setbox\tempb=\vbox{\halign{##\cr
	\mud{#1}\cr
	\noalign{\vskip\the\lineskip}%
	\noalign{\hrule height 0pt}%
	\rig{\vbox to 0pt{\vss\hbox to 0pt{${\; #3}$\hss}\vss}}\cr
	\noalign{\hrule}%
	\noalign{\vskip\the\lineskip}%
	\mud{\copy\tempa}\cr}}%
		      \tempc=\wd\tempb
		      \advance\tempc by \wd\tempa
		      \divide\tempc by 2 }
\def\irule#1#2#3{{\irulehelp{#1}{#2}{#3}%
		     \hbox to \wd\tempa{\hss \box\tempb \hss}}}
\newcommand\basesort{\tau}
\newcommand\basetype{\mu}
\newcommand{\model}[1]{\denot{\mathcal I}{}{#1}}
\newcommand{\holmodel}[1]{\denot{\mathcal H}{}{#1}}
\newcommand\deffont[1]{{\bf #1}}
\newcommand\tf[1]{{\mathsf{#1}}}
\newcommand\f[1]{\mathit{#1}}
\newcommand\act{{\cdot}}
\newcommand\liff{\mathrel{\Leftrightarrow}}
\newcommand\limp{\Rightarrow}
\newcommand\Forall[1]{\forall #1.}
\newcommand\lam[1]{\lambda #1.}
\newcommand\aeq{\mathrel{=_{\alpha}}}
\newcommand\abeq{\mathrel{=_{\alpha\beta}}}
\newcommand\id{\f{id}}
\newcommand\Id{\f{Id}}
\newcommand\cent{\vdash}
\newcommand\ment{\vDash}
\newcommand\sm{{\mapsto}}
\newcommand\ssm{{{:}{=}}}
\newcommand\mone{{\text{-}1}}
\newcommand\rulefont[1]{\ensuremath{{\bf (#1)}}\xspace}
\newcommand\atomsup{\mathbb A^{\hspace{-.25ex}\scalebox{.6}{$>$}}}
\newcommand\atomsdown{\mathbb A^{\hspace{-.25ex}\scalebox{.6}{$<$}}}
\newcommand\atoms{{\mathbb A}}
\newtheoremstyle{jamiestyle}
  {4pt}
  {0pt}
  {\it}
  {0pt}
  {\bf}
  {.}
  { }
  {}
\theoremstyle{jamiestyle}
\newtheorem{thrm}{Theorem}[section]
\newtheorem{prop}[thrm]{Proposition}
\newtheorem{lemm}[thrm]{Lemma}
\newtheorem{corr}[thrm]{Corollary}
\newtheoremstyle{jamienfstyle}
  {4pt}
  {0pt}
  {\normalfont}
  {0pt}
  {\bf}
  {.}
  { }
  {}
\theoremstyle{jamienfstyle}
\newtheorem{nttn}[thrm]{Notation}
\newtheorem{defn}[thrm]{Definition}
\newtheorem{xmpl}[thrm]{Example}
\newtheorem{rmrk}[thrm]{Remark}
\newcolumntype{L}[1]{>{$}p{#1}<{$}}
\newcolumntype{C}[1]{>{\centering$}p{#1}<{$}}
\newcolumntype{R}[1]{>{\raggedleft$}p{#1}<{$}}
\newcommand\maketab[2]
\newenvironment{#1}
      {\begin{quote}\noindent\begin{tabular}{#2}}
      {\end{tabular}\end{quote}}
    \newenvironment{#1noquote}{\noindent\begin{tabular}{#2}}{\end{tabular}}
\journal{Version of record published in Theoretical Computer Science, Volume 451, 14 September 2012, Pages 38-69. \href{https://doi.org/10.1016/j.tcs.2012.06.007}{DOI: 10.1016/j.tcs.2012.06.007}}
\begin{document}

\maketab{tab1}{@{\hspace{-0em}}L{6em}R{7em}@{\ }L{6em}@{\ }L{7em}}
\maketab{tab3}{@{\hspace{-4em}}R{10em}@{\ }L{15em}@{\ }L{15em}}
\maketab{tab7}{@{\hspace{-2em}}R{10em}@{\ }L{12em}L{14em}}

\title{PNL to HOL: from the logic of nominal sets to the logic of higher-order functions\tnoteref{vor}}
\tnotetext[vor]{This is the authors' accepted manuscript, shared as per the journal's \href{https://www.elsevier.com/about/policies-and-standards/sharing}{\emph{sharing policy}} (\href{https://web.archive.org/web/20231109022149/https://www.elsevier.com/about/policies-and-standards/sharing}{permalink}).  Details of the version of record are below.}
\author[1]{Gilles Dowek}
\ead[url]{http://www-roc.inria.fr/who/Gilles.Dowek/} 
\author[2]{Murdoch J. Gabbay}
\ead[url]{http://www.gabbay.org.uk} 
\begin{abstract}
Permissive-Nominal Logic (PNL) extends first-order predicate logic with term-formers that can bind names in their arguments.
It takes a semantics in (permissive-)nominal sets.
In PNL, the $\forall$-quantifier or $\lambda$-binder are just term-formers satisfying axioms, and their denotation is functions on nominal atoms-abstraction.

Then we have higher-order logic (HOL) and its models in ordinary (i.e. Zermelo--Fraenkel) sets; the denotation of $\forall$ or $\lambda$ is functions on full or partial function spaces.

This raises the following question: how are these two models of binding connected?
What translation is possible between PNL and HOL, and between nominal sets and functions?

We exhibit a translation of PNL into HOL, and from models of PNL to certain models of HOL.
It is natural, but also partial: we translate a restricted subsystem of full PNL to HOL. 
The extra part which does not translate is the symmetry properties of nominal sets with respect to permutations.
To use a little nominal jargon: we can translate names and binding, but not their nominal equivariance properties. 
This seems reasonable since HOL---and ordinary sets---are not equivariant.

Thus viewed through this translation, PNL and HOL and their models do different things, but they enjoy non-trivial and rich subsystems which are isomorphic.
\end{abstract}

\begin{keyword}
Permissive-nominal logic, higher-order logic, nominal sets, nominal renaming sets, mathematical foundations of programming.
\\
\emph{MSC-class:} 03B70 (primary), 68Q55 (secondary)
\\
\emph{ACM-class:} F.3.0; F.3.2
\end{keyword}

\maketitle


\tableofcontents

\section{Introduction}

Permissive-Nominal Logic (PNL) extends first-order predicate logic with name-binding term-formers.
For instance first-order logic, set theory, and the untyped $\lambda$-calculus axiomatise in PNL; their binders $\forall$, comprehension, and $\lambda$ are just modelled as binding PNL term-formers.
The canonical semantics of PNL is in nominal sets, and it is first-order.

Higher-order logic (HOL) also has binding \cite{miller:logho,farmer:sevvst}. 
This has been used to encode other binders, e.g. the Church encoding of quantifiers as constants of higher type such as $\forall:(\iota{\to} o){\to} o$ \cite{andrews:intmlt,church:forstt}; higher-order abstract syntax (HOAS) encoding term-formers of an encoded syntax with binders as constants of higher type such as $\forall:(\iota{\to} \rho){\to}\rho$ or $\forall:(\nu{\to}\rho){\to}\rho$ (strong vs. weak HOAS)\footnote{A word of clarification here: we take $o$ to be a type of truth-values, $\iota$ to be a type of terms, and $\rho$ to be a type of predicates.  $\forall$-the-quantifier generates truth-values, whence the type headed by $o$, namely $\forall:(\iota{\to} o){\to} o$.  $\forall$-the-syntax-building-constant in HOAS generates \emph{terms}, whence the types headed by $\rho$, namely $\forall:(\iota{\to} \rho){\to}\rho$ or $\forall:(\nu{\to}\rho){\to}\rho$.  Do not confuse a HOL constant for a HOAS-style binder (a way to give meaning to building syntax with binding) with a HOL constant for the corresponding quantifier (a way to give meaning to what that syntax is intended to denote; namely, actual quantification).} 
 \cite{despeyroux94higherorder,pfenning:hoas}; and higher-order rewrite systems \cite{mayr:horwc}.

Since PNL is first-order and has a sound and complete semantics (so expressivity and models are fairly `small'), whereas HOL is higher-order (so expressivity and models are fairly `large'), the natural direction for a translation is from nominal sets and PNL, to functions and HOL (a \emph{shallow embedding} of PNL into HOL).\footnote{A \emph{deep embedding} e.g. of HOL in PNL is an answer to a different question; for more on this direction, see \cite{gabbay:unialt}.} 

In this paper we translate a subsystem of PNL into HOL and prove it sound and complete using arguments on nominal sets and nominal renaming sets models \cite{gabbay:nomrs}. 
The proof of completeness involves giving a functional semantics to nominal terms, and a nominal semantics to $\lambda$-terms in the spirit of Henkin models \cite{andrews:intmlt,benzmuller:higose}. 
This involves a construction on nominal sets models corresponding to a free extension to \emph{nominal renaming sets}, as previously considered by the second author with Hofmann \cite{gabbay:nomrs}.

The partiality of the translation of PNL seems to be inherent and reflects natural differences in structure between nominal and `ordinary' sets; nominal sets are subject to the action of a \emph{symmetry group} of atoms-permutations, which cannot be naturally represented in HOL or its `ordinary' sets semantics.
That is, it is not the case that nominal techniques are `just' a concise presentation of HOL with a weakened $\beta$-equivalence (e.g. higher-order patterns \cite{miller:logpll}).
There is that, but there is also more. 
The nominal and functional models of binding are distinct, but they do have non-trivial and rich subsystems which are isomorphic in a sense made precise in this paper.

\subsection{Some background on PNL}

We study PNL for its own sake in this paper, but the interested reader can find example nominal theories in the literature: for substitution, $\beta$-equivalence, and first-order logic \cite{gabbay:capasn,gabbay:capasn-jv,gabbay:nomalc,gabbay:oneaah,gabbay:oneaah-jv}.

These axiomatisations are in nominal algebra (which can be viewed as the equality fragment of PNL) and are accompanied by proofs of correctness in the respective papers.

Not all PNL theories are expressed in the equality fragment.
For instance, in the papers which introduced PNL \cite{gabbay:pernl,gabbay:pernl-jv} we included theories of first-order logic and arithmetic which put universal quantification to the left of an implication.

To give some idea of what this family of logics looks like in practice, assume a name-sort $\nu$ and a base sort $\iota$ and term-formers $\tf{lam}:([\nu]\iota)\iota$,\ \ $\tf{app}:(\iota,\iota)\iota$,\  and $\tf{var}:(\nu)\iota$.
(Full definitions are in the body of the paper.)
We sugar $\tf{lam}([a]r)$ to $\lam{a}r$ and $\tf{app}(r',r)$ to $r'r$ and $\tf{var}(a)$ to $a$.
Atoms in PNL are a form of data and populate their own sort $\nu$; so $\tf{var}$ serves to map them into the sort $\iota$, where they represent object-level variables.

Here is $\eta$-equivalence, written out as it would be informally:
$$
\lam{x}(tx)=t\text{\ \ \ if $x$ is not free in $t$}
$$
Here is a PNL axiom for $\eta$-equivalence, written out formally:
$$
\Forall{Z}(\lam{a}(Za)=Z)\quad (a\not\in\pmss(Z))
$$
(See \cite{gabbay:nomalc} for a detailed study of this axiom in a nominal context.)

$a$ is an \emph{atom} and corresponds to the \emph{object-level variable} $x$; $a$ is not a PNL variable but it \emph{represents} a variable of the object level system being axiomatised. 
$Z$ is an \emph{unknown} and correspond to the \emph{meta-level variable} $t$; $Z$ is a variable in PNL and may be instantiated. 

The reader can see how similar the two axioms look. 
Their status is different in the following sense: whereas $t$ is typically taken to range over terms, $Z$ ranges over elements of nominal sets (via a valuation; see Definition~\ref{defn.valuation}).
This is possible because nominal sets have a notion of \emph{supporting set of atoms} which mirrors the free variables of a term.

The condition $a\not\in\pmss(Z)$ is a \emph{typing condition} in PNL.
The types, or \emph{permission sets} as we call them, restrict the support of denotations associated to $Z$ by a valuation.
They correspond to freshness side-conditions in nominal terms from \cite{gabbay:nomu-jv} and to informal freshness conditions of the form `$x$ not free in $t$' in informal practice.
To see this intuition made formal see a translation from nominal terms to permissive-nominal terms in \cite{gabbay:perntu-jv}.

There is no requirement to axiomatise $\alpha$-equivalence because this is done automatically by the PNL system.

For instance, axioms for $\beta$-equivalence \cite{gabbay:nomalc} are:
$$
\begin{array}{l@{\ }l@{\ =\ }l@{\ }l}
\Forall{Y}&(\lam{a}a)Y&Y
\\
\Forall{Z,X}&(\lam{a}Z) X&Z &(a\not\in\pmss(Z))
\\
\Forall{X',X,Y}&(\lam{a}(X'X))Y&((\lam{a}X') Y)((\lam{a}X)Y)
\\
\Forall{X,Z}&(\lam{b}(\lam{a}X))Z&\lam{a}((\lam{b}X) Z) &(a\not\in\pmss(Z))
\\
\Forall{X}&(\lam{a}X)a&X 
\end{array}
$$ 
See \cite{gabbay:nomalc} for a proof that these axioms really \emph{do} axiomatise the $\lambda$-calculus.\footnote{These axioms first appeared in \cite{gabbay:capasn,gabbay:oneaah} where a slightly different version of the final axiom was used.  They are equivalent; see part~5 of Example~\ref{xmpl.aeq} in this paper.}

It is important to appreciate that most models of the axioms above are abstract and algebraic, not concrete and syntactic.
So for instance the final axiom is not admissible because the objects $X$ ranges over do not necessarily have inductive structure and we cannot necessarily push the $\beta$-redex down to the atoms.

In particular, this is not a paper about representing syntax-with-binding, unlike the first applications of nominal techniques to syntax-with-binding \cite{gabbay:newaas-jv}.
True, we wrote above that atoms represent object-level variables.
But indeed, they represent \emph{variables}, not variable \emph{symbols}.
Nominal sets admit open elements and we can write axioms about names and binding in PNL, so that the full behaviour of variables is in PNL susceptible to algebraic axiomatisation.
For instance the equality axioms above give atoms the behaviour of $\beta$-convertible $\lambda$-calculus variables; see \cite{gabbay:nomalc} for a proof.
Most models of the theory above are not built out of syntax. 

Several nominal algebraic theories have been developed, with proofs of correctness: see \cite{gabbay:capasn-jv} (substitution), \cite{gabbay:nomalc} ($\lambda$-calculus) or \cite{gabbay:pernl,gabbay:pernl-jv} (first-order logic and arithmetic). 
For examples of natural non-syntactic models of nominal-style theories see \cite[Subsection~5.2]{gabbay:pernl-jv} and \cite{gabbay:stodfo}.

Thus, the design philosophy of PNL is that axioms should look like what we would write informally anyway, where variables map to atoms, meta-variables to unknowns, binding to atoms-abstraction, and capture-avoidance conditions to choice of permission sets.
In particular, open terms and predicates map to elements and subsets of nominal sets with nonempty support.

\subsection{On symmetry}

The thing that goes missing in the translation from PNL to HOL is \emph{name-symmetry}.
Nominal sets are sets with an action of permutations of atoms; that is, nominal sets are sets with a symmetry action.

It turns out that this symmetry is key. 
For instance, we can define atoms-abstraction as a symmetric equivalence class (Definition~\ref{defn.abstraction.sets}, in this paper).
This class is `first-order' in flavour and does not involve the construction of a full function-space.

In PNL syntax permutative symmetry is reflected in the use of atoms and atoms-abstraction and by the permutations in terms (the interested reader could look at the `symmetry-based' definition of $\alpha$-equivalence in Definition~\ref{defn.aeq}).

In PNL \emph{derivation} this is reflected in the axiom rule (rule \rulefont{Ax} of Figure~\ref{Seq} in this paper).
This rule gives that $\phi\limp\pi\act\phi$ for all PNL predicates; so PNL predicates are fully symmetric up to permuting atoms.
This is impossible to model in HOL because functional abstraction is asymmetric: obviously, $\lam{x}\lam{y}x$ is not equal to $\lam{y}\lam{x}x$.
 
Truth in \emph{restricted PNL} (Figure~\ref{rSeq}) is also asymmetric; restricted PNL is therefore a weaker system.
It still has nominal semantics, but its predicates are not symmetric.
This is the logic that we translate to HOL.

The symmetry of full PNL might seem counterintuitive---especially if the reader is used to modelling names as functional arguments (or indeed as numbers).
We do not expect $\lam{x,y}P(x,y)$ to be symmetric with $\lam{y,x}P(x,y)$, or (for numbers) $x\leq y$ to be symmetric with $y\leq x$.

Consider the predicate $a=b$ in PNL (assume an equality, for the sake of argument).
This is \emph{false} because $a$ is a distinct atom from $b$.
If the reader is used to thinking of variables as things that `vary' then $x=y$ might be either false or true depending the values associated to $x$ and $y$.
Not so in PNL: at the level of PNL syntax atoms are not variables; $a$ and $b$ are distinct; and their distinctness is symmetric up to permuting atoms so that $b=a$, $c'=a$, and $d=e$ are also false.
If the reader is used to thinking of variables as numbers then there might exist some predicate $\leq$ that puts them in order.
Not so in PNL: such a predicate is forbidden.\footnote{This is because we used all finite permutations of atoms as our symmetry group.  Generalisations of this as suggested e.g. in \cite[Subsection~3.1]{klin:townc} are possible.  See also Remark~2.1.8 of \cite{gabbay:nomtnl}.}
The unknowns $X$ and $Y$ in PNL do vary, and they are variables.
More on this in Remark~\ref{rmrk.asym} and Subsection~\ref{subsect.pnl.ax}.

\subsection{Map of the paper}

This paper has a lot of technical ground to cover.
This is unavoidable, because we need to deal with two logics (restricted PNL and HOL) and two semantics (nominal sets, and the hand-crafted Henkin models in nominal renaming sets used in the completeness proof), as well as two translations (from logic to logic, and from models to models).

For the reader's convenience, we provide an overview of the main technical points with brief justifications for their design:
\begin{itemize*}
\item
Section~\ref{sect.pnl} introduces permissive-nominal logic.
This comes from previous work into `nominal' axiomatisations of systems with binding \cite{gabbay:pernl,gabbay:pernl-jv}.\footnote{Note that PNL is not only about nominal abstract syntax as considered in e.g. \cite{gabbay:newaas-jv,gabbay:fountl}.  Nominal abstract syntax is a denotation for syntax with binding.  PNL and its models are a (more general) syntax and semantics for denotations with binding in general, which are not all necessarily datatypes of abstract syntax.}

In fact, we need to introduce two logics: full PNL and also a \emph{restricted} version which has a weaker non-equivariant axiom rule.
We write the entailment relations $\cent$ and $\nopicent$ respectively.
It is the restricted version that we will eventually translate to HOL.
\item
Section~\ref{sect.hol} introduces higher-order logic as a theory over the syntax of the simply-typed $\lambda$-calculus.
We write the entailment relation $\holcent$.
\item
Section~\ref{sect.translation.sound} defines the translation from restricted PNL to HOL, and proves it sound using arguments on syntax.
In order to do the translation, we need to introduce a \emph{capture typing} ${D\cent r:A}$ which is a measure of how many functional abstractions are required to translate a given nominal term without losing information; that is, of the functional complexity of a nominal term. 
\item
Our goal is then to prove completeness of the translation.
We do this by transforming models of PNL into models of HOL.
So Section~\ref{sect.semantics} introduces two categories: \theory{PmsPrm} of permissive-nominal sets and \theory{PmsRen} of permissive-nominal renaming sets.
We also give a \emph{free} construction, transforming a permissive-nominal set into a permissive-nominal renaming set.
\item
In Section~\ref{sect.permissive-nominal.sets} we interpret full and restricted PNL in \theory{PmsPrm}.
In Section~\ref{sect.interp.hol} we interpret HOL in \theory{PmsRen}.
\item
Finally, in Section~\ref{sect.pnl.hol.complete} we use the free construction of Section~\ref{sect.semantics} to map a model of PNL in \theory{PmsPrm} to a model in \theory{PmsRen}, and because the free construction does not `make anything equal' this is sufficient to prove completeness.
\item
As one further mathematical note, the results in the literature concern full PNL and not restricted PNL.
So in Appendix~\ref{sect.completeness} we sketch proofs of soundness, cut-elimination, and completeness of restricted PNL with respect to non-equivariant models in \theory{PmsPrm}.
These are modest, if not entirely direct, modifications of the existing definitions and proofs for full PNL and equivariant models in \theory{PmsPrm}.
\end{itemize*}

Quite a number of new ideas are required to make this all work.
The highlights are: permissive-nominal renaming sets and their application to give non-standard `nominal' Henkin models for higher-order logic; restricted PNL and its semantics; the free construction; and the technical arguments as discussed in Section~\ref{sect.pnl.hol.complete}.

\subsection{Review of motivation}

Given that the proofs and constructions in this paper are non-trivial and involve an effort to extend existing machinery, we should pause to ask again why doing this is justified, even necessary.

Nominal techniques were designed originally to reason on syntax-with-binding (see the original journal paper \cite{gabbay:newaas-jv} or a recent survey paper \cite{gabbay:fountl}).
But since then this remit has expanded to reasoning about denotations with binding more generally (an overview of which is in \cite{gabbay:nomtnl}).
In doing this, we have created a whole new syntax and semantics for meta-mathematics.

We will not argue for or against either the nominal foundation or the higher-order foundation for mathematics.\footnote{There has been more than enough of that already, and anyway, because truth is free, proving theorems is never a zero sum game.}
Our question is: given that these two foundations exist, how do they relate?

In fact, questions have been asked about how nominal names and binding are related to functions, ever since nominal techniques were conceived in the second author's thesis.
Since then, the development of PNL \cite{gabbay:pernl-jv} and nominal renaming sets \cite{gabbay:nomrs} has given us two powerful new tools with which to address these questions: a proof-theory for a logic in which nominal reasoning so far can be formalised, and a visibly nominal semantics which is not based on permutations but on possibly non-bijective renamings on atoms, so that atoms-abstraction can be considered as a function in that semantics. 

In this paper, we leverage this to give a precise, concrete, and mathematically detailed account of how these two worlds really stand in relation to one another---and how they differ.
In conclusion we speculate that there is some potential (not explored in this paper) that our translations might be used to piggyback nominal techniques on the substantial implementational efforts that have gone into developing HOL over the past seventy years.


\section{Permissive-Nominal Logic}
\label{sect.pnl}

Permissive-nominal logic is a first-order logic for nominal terms quotiented by $\alpha$-equivalence.
Doing this is not entirely trivial; the interested reader can find more on this elsewhere \cite{gabbay:nomu-jv,gabbay:pernl,gabbay:pernl-jv,gabbay:nomtnl}.

\subsection{Syntax}

\begin{defn}
\label{defn.sort.sig}
A \deffont{sort-signature} is a pair $(\mathcal A,\mathcal B)$ of \deffont{name} and \deffont{base sorts}.
$\nu$ will range over name sorts; $\basesort$ will range over base sorts.
A \deffont{sort language} is then defined by
\begin{frameqn}
\alpha ::= \nu \mid (\alpha,\dots,\alpha) \mid [\nu]\alpha \mid \basesort 
.
\end{frameqn}
\end{defn}

\begin{rmrk}
Examples of base sorts are: `$\lambda$-terms',\ `formulae',\ `$\pi$-calculus processes',\ and `program environments', `functions', `truth-values', `behaviours',\ and `valuations'.

Examples of name sorts are `variable symbols',\ `channel names',\ or `memory locations'.

$[\nu]\alpha$ is an \emph{abstraction sort}.
This does a similar job to function-types in higher-order logic but note that $\nu$ must always be a name-sort.
The behaviour of a term of sort $[\nu]\alpha$ corresponds to `bind a name of sort $\nu$ in a term of sort $\alpha$'.
Such a term does not denote a function, though later on in our completeness proof we will deliberately undermine that intuition to obtain our completeness result. 
\end{rmrk}

\begin{defn}
\label{defn.atoms}
For each $\nu$ fix a disjoint countably infinite set of \deffont{atoms} $\atoms_\nu$, and an arbitrary bijection $f_\nu$ between $\atoms_\nu$ and the integers $\mathbb Z=\{0,\text{-}1,1,\text{-}2,2,\ldots\}$.
Write 
$$
\atomsdown_\nu=\{f_\nu(i)\mid i<0\}
\qquad
\atomsup_\nu=\{f_\nu(i)\mid i\geq 0\}.
$$
Finally, write 
$$
\atomsdown=\bigcup\atomsdown_\nu
\qquad
\atomsup=\bigcup\atomsup_\nu
\qquad
\mathbb A=\bigcup \mathbb A_\nu
$$ 
$a,b,c,\ldots$ will range over \emph{distinct} atoms (we call this the \deffont{permutative} convention).

A \deffont{permission set} has the form $(\atomsdown \cup A)\setminus B$ where $A\subseteq\atomsup$ and $B\subseteq\atomsdown$ are finite (and a permission set may be finitely represented by the pair $(A,B)$).
$S$, $T$, and $U$ will range over permissions sets. 
\end{defn}

The use of $\atomsdown$ and $\atomsup$ ensures that permission sets are infinite and also coinfinite (their complement is also infinite).

\begin{rmrk}
Permission sets are simple, but surprisingly subtle.
 
$\atomsdown$ and $\atomsup$ are reminiscent of some treatments of syntax where a formal distinction is made between `names that exist to be bound' and `names that exist to be free'.
See for instance the \emph{freie} and \emph{gebundene Gegenstansvariable} of Gentzen \cite[Section~1]{Gentzen:1935}, and the \emph{individual variables} and \emph{parameters} of Prawitz \cite[Section~1]{prawitz:natdpt}, or Smullyan \cite[Chapter~IV, Section~1]{smullyan:firol}. 

However, for any given atom there is no \emph{fixed} sense in which it is either capturable or not capturable.
Each permission set defines a world of capturable/non-capturable atoms. 
Furthermore, even once a permission set is fixed, we shall see that permutations can shift an atom from one to the other.
See Example~\ref{xmpl.aeq} and Remark~\ref{rmrk.capturing.sub} for examples of permission sets in action, controlling $\alpha$-equivalence and substitution respectively. 

We make $\atomsdown$ and $\atomsup$ both infinite so that we have inexhaustible supplies of both kinds of atom.\footnote{For comparison, nominal terms have only a finite supply of fresh atoms.  The effect of this is that the nominal terms of \cite{gabbay:nomu-jv} cannot be quotiented by $\alpha$-equivalence as primitive, and the freshness context may need to be extended dynamically with fresh names.  This introduces an `impure' flavour of state and sequentiality into the theory of nominal terms which is absent from the permissive-nominal version.  In short, making permission sets infinite and coinfinite makes the whole theory noticeably more `pure'.}
Further technical discussion of the advantages of permissive-nominal techniques is in \cite{gabbay:perntu-jv}.
\end{rmrk}

\begin{defn}
\label{defn.term.signature}
A \deffont{term-signature} over a sort-signature $(\mathcal A,\mathcal B)$ is a tuple $(\mathcal F,\mathcal P,\f{ar},\mathcal X)$ where:
\begin{itemize*}
\item
$\mathcal F$ and $\mathcal P$ are disjoint sets of \deffont{term-} 
and \deffont{proposition-formers}.

$\tf f$ will range over term-formers.
$\tf P$ will range over proposition-formers.
\item 
$\f{ar}$ assigns to each ${\tf f\in\mathcal F}$ a
\deffont{term-former arity} $(\alpha)\tau$
and to each $\tf P\in\mathcal P$ a \deffont{proposition-former arity}
$\alpha$, where $\alpha$ and $\tau$ are in the sort-language
determined by $(\mathcal A,\mathcal B)$.

We will write $((\alpha_1,\ldots,\alpha_n))\tau$ just as $(\alpha_1,\ldots,\alpha_n)\tau$.
\item
$\mathcal X$ is a set of \deffont{unknowns} $X$, each of which has a sort $\sort(X)$ and a permission set $\pmss(X)$, such that for each sort $\alpha$ and permission set $S$ the set $\{X\in\mathcal X\mid \sort(X)=\alpha,\ \pmss(X)=S\}$ is countably infinite.
$X,Y,Z$ will range over distinct unknowns.
\end{itemize*}
\label{defn.signature}
A \deffont{signature} $\mathcal S$ is then a tuple $(\mathcal A,\mathcal B,\mathcal F,\mathcal P,\f{ar},\mathcal X)$.
\end{defn}
We write $\tf f:(\alpha)\tau$ for $\f{ar}(\tf f)=(\alpha)\tau$ and similarly we write $\tf P:\alpha$ for $\f{ar}(\tf P)=\alpha$. 

\begin{xmpl}
\label{xmpl.lam.sig}
The signature for the $\lambda$-calculus from the Introduction has a name-sort for $\lambda$-calculus object-level variables $\nu$, a base sort $\iota$ for $\lambda$-terms, and appropriate term-formers: 
\begin{itemize*}
\item
$\tf{var}:(\nu)\iota$ to form $\lambda$-calculus variables in $\iota$ out of names in $\nu$, 
\item
$\tf{app}:(\iota,\iota)\iota$ for application, and 
\item
$\tf{lam}:([\nu]\iota)\iota$ taking an abstraction in $[\nu]\iota$ and forming from it a $\lambda$-abstraction term in $\iota$. 
\end{itemize*}
\end{xmpl}

\begin{frametxt}
\begin{defn}
\label{defn.permutation}
A \deffont{permutation} is a bijection $\pi$ on $\mathbb A$ such that $a\in\mathbb A_\nu\liff \pi(a)\in\mathbb A_\nu$ and $\f{nontriv}(\pi)=\{a\mid \pi(a)\neq a\}$ is finite.
Write $\mathbb P$ for the set of permutations.

Given $a,b\in\mathbb A_\nu$ let a \deffont{swapping} $(a\ b)$ be the bijection on atoms that maps $a$ to $b$, $b$ to $a$, and all other $c$ to themselves.
\end{defn}
\end{frametxt}

\begin{nttn}
\label{nttn.permutations}
We use the following notation:
\begin{itemize*}
\item
Write $\pi\circ\pi'$ for \deffont{functional composition}, so $(\pi\circ\pi')(a)=\pi(\pi'(a))$).
\item
Write $\id$ for the \deffont{identity permutation}, so $\id(a)=a$ always. 
\item
Write $\pi^\mone$ for \deffont{inverse}, so $\pi\circ\pi^\mone=\id$.
\end{itemize*}
\end{nttn}

\begin{defn}
\label{defn.pnl.syntax}
For each signature $\mathcal S$, define \deffont{terms} and \deffont{propositions} over $\mathcal S$ by: 
\begin{frameqn}
\begin{array}{c@{\qquad}c@{\qquad}c}
\begin{prooftree}
(a\in\mathbb A_\nu)
\justifies
a:\nu
\end{prooftree}
&
\begin{prooftree}
\rawr_1:\alpha_1 \ \ldots\ \rawr_n:\alpha_n
\justifies
(\rawr_1,\ldots,\rawr_n):(\alpha_1,\ldots,\alpha_n)
\end{prooftree}
&
\begin{prooftree}
\rawr:\alpha\quad (\f{ar}(\tf f)=(\alpha)\tau)
\justifies
\tf f(\rawr):\tau
\end{prooftree}
\\[4ex]
\begin{prooftree}
\rawr:\alpha\quad (a\in\mathbb A_\nu)
\justifies
[a]\rawr:[\nu]\alpha
\end{prooftree}
&
\begin{prooftree}
(\sort(X)=\alpha)
\justifies
\pi\act X:\alpha
\end{prooftree}
\\[4ex]
\begin{prooftree}
\phantom{h}
\justifies
\bot\text{ prop.}
\end{prooftree}
&
\begin{prooftree}
\rawphi\text{ prop.}\ \ \rawpsi\text{ prop.}
\justifies
\rawphi\limp\rawpsi\text{ prop.}
\end{prooftree}
&
\begin{prooftree}
\rawr:\alpha\ \ (\f{ar}(\tf P)=\alpha)
\justifies
\tf P(\rawr)\text{ prop.}
\end{prooftree}
\\[4ex]
\begin{prooftree}
\rawphi\text{ prop.}
\justifies
\Forall{X}\rawphi\text{ prop.}
\end{prooftree}
\end{array}
\end{frameqn}
\end{defn}

\begin{xmpl}
Continuing Example~\ref{xmpl.lam.sig}, we have the following terms and propositions:
\begin{itemize*}
\item
$\tf{var}(a):\iota$ where $a\in\mathbb A_\nu$.
\item
$[a]X:[\nu]\iota$ where $a\in\mathbb A_\nu$ and $\sort(X)=\iota$.
\item
$\tf{lam}([a]X):\iota$ since $[a]X:[\nu]\iota$ and $\tf{lam}:([\nu]\iota)\iota$. 
\item
$\Forall{X}\tf P(\tf{lam}([a]X),X)$ is a proposition if $\tf P$ is a proposition-former and $\tf P:(\iota,\iota)$.
\end{itemize*}
\end{xmpl}

\subsection{Permutation, substitution, and so on}

These definitions are all needed for the rest of the paper, starting with $\alpha$-equivalence in Subsection~\ref{subsect.aeq}.
We need them at both levels; both for atoms and for unknowns.

\begin{defn}
\label{defn.permutation.action}
Define a (level 1) \deffont{permutation action} on syntax by:
$$
\begin{array}{r@{\ }l@{\qquad}r@{\ }l}
\pi\act a\equiv& \pi(a)
&
\pi\act (\rawr_1,\ldots,\rawr_n) \equiv&  (\pi\act \rawr_1,\ldots,\pi\act \rawr_n)
\\
\pi\act [a]\rawr \equiv&  [\pi(a)]\pi\act \rawr
&
\pi\act(\pi'\act X) \equiv&  (\pi{\circ}\pi')\act X
\\
\pi\act \tf f(\rawr) \equiv&  \tf f(\pi\act \rawr)
\\
\pi\act\bot \equiv&  \bot
&
\pi\act (\rawphi\limp\rawpsi)\equiv&  (\pi\act \rawphi)\limp(\pi\act \rawpsi)
\\
\pi\act \tf P(\rawr)\equiv&  \tf P(\pi\act \rawr)
&
\pi\act (\Forall{X}\rawphi) \equiv&  \Forall{X}\pi\act\rawphi
\end{array}
$$
\end{defn}

\begin{defn}
\label{defn.permutation.action.2}
Let $\Pi$ range over sort- and permission-set-preserving bijections on unknowns 
(so $\sort(\Pi(X)){=}\sort(X)$ and $\pmss(\Pi(X)){=}\pmss(X)$)
such that $\{X\mid \Pi(X)\neq X\}$ is finite.

Write $\Pi\circ\Pi'$ for functional composition,\ $\Id$ for the identity permutation, and $\Pi^\mone$ for inverse, much as in Notation~\ref{nttn.permutations}.

Define a (level 2) \deffont{permutation action} by:
{
$$
\begin{array}{r@{\ }l@{\qquad}r@{\ }l}
\Pi\act a\equiv&  a
&
\Pi\act (\rawr_1,\ldots,\rawr_n) \equiv&  (\Pi\act \rawr_1,\ldots,\Pi\act \rawr_n)
\\
\Pi\act [a]\rawr \equiv&  [a]\Pi\act \rawr
&
\Pi\act(\pi\act X) \equiv&  \pi\act(\Pi(X))
\\
\Pi\act \tf f(\rawr) \equiv&  \tf f(\Pi\act \rawr)
\\
\Pi\act\bot \equiv&  \bot
&
\Pi\act (\rawphi\limp\rawpsi)\equiv&  (\Pi\act \rawphi)\limp(\Pi\act \rawpsi)
\\
\Pi\act \tf P(\rawr)\equiv&  \tf P(\Pi\act \rawr)
&
\Pi\act (\Forall{X}\rawphi) \equiv&  \Forall{\Pi(X)}\Pi\act\rawphi
\end{array}
$$
}
\end{defn}

\begin{rmrk}
A curious asymmetry between Definitions~\ref{defn.permutation.action} and~\ref{defn.permutation.action.2} is that $\pi\act(\Forall{X}\rawphi)\equiv \Forall{X}\pi\act\rawphi$ but $\Pi\act(\Forall{X}\rawphi)\equiv \Forall{\Pi(X)}\Pi\act\rawphi$.
Note that $\pi$ not applied to the binding occurrence of $X$, but $\Pi$ is.

In fact, we \emph{could} take $\pi\act(\Forall{X}\rawphi)=\Forall{\pi\act X}\pi\act\rawphi$.
We would have to complicate Definition~\ref{defn.pnl.syntax} by introducing $\Forall{\pi\act X}\rawphi$ as well-formed syntax, but we could do this.
However, it turns out that this would make no difference.
It turns out that $\Forall{\pi\act X}\rawphi$ and $\Forall{X}\rawphi$ are logically equivalent.

To gain a quick intuitive understanding of why this is so, bear in mind that the substitution `$X$ maps to $r$' maps $\pi\act X$ to $\pi\act r$ (this is made formal later, in Definition~\ref{defn.subst.action}) and in fact `$\pi\act X$ maps to $\pi\act r$' would map $X$ to $r$ (the interested reader can find a wider discussion of this in and around \cite[Remark~3.4.7]{gabbay:nomtnl}).

So, rather curiously, $\Forall{\pi\act X}\rawphi$ means the same thing and would receive the same denotation, regardless of $\pi$: this would be the denotation of the PNL proposition $\Forall{X}\rawphi$.
We can therefore simplify our syntax and take 
$\pi\act(\Forall{X}\rawphi)\equiv \Forall{X}\pi\act\rawphi$.

Something similar happens in the much more abstract semantic context of two-level nominal sets; see \cite[Lemma~2.25]{gabbay:twolns}.
Further comments on asymmetries between atoms and unknowns in PNL will follow in Remark~\ref{rmrk.asym}.
\end{rmrk}

\begin{defn}
\label{defn.pointwise}
Suppose $f$ is a function on a set $X$ and $U\subseteq X$.
Define $f\act U$ by
$$
f\act U=\{f(x)\mid x\in U\} .
$$
This is the standard \deffont{pointwise} action of a function on a set.
We use this for $\pi$ acting on sets of atoms, $\Pi$ acting on sets of unknowns, and (from Definition~\ref{defn.renaming}
 onwards) $\rho$ acting on sets of atoms.
\end{defn}

\begin{defn}
\label{defn.fa}
Define \deffont{free atoms} $\fa(\rawr)$ and $\fa(\rawphi)$ by:
$$
\begin{array}{r@{\ }l@{\quad}r@{\ }l@{\quad}r@{\ }l}
\fa(\pi\act X)=& \pi\act\pmss(X) 
&
\fa([a]\rawr)=& \fa(\rawr)\setminus\{a\}
&
\fa(a)=& \{a\}
\\
\fa(\tf f(\rawr)) =&  \fa(\rawr)
&
\fa((\rawr_1,\ldots,\rawr_n)) =& 
\bigcup\fa(\rawr_i)
&&
\\[1.5ex]
\fa(\bot) =& \varnothing
&
\fa(\rawphi\limp\rawpsi)=& \fa(\rawphi)\cup \fa(\rawpsi)
\\
\fa(\tf P(\rawr)) =&  \fa(\rawr)
&
\fa(\Forall{X}\rawphi)=& \fa(\rawphi) 
\end{array}
$$
Define \deffont{free unknowns} $\fU(r)$ and $\fU(\rawphi)$ by:
$$
\begin{array}{r@{\ }l@{\quad}r@{\ }l@{\quad}r@{\ }l}
\fU(a)=& \varnothing
&
\fU(\pi\act X)=& \{X\}
&
\fU(\tf f(\rawr)) =&  \fU(\rawr)
\\
\fU([a]\rawr)=& \fU(\rawr)
&
\fU((\rawr_1,\ldots,\rawr_n)) =&  
\bigcup\fU(\rawr_i)
\\[1.5ex]
\fU(\bot) =& \varnothing
&
\fU(\rawphi\limp\rawpsi)=& \fU(\rawphi)\cup \fU(\rawpsi)
\\
\fU(\tf P(\rawr)) =&  \fU(\rawr)
&
\fU(\Forall{X}\rawphi)=& \fU(\rawphi)\setminus\{X\} 
\end{array}
$$
\end{defn}

\begin{lemm}
\label{lemm.fa.pi.r}
$\fa(\pi\act \rawr)=\pi\act \fa(\rawr)$ and $\fa(\pi\act\rawphi)=\pi\act\fa(\rawphi)$.

Also,
$\fU(\Pi\act \rawr)=\Pi\act \fU(\rawr)$ and $\fU(\Pi\act\rawphi)=\Pi\act\fU(\rawphi)$.
\end{lemm}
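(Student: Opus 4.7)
The plan is to prove all four equations by a straightforward structural induction on $r$ (resp.\ $\phi$), mirroring the clauses of Definitions~\ref{defn.permutation.action}, \ref{defn.permutation.action.2}, and~\ref{defn.fa}. Because the propositional clauses for $\fa$ and $\fU$ just recurse into subterms (and $\fU(\Forall{X}\phi)$ removes the $X$ exactly as $\Pi\act(\Forall{X}\phi)$ rebinds $\Pi(X)$), the proposition-level statements follow routinely once the term-level statements are done. So I would focus the work on terms.

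For $\fa(\pi\act r)=\pi\act\fa(r)$, the base case $r=a$ is immediate from $\fa(a)=\{a\}$ and $\pi\act a = \pi(a)$. The interesting case is the moderated unknown $r=\pi'\act X$, which uses associativity of composition: $\fa(\pi\act(\pi'\act X))=\fa((\pi\circ\pi')\act X)=(\pi\circ\pi')\act\pmss(X)=\pi\act(\pi'\act\pmss(X))=\pi\act\fa(\pi'\act X)$. The tuple and term-former cases distribute $\pi\act$ and $\fa$ over the recursive arguments and appeal directly to the inductive hypothesis (using that $\pi\act$ commutes with unions of sets of atoms, which is pointwise from Definition~\ref{defn.pointwise}).

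The one step worth naming as the main obstacle is the abstraction case $r=[a]r'$. Here
\[
\fa(\pi\act[a]r')=\fa([\pi(a)]\pi\act r')=\fa(\pi\act r')\setminus\{\pi(a)\}\stackrel{\textrm{IH}}{=}\pi\act\fa(r')\setminus\{\pi(a)\},
\]
and one has to invoke the fact that $\pi$ is a bijection on $\mathbb A$ (Definition~\ref{defn.permutation}) to commute the pointwise action past set difference, yielding $\pi\act(\fa(r')\setminus\{a\})=\pi\act\fa([a]r')$. This is the only place where bijectivity of $\pi$ is genuinely used; everything else would go through for an arbitrary function on atoms.

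For the second pair $\fU(\Pi\act r)=\Pi\act\fU(r)$ and its propositional companion, the induction is easier because $\Pi$ does not touch atoms: the base cases are $\fU(a)=\varnothing$ (trivially preserved) and $\fU(\Pi\act(\pi\act X))=\fU(\pi\act\Pi(X))=\{\Pi(X)\}=\Pi\act\{X\}=\Pi\act\fU(\pi\act X)$. The abstraction case is trivial since $\Pi\act[a]r'=[a]\Pi\act r'$ and $\fU$ ignores the bound atom. The only subtle case is the binder $\Forall{X}\phi$, where $\Pi\act(\Forall{X}\phi)=\Forall{\Pi(X)}\Pi\act\phi$ and we need $\fU(\Forall{\Pi(X)}\Pi\act\phi)=\Pi\act\fU(\phi)\setminus\{\Pi(X)\}=\Pi\act(\fU(\phi)\setminus\{X\})=\Pi\act\fU(\Forall{X}\phi)$; again this uses that $\Pi$ is a bijection on unknowns to commute the pointwise action past set difference.
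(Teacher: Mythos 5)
Your proof is correct and is exactly the ``routine inductions on $\rawr$'' that the paper's one-line proof gestures at, with the right details filled in: the moderated-unknown case via associativity of the pointwise action, and injectivity of $\pi$ (resp.\ $\Pi$) to push the action past the set difference in the $[a]r$ and $\Forall{X}\phi$ cases. No gaps.
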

\begin{proof}
By routine inductions on $\rawr$.
\end{proof}

\subsection{$\alpha$-equivalence}
\label{subsect.aeq}

The use of permissive-nominal terms allows us to `just quotient' syntax by $\alpha$-equivalence.
We can do this for both level 1 variable symbols (atoms) and level 2 variable symbols (unknowns).

\begin{defn}
Call a relation $\somerel$ on terms and on propositions a \deffont{congruence} when it is closed under the following rules:\footnote{We do not assume a congruence is an equivalence relation.  We prefer to keep the notion of `being preserved by all term-formers' orthogonal to the notion of `being transitive, reflexive, and symmetric'.  For instance, we want a rewrite relation to have the first property but not the second.  In \cite{dowek:dedm} the second author considers combining sequents with rewriting.  This is \emph{deduction modulo}; a nominal version of deduction modulo is future work and is one of the background motivations for the creation of PNL.}
$$
\begin{array}{c@{\qquad}c}
\begin{prooftree}
\rawr_i\somerel \raws_i\quad 1\leq i\leq n
\justifies
(\rawr_1,\ldots,\rawr_n)\somerel (\raws_1,\ldots,\raws_n)
\end{prooftree}
&
\begin{prooftree}
\rawr\somerel \raws\ \ (\tf f:(\alpha)\tau,\ \rawr,\raws:\alpha)
\justifies
\tf f(\rawr)\somerel\tf f(\raws)
\end{prooftree}
\\[3ex]
\begin{prooftree}
\rawr\somerel \raws
\justifies
[a]\rawr\somerel [a]\raws
\end{prooftree}
&
\begin{prooftree}
\rawphi\somerel\rawphi'\quad \rawpsi\somerel\rawpsi'
\justifies
\rawphi\limp\rawpsi\somerel \rawphi'\limp\rawpsi'
\end{prooftree}
\\[3ex]
\begin{prooftree}
\rawr\somerel \raws\quad (\tf P:\alpha,\ \rawr,\raws:\alpha)
\justifies
\tf P(\rawr)\somerel \tf P(\raws)
\end{prooftree}
&
\begin{prooftree}
\rawphi\somerel \rawphi'
\justifies
\Forall{X}\rawphi\somerel \Forall{X}\rawphi'
\end{prooftree}
\end{array}
$$
\end{defn}

\begin{defn}
\label{defn.aeq}
Write $(a\ b)$ for the \deffont{(level 1) swapping} permutation which maps $a$ to $b$ and $b$ to $a$ and all other $c$ to themselves.
Similarly, provided $\sort(X)=\sort(Y)$ and $\pmss(X)=\pmss(Y)$, write $(X\ Y)$ for the \deffont{(level 2) swapping}.

Define \deffont{$\alpha$-equivalence} $\aeq$ on terms and propositions to be the least equivalence relation that is a congruence and is such that:
\begin{frameqn}
\begin{array}{c@{\qquad}c}
\begin{prooftree}
(a,b\not\in\fa(\rawr))
\justifies
(b\ a)\act \rawr \aeq r
\end{prooftree}
&
\begin{prooftree}
(X,Y\not\in\fU(\rawphi))
\justifies
(Y\ X)\act\rawphi\aeq \rawphi
\end{prooftree}
\end{array}
\end{frameqn}
\end{defn}

\begin{rmrk}
Definition~\ref{defn.aeq} is inductive, but the reader familiar with nominal terms from e.g. \cite{gabbay:nomu-jv,gabbay:perntu-jv,gabbay:pernl-jv} might be familiar with a more syntax-directed inductive characterisation whose characteristic rules for atoms would look like this in our current notation:
$$
\begin{prooftree}
(b\not\in\fa(\rawr))
\justifies
[a]r \aeq [b](b\ a)\act r
\end{prooftree}
\qquad
\begin{prooftree}
(\Forall{a}\pi(a)\neq\pi'(a)\limp a\not\in\pmss(X))
\justifies
\pi\act X\aeq \pi'\act X
\end{prooftree}
$$ 
The form of Definition~\ref{defn.aeq} is more compact and more abstract.
It was introduced in \cite{gabbay:capasn,gabbay:forcie} (in particular see \cite[Lemma~3.2]{gabbay:forcie} and the discussion surrounding it); 
a detailed proof of an equivalence of the two presentations is in \cite[Theorem~2.31]{gabbay:capasn-jv} 
\end{rmrk}

\begin{xmpl}
\label{xmpl.aeq}
We illustrate Definition~\ref{defn.aeq}. 
Suppose $a,b,c,d:\nu$ and $X,Y:\tau$ for some name sort $\nu$ and base sort $\tau$.
Also suppose $a,b,c,d\not\in\pmss(Y)$ and suppose $b\not\in\pmss(X)$.
\begin{enumerate*}
\item
\emph{We $\alpha$-convert $a$ and $b$ in $[a][b]a$.}\quad

First we note that $b,d\not\in\fa([b]a)$ so by symmetry $[b]a\aeq (b\ d)\act[b]a=[d]a$, and by congruence $[a][b]a\aeq [a][d]a$.
Next we note that $a,c\not\in\fa([a][d]a)$ so by symmetry $[a][d]a\aeq(c\ a)\act[a][d]a=[c][d]c$.
We use transitivity.
\item 
\emph{We $\alpha$-convert $[a][a]b$ to $[c][d]b$.}\quad

We reason as follows: $[a][a]b\aeq [a][d]b\aeq[c][d]b$.
\item
\emph{We $\alpha$-convert $((a\ b)\circ(c\ d))\act Y$ to $Y$.}\quad
First we note that $a,b\not\in\fa((c\ d)\act Y)=(c\ d)\act\pmss(Y)$, so $((a\ b)\circ(c\ d))\act Y=(a\ b)\act((c\ d)\act Y)\aeq (c\ d)\act Y$.
Then we note that $c,d\not\in\fa(Y)$ so $(c\ d)\act Y\aeq Y$.
We use transitivity.
\item
\emph{We $\alpha$-convert $X$ and $a$ in $\Forall{X}\tf P([a]X)$.}

Using $(a\ b)$ and $(X\ Y)$ we deduce:
$$
\Forall{X}\tf P([a]X) \stackrel{(a\ b)}\aeq \Forall{X}\tf P([b](b\ a)\act X) 
\stackrel{(X\ Y)}\aeq \Forall{Y}\tf P([b](b\ a)\act Y) . 
$$  
It is routine to convert this sketch into a full derivation-tree.
\item
In \cite{gabbay:capasn,gabbay:oneaah} an axiom of substitution/$\beta$-equivalence was stated which, translated to a permissive-nominal syntax, is expressed thus:
\begin{tab1}
\rulefont{{\sm}ren} & (\lambda([a]X))b=&(b\ a)\act X& (b\not\in\pmss(X))
\end{tab1} 
Since $a,b\not\in\fa([a]X)$ we have $[a]X\aeq [b](b\ a)\act X$.
Writing $X'$ for an unknown with $\pmss(X')=(b\ a)\act\pmss(X)$ we can therefore rephrase this axiom as follows:
\begin{tab1}
\rulefont{{\sm}id} & (\lambda([b]X'))b=& X'& (b\not\in\pmss(X'))
\end{tab1}
This is the form that the same axiom took in \cite{gabbay:nomalc}; that paper used the slightly less flexible nominal terms framework, so the equivalence noted above had to be the subject of a proof \cite[Lemma~2.16]{gabbay:nomalc}.
\end{enumerate*}
\end{xmpl}
It is not hard to prove by inductive arguments that the definition of $\alpha$-equivalence here gives the same result as that found for instance in \cite[Subsection~2.4]{gabbay:pernl-jv} or \cite[Definition~2.16]{gabbay:pernl}.\footnote{In fact, this was the characterisation used to \emph{design} the `exotic' multi-level $\alpha$-equivalences of permissive-nominal logic or two-level nominal sets \cite{gabbay:twolns}.}

\begin{frametxt}
\begin{defn}
\label{defn.terms.and.propositions}
For each signature $\mathcal S$, we take terms and propositions quotiented by $\alpha$-equivalence. 
\end{defn}
\end{frametxt}

\begin{rmrk}
\label{rmrk.asym}
Atoms and unknowns both have permutation actions and both participate in $\alpha$-conversion and both `look like' variable symbols.
Indeed, in the translation to HOL of Subsection~\ref{subsect.pnl.to.hol} atoms and unknowns \emph{both} are translated to variables.
However, in PNL syntax atoms and unknowns are very distinct:
\begin{itemize*}
\item
Atoms populate only their own special name-sorts $\nu$.
Unknowns populate any sort.
\item
Unknowns depend on atoms in the sense that $\pmss(X)$ is a set of atoms.
Atoms do not depend on anything.
\item
Atoms and unknowns both get permuted in syntax but only unknowns also have a substitution action (defined next in Subsection~\ref{subsect.pnl.sub}).
Conversely, only atoms-permutation is explicit in the syntax, as $\pi\act X$ (there is no $\Pi$ in any $r$, only an action of $\Pi$ \emph{on} $r$). 
\item
If $\phi$ is proposition then $\Forall{X}\phi$ is a proposition, but $\Forall{a}\phi$ is never well-formed syntax.
\item
Later on when we build the denotation in Subsection~\ref{subsect.interpret.pnl.terms}, atoms are interpreted as themselves (so $\denot{\mathcal I}{\varsigma}{a} = a$) whereas unknowns are interpreted via a valuation, in typical Tarski style (so $\denot{\mathcal I}{\varsigma}{X} = \varsigma(X)$).
\end{itemize*}
Why this asymmetry?
\begin{itemize*}
\item
PNL is a first-order logic.
Therefore it has a syntactic class of variables which we call \emph{unknowns} $X$, with an unknowns-substitution action $[X\ssm r]$ (Definition~\ref{defn.Xssmr}), a universal quantifier $\forall X$, $\alpha$-conversion, and a denotation using valuations for unknowns $\denot{\mathcal I}{\varsigma}{X}=\varsigma(X)$ (Definition~\ref{defn.interpret.terms}).
\item
PNL treats atoms (names) as a special kind of symmetric data (i.e. data with a group action).
It has sorts for atoms $\nu$ which are populated by atoms-as-terms, a nominal-terms style explicit permutation $\pi\act X$ to permute atoms in terms, atoms-abstraction $[a]r$ to bind atoms in terms, and a nominal sets style semantics for atoms as themselves; so that $\denot{\mathcal I}{\varsigma}{a}=a$ and $\denot{\mathcal I}{\varsigma}{[a]r}=[a]\denot{\mathcal I}{\varsigma}{r}$ (Definition~\ref{defn.interpret.terms}).
\end{itemize*}
PNL is expressive enough to axiomatise a substitution action for atoms, and even a universal quantifier for atoms.
Thus we can make atoms behave like first-order logic variables, if we want to do this.
For more on these and other nominal theories see \cite{gabbay:capasn-jv} or Figures~4 and~5 of \cite{gabbay:pernl,gabbay:pernl-jv}.

The asymmetry noted above reflects the standard design of first-order logic: unknowns `live in' $\phi$ in the sense that they are used to make universal assertions $\Forall{X}\phi$, and atoms `live in' $r$ in the sense that they interact with the term- and sort-system and we can form $[a]r$ and $\tf f(a)$.

What can be a little confusing is that can mimic variables, and indeed, we can and do use PNL to axiomatise logic. 
In short: PNL is a first-order logic for axiomatising logics.
\end{rmrk}

\begin{rmrk}
We can still ask to what extent it might be possible to go beyond PNL and to `fold' unknowns back down into the syntax, possibly recursively, to obtain some language and/or semantics in which atoms and unknowns are just two aspects of a single well-founded structure.

In fact this idea predates PNL: the Lambda Context Calculus (LamCC) \cite{gabbay:lamcc,gabbay:lamcce} features just such an infinite hierarchy, but it has no semantic theory, no primitive notion of proposition, and has a weak notion of $\alpha$-equivalence.\footnote{In the LamCC, atoms are level 1 variables, unknowns are level 2 variables and behave much like `holes' in a program context, and there are level 3 variables, and so on.  Many interesting program constructs can be expressed in this language.} 
Concurrently with PNL, in two recent papers \cite{gabbay:twolns,gabbay:metvil} the first author has investigated these questions from a semantic perspective.
In \cite{gabbay:twolns} we develop an abstract notion of two-level nominal set that can directly interpret unknowns $X$ and atoms $a$ just as the nominal sets of this paper directly interpret atoms $a$.
In \cite{gabbay:metvil} we develop a concrete model of unknowns as infinite lists of distinct atoms. 
It would be future work to integrate these semantics into a new logic.\footnote{The second author is most interested in PNL as a first-order basic for specifying logic and computation in theorem-proving and also in undergraduate teaching.  Fancier logics and semantics than PNL can certainly be imagined, and perhaps created, but this does not undermine the interest of a `simple' logic, like PNL.} 
A PNL-like logic with more than two levels of variable and a perfect symmetry across levels, is certainly imaginable.
\end{rmrk}

\subsection{Substitution}
\label{subsect.pnl.sub}

\begin{frametxt}
\begin{defn}
A (level 2) \deffont{substitution} $\theta$ is a function from unknowns to terms such that:
\begin{itemize*}
\item
For all $X$, $\theta(X):\sort(X)$ and $\fa(\theta(X))\subseteq \pmss(X)$.
\item
$\theta(X)\equiv \id\act X$ for all but finitely many $X$.
\end{itemize*}
$\theta$ will range over substitutions.
\end{defn}
\end{frametxt}

\begin{defn}
Define $\f{nontriv}(\theta)$ by:
$$
\f{nontriv}(\theta)\equiv 
\{X\mid \theta(X){\not\equiv} \id\act X \text{ or } X{\in}\fU(\theta(Y))\text{ for some }Y\} 
$$
\end{defn}
$\f{nontriv}(\theta)$ is unknowns that can be produced or consumed by $\theta$, other than in the trivial manner that $\theta(X)\equiv\id\act X$.

\begin{defn}
\label{defn.subst.action}
Define a \deffont{substitution action} by:
\begin{frameqn}
\begin{array}{r@{\ }l@{\qquad}r@{\ }l}
a\theta\equiv&  a
&
(r_1,\ldots,r_n)\theta\equiv&  (r_1\theta,\ldots,r_n\theta)
\\
([a]r)\theta\equiv&  [a](r\theta)
&
(\pi\act X)\theta\equiv&  \pi\act \theta(X)
\\
\tf f(r)\theta\equiv&  \tf f(r\theta)
\\
\bot\theta\equiv& \bot
&
(\phi\limp\psi)\theta\equiv&  (\phi\theta)\limp\psi\theta
\\
(\tf P(r))\theta\equiv&  \tf P(r\theta)
&
(\Forall{X}\phi)\theta \equiv&  \Forall{X}(\phi\theta) \quad (X\not\in\f{nontriv}(\theta))
\end{array}
\end{frameqn}
\end{defn}

One kind of substitution will be particularly useful, starting with \rulefont{\forall L} in Figure~\ref{Seq}:
\begin{defn}
\label{defn.Xssmr}
Suppose $X:\alpha$ and $r:\alpha$ and $\fa(r)\subseteq\pmss(X)$.
Define $[X\ssm r]$ by:
$$
[X\ssm r](X)=r
\qquad\qquad
[X\ssm r](Y)=Y\quad\text{all other}\ Y
$$
\end{defn}
It is easy to verify that $[X\ssm r]$ is indeed a substitution.

\begin{rmrk}
\label{rmrk.capturing.sub}
Famously, nominal terms substitution is capturing \cite[Definition~2.13]{gabbay:nomu-jv}.
We spell out how this works in our permissive-nominal context:
Suppose $a\in\pmss(X)$ and $b\not\in \pmss(X)$ (where we assume appropriate sorts).
Then:
\begin{itemize*}
\item
$([a]X)[X\ssm a]=[a]a$.
The $a$ in the substitution $[X\ssm a]$ has been captured by the $[a]X$.
\item
$([b]X)[X\ssm a]=[b]a$.
\item
It is impossible to even ask what $([b]X)[X\ssm b]$ is equal to because $[X\ssm b]$ is not a substitution, since $b\not\in \pmss(X)$.
So $b\not\in \pmss(X)$ cannot be captured by a substitution $[X\ssm b]$, because that substitution does not exist.
\item
Also, $[b](b\ a)\act X=[a]X$.
By construction, 
$$
([b](b\ a)\act X)[X\ssm a]=[b](b\ a)\act a=[b]b=[a]a.
$$
So the choice of representative of $[a]X$ does not matter for capture to occur.
\end{itemize*}
In \cite{gabbay:metvil} we propose a view of $X$ as a well-ordering on its permission set; that is, we identify $X$ literally with an infinite list of atoms.
Viewed from this perspective, the nominal substitution action is not capturing at all: it is simply a compact way to present an `infinite raising' or `infinite Skolemisation' (cf. Remark~\ref{rmrk.skolem}), or a de Bruijn index \cite{bruijn:lamcnn}.
This idea underlies also the translation to HOL which we construct later in Definition~\ref{defn.translation}.
\end{rmrk}

\subsection{Sequents and derivability}

\begin{defn}
\label{defn.seq}
$\Phi$ and $\Psi$ will range over sets of propositions.
We may write $\phi,\Phi$ and $\Phi,\phi$ as shorthand for $\{\phi\}\cup\Phi$ (where we do not insist that $\phi\not\in\Phi$, that is, the union need not be disjoint). 
\begin{itemize*}
\item
A \deffont{sequent} of restricted PNL is a pair $\Phi\nopicent\Psi$.
\item
A \deffont{sequent} of full PNL is a pair $\Phi\cent\Psi$.
\end{itemize*}
Write 
$\fU(\Phi,\Psi)=\bigcup\{\fU(\phi)\mid \phi\in\Phi\}\cup\bigcup\{\fU(\psi)\mid\psi\in\Psi\}$.
\end{defn}

\begin{frametxt}
\begin{defn}[Derivable sequents]
Define the \deffont{derivable sequents} of full PNL and restricted PNL by the rules in Figures~\ref{Seq} and~\ref{rSeq} respectively.
\end{defn}
\end{frametxt}
\noindent The sole difference between Figures~\ref{Seq} and~\ref{rSeq} is in the axiom rule, and is highlighted with a light blue rectangle. 

\begin{figure*}[t!]
$$
\begin{array}{c@{\qquad}c}
\shademath{\begin{prooftree}
\phantom{h}
\justifies
\Phi,\,\phi\cent \pi\act\phi,\,\Psi
\using\rulefont{Ax}
\end{prooftree}}
&
\begin{prooftree}
\phantom{h}
\justifies
\Phi,\,\bot\cent \Psi
\using\rulefont{\bot L}
\end{prooftree}
\\[4ex]
\begin{prooftree}
\Phi\cent \phi,\,\Psi
\quad
\Phi,\,\psi\cent \Psi
\justifies
\Phi,\,\phi\limp\psi\cent\Psi
\using\rulefont{{\limp}L}
\end{prooftree}
&
\begin{prooftree}
\Phi,\,\phi\cent \psi,\,\Psi
\justifies
\Phi\cent \phi\limp\psi,\,\Psi
\using\rulefont{{\limp}R}
\end{prooftree}
\\[4ex]
\begin{prooftree}
{
\begin{array}{c}
\Phi,\,\phi[X\ssm r]\cent \Psi
\\
(\fa(r){\subseteq}\pmss(X), 
\ r{:}\sort(X))
\end{array}
}
\justifies
\Phi,\,\Forall{X}\phi\cent \Psi
\using\rulefont{{\forall}L}
\end{prooftree}
&
\begin{prooftree}
\Phi\cent \phi,\,\Psi\quad {\small (X\not\in\fU(\Phi,\Psi))}
\justifies
\Phi\cent \Forall{X}\phi,\,\Psi
\using\rulefont{{\forall}R}
\end{prooftree}
\\[5ex]
\end{array}
$$
\caption{Sequent derivation rules of full Permissive-Nominal Logic}
\label{Seq}
\end{figure*}
\begin{figure*}[t!]
$$
\begin{array}{c@{\qquad}c}
\shademath{\begin{prooftree}
\phantom{h}
\justifies
\Phi,\,\phi\nopicent \phi,\,\Psi
\using\rulefont{Ax^{\nopi}}
\end{prooftree}}
&
\begin{prooftree}
\phantom{h}
\justifies
\Phi,\,\bot\nopicent \Psi
\using\rulefont{\bot L}
\end{prooftree}
\\[4ex]
\begin{prooftree}
\Phi\nopicent \phi,\,\Psi
\quad
\Phi,\,\psi\nopicent \Psi
\justifies
\Phi,\,\phi\limp\psi\nopicent\Psi
\using\rulefont{{\limp}L}
\end{prooftree}
&
\begin{prooftree}
\Phi,\,\phi\nopicent \psi,\,\Psi
\justifies
\Phi\nopicent \phi\limp\psi,\,\Psi
\using\rulefont{{\limp}R}
\end{prooftree}
\\[4ex]
\begin{prooftree}
{
\begin{array}{c}
\Phi,\,\phi[X\ssm r]\nopicent \Psi
\\
(\fa(r){\subseteq}\pmss(X), 
\ r{:}\sort(X))
\end{array}
}
\justifies
\Phi,\,\Forall{X}\phi\nopicent \Psi
\using\rulefont{{\forall}L}
\end{prooftree}
&
\begin{prooftree}
\Phi\nopicent \phi,\,\Psi\quad {\small (X\not\in\fU(\Phi,\Psi))}
\justifies
\Phi\nopicent \Forall{X}\phi,\,\Psi
\using\rulefont{{\forall}R}
\end{prooftree}
\\[5ex]
\end{array}
$$
\caption{Sequent derivation rules of restricted Permissive-Nominal Logic}
\label{rSeq}
\end{figure*}


\begin{nttn}
We may write $\Phi\nopicent\Psi$ as shorthand for `$\Phi\nopicent\Psi$ is a derivable sequent'.
We may write $\Phi\not\nopicent\Psi$ as shorthand for `$\Phi\nopicent\Psi$ is not a derivable sequent'.

Similarly for $\Phi\cent\Psi$ and $\Phi\not\cent\Psi$.
\end{nttn}

\subsection{Discussion of the PNL axiom rule}
\label{subsect.pnl.ax}

Figure~\ref{Seq} is the logic of \cite{gabbay:pernl-jv,gabbay:nomtnl}.
Figure~\ref{rSeq} is the logic we translate to HOL in this paper.
The only difference is the `$\pi$' in the axiom rule: full PNL has it (see \rulefont{Ax}), and restricted PNL does not (see \rulefont{Ax^\nopi}).
Restricted PNL is a subset of full PNL, in the sense that (obviously) $\Phi\nopicent\Psi$ implies $\Phi\cent\Psi$ (this suggests that the models of restricted PNL should be a superset of those of full PNL, which will indeed turn out to be the case; see Appendix~\ref{sect.completeness}).

Why the difference?  
Because the translation to HOL identifies atoms with functional arguments.
Atoms are symmetric up to permutation in full PNL; this is built into \rulefont{Ax} in Figure~\ref{Seq}.
Functional arguments are typically not symmetric.

We might try to translate full PNL to HOL by translating $n!$ permutation instances of each $r$ or $\phi$, where $n$ is some notion of the number of atoms in $r$ or $\phi$ (cf. \emph{capture typings} in Definition~\ref{defn.capture.typing}); but that would be `cheating' in the sense that most of the syntax would then be generated by a meta-level `macro' which does $n!$ amount of work.
The issue here is not whether PNL can be encoded in HOL; the issue is whether it can be cleanly translated into HOL. 
These are related but distinct questions.

To quickly see the difference in derivational power between full and restricted PNL, assume a name sort $\nu$, a proposition-former $\tf P:\nu$, and two atoms $a,b:\nu$.
Then the difference in the entailment relations of PNL and restricted PNL can be summed up as follows:
\begin{itemize*}
\item
$\tf P(a)\cent \tf P(a)$\ and\ $\tf P(a)\cent \tf P(b)$.
\item
$\tf P(a)\nopicent \tf P(a)$\ but\ \sout{$\tf P(a)\nopicent \tf P(b)$}.
\end{itemize*}
Note that not even full PNL can derive that $Q(a,b)$ entails $Q(a,a)$; we can permute, but we have to permute in the \emph{entire} proposition.
So for instance if we axiomatise the syntax and derivability of first-order logic in PNL then we would have a predicate $\tf{entails}$ and we might prove $\tf{entails}(\tf P(a),\tf P(a))$.
By equivariance of full PNL we could also derive $\tf{entails}(P(b),\tf P(b))$.\footnote{This has actually been used in real proofs; starting with \cite{gabbay:frelog} we used what amounts to \rulefont{Ax} to rename variable symbols in inductive proofs. See \cite[Subsection~4.2]{gabbay:fountl} for this \emph{equivariance} principle discussed as a practical tool to obtain one-line proofs at the rigorous but informal meta-level of papers.}
However, we still cannot derive \sout{$\tf{entails}(\tf P(a),\tf P(b))$}, and if we could then that would be wrong.

In Appendix~\ref{sect.completeness} we see that this difference corresponds in models to proposition-formers being interpreted by equivariant functions (for full PNL) or not necessarily equivariant functions (for restricted PNL).

It has to be this way:
Definition~\ref{defn.translation} translates PNL terms and predicates to HOL terms and predicates.
In Lemma~\ref{lemm.it.has.to.be} we illustrate why only restricted PNL can be translated to HOL by our translation: the derivability of full PNL is too strong for HOL derivability and the translation would not be sound.

Note that this does not prove that other translations to HOL do not exist, but (as the discussion of $n!$ above suggests) we speculate that they would be significantly less natural.

\section{HOL syntax and derivability}
\label{sect.hol}

Higher-order logic (HOL) syntax and derivability should be familiar \cite{miller:logho,farmer:sevvst,andrews:intmlt,church:forstt}.
We give the basics.

\subsection{Syntax}

We present HOL as a derivation system over simply-typed $\lambda$-terms with constants and types for logical reasoning (like a type of truth-values and constant symbols like $\limp$ and $\forall$).
This is all standard.

\begin{defn}
\label{defn.hol.sort.sig}
A \deffont{HOL signature} is a set $\mathcal D$ of \deffont{base types}, which includes a distinguished base type of \deffont{truth-values} $o\in\mathcal D$.
$\basetype$ will range over base types.
A \deffont{type-language} is defined by
\begin{frameqn}
\beta ::= \basetype \mid 
(\beta,\ldots,\beta) \mid \beta\to\beta
.
\end{frameqn}
\end{defn}
It is not necessary to include products $(\beta_1,\ldots,\beta_n)$, but for the purposes of translating PNL into HOL doing this is convenient.

\begin{defn}
\label{defn.hol.term.signature}
A \deffont{term-signature} over a HOL signature $\mathcal D$ is a tuple $(\mathcal G,\type)$ where:
\begin{itemize*}
\item
$\mathcal G$ is a set of \deffont{constants}, which must contain elements $\bot$, $\limp$, and $\forall_\beta$ for every type $\beta$.
\item $\type$ assigns to each ${\tf g\in\mathcal G}$ a
type $\beta$
in the type-language
determined by $\mathcal D$, such that $\type(\bot)=o$, $\type(\limp)=o\to o\to o$, and $\type(\forall_\beta)=(\beta\to o)\to o$.
\end{itemize*}
A \deffont{signature} $\mathcal T$ is then a tuple $(\mathcal D,\mathcal G,\type)$.
\end{defn}

\begin{nttn}
We write $\tf g:\beta$ for $\type(\tf g)=\beta$.
\end{nttn}

\begin{rmrk}
We strongly deprecate referring to the constant $\forall_\beta:(\beta\to o)\to o$ as `higher-order abstract syntax' (HOAS), as sometimes happens.  
That term should refer to inductive types with binding constructed using constants of higher type like $(\Lambda\to\Lambda)\to\Lambda$ (strong HOAS) or $(\nu\to\Lambda)\to\Lambda$ (weak HOAS) \cite{despeyroux94higherorder,pfenning:hoas} (the study of inductive syntax with binding is not the topic of this paper). 

A term $\forall_\beta:(\beta\to o)\to o$ (plus axioms) expresses the \emph{meaning} of $\forall$ \cite[Section~2]{church:forstt} and would still have meaning if our syntax was, e.g. combinators.   In contrast, the \emph{syntax} of combinators could be represented without any need for higher-order syntax, since it does not have binders \cite[Section~2]{hindley:lamcci}.
\end{rmrk}

\begin{defn}
\label{defn.hol.terms.sorts}
For each signature $\mathcal T=(\mathcal D,\mathcal G,\type)$ and each type $\beta$ over $\mathcal D$ fix a countably infinite set of \deffont{variables} of that type.

$X,Y,Z$ will range over distinct HOL variables.\footnote{This means that if the reader sees `$X$' this could refer either to a HOL variable or---recalling Definition~\ref{defn.term.signature}---to a PNL unknown.  
We will make sure that it is always clear from context which is meant.} 
Write $\type(X)$ for the type of $X$.
\end{defn}

\begin{defn}
For each signature $\mathcal T$ define \deffont{HOL terms} over $\mathcal T$ by
$$
t::= X \mid \lam{X}t \mid tt \mid (t,\ldots,t) \mid \tf g
$$
and a \deffont{typing} relation by: 
\begin{frameqn}
\begin{array}{c@{\qquad}c@{\qquad}c@{\qquad}c}
\begin{prooftree}
\rawt:\beta\ \ (\type(X){=}\beta')
\justifies
\lam{X}\rawt:\beta'{\to}\beta
\end{prooftree}
&
\begin{prooftree}
\rawt':\beta'{\to}\beta\quad \rawt:\beta'
\justifies
\rawt'\rawt:\beta
\end{prooftree}
& 
\begin{prooftree}
\rawt_1:\beta_1 \ \ldots\ \rawt_n:\beta_n
\justifies
(\rawt_1,\ldots,\rawt_n):(\beta_1,\ldots,\beta_n)
\end{prooftree}
&
\begin{prooftree}
(\type(\tf g){=}\mu)
\justifies
\tf g:\mu
\end{prooftree}
\end{array}
\end{frameqn}
\end{defn}

We now define $\alpha$-equivalence.
We would not normally be so detailed about this, but when we map PNL terms and propositions to HOL later, it will be useful to have been precise here:
\begin{defn}
\label{defn.hol.perm}
A \deffont{permutation} of HOL variables is a bijection $\varpi$ such that $\f{nontriv}(\varpi)=\{X\mid \varpi(X)\neq X\}$ is finite.
Give HOL terms a permutation action $\varpi\act t$ defined by:
\maketab{tab4}{R{6.5em}@{}C{1em}@{}L{6em}@{\quad}R{4em}@{}C{1em}@{}L{5em}@{\quad}R{3em}@{}C{1em}@{}L{7em}}
\begin{tab4}
\varpi\act X&=&\varpi(X)
&
\varpi\act \lam{X}t&=&\lam{\varpi(X)}\varpi\act t
&
\varpi\act (t't)&=&(\varpi\act t')(\varpi\act t)
\\
\varpi\act(t_1,\dots,t_n)&=&(\varpi\act t_1,\dots,\varpi\act t_n)
&
\varpi\act \tf g&=&\tf g
\end{tab4}
Free variables are defined by:
\begin{tab4}
\fv(X)&=&\{X\}
&
\fv(\lam{X}t)&=&\fv(t)\setminus\{X\}
&
\fv(t't)&=&\fv(t')\cup\fv(t)
\\
\fv((t_1,\dots,t_n))&=&\bigcup_i\fv(t_i)
&
\fv(\tf g)&=&\varnothing
\end{tab4}
Call a relation $\somerel$ on HOL terms a \deffont{congruence} when it is closed under the following rules:
$$
\begin{prooftree}
t\somerel u
\justifies
\lam{X}t\somerel \lam{X}u
\end{prooftree}
\qquad
\begin{prooftree}
t'\somerel u'\quad t\somerel u
\justifies
t't\somerel u'u
\end{prooftree}
\qquad
\begin{prooftree}
t_i\somerel u_i \quad (1\leq i\leq n)
\justifies
(t_1,\dots,t_n)\somerel (u_1,\dots,u_n)
\end{prooftree}
$$
Define $\alpha$-equivalence to be the least congruence that is an equivalence relation and is such that:
$$
\begin{prooftree}
(X,Y\not\in\fv(t))
\justifies
(Y\ X)\act t\aeq t
\end{prooftree}
$$ 
We quotient terms by $\alpha$-equivalence and define \deffont{capture-avoiding substitution} $\rawt[X\ssm u]$ as usual.
\end{defn}

\begin{defn}
We write $\rawt{\,:\,}\beta$ for \emph{$\rawt$ is a term and has type $\beta$}.
We call $\rawt$ \deffont{typable} when $\rawt:\beta$ for some type $\beta$.

We call a term a \deffont{HOL proposition} when it has type $o$.
$\xi$ and $\chi$ will range over HOL propositions.
We may write $\forall_\beta\,\lam{X}\xi$ as $\Forall{X}\xi$. 
\end{defn}
 
\begin{defn}
\label{defn.hol.seq}
$\Xi$ and $\Chi$ will range over sets of HOL propositions. 
We may write $\xi,\Xi$ and $\Xi,\xi$ as shorthand for $\{\xi\}\cup\Xi$. 
 
Write 
$\fv(\Xi,\Chi)=\bigcup\{\fv(\xi)\mid \xi\in\Xi\}\cup\bigcup\{\fv(\chi)\mid\chi\in\Chi\}$.

A \deffont{sequent} is a pair $\Xi\holcent\Chi$.
\end{defn}

\begin{frametxt}
\begin{defn}[Derivable sequents]
The \deffont{derivable sequents} are defined in Figure~\ref{hol.Seq}. 
\end{defn}
\end{frametxt}

\begin{figure*}[t]
$$
\begin{array}{c@{\qquad}c}
\begin{prooftree}
\phantom{h}
\justifies
\Xi,\,\xi\holcent \xi,\,\Chi
\using\rulefont{hAx}
\end{prooftree}
&
\begin{prooftree}
\phantom{h}
\justifies
\Xi,\,\bot\holcent \Chi
\using\rulefont{h\bot L}
\end{prooftree}
\\[4ex]
\begin{prooftree}
\Xi\holcent \xi,\,\Chi
\quad
\Xi,\,\chi\holcent \Chi
\justifies
\Xi,\,\xi\limp\chi\holcent\Chi
\using\rulefont{h{\limp}L}
\end{prooftree}
&
\begin{prooftree}
\Xi,\,\xi\holcent \chi,\,\Chi
\justifies
\Xi\holcent \xi\limp\chi,\,\Chi
\using\rulefont{h{\limp}R}
\end{prooftree}
\\[4ex]
\begin{prooftree}
\Xi,\,\xi[X\ssm \rawt]\holcent \Chi
\quad
(\rawt{:}\type(X))
\justifies
\Xi,\,\Forall{X}\xi\holcent \Chi
\using\rulefont{h{\forall}L}
\end{prooftree}
&
\begin{prooftree}
\Xi\holcent \xi,\,\Chi\quad {\small (X\not\in\fv(\Xi,\Chi))}
\justifies
\Xi\holcent \Forall{X}\xi,\,\Chi
\using\rulefont{h{\forall}R}
\end{prooftree}
\\[3ex]
\end{array}
$$
\caption{Sequent derivation rules of Higher-Order Logic}
\label{hol.Seq}
\end{figure*}

\section{The translation from nominal to functional syntax, and its soundness}
\label{sect.translation.sound}

\subsection{Translation from PNL to higher-order logic}
\label{subsect.pnl.to.hol}

In this subsection we show how to translate a PNL signature $\mathcal S$ and propositions and terms in that signature, to a higher-order logic (HOL) signature and propositions and terms in that signature.
We start by translating a PNL signature $\mathcal S$ to a HOL signature $\mathcal T_{\mathcal S}$.
First, we set up some notation:

\begin{nttn}
\label{nttn.finite.lists}
Let $D$ range over finite lists of distinct atoms.
\begin{itemize*}
\item
Write $a\in D$ when $a$ occurs in $D$.
\item
Write $D'\subseteq D$ when every element in $D'$ occurs in $D$ (disregarding order).
Similarly if $S$ is a set of atoms write $D\subseteq S$ when every element in $D$ occurs in $S$.
\item
If $S$ is a set of atoms write $D\cap S$ for the list obtained by removing from $D$ just those atoms not in $S$.
\item
Write $\pi\act D$ for the list obtained by applying $\pi$ pointwise to the elements of $D$ in order.
\item
Write $D,a$ for the list obtained by appending $a$; when we write this we include an assumption that $a\not\in D$.
\item
Write $\lam{D}t$ for $\lam{d_1}\dots\lam{d_n}t$ where $D=[d_1,\dots,d_n]$.
\end{itemize*}
\end{nttn}

\begin{defn}
\label{defn.TS}
\label{defn.hol.translation}
From a PNL signature $\mathcal S$ determine a HOL signature $\mathcal T_{\mathcal S}$ by the following specification:
\begin{enumerate*}
\item
For every atoms-sort $\nu$ in $\mathcal S$ assume a HOL base type $\mu_\nu$.
\item
For every base sort $\tau$ assume a HOL type $\mu_\tau$.
\end{enumerate*}
Translate sorts in $\mathcal S$ to types in $\mathcal T_{\mathcal S}$ as follows:
\begin{frameqn}
\begin{array}{r@{\ }l@{\qquad}r@{\ }l@{\qquad}r@{\ }l}
\hol{}{\nu}=&\mu_\nu
&
\hol{}{\tau}=&\mu_\tau
&
\hol{}{(\alpha_1,\ldots,\alpha_n)}=&(\hol{}{\alpha_1},\cdots,\hol{}{\alpha_n})
\\
\hol{}{[\nu]\alpha}=&\mu_\nu\to\hol{}{\alpha}
\end{array}
\end{frameqn}
\begin{enumerate*}
\setcounter{enumi}{2}
\item
For every term-former $\tf f:(\alpha)\tau$ assume a HOL constant $\tf g_{\smtf f}:\hol{}{\alpha}\to\tau$.
\item
For every proposition-former $\tf P:\alpha$ assume a HOL constant $\tf g_{\smtf P}:\hol{}{\alpha}\to o$.
\item
For every atom $a:\nu$ assume a HOL variable $a:\mu_\nu$.

It is convenient to assume this correspondence is a literal identity; i.e. that $\mathbb A_\nu$ is actually a subset of the set of HOL variables of type $\mu_\nu$, and that there are countably infinitely many HOL variables of type $\mu_\nu$ that are not atoms. 

In particular, this means that every permutation $\pi$ in the sense of Definition~\ref{defn.permutation} is also a permutation $\varpi$ in the sense of Definition~\ref{defn.hol.perm}.
\item
For every unknown $X:\alpha$ and list $D$ assume a distinct HOL variable $\XD$ that is not an atom\footnote{So $X$ is one of the countably infinitely many HOL variables that are not atoms.} of type $\mu_{\nu_1}{\to}\dots{\to}\mu_{\nu_n}{\to}\hol{}\alpha$ where $\nu_i$ is the sort of the $i$th atom in $\GammaX$ (by convention and as standard, if $\GammaX$ is empty we take this to be $\hol{}{\alpha}$).
\end{enumerate*}
\end{defn}

\begin{frametxt}
\begin{defn}
\label{defn.translation}
Given a list $D$ translate PNL terms and propositions in $\mathcal S$ to HOL terms and propositions in $\mathcal T_{\mathcal S}$ (Definition~\ref{defn.TS}) by the rules in Figure~\ref{fig.hol.translation}.
\end{defn}
(The notation $\pi\act(\GammaX)$ is defined in Notation~\ref{nttn.finite.lists}.)
\end{frametxt}

\begin{figure}[t]
$$
\begin{array}{r@{\ }l@{\qquad}r@{\ }l@{\qquad}r@{\ }l}
\hol{D}{a}=&a
&
\hol{D}{(r_1,\ldots,r_n)}=&(\hol{D}{r_1} ,\ldots,\hol{D}{r_n} )
&
\hol{D}{\tf f(r)}=&\tf g_{\smtf f}\, \hol{D}{r} 
\\
\hol{D}{[a]r}=&\lam{a}\hol{D}{r}
&
\hol{D}{\pi\act X}=&\XD\pi\act(\GammaX)
\\
\hol{D}{\bot}=&\bot
&
\hol{D}{\phi\limp\psi}=&{\limp}\hol{D}{\phi}\hol{D}{\psi}
&
\hol{D}{\tf P(r)}=&\tf g_{\smtf P}\, \hol{D}{r}
\\
\hol{D}{\Forall{X}\phi}=&\forall\, \lam{X_D}\hol{D}{\phi}
\end{array}
$$
\caption{Translation from PNL to HOL}
\label{fig.hol.translation}
\end{figure}

\begin{xmpl}
\label{xmpl.why.capturable}
Suppose $\GammaX$ (Notation~\ref{nttn.finite.lists}) is the list $[a]$. 
Assume a proposition-former $\tf{equal}$ of appropriate arity. 
Then: 
$$
\begin{gathered}
\hol{D}{\id\act X}=\XD\,a
\quad
\hol{D}{(b\ a)\act X}=\XD\,b
\quad
\hol{D}{[a]\id\act X}=\lam{a}(\XD\,a)
\quad
\hol{D}{[b](b\ a)\act X}=\lam{b}(\XD\,b)
\\
\hol{D}{\Forall{X}\tf{equal}([a]X,[b](b\ a)\act X)}=\forall_{\hol{}{\type(\XD)}}\,\lam{X}(\tf{equal}(\lam{a}(\XD\,a))(\lam{b}(\XD\,b)))
\end{gathered}
$$
Assuming appropriate axioms for $\tf{equal}$, we would expect this to be true.
Now assume $\GammaY$ is the list $[a,b]$. 
Then:
$$
\begin{gathered}
\hol{D}{\id\act Y}=Y_D ab
\quad
\hol{D}{(b\ a)\act Y}=Y_D ba
\quad
\hol{D}{[a]\id\act Y}=\lam{a}(Y_D ab)
\quad
\hol{D}{[b](b\ a)\act Y}=\lam{b}(Y_D ba)
\\
\hol{D}{\Forall{Y}\tf{equal}([a]Y,[b](b\ a)\act Y)}=\forall_{\hol{}{\type(\YD)}}\,\lam{Y_D}(\tf{equal}(\lam{a}(Y_D ab))(\lam{b}(Y_D ba)))
\end{gathered}
$$
We would expect this to be false.  
What has changed with respect to the previous case, is that $b$ is fresh for $X$ but not for $Y$.
\end{xmpl}

\begin{rmrk}
\label{rmrk.skolem}
The translation of $\pi\act X$ to $\XD\pi\act(\GammaX)$ can be seen as a \emph{Skolemisation} or \emph{raising}, where $D$ is a finite collection of atoms/variables that are considered `relevant'.

See for instance Subsections~4.2 and~4.3 of Chapter~1 of \cite{gabbay:hantm} or Definition~53 of Chapter~3 of \cite{gabbay:hantm}, where Skolemisation is discussed in the context of tableau methods.
There, just as here, there is a notion of the `relevant' variables.
See also Section~5 of \cite{miller:uniump}, where \emph{raising} is discussed in the context of unification in the presence of mixed quantifiers.

In fact, Definition~\ref{defn.translation} is a modified version of \cite[Definition~8.3]{gabbay:perntu-jv} which translated between nominal unification and unification over a generalisation of the same notion of \emph{pattern} used by Miller in \cite{miller:logpll,miller:uniump}.
See also \cite{gabbay:unialt}, where similar ideas are applied to algebraic reasoning.
The idea of these `nominal' translations is that $X$ translates to a variable $\XD$ of higher order.
In Proposition~\ref{prop.capturable.minimal} we state a formal sense in which, provided $D$ contains all `relevant' atoms, this translation loses no information (the result itself is from \cite{gabbay:perntu-jv}).
What does `relevant' mean? 
We examine Figure~\ref{fig.capture.typings} and see that it means intuitively `is permuted by some $\pi$ acting on $\pmss(X)$'. 
In Subsection~\ref{subsect.trans.soundness} we apply this to PNL. 
\end{rmrk}

\begin{lemm}
\label{lemm.hol.gamma.fa}
\begin{itemize*}
\item
Suppose $a$ is an atom.
Then if $a\in\fv(\hol{D}{r})$ then $a\in\fa(r)$.
\item
$\hol{D}{\pi\act r}=\pi\act\hol{D}{r}$ (for $\pi$ on the right-hand side considered as a permutation of HOL variables).
\end{itemize*}
As a corollary, the translation $\hol{D}{r}$ is well-defined.
That is, if $r$ and $s$ are $\alpha$-equivalent then $\hol{D}{r}=\hol{D}{s}$.
\end{lemm}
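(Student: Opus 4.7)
The plan is to prove the three parts in sequence, with part 3 depending on an auxiliary lemma parallel to parts 1 and 2 for level-2 permutations.

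\textbf{Part 1} (free variables) goes by a routine induction on $r$ extended to propositions. The only interesting case is $r=\pi\act X$, where $\fv(\hol{D}{\pi\act X})=\fv(\XD\,\pi\act(\GammaX))=\{\XD\}\cup\pi\act(\GammaX)$; an \emph{atom} in this set must lie in $\pi\act(\GammaX)\subseteq\pi\act\pmss(X)=\fa(\pi\act X)$ (it cannot be $\XD$, which is not an atom by Definition~\ref{defn.TS}). The abstraction case uses the IH together with $\fv(\lam{X}t)=\fv(t)\setminus\{X\}$ matching $\fa([a]r)=\fa(r)\setminus\{a\}$; all other cases are immediate.

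\textbf{Part 2} (equivariance) is again by induction on $r$. For $r=\pi'\act X$, observe $\hol{D}{\pi\act(\pi'\act X)}=\XD\,(\pi{\circ}\pi')\act(\GammaX)$ while $\pi\act(\XD\,\pi'\act(\GammaX))=\XD\,\pi\act(\pi'\act(\GammaX))$, since $\pi$ fixes $\XD$ (as $\XD$ is not an atom, so the HOL permutation induced by $\pi$ in Definition~\ref{defn.hol.perm} leaves it alone). For $r=[a]r'$, compute $\hol{D}{\pi\act[a]r'}=\hol{D}{[\pi(a)]\pi\act r'}=\lam{\pi(a)}\hol{D}{\pi\act r'}$ and apply the IH; compare with $\pi\act\lam{a}\hol{D}{r'}=\lam{\pi(a)}\pi\act\hol{D}{r'}$. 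The proposition case $\Forall{X}\phi$ uses that $\pi$ fixes $\XD$.

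\textbf{Part 3} (the corollary) is by induction on a derivation of $r\aeq s$. Congruence, reflexivity, symmetry and transitivity of $\aeq$ are immediate from the IH. For the atoms-swap axiom $(b\ a)\act r\aeq r$ with $a,b\notin\fa(r)$: by part 2, $\hol{D}{(b\ a)\act r}=(b\ a)\act\hol{D}{r}$; by part 1, $a,b\notin\fv(\hol{D}{r})$; then HOL's own $\alpha$-axiom (Definition~\ref{defn.hol.perm}) gives $(b\ a)\act\hol{D}{r}\aeq\hol{D}{r}$.

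The unknowns-swap axiom $(Y\ X)\act\phi\aeq\phi$ (with $X,Y\notin\fU(\phi)$) is \emph{the main obstacle}, because the lemma as stated only talks about atoms. To handle it I would prove an auxiliary lemma, also by induction on $r$, asserting: (i) if $\XD\in\fv(\hol{D}{r})$ then $X\in\fU(r)$; and (ii) $\hol{D}{(Y\ X)\act r}=(\YD\ \XD)\act\hol{D}{r}$. This statement is well-typed because Definition~\ref{defn.permutation.action.2} requires $\sort(Y)=\sort(X)$ and $\pmss(Y)=\pmss(X)$, so $\GammaX=\GammaY$ and $\XD,\YD$ have identical HOL type, making the HOL swap $(\YD\ \XD)$ meaningful. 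The inductive steps mirror parts 1--2; in particular, $(\YD\ \XD)$ fixes every HOL atom, so it commutes with the translation of atoms and abstractions, and the key case $\pi\act X$ gives $\hol{D}{(Y\ X)\act(\pi\act X)}=\hol{D}{\pi\act Y}=\YD\,\pi\act(\GammaX)=(\YD\ \XD)\act\hol{D}{\pi\act X}$. With (i)--(ii) in hand, the unknowns-swap case closes exactly like the atoms case: $X,Y\notin\fU(\phi)$ together with (i) forces $\XD,\YD\notin\fv(\hol{D}{\phi})$, so HOL's $\alpha$-axiom gives $(\YD\ \XD)\act\hol{D}{\phi}\aeq\hol{D}{\phi}$. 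Once this auxiliary lemma is identified and proved, the rest of part 3 is bookkeeping.
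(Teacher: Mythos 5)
Your proof is correct and follows essentially the same route as the paper's (very terse) proof: routine inductions on $r$, with the case $\pi\act X$ of part~1 resting on exactly the fact $\GammaX\subseteq\pmss(X)$ that the paper singles out. The auxiliary level-2 lemma you identify for the unknowns-swap case (that $\XD\in\fv(\hol{D}{r})$ implies $X\in\fU(r)$, and that $\hol{D}{(Y\ X)\act r}=(\YD\ \XD)\act\hol{D}{r}$, well-typed because $\pmss(X)=\pmss(Y)$ forces $\GammaX=\GammaY$) is precisely the detail the paper elides by deferring the corollary to the cited reference, and you supply it correctly.
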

\begin{proof}
By routine inductions on $r$.
The proof that $\fv(\hol{D}{\pi\act X})\subseteq\fa(\pi\act X)$ uses the fact that $\GammaX\subseteq\pmss(X)$.
The corollary follows; for more details see \cite[Section~8]{gabbay:perntu-jv}.
\end{proof}

\subsection{Capture typing}

We mentioned in Remark~\ref{rmrk.skolem} that the translation of Definition~\ref{defn.translation} requires us to declare a finite list $D$ of `relevant' atoms.
How large must $D$ be in order to capture all the important information in some $r$ or $\phi$?
This is calculated by a \emph{capture typing}, an idea going back to \cite{gabbay:perntu,gabbay:perntu-jv}.
\begin{defn}
\label{defn.capture.typing}
Define \deffont{capture typings} $D\cent r:A$ and $D\cent\phi:A$ inductively by the rules in Figure~\ref{fig.capture.typings}.
Here $D$ ranges over finite lists of distinct atoms as described in Notation~\ref{nttn.finite.lists}, and $A$ ranges over finite sets of atoms.

If $A=\varnothing$ then we may omit the `${:}A$' and write just $D\cent r$ and $D\cent\phi$.
Write $D\cent\Psi$ when $D\cent\psi$ for every $\psi\in\Psi$.
\end{defn}

\begin{rmrk}
We are only interested in the case $A=\varnothing$, but we need to consider nonempty $A$ just as part of the inductive definition.
Intuitively, $D\cent r:A$ can be read as `$D$ is relevant to $r$ in the context of abstractions over the atoms in $A$'.
\end{rmrk}

\begin{figure}
$$
\begin{gathered}
\begin{prooftree}
\phantom{h}
\justifies
D\cent a:A
\end{prooftree}
\qquad\qquad 
\begin{prooftree}
D\cent r:A
\justifies
D\cent \tf f(r):A
\end{prooftree}
\qquad\qquad 
\begin{prooftree}
D\cent r:A,a
\justifies
D\cent [a]r:A
\end{prooftree}
\\[3ex]
\begin{prooftree}
D\cent r_i:A\quad (1{\leq}i{\leq}n)
\justifies
D\cent (r_1,\ldots,r_n):A
\end{prooftree}
\qquad\qquad 
\begin{prooftree}
((\f{nontriv}(\pi)\cup A)\cap\pmss(X)\subseteq D)
\justifies
D\cent \pi\act X:A
\end{prooftree}
\\[3ex]
\begin{prooftree}
D\cent r:A
\justifies
D\cent\tf P(r):A
\end{prooftree}
\quad\qquad 
\begin{prooftree}
D\cent\phi:A\quad D\cent\psi:A
\justifies
D\cent\phi\limp\psi:A
\end{prooftree}
\quad\qquad 
\begin{prooftree}
\phantom{h}
\justifies
D\cent\bot:A
\end{prooftree}
\quad\qquad 
\begin{prooftree}
D\cent \phi:A
\justifies
D\cent\Forall{X}\phi:A
\end{prooftree}
\end{gathered}
$$
\caption{Capture typing}
\label{fig.capture.typings}
\end{figure}

\begin{rmrk}
\label{rmrk.capture.closed}
As we mentioned in Remark~\ref{rmrk.skolem}, the intuitive reading of $D\cent r$ is `the atoms in $D$ get permuted in some $X$ occurring in $r$'.

Thus, the interesting case in Figure~\ref{fig.capture.typings} is the rule for $\pi\act X$.
This ensures that $D$ is large enough to record all the important atoms in $\pi$ or abstracted further up in the term---that is, those permitted in $X$---so that we do not lose information when we form $\hol{D}{\pi\act X}=\XD\pi\act(\GammaX)$.
This is made formal in Proposition~\ref{prop.capturable.minimal}, which is Theorems~8.12 and~8.14 of \cite{gabbay:perntu-jv}: 
\end{rmrk}

\begin{prop}
\label{prop.capturable.minimal}
\begin{itemize*}
\item
If $D\cent r$ and $D\cent s$ then $\hol{D}{r}=\hol{D}{s}$ implies $r=s$ (note that $=$ denotes $\alpha$-equality, because we quotiented terms by this relation), and similarly for $\phi$ and $\psi$.
\item
If $D\not\cent r$ then there exists $s$ such that $\hol{D}{r}=\hol{D}{s}$ yet $r\neq s$, and similarly for $\phi$. 
\end{itemize*}
\end{prop}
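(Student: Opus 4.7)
My plan is to prove both parts by simultaneous induction on the structure of $r$ and $\phi$, after generalising the statement to allow a nonempty binding context: if $D\cent r:A$ and $D\cent s:A$ and $\hol{D}{r}=\hol{D}{s}$, then $r=s$ as $\alpha$-equivalence classes of PNL terms (and likewise for propositions). The compositional cases---tuples, $\tf f(r)$, $\tf P(r)$, $\limp$, $\bot$, $\Forall{X}\phi$---reduce cleanly to the inductive hypothesis. The atoms-abstraction case $[a]r$ carries $A$ to $A\cup\{a\}$ and matches HOL $\alpha$-conversion of $\lam{a}\hol{D}{r}$, using atoms-equivariance of the translation from Lemma~\ref{lemm.hol.gamma.fa}.

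The crucial case is $\pi\act X$. Assume $\hol{D}{\pi\act X}=\XD\,\pi\act(\GammaX)$ equals $\hol{D}{\pi'\act Y}=\YD\,\pi'\act(\GammaY)$ in HOL. Since by Definition~\ref{defn.TS} distinct pairs $(X,D)$ are assigned distinct HOL variables, we deduce $X=Y$, and the curried applications match pointwise, so $\pi(a)=\pi'(a)$ for every $a\in\GammaX=D\cap\pmss(X)$. The capture-typing rule for $\pi\act X:A$ forces $\f{nontriv}(\pi)\cap\pmss(X)\subseteq D$, and similarly for $\pi'$, so both $\pi$ and $\pi'$ fix $\pmss(X)\setminus D$ pointwise. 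Hence they agree on all of $\pmss(X)$, and by the syntax-directed characterisation of $\alpha$-equivalence on unknowns noted after Definition~\ref{defn.aeq} this gives $\pi\act X\aeq\pi'\act X$.

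For the second part, given $D\not\cent r$, I locate a subterm-occurrence of some $\pi\act X$ in $r$ sitting under atoms-abstractions binding exactly a set $A$ of atoms, such that $(\f{nontriv}(\pi)\cup A)\cap\pmss(X)\not\subseteq D$. Pick a witness $a\in\pmss(X)\setminus D$ lying in $\f{nontriv}(\pi)\cup A$, and choose a fresh atom $c$ of the same sort as $a$ with $c\notin\pmss(X)\cup D\cup\f{nontriv}(\pi)\cup A\cup\fa(r)$; such $c$ exists since permission sets are co-infinite and the other excluded sets are finite. Form $s$ by replacing that single occurrence of $\pi\act X$ by $(\pi\circ(a\ c))\act X$. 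Because $a,c\notin D$ neither is in $\GammaX$, so $(\pi\circ(a\ c))\act(\GammaX)=\pi\act(\GammaX)$ pointwise and the HOL translations of $r$ and $s$ coincide. On the other hand $(\pi\circ(a\ c))(a)=\pi(c)=c$ since $c\notin\f{nontriv}(\pi)$, while $\pi(a)\in\{a\}\cup\f{nontriv}(\pi)$ and is therefore distinct from $c$, so the two permutations disagree on $\pmss(X)$ at $a$ and the modified subterm is $\alpha$-inequivalent to $\pi\act X$. Moreover, because $c\notin\pmss(X)$, the free atoms of the modified subterm include $c$, which lies in neither $\fa(r)$ nor $A$, so $c\in\fa(s)\setminus\fa(r)$ and hence $s\neq r$.

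The main obstacle I anticipate is the bookkeeping around $A$ and capture: ensuring that the strengthened inductive hypothesis flows smoothly through the $[a]r$ case in part one, and more importantly in part two, verifying that a local modification at a deeply nested $\pi\act X$ really does produce an $\alpha$-inequivalent whole term once all surrounding binders are reinstated. The crucial point is that the witness $c$ was chosen to remain free in $s$ despite the surrounding abstractions---an obstruction detectable in PNL's capture-sensitive $\alpha$-equivalence but invisible to the HOL translation precisely because $c\notin D$, which is exactly where capture typing was designed to bite.
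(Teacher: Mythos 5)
The paper does not actually prove this proposition: it imports it wholesale, citing Theorems~8.12 and~8.14 of \cite{gabbay:perntu-jv}, so there is no in-paper argument to compare yours against. Judged on its own terms, your reconstruction is correct and is the expected argument. The two load-bearing steps are both right: in part one, agreement of $\XD\,\pi\act(\GammaX)$ and $\XD\,\pi'\act(\GammaX)$ gives agreement of $\pi$ and $\pi'$ on $D\cap\pmss(X)$, while the side-condition $(\f{nontriv}(\pi)\cup A)\cap\pmss(X)\subseteq D$ forces both to be the identity on $\pmss(X)\setminus D$, so they agree on all of $\pmss(X)$ and $\pi\act X\aeq\pi'\act X$; in part two, perturbing a violating occurrence to $(\pi\circ(a\ c))\act X$ with $a\in\pmss(X)\setminus D$ and $c$ fresh and outside $\pmss(X)$ leaves the translation unchanged (since $a,c\notin\GammaX$) while introducing $c$ into $\fa(s)\setminus\fa(r)$, which certifies $s\neq r$ because $\fa$ is an $\alpha$-invariant. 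The one place you are still waving your hands is the one you yourself flag: in part one the cases $[a]r$ and $\Forall{X}\phi$ require the standard freshening manoeuvre (replace both bound names by a common fresh one, push the swapping through $\hol{D}{\text{-}}$ via Lemma~\ref{lemm.hol.gamma.fa}, and check that the capture typing $D\cent(c\ a)\act r:A$ survives because $c$ is chosen outside $\pmss(X)$ for every $X$ in the term), and the $\Forall{X}$ case additionally needs the level-2 swapping $(X\ Y)$ handled the same way. None of this breaks, but a complete write-up would have to discharge it rather than defer it.
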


Definition~\ref{defn.translation} maps PNL terms and predicates to typable HOL terms:
\begin{prop}
\label{prop.typable.hol.gamma.r}
If $r:\alpha$ then for any $D$,\ $\hol{D}{r}:\hol{}{\alpha}$,
\ and $\hol{D}{\phi}:o$.
\end{prop}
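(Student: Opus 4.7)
The plan is to prove this by simultaneous induction on the structure of the PNL term $r$ and proposition $\phi$, working through each clause of the translation in Figure~\ref{fig.hol.translation} and checking that the HOL typing rule applies with the expected output type. Most cases are routine bookkeeping, so I would be brisk with them; the one case that needs actual thought is the unknown $\pi \act X$.

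First I would dispatch the easy cases. For an atom $a : \nu$, by item~5 of Definition~\ref{defn.TS} we take $a$ to literally be a HOL variable of type $\mu_\nu = \hol{}{\nu}$, which is exactly what is required. For a tuple $(r_1,\ldots,r_n) : (\alpha_1,\ldots,\alpha_n)$, the inductive hypothesis gives $\hol{D}{r_i} : \hol{}{\alpha_i}$, and the HOL product rule delivers $(\hol{D}{r_1},\ldots,\hol{D}{r_n}) : (\hol{}{\alpha_1},\ldots,\hol{}{\alpha_n})$, which matches the translation of the product sort. For $\tf f(r)$ with $\tf f : (\alpha)\tau$, item~3 gives the constant $\tf g_{\smtf f}$ of HOL type $\hol{}{\alpha} \to \hol{}{\tau}$, so the application types. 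For an abstraction $[a] r : [\nu]\alpha$, the inductive hypothesis gives $\hol{D}{r} : \hol{}{\alpha}$, and since $a$ is a HOL variable of type $\mu_\nu$, we get $\lam{a} \hol{D}{r} : \mu_\nu \to \hol{}{\alpha} = \hol{}{[\nu]\alpha}$.

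The substantive case is $\pi \act X$ where $\sort(X) = \alpha$, translated as $\XD\,(\pi\act(\GammaX))$. By item~7 of Definition~\ref{defn.TS}, $\XD$ has HOL type $\mu_{\nu_1}{\to}\cdots{\to}\mu_{\nu_n}{\to}\hol{}{\alpha}$, where $\nu_i$ is the sort of the $i$th atom in $\GammaX$. The key observation is that $\pi$ is required by Definition~\ref{defn.permutation} to preserve name sorts: $a \in \mathbb A_\nu \liff \pi(a) \in \mathbb A_\nu$. Therefore the $i$th entry of $\pi \act (\GammaX)$ is an atom in $\mathbb A_{\nu_i}$, which is a HOL variable of type $\mu_{\nu_i}$ by item~5. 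Successively applying $\XD$ to these $n$ arguments of the right types yields a HOL term of type $\hol{}{\alpha}$, as required. (Here I am implicitly using the notation from Notation~\ref{nttn.finite.lists} that $\pi\act(\GammaX)$ applies $\pi$ pointwise and preserves the order of the list.)

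For propositions the argument is just as routine: $\bot$ is a constant of type $o$; for $\phi \limp \psi$ the inductive hypothesis and the type of $\limp$ give type $o$; for $\tf P(r)$ with $\tf P : \alpha$, item~4 supplies $\tf g_{\smtf P} : \hol{}{\alpha} \to o$ and we apply it to $\hol{D}{r} : \hol{}{\alpha}$; and for $\Forall{X} \phi$ the inductive hypothesis gives $\hol{D}{\phi} : o$, so $\lam{X_D} \hol{D}{\phi}$ has type $\type(X_D) \to o$ and $\forall_{\type(X_D)}$ applied to it has type $o$. The only mildly tricky point throughout is keeping straight which of the $X$-like symbols is a PNL unknown and which is a HOL variable, and ensuring sort-preservation of $\pi$ propagates through the unknown case; there is no real obstacle beyond that.
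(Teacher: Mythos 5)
Your proof is correct and follows essentially the same route as the paper: a straightforward induction on $r$ and $\phi$ through the clauses of Figure~\ref{fig.hol.translation}, with the $\pi\act X$ case being the only one requiring care. Your treatment of that case is in fact slightly more careful than the paper's (which declares it ``routine'' and, somewhat unnecessarily, assumes $D\cent\pi\act X$), since you make explicit that sort-preservation of $\pi$ is what guarantees the arguments $\pi\act(\GammaX)$ match the argument types of $\XD$ for \emph{any} $D$.
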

\begin{proof}
By inductions on $r$ and $\phi$.
\begin{itemize*}
\item
\emph{The case $a\in\mathbb A_\nu$.}\quad
By Definition~\ref{defn.hol.translation} $a:\mu_\nu$. 
\item
\emph{The case $[a]r$ where $a\in\mathbb A_\nu$.}\quad
By inductive hypothesis $\hol{D}{r}:\beta$ for some type $\beta$.
It follows that $\hol{D}{[a]r}=\lam{a}\hol{D}{r}:\mu_\nu\to\beta$.
\item
\emph{The case $\pi\act X$.}\quad
Suppose $D\cent \pi\act X$.
It is routine to check that $\XD\pi\act(\GammaX):\hol{}{\sort(X)}$.
\qedhere\end{itemize*}
\end{proof}

\subsection{Soundness of the translation}
\label{subsect.trans.soundness}

Recall that HOL terms have a permutation action $\pi\act t$ given by considering $\pi$ as a permutation on HOL variables and using Definition~\ref{defn.hol.perm}.
Then:
\begin{lemm}
\label{lemm.hol.pi}
If $\f{nontriv}(\pi)\cap\fv(t)\subseteq D$ then $(\lam{D}t)\pi\act D\abeq\pi\act t$ (see Notation~\ref{nttn.finite.lists}).
\end{lemm}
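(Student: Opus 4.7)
The plan is to reduce $(\lam{D}t)\pi\act D$ by $\beta$-reduction to an explicit simultaneous substitution, and then match that substitution to $\pi\act t$ using the freshness hypothesis.

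Write $D = [d_1,\dots,d_n]$ and $u_i = \pi(d_i)$. First I pick a list $E = [e_1,\dots,e_n]$ of fresh, pairwise distinct HOL variables chosen disjoint from $D$, from $\pi\act D$, and from $\fv(t)$. By a sequence of $\alpha$-conversions on bound variables of $\lam{D}t$, we obtain $\lam{D}t \aeq \lam{E}t'$, where $t' = t[d_1{\sm}e_1]\cdots[d_n{\sm}e_n]$ (and by freshness of the $e_i$ these iterated substitutions coincide with the corresponding simultaneous substitution). Then I $\beta$-reduce $(\lam{E}t')u_1\cdots u_n$ one application at a time. At the $k$-th step the variable being eliminated is $e_k$, and the only free occurrences of $e_k$ in the current body come from the original occurrences of $d_k$ in $t$ (because $e_k$ is fresh for $t$, for the previously substituted $u_1,\dots,u_{k-1}$, and for the still-to-be-substituted $u_{k+1},\dots,u_n$, each of which is a single variable in $\pi\act D$). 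Hence $(\lam{E}t')u_1\cdots u_n \abeq t[d_1{\sm}u_1,\dots,d_n{\sm}u_n]$ as a simultaneous capture-avoiding substitution.

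Next I invoke a standard auxiliary lemma, provable by a routine induction on $t$: for any permutation $\varpi$ of HOL variables, $\varpi\act t \aeq t[x{\sm}\varpi(x)]_{x\in\fv(t)}$, where the right-hand side is the simultaneous capture-avoiding substitution on the free variables of $t$. Applying this with $\varpi = \pi$, it suffices to show that
$$
t[d_1{\sm}u_1,\dots,d_n{\sm}u_n] \;=\; t[x{\sm}\pi(x)]_{x\in\fv(t)}.
$$
For this, use the hypothesis $\f{nontriv}(\pi)\cap\fv(t)\subseteq D$: any $x\in\fv(t)\setminus D$ satisfies $\pi(x)=x$, so its entry on the right is trivial; any $d_i\notin\fv(t)$ contributes nothing on the left; and the two substitutions agree on $\fv(t)\cap D$ by construction. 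Chaining the three equalities gives $(\lam{D}t)\pi\act D \abeq \pi\act t$.

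The only delicate point is the bookkeeping in the iterated $\beta$-reduction: one must be sure that at each step no capture occurs and that the intermediate terms stabilise to the desired simultaneous substitution. The up-front $\alpha$-renaming to the fresh list $E$ isolates the problem, and the fact that each argument $u_i$ is a single variable (not a compound term) means no new free-variable conflicts are introduced during reduction; everything else is routine.
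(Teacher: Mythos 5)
Your proof is correct. The one genuinely delicate point in this lemma is the possibility that $\pi$ permutes atoms \emph{within} $D$ (so that some argument $\pi(d_i)$ coincides with another bound variable $d_j$), and your up-front $\alpha$-renaming of the binders to a fresh list $E$ disjoint from $D$, $\pi\act D$ and $\fv(t)$ is exactly the right way to neutralise it: after that, each $\beta$-step substitutes a variable that cannot clash with any remaining binder $e_j$, so the iterated reduction stabilises to the simultaneous substitution $t[d_1{\sm}\pi(d_1),\dots,d_n{\sm}\pi(d_n)]$, and the final matching against $\pi\act t$ via the hypothesis $\f{nontriv}(\pi)\cap\fv(t)\subseteq D$ is exactly as you say.

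Where you differ from the paper is that the paper does not prove this lemma at all: its ``proof'' is a one-line citation of a fact of $\alpha\beta$-conversion from an earlier paper (Lemma~9.2 of \cite{gabbay:perntu-jv}), which is itself established by essentially the computation you carry out. So your argument is not an alternative route so much as the missing elementary content: a reduction to an explicit simultaneous substitution, the standard fact that a permutation acting on a term is $\alpha$-equivalent to the capture-avoiding substitution $x\sm\varpi(x)$ over its free variables, and a comparison of the two substitutions. The citation buys brevity; your version buys self-containedness and makes visible exactly where the hypothesis $\f{nontriv}(\pi)\cap\fv(t)\subseteq D$ is used (to kill the entries of $\pi$ outside $D$). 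The only cosmetic remark is that ``no new free-variable conflicts are introduced'' should be read as a statement about the interaction between successive $\beta$-steps (no argument $u_i$ mentions any $e_j$); capture by binders internal to $t$ is already absorbed by the capture-avoiding substitution and the fact that everything is stated up to $\abeq$.
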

\begin{proof}
A fact of $\alpha\beta$-conversion \cite[Lemma~9.2]{gabbay:perntu-jv}.
\end{proof}

\begin{lemm}
\label{lemm.hol.sub}
Suppose $D\cent r$ and $D\cent\phi$.
Suppose $r':\sort(X)$ and $\fa(r')\subseteq\pmss(X)$.
Then:
\begin{itemize*}
\item
$\hol{D}{r[X\ssm r']} \abeq \hol{D}{r}[\XD\ssm \lam{(\GammaX)}\hol{D}{r'}]$.
\item
$\hol{D}{\phi[X\ssm r']} \abeq \hol{D}{\phi}[\XD\ssm \lam{(\GammaX)}\hol{D}{r'}]$.
\end{itemize*}
\end{lemm}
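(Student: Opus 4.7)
The plan is to prove both parts simultaneously by structural induction on $r$ and $\phi$. To make the induction go through binders, it is convenient to strengthen the inductive hypothesis to arbitrary capture typings $D\cent r:A$ (and $D\cent\phi:A$), so that the $[a]s$ case---which under $D\cent[a]s:A$ provides $D\cent s:A\cup\{a\}$---has an inductive hypothesis of the correct shape. The lemma as stated corresponds to the case $A=\varnothing$.

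The delicate case is $r\equiv\pi\act X$, where the capture typing hypothesis is actually used. Unfolding the left-hand side,
$$\hol{D}{(\pi\act X)[X\ssm r']} \;=\; \hol{D}{\pi\act r'} \;=\; \pi\act\hol{D}{r'},$$
by Definition~\ref{defn.subst.action} and Lemma~\ref{lemm.hol.gamma.fa}. Unfolding the right-hand side,
$$\hol{D}{\pi\act X}[\XD\ssm\lam{(\GammaX)}\hol{D}{r'}] \;=\; \bigl(\lam{(\GammaX)}\hol{D}{r'}\bigr)\,\pi\act(\GammaX).$$
So it suffices to show $\bigl(\lam{(\GammaX)}\hol{D}{r'}\bigr)\,\pi\act(\GammaX)\abeq\pi\act\hol{D}{r'}$, which by Lemma~\ref{lemm.hol.pi} follows provided $\f{nontriv}(\pi)\cap\fv(\hol{D}{r'})\subseteq\GammaX$. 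By Lemma~\ref{lemm.hol.gamma.fa}, any atom in $\fv(\hol{D}{r'})$ lies in $\fa(r')\subseteq\pmss(X)$, so the intersection is contained in $\f{nontriv}(\pi)\cap\pmss(X)$. The capture typing rule for $\pi\act X$ in Figure~\ref{fig.capture.typings} gives exactly $\f{nontriv}(\pi)\cap\pmss(X)\subseteq D$, hence the intersection sits inside $D\cap\pmss(X)=\GammaX$, as required.

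The binder cases $[a]s$ and $\Forall{Y}\phi'$ are handled by the usual capture-avoidance of HOL substitution: since $\XD$ and $Y_D$ (for $Y\neq X$, which we may assume by $\alpha$-conversion) are distinct HOL variables, and since the atom $a$ can be $\alpha$-renamed away from $\fv(\lam{(\GammaX)}\hol{D}{r'})$ if necessary, the HOL substitution commutes with the outer abstraction. Lemma~\ref{lemm.hol.gamma.fa} commutes any such renaming through the translation, so the inductive hypothesis applies to the (possibly renamed) body. All remaining cases---atoms $a$, tuples, term-formers $\tf f(r)$, $\bot$, $\phi\limp\psi$, and proposition-formers $\tf P(r)$---follow by direct computation, distributing the HOL substitution componentwise and invoking the inductive hypothesis.

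The main obstacle, and the only place the hypothesis $D\cent r$ is essential, is the unknown case: the condition in the capture-typing rule for $\pi\act X$ is engineered precisely to supply the side-condition of Lemma~\ref{lemm.hol.pi} and thereby make the Skolemisation-style translation $\pi\act X\mapsto\XD\,\pi\act(\GammaX)$ commute with substitution. Everything else is bookkeeping.
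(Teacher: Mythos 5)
Your proof is correct and follows essentially the same route as the paper's: a structural induction on $r$ and $\phi$ whose only non-routine case is $\pi\act X$, resolved by combining Lemma~\ref{lemm.hol.gamma.fa} with Lemma~\ref{lemm.hol.pi}, the capture-typing side-condition supplying exactly the hypothesis $\f{nontriv}(\pi)\cap\fv(\hol{D}{r'})\subseteq\GammaX$ that Lemma~\ref{lemm.hol.pi} needs. The paper leaves this as a sketch; your write-up merely makes explicit the bookkeeping (generalising over $A$ for the abstraction case, capture-avoidance in the binder cases) that the paper calls routine.
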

\begin{proof}
By routine inductions on $r$ and $\phi$.
We sketch two cases:
\begin{itemize}
\item
\emph{The case $(\pi\act X)[X\ssm r']$.}\quad
We must prove that 
$$
\hol{D}{\pi\act r'}\abeq \bigl(\lam{(\GammaX)}\hol{D}{r'}\bigr)\pi\act(\GammaX).
$$
This follows by Lemmas~\ref{lemm.hol.gamma.fa} and~\ref{lemm.hol.pi}.
\item
\emph{The case $\tf P(r)[X\ssm r']$.}\quad
We must prove that 
$$
\hol{D}{\tf P(r[X\ssm r'])}\abeq \tf g_{\smtf P}(\hol{D}{r})[X\ssm\lam{(\GammaX)}\hol{D}{r'}].
$$
This follows directly from the first part.
\qedhere\end{itemize}
\end{proof}

\begin{prop}
\label{prop.hol.forall.sound}
Suppose $D\cent\phi$ and $D\cent r'$ and $r':\sort(X)$ and $\fa(r')\subseteq\pmss(X)$.

Then $\hol{D}{\Forall{X}\phi}\holcent \hol{D}{\phi[X\ssm r']}$.
\end{prop}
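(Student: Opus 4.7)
The plan is to reduce this to a single application of the HOL rule $\rulefont{h\forall L}$, with the witness term built from $r'$ in the natural way suggested by the translation. By Definition~\ref{defn.translation}, $\hol{D}{\Forall{X}\phi} = \Forall{\XD}\hol{D}{\phi}$, so it suffices to derive $\Forall{\XD}\hol{D}{\phi}\holcent \hol{D}{\phi[X\ssm r']}$ in HOL. The obvious candidate witness for instantiating $\XD$ is $t = \lam{(\GammaX)}\hol{D}{r'}$, since this is exactly the term appearing on the right-hand side of Lemma~\ref{lemm.hol.sub}.

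First I would check that $t$ has the type $\type(\XD)$ required by $\rulefont{h\forall L}$: by Proposition~\ref{prop.typable.hol.gamma.r} applied to the hypothesis $r':\sort(X)$ we have $\hol{D}{r'}:\hol{}{\sort(X)}$, and abstracting over the atoms of $\GammaX$ (with sorts matching those declared for $\XD$ in Definition~\ref{defn.TS}(7)) yields precisely $\type(\XD)$. Next, starting from the HOL axiom $\rulefont{hAx}$ applied to $\hol{D}{\phi}[\XD\ssm t]$ and then applying $\rulefont{h\forall L}$ with the chosen witness $t$, one obtains
\[
\Forall{\XD}\hol{D}{\phi} \holcent \hol{D}{\phi}[\XD\ssm \lam{(\GammaX)}\hol{D}{r'}].
\]
Finally, Lemma~\ref{lemm.hol.sub} — whose hypotheses $D\cent\phi$, $r':\sort(X)$, and $\fa(r')\subseteq\pmss(X)$ are exactly those assumed in the proposition — gives that the right-hand side is $\abeq$-equivalent to $\hol{D}{\phi[X\ssm r']}$, yielding the conclusion.

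The one point that requires care (and which I take to be the main delicate issue) is that Lemma~\ref{lemm.hol.sub} delivers $\alpha\beta$-equivalence, whereas the derivability relation $\holcent$ in Figure~\ref{hol.Seq} is presented as acting on terms quotiented by $\alpha$-equivalence only (Definition~\ref{defn.hol.perm}). We therefore need the standard assumption that HOL sequents are preserved under $\beta$-conversion of their propositions; this is a basic property of the higher-order system (and is tacitly used elsewhere in the embedding, since the whole point of translating $\pi\act X$ to $\XD\pi\act(\GammaX)$ is that substitution into $\XD$ reduces via $\beta$ to the intended instantiation, cf.\ Lemma~\ref{lemm.hol.pi}). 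Granting this, the three-step skeleton above completes the proof.
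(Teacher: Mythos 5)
Your proposal is correct and is essentially the paper's own proof, which consists of exactly the two ingredients you identify: the rule \rulefont{h\forall L} instantiated with the witness $\lam{(\GammaX)}\hol{D}{r'}$, followed by Lemma~\ref{lemm.hol.sub} to identify the instantiated proposition with $\hol{D}{\phi[X\ssm r']}$. The paper states this in one line without elaborating the typability check or the $\alpha\beta$-versus-$\alpha$ point you raise, so your write-up is simply a more careful expansion of the same argument.
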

\begin{proof}
Using Lemma~\ref{lemm.hol.sub} and \rulefont{h\forall L} from Figure~\ref{hol.Seq}.
\end{proof}

\begin{lemm}
\label{lemm.enough}
For any finite set of predicates $\{\phi_i\mid i\in I\}$ there exists some $D$ such that $D\cent\phi_i$ for every $i\in I$.
\end{lemm}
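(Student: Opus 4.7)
The plan is to reduce the lemma to a single-proposition version and then combine using a weakening property of capture typings.

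First I would establish a \emph{weakening lemma}: if $D\cent r:A$ and every atom listed in $D$ also occurs in $D'$, then $D'\cent r:A$; and similarly for propositions. This is proved by a routine induction on the capture typing derivation. The only rule whose side-condition mentions $D$ is the one for $\pi\act X$, where the side-condition $(\f{nontriv}(\pi)\cup A)\cap\pmss(X)\subseteq D$ is obviously preserved when $D$ is enlarged to $D'$. All other rules carry $D$ unchanged through their premises.

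Next I would prove the single-proposition case: for every $\phi$ there exists some $D$ with $D\cent\phi$, and more generally for every $r$ and every finite set of atoms $A$ there exists $D$ with $D\cent r:A$. This is a straightforward structural induction following Figure~\ref{fig.capture.typings}. Leaves $a$ and $\bot$ impose no constraint on $D$. For $\pi\act X$ the required constraint $(\f{nontriv}(\pi)\cup A)\cap\pmss(X)\subseteq D$ is a finite set of atoms (since $\f{nontriv}(\pi)$ and $A$ are finite), so one simply takes $D$ to enumerate that set. For abstraction $[a]r$ the inductive hypothesis is applied with the enlarged context $A,a$; this is the step that pushes abstracted atoms into $A$ so they can be absorbed at the eventual $\pi\act X$ leaves. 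At internal nodes with multiple premises (tuples, term-former application, implication, proposition-former application, universal quantification), one takes the union of the lists produced by the inductive hypothesis for each subpart, applying the weakening lemma to extend each subpart's $D$ to this union.

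Finally, to conclude the lemma as stated, for each $i\in I$ pick by the previous step some $D_i$ with $D_i\cent\phi_i$, and let $D$ be any enumeration of the finite set $\bigcup_{i\in I}\{a\mid a\in D_i\}$. By weakening, $D\cent\phi_i$ for every $i\in I$, which is what was required.

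There is no real obstacle here: the argument is essentially bookkeeping. The only subtle point is making sure the $A$-component of the capture typing is threaded correctly through abstractions in the single-proposition induction, which is why we prove the stronger statement with a general $A$ rather than only the $A=\varnothing$ case.
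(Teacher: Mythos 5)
Your proof is correct, and it is in essence a careful elaboration of the paper's two-line argument, but it is organised differently and is actually more precise. The paper simply sets $D$ to be an enumeration of $\bigcup\nontriv(\pi)$ over all permutations $\pi$ occurring in the $\phi_i$ and asserts that this suffices; you instead prove a weakening lemma for capture typings, establish existence for a single term by structural induction on the judgement $D\cent r:A$ with $A$ general, and then combine by union plus weakening. Your route buys two things. First, the weakening lemma makes the combination step at multi-premise nodes and across the finite family $\{\phi_i\}$ completely mechanical, whereas the paper leaves this implicit. Second, and more substantively, your threading of the $A$-component through the rule for $[a]r$ catches atoms that the paper's stated construction does not obviously cover: for example $\tf P([a](\id\act X))$ with $a\in\pmss(X)$ requires $a\in D$ by the $\pi\act X$ rule, yet $a$ lies in no $\nontriv(\pi)$, so the set the paper names must implicitly be enlarged by the abstracted atoms. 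Your generalisation to arbitrary finite $A$ is exactly the right induction hypothesis to make this work, and it is the one genuinely non-bookkeeping point in the argument, as you correctly identify.
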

\begin{proof}
We calculate $\bigcup\nontriv(\pi)$ for every $\pi$ occurring in every $\phi_i$, in some order, and take this to be $D$.
It is not hard to verify that this sufficies. 
\end{proof}

\begin{frametxt}
\begin{thrm}
\label{thrm.soundness}
The interpretation is sound: if $\Phi\nopicent\Psi$ with a derivation ${\mathfrak B}$ and $D\cent\Phi'$ and $D\cent\Psi'$ for every sequent $\Phi'\nopicent\Psi'$ appearing in ${\mathfrak B}$, then $\hol{D}{\Phi}\holcent\hol{D}{\Psi}$ (by Lemma~\ref{lemm.enough}, some such $D$ always exists).
\end{thrm}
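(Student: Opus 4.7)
The plan is to proceed by induction on the derivation $\mathfrak{B}$, checking that each rule of restricted PNL translates to a valid HOL derivation using the corresponding HOL rule together with the supporting lemmas already established. The hypothesis that $D \cent \Phi'$ and $D \cent \Psi'$ for every sequent appearing in $\mathfrak{B}$ guarantees that the translation $\hol{D}{\cdot}$ is well-behaved throughout the induction; in particular, whenever we invoke Lemma~\ref{lemm.hol.sub} or Proposition~\ref{prop.hol.forall.sound} on a premise, the required capture typing is available.

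Each easy case reduces to the corresponding HOL rule:
\begin{itemize*}
\item For $\rulefont{Ax^\nopi}$ we get $\hol{D}{\Phi}, \hol{D}{\phi} \holcent \hol{D}{\phi}, \hol{D}{\Psi}$, which is precisely $\rulefont{hAx}$. Note that it is essential here that the axiom has no $\pi$: for full PNL we would need $\hol{D}{\pi \act \phi}$ on the right, and by Lemma~\ref{lemm.hol.gamma.fa} this equals $\pi \act \hol{D}{\phi}$, which in general is not $\alpha\beta$-convertible to $\hol{D}{\phi}$ in HOL. This is what ties the present theorem specifically to restricted PNL.
\item For $\rulefont{\bot L}$, $\rulefont{\limp L}$, and $\rulefont{\limp R}$ the translation is structure-preserving on $\bot$ and $\limp$, so the HOL rules $\rulefont{h\bot L}$, $\rulefont{h\limp L}$, and $\rulefont{h\limp R}$ apply directly to the inductive hypotheses.
\end{itemize*}

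The two quantifier cases need a little more care. For $\rulefont{\forall L}$, the premise in PNL is $\Phi, \phi[X \ssm r] \nopicent \Psi$ with $\fa(r) \subseteq \pmss(X)$. The inductive hypothesis yields $\hol{D}{\Phi}, \hol{D}{\phi[X \ssm r]} \holcent \hol{D}{\Psi}$, and Lemma~\ref{lemm.hol.sub} (using $D \cent \phi$ and $D \cent r$, both implied by the capture typing of $\Phi, \Forall{X}\phi$ in $\mathfrak{B}$ together with checking that $D \cent r$ can be obtained for the substituted term via the rule for $\pi \act X$) rewrites this as $\hol{D}{\Phi}, \hol{D}{\phi}[\XD \ssm \lam{(\GammaX)}\hol{D}{r}] \holcent \hol{D}{\Psi}$. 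One application of $\rulefont{h\forall L}$ then produces $\hol{D}{\Phi}, \Forall{\XD}\hol{D}{\phi} \holcent \hol{D}{\Psi}$, which is exactly $\hol{D}{\Phi}, \hol{D}{\Forall{X}\phi} \holcent \hol{D}{\Psi}$ by the clause for $\Forall{X}\phi$ in Figure~\ref{fig.hol.translation}. (Alternatively one can just cite Proposition~\ref{prop.hol.forall.sound} and combine with a cut.) For $\rulefont{\forall R}$, we need to verify the side-condition $\XD \notin \fv(\hol{D}{\Phi}, \hol{D}{\Psi})$; this follows from $X \notin \fU(\Phi, \Psi)$ since the only way $\XD$ can be introduced by the translation is via a $\pi \act X$ subterm, and such subterms can only appear in $\hol{D}{\phi'}$ if $X \in \fU(\phi')$. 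Given this, $\rulefont{h\forall R}$ applied to the inductive hypothesis yields the required conclusion.

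I expect the main obstacle to be bookkeeping rather than conceptual: one must trace the capture typings carefully through $\rulefont{\forall L}$ to ensure Lemma~\ref{lemm.hol.sub} applies (in particular, that $D \cent r$ holds for the witnessing term, which is why the hypothesis demands the capture typing of \emph{every} sequent in the derivation, not just the root), and one must verify the freshness implication $X \notin \fU(\phi') \Rightarrow \XD \notin \fv(\hol{D}{\phi'})$ by a routine induction on $\phi'$ using the definition of $\hol{D}{\cdot}$ together with the fact (Definition~\ref{defn.hol.translation}, clause~5) that atoms are never of the form $\XD$. Once these two bookkeeping facts are in place, the induction concludes cleanly.
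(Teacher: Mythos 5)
Your proof is correct and follows essentially the same route as the paper's: induction on the derivation $\mathfrak B$, mapping each restricted-PNL rule to its HOL counterpart, with the quantifier-left case handled via Lemma~\ref{lemm.hol.sub} (i.e.\ the content of Proposition~\ref{prop.hol.forall.sound}) and the quantifier-right case via the observation that $X\notin\fU(\Phi,\Psi)$ gives $\XD\notin\fv(\hol{D}{\Phi},\hol{D}{\Psi})$. The paper's own proof is a two-line appeal to ``routine'' verification citing Proposition~\ref{prop.hol.forall.sound} (which it attributes to \rulefont{\forall R}, apparently a slip for \rulefont{\forall L}), so your write-up simply supplies the bookkeeping the paper elides.
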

\end{frametxt}
\begin{proof}
By induction on the derivation ${\mathfrak B}$ of $\Phi\nopicent\Psi$.
It is routine to verify by induction on ${\mathfrak B}$ that $\hol{D}{\Phi'}\holcent\hol{D}{\Psi'}$ is derivable for each $\Phi'\nopicent\Psi'$ appearing in ${\mathfrak B}$; the case of \rulefont{\forall R} uses Proposition~\ref{prop.hol.forall.sound}.
So in particular $\hol{D}{\Phi}\holcent\hol{D}{\Psi}$.
\end{proof}

\begin{lemm}
\label{lemm.it.has.to.be}
The interpretation for full PNL (Figure~\ref{Seq}, with the stronger axiom rule) would not be sound.
That is, there exist $\Phi$ and $\Psi$ and $D$ such that $D\cent\Phi$, $D\cent\Psi$, and $\Phi\cent\Psi$, but $\hol{D}{\Phi}\not\holcent\hol{D}{\Psi}$.
\end{lemm}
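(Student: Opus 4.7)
The plan is to reuse the counterexample flagged in the discussion just above the lemma. Assume a signature with a name-sort $\nu$, a proposition-former $\tf P:\nu$, and two distinct atoms $a,b\in\mathbb{A}_\nu$. Take $\Phi=\{\tf P(a)\}$, $\Psi=\{\tf P(b)\}$, and let $D$ be the empty list. I would verify, reading off Figure~\ref{fig.capture.typings}, that $D\cent\tf P(a)$ and $D\cent\tf P(b)$: these propositions contain no unknowns and no nontrivial permutations, so the only side-condition to check is for the leaves $a$ and $b$, which is always satisfied.

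Next I would show $\Phi\cent\Psi$ in full PNL. This is immediate from \rulefont{Ax} in Figure~\ref{Seq}, instantiating $\pi=(a\ b)$: since $\pi\act\tf P(a)=\tf P(b)$, the one-line derivation $\tf P(a)\cent \tf P(b)$ is valid in full (but not restricted) PNL.

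It then remains to show that $\hol{D}{\Phi}\not\holcent\hol{D}{\Psi}$. Unwinding Definition~\ref{defn.translation}, $\hol{D}{\tf P(a)}=\tf g_{\smtf P}\,a$ and $\hol{D}{\tf P(b)}=\tf g_{\smtf P}\,b$, where $a$ and $b$ are distinct free HOL variables of type $\mu_\nu$. I would establish non-derivability semantically, appealing to soundness of HOL for its standard semantics: interpret $\mu_\nu$ as the two-element set $\{0,1\}$, interpret $o$ as the booleans, interpret $\tf g_{\smtf P}$ as the function sending $0\mapsto\top$ and $1\mapsto\bot$, and use an assignment that maps the HOL variable $a$ to $0$ and $b$ to $1$. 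Then $\tf g_{\smtf P}\,a$ is true while $\tf g_{\smtf P}\,b$ is false, so the sequent is invalid and hence not derivable.

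The only place where any care is required is the last step: one must be explicit that the free variables $a$ and $b$ that arise from the translation really are distinct HOL variables (guaranteed by clause~5 of Definition~\ref{defn.TS}, which embeds distinct atoms as distinct HOL variables), so that an interpretation can separate them. Once this is in hand, the rest is bookkeeping and the lemma follows.
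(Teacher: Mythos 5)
Your proposal is correct and follows essentially the same route as the paper, which uses exactly the counterexample $\tf P(a)\cent\tf P(b)$ with $\tf P:\nu$ and observes that $\tf g_{\smtf P}\,a\not\holcent\tf g_{\smtf P}\,b$. The paper leaves the HOL non-derivability as an unargued assertion, so your explicit semantic separation of the distinct free variables $a$ and $b$ is a harmless (and welcome) elaboration rather than a divergence.
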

\begin{proof}
Consider a name sort $\nu$ and a unary predicate $\tf P:\nu$.
Then $\tf P(a)\cent\tf P(b)$ in full PNL, but it is not the case that $\tf g_{\tf P}\,a \cent\tf g_{\tf P}\,b$ in HOL. 
\end{proof}

\section{Semantics}
\label{sect.semantics}

For the reader's convenience we will clarify one aspect of the coming notation now: if the reader sees $\ns X$ this is a set with a permutation action; if the reader sees $\rs X$ this is a set with a renaming action.
There is no particular connection between $\ns X$ and $\rs X$.

A typical renaming is $[a\ssm b]$ (instead of a typical permutation $(a\ b)$).
Formal definitions are in Definition~\ref{defn.permutation} and~\ref{defn.renaming}.

The reader may not be surprised by the use of sets with a permutation action---nominal techniques are based on these \cite{gabbay:newaas-jv}.
But why the renaming action?
We need renamings to make a function out of an atoms-abstraction, mirroring the clause $\hol{D}{[a]r}=\lam{a}\hol{D}{r}$ in Definition~\ref{defn.translation}.

In PNL models, an abstraction $[a]r$ is modelled as Gabbay--Pitts atoms-abstraction $[a]x$, a sets-based construction from \cite{gabbay:newaas-jv} (Definition~\ref{defn.abstraction.sets}, in this paper).
This is constructed like a pair, forming $[a]x$ from $a$ and $x$, but destructed like a \emph{partial function} the graph of which is evident in Definition~\ref{defn.abstraction.sets}.
It is defined on $[a]x$ for fresh $b$ but not for $b\in\supp(x)\setminus\{a\}$.

When we translate $[a]r$ to HOL we interpret $[a]r$ as a function using $\lambda$-abstraction.
This suggests of our models that we translate a \emph{partial} function $[a]x$ to a total function. 
But then we have to give meaning to $[a]x$ applied to $b$ where $b$ is not fresh.
This is where renaming sets are used.

We can then conclude by noting that every model of PNL can be transformed into a model of HOL, and in a compositional manner (Lemma~\ref{lemm.commuting.square}).
Completeness quickly follows.

\subsection{Categories of supported permutation and renaming sets}

\subsubsection{Permutation and renaming sets}

\begin{defn}
\label{defn.renaming}
Suppose $\rho$ is a map from $\mathbb A$ to $\mathbb A$.
Define $\dom(\rho)$ and $\img(\rho)$ by
$$
\dom(\rho)=\{a\mid \rho(a)\neq a\}
\quad\text{and}\quad
\img(\rho)=\{\rho(a)\mid a\in\dom(\rho)\} .
$$
Echoing Definition~\ref{defn.permutation}, a \deffont{renaming} is a map $\rho$ from $\mathbb A$ to $\mathbb A$ such that $a\in\mathbb A_\nu\liff \rho(a)\in\mathbb A_\nu$ and 
$\f{nontriv}(\rho)=\dom(\rho)\cup\img(\rho)$ is finite. 
Write $\mathbb R$ for the set of renamings.

For $a,b\in\mathbb A_\nu$ let an \deffont{atomic renaming} $[a\ssm b]$ map $a$ to $b$, $b$ to $b$, and other $c$ to themselves.

$\rho$ will range over renamings.
\end{defn}

\begin{nttn}
\label{nttn.renamings}
We use the following notation (analogously to Notation~\ref{nttn.permutations}):
\begin{itemize*}
\item
Write $\rho\circ\rho'$ for \deffont{functional composition}, so $(\rho\circ\rho')(a)=\rho(\rho'(a))$).
\item
Write $\id$ for the \deffont{identity renaming}, so $\id(a)=a$ always.
\end{itemize*}
\end{nttn}
Unlike in Notation~\ref{nttn.permutations}, we have no notation for inverse, because for renamings inverses need not exist.

\begin{frametxt}
\begin{defn}
\label{defn.perm.set}
\begin{itemize*}
\item
A \deffont{permutation set} is a pair $\ns X=(|\ns X|,\act)$ of an \deffont{underlying set} $|\ns X|$ and a \deffont{permutation action} $(\mathbb P\times|\ns X|)\to |\ns X|$ which is a group action; write it infix.

(So $\id\act x=x$ and $\pi\act(\pi'\act x)=(\pi\circ\pi')\act x$.) 
\item
A \deffont{renaming set} is a pair $\rs X=(|\rs X|,\act)$ of an \deffont{underlying set} $|\rs X|$ and a \deffont{renaming action} $(\mathbb R\times|\rs X|)\to |\rs X|$ which is a monoid action; write it infix.

(So $\id\act x=x$ and $\rho\bigact(\rho'\bigact x)=(\rho\circ\rho')\bigact x$.) 
\end{itemize*}
\end{defn}
\end{frametxt}

\begin{defn}
\label{defn.finsupp}
\begin{itemize*}
\item
Suppose $\ns X$ is a permutation set.
Say that $A\subseteq \mathbb A$ \deffont{supports} $x\in|\ns X|$ when for all $\pi,\pi'\in\mathbb P$, if $\Forall{a\in A}\pi(a)=\pi'(a)$ then $\pi\act x=\pi'\act x$.
\item
Suppose $\rs X$ is a renaming set.
Say that $A\subseteq \mathbb A$ \deffont{supports} $x\in|\rs X|$ when for all $\rho,\rho'\in\mathbb R$, if $\Forall{a\in A}\rho(a)=\rho'(a)$ then $\rho\bigact x=\rho'\bigact x$.
\end{itemize*}
\end{defn}

\begin{lemm}
If $x\in |\ns X|/|\rs X|$ has a supporting permission set (Definition~\ref{defn.atoms}) then it has a unique least supporting set which is equal to the intersection of all permission sets supporting $x$.
We call this the \deffont{support} of $x$ when it exists, and write it $\supp(x)$.
\end{lemm}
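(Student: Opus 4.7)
The goal is to show that $L=\bigcap\{S\mid S\text{ is a permission set supporting }x\}$ itself supports $x$ as a subset of $\atoms$; minimality and uniqueness are then automatic, since any supporting permission set contains $L$ and set-intersection is well-defined. I sketch the permutation-set case $x\in|\ns X|$; the renaming-set case is structurally identical, with $\mathbb R$ and $\rho$ in place of $\mathbb P$ and $\pi$.

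The key auxiliary lemma is: \emph{if permission sets $S$ and $T$ both support $x$, then so does $S\cap T$.} This is the classical Gabbay--Pitts finite-support argument adapted to permission sets. Suppose $\pi,\pi'\in\mathbb P$ agree on $S\cap T$. Because $S\cap\atomsup_\nu$ is finite for each name-sort $\nu$ (by Definition~\ref{defn.atoms}), the set $S\cup T\cup\nontriv(\pi)\cup\nontriv(\pi')$ is coinfinite in each $\atomsup_\nu$, so fresh atoms of every sort are available. Routing the atoms of $\nontriv(\pi)\cup\nontriv(\pi')$ that lie outside $S\cap T$ through fresh atoms, one builds a bridging permutation $\pi''$ agreeing with $\pi$ on $S$ and with $\pi'$ on $T$; two applications of the supporting hypotheses then yield $\pi\act x=\pi''\act x=\pi'\act x$.

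With the binary case in hand, the passage from finite to infinite intersection exploits the finiteness of $\nontriv(\pi)$. Given $\pi,\pi'\in\mathbb P$ that agree on $L$, let $F=\nontriv(\pi)\cup\nontriv(\pi')$, a finite set. For each $a\in F\setminus L$, by definition of $L$ there exists a permission set $S_a$ supporting $x$ with $a\notin S_a$. Put $S_0=\bigcap_{a\in F\setminus L}S_a$, a \emph{finite} intersection; iterating the auxiliary lemma, $S_0$ supports $x$, and by construction $S_0\cap(F\setminus L)=\varnothing$. For any $a\in S_0$: if $a\in L$ then $\pi(a)=\pi'(a)$ by hypothesis; if $a\notin L$ then $a\notin F$ (otherwise $a\in F\setminus L$), so $\pi(a)=a=\pi'(a)$. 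Hence $\pi$ and $\pi'$ agree on $S_0$, and $S_0$ supporting $x$ gives $\pi\act x=\pi'\act x$. Thus $L$ supports $x$, so $L$ is the unique least supporting set.

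The main obstacle is the auxiliary lemma. In the permutation case it is standard, modulo checking that the fresh-atom reservoir is non-empty within each name-sort---a check that the shape of permission sets in Definition~\ref{defn.atoms} makes routine, since each $S\cap\atomsup_\nu$ contributes only finitely many atoms. The renaming case is more delicate because renamings lack inverses, so the conjugation trick $\pi^{\mone}\pi'$ is unavailable; instead one compares $\rho\bigact x$ and $\rho'\bigact x$ directly by factoring each through a common refinement routed via fresh atoms, relying on finiteness of $\nontriv(\rho)$ to keep the factoring finite. This is precisely the kind of argument developed for renaming sets in \cite{gabbay:nomrs}.
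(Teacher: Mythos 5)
The paper states this lemma without proof, treating it as a standard adaptation of the finite-support arguments of \cite{gabbay:newaas-jv} and \cite{gabbay:nomrs}, so there is no official proof to compare against. Your overall architecture is the right one: a binary lemma (the intersection of two supporting permission sets supports), then a passage to the full intersection $L$ exploiting finiteness of $\nontriv(\pi)$ and the fact that permission sets meet each $\atomsup_\nu$ in a finite set. Your second paragraph --- reducing agreement on $L$ to agreement on a single finite intersection $S_0$ of witnessing permission sets --- is correct and cleanly argued (modulo the trivial edge case $F\setminus L=\varnothing$, where $S_0$ would be an empty intersection and you should instead just use any supporting permission set).

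The genuine gap is in the proof of your key auxiliary lemma: the bridging permutation $\pi''$ need not exist. Take $S=\atomsdown\cup\{a\}$ and $T=\atomsdown\cup\{b\}$ with $a,b\in\atomsup_\nu$ distinct, so $S\cap T=\atomsdown$; let $c\in\atomsup_\nu\setminus\{a,b\}$ and put $\pi=(a\ c)$, $\pi'=(b\ c)$. These agree on $S\cap T$ (both fix it pointwise), but any $\pi''$ agreeing with $\pi$ on $S$ must send $a$ to $c$, while any $\pi''$ agreeing with $\pi'$ on $T$ must send $b$ to $c$; since $a\neq b$, no bijection does both. Routing through fresh atoms cannot repair this, because the clashing values $\pi(a)$ and $\pi'(b)$ are forced by the constraints, not chosen. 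The standard repair uses the group structure you reserve for a side remark: $\pi\act x=\pi'\act x$ iff $(\pi'^{\mone}\circ\pi)\act x=x$, and $\pi'^{\mone}\circ\pi$ fixes $S\cap T$ pointwise; write it as a product of transpositions $(a\ b)$ with $a,b\notin S\cap T$ and dispose of each via $(a\ b)=(a\ c)(b\ c)(a\ c)$ for $c\notin S\cup T$ of the right sort (such $c$ exists by coinfiniteness), each factor fixing $S$ or $T$ pointwise and hence fixing $x$. In other words a \emph{chain} of intermediate permutations, alternately agreeing on $S$ and on $T$, always exists, but a single intermediate does not. Relatedly, the renaming-set half is only gestured at; there the conjugation reduction really is unavailable and the binary lemma must be re-proved along the lines of \cite{gabbay:nomrs}, so as written the proposal establishes the reduction to the binary lemma but not the binary lemma itself.
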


\begin{frametxt}
\begin{defn}
\begin{itemize*}
\item
Call $x\in |\ns X|/|\rs X|$ \deffont{supported} when $\supp(x)$ exists.
\item
Call $\ns X$/$\rs X$ \deffont{supported} when every element $x\in|\ns X|/|\rs X|$ is supported. 
\end{itemize*}
\end{defn}
\end{frametxt}

Recall from Definition~\ref{defn.pointwise} the \emph{pointwise} actions of $\pi$ and $\rho$ on sets of atoms. 
\begin{lemm}
\label{lemm.supp.subsets}
\begin{enumerate*}
\item
If $x\in|\ns X|$ then $\supp(\pi\act x)=\pi\act\supp(x)$.
\item
If $x\in|\rs X|$ then $\supp(\rho\bigact x)\subseteq\rho\act\supp(x)$.

As a corollary, if $\rho$ is injective on $\supp(x)$ then $\supp(\rho\bigact x)=\rho\act\supp(x)$.
\end{enumerate*}
\end{lemm}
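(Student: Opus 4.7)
The plan is to handle the two parts separately, exploiting the fact that for permutations one has inverses (giving equality for free in part~1), whereas for renamings one only has partial right-inverses available under the injectivity hypothesis.

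\textbf{Part 1.} I would first establish the inclusion $\supp(\pi \act x) \subseteq \pi \act \supp(x)$ by showing that $\pi \act \supp(x)$ supports $\pi \act x$. Take any two permutations $\pi',\pi''$ that agree on $\pi \act \supp(x)$. Then for every $a \in \supp(x)$, since $\pi(a) \in \pi \act \supp(x)$, we get $\pi'(\pi(a)) = \pi''(\pi(a))$; hence $\pi' \circ \pi$ and $\pi'' \circ \pi$ agree on $\supp(x)$, so by Definition~\ref{defn.finsupp} applied to $x$ we obtain $(\pi' \circ \pi) \act x = (\pi'' \circ \pi) \act x$, i.e.\ $\pi' \act (\pi \act x) = \pi'' \act (\pi \act x)$. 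The reverse inclusion comes for free by applying what we just proved to $\pi^\mone$ and $\pi \act x$ and then acting on both sides by $\pi$.

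\textbf{Part 2, inclusion.} The same `compose and support' argument transcribes verbatim to renamings: if $\rho',\rho''$ agree on $\rho \act \supp(x)$, then $\rho' \circ \rho$ and $\rho'' \circ \rho$ agree on $\supp(x)$, so they act identically on $x$, so $\rho' \bigact (\rho \bigact x) = \rho'' \bigact (\rho \bigact x)$. This gives $\supp(\rho \bigact x) \subseteq \rho \act \supp(x)$. No use of inverses is made, so this works for an arbitrary $\rho$.

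\textbf{Part 2, corollary (equality under injectivity).} Here is the main subtlety, and where I expect the only real work to lie: unlike in part~1 we cannot just use $\rho^\mone$, so I would construct a specific renaming $\sigma$ that serves as a partial left-inverse to $\rho$ on $\supp(x)$. Concretely, for each $b \in \rho \act \supp(x)$ let $\sigma(b)$ be the unique $a \in \supp(x)$ with $\rho(a) = b$ (well-defined by injectivity on $\supp(x)$), and set $\sigma$ to the identity elsewhere. One needs to check that $\sigma$ is actually a renaming in the sense of Definition~\ref{defn.renaming}: the nontrivial part of $\sigma$ lies inside $\img(\rho|_{\supp(x)}) \cup \supp(x)$ but only where $\rho$ itself is nontrivial on $\supp(x)$, and $\f{nontriv}(\rho)$ is finite, so $\f{nontriv}(\sigma)$ is finite too; the sort-preservation of $\sigma$ is inherited from that of $\rho$. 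By construction $\sigma \circ \rho$ agrees with $\id$ on $\supp(x)$, so $\sigma \bigact (\rho \bigact x) = x$. Applying the already-proven inclusion to $\sigma$ and $\rho \bigact x$ yields $\supp(x) \subseteq \sigma \act \supp(\rho \bigact x)$; acting pointwise by $\rho$ and using that $\rho \circ \sigma$ fixes every element of $\supp(\rho \bigact x) \subseteq \rho \act \supp(x)$ gives the missing inclusion $\rho \act \supp(x) \subseteq \supp(\rho \bigact x)$.

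The only step that requires genuine care is the explicit construction of $\sigma$ and the verification that it is a well-defined renaming with finite nontrivial part; everything else is a routine transfer of the supporting-set criterion along composition.
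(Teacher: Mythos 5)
Your proof is correct, and it is exactly the ``routine calculation using the group/monoid action'' that the paper's one-line proof gestures at: transfer the supporting-set criterion along composition for the inclusions, use $\pi^\mone$ for equality in the permutation case, and build an explicit partial left-inverse renaming $\sigma$ for the injectivity corollary. No gaps; the care you take over $\sigma$ being a well-defined finite renaming is the only non-trivial point and you handle it properly.
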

\begin{proof}
By routine calculations using the group/monoid action.
\end{proof}

The reverse subset inclusion in part~2 of Lemma~\ref{lemm.supp.subsets} is not true in general: 
\begin{lemm}
There exists a renaming set $\rs X$, an element $x\in|\rs X|$, and a renaming $\rho$ such that $\rho\act\supp(x)\not\subseteq\supp(\rho\bigact x)$. 
\end{lemm}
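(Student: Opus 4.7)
The plan is to exhibit an explicit counterexample. The intuition is that Lemma~\ref{lemm.supp.subsets}(2) can fail in reverse when a non-injective renaming $\rho$ identifies two atoms and the renaming action of $\rs X$ is sensitive to whether or not such a coincidence occurs. In that case the image $\rho\bigact x$ can ``forget'' information that was present in $x$, and so have strictly smaller support than $\rho\act \supp(x)$.

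Concretely, fix a name-sort $\nu$ and define a renaming set $\rs X$ with underlying set
$$
|\rs X| \;=\; \{(a,b)\in\mathbb A_\nu\times\mathbb A_\nu\mid a\neq b\}\,\cup\,\{\star\},
$$
where $\star$ is a distinguished fresh element.  I define the renaming action by
$$
\rho\bigact\star=\star,
\qquad
\rho\bigact(a,b)=
\begin{cases}(\rho(a),\rho(b)) & \text{if }\rho(a)\neq\rho(b),\\ \star & \text{if }\rho(a)=\rho(b).\end{cases}
$$
So pairs of distinct atoms are sent pointwise when they remain distinct, and ``crash'' to $\star$ when they are identified.

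First I would verify that this is indeed a monoid action.  The case $\id\bigact x=x$ is immediate since $a\neq b$ forces $\id(a)\neq\id(b)$.  For $(\rho'\circ\rho)\bigact(a,b)=\rho'\bigact(\rho\bigact(a,b))$ a three-way case analysis on whether $\rho(a)=\rho(b)$ and whether $\rho'(\rho(a))=\rho'(\rho(b))$ shows both sides agree (either both evaluate to the pair $(\rho'(\rho(a)),\rho'(\rho(b)))$, or both collapse to $\star$).  Next, I would show $\supp((a,b))=\{a,b\}$: the set $\{a,b\}$ supports because the clause defining $\rho\bigact(a,b)$ only reads the values $\rho(a)$ and $\rho(b)$; and no proper subset suffices since, for a fresh $c$, the renamings $\id$ and $[a\ssm c]$ (respectively $[b\ssm c]$) agree off $a$ (off $b$) but yield different images of $(a,b)$.

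Finally, take the renaming $\rho=[a\ssm b]$, so $\rho(a)=\rho(b)=b$.  Then by definition $\rho\bigact(a,b)=\star$, whence $\supp(\rho\bigact(a,b))=\varnothing$, while $\rho\act\supp((a,b))=\rho\act\{a,b\}=\{b\}$.  Hence $\{b\}\not\subseteq\varnothing$, as required.  The only step that is not purely notational is checking the monoid law for the piecewise-defined action; I do not anticipate any real obstacle, just a careful case split.
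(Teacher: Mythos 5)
Your construction is essentially identical to the paper's own counterexample: the paper uses the ``exploding'' renaming action on $(\mathbb A\times\mathbb A)\cup\{\ast\}$ and exhibits $[a\ssm b]\bigact(a,b)=\ast$ with empty support versus $[a\ssm b]\act\supp((a,b))=\{b\}$, exactly as you do (the paper merely keeps the diagonal pairs $(a,a)$ in the carrier, which changes nothing). The proposal is correct, and your extra care over the monoid law and the computation of $\supp((a,b))$ is sound.
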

\begin{proof}
Consider $(\mathbb A\times\mathbb A)\cup\{\ast\}$ with the non-standard \deffont{`exploding'} renaming action such that:
$$
\begin{array}{r@{\ }l@{\qquad}r@{\ }l@{\quad}l}
\rho(\ast)=&\ast
&
\rho\bigact(a,b)=(\rho(a),\rho(b))&\text{if }\rho(a)\neq\rho(b)
\\
\rho\bigact(a,a)=&(\rho(a),\rho(a))
&
\rho\bigact(a,b)=&\ast&\text{if }\rho(a)=\rho(b)
\end{array}
$$
(Recall from Definition~\ref{defn.atoms} that by our permutative convention, $a$ and $b$ are distinct.)
Then $\supp([a\ssm b]\bigact (a,b))=\varnothing\subsetneq \{b\}=[a\ssm b]\bigact\supp((a,b))$.
\end{proof}

\subsubsection{Equivariant elements and maps}

\begin{defn} 
\label{defn.equivariant.element}
Call an element $x$ in $|\ns X|/|\rs X|$ \deffont{equivariant} when $\supp(x)=\varnothing$.
\end{defn}
$x$ is equivariant when $\pi\act x=x$ for all $\pi$, or $\rho\bigact x=x$ for all $\rho$, respectively.

\begin{defn}
\label{defn.equivariant}
\begin{itemize*}
\item
Call a function $F\in |\ns X|\to|\ns Y|$ \deffont{equivariant} when 
$$
\Forall{\pi{\in}\mathbb P}\Forall{x{\in}|\ns X|}F(\pi\act x)=\pi\act F(x).
$$ 
\item
Call a function $G\in |\rs X|\to|\rs Y|$ \deffont{equivariant} when 
$$
\Forall{\rho{\in}\mathbb R}\Forall{x{\in}|\rs X|}G(\rho\bigact x)=\rho\bigact G(x). 
$$
\end{itemize*}
$F$ and $G$ will range over equivariant functions between pairs of permutation and renaming sets respectively.
\end{defn}

\begin{rmrk}
Equivariance is a characteristic feature of nominal techniques.
Equivariance means in words `symmetric under permuting atoms'; in this paper we are also interested in `symmetric under renaming atoms'.
Whichever meaning is appropriate, equivariance is a symmetry property.

For an element $x\in|\ns X|$ or $x\in|\rs X|$ equivariance means that $x$ has empty support (in fact, $\supp(x)$ is a measure of asymmetry in $x$).
For a function $F\in|\ns X|\to|\ns Y|$ equivariance means that $F$ commutes with $\pi$.
For a function $F\in|\rs X|\to|\rs Y|$ equivariance means that $F$ commutes with $\rho$.

Equivariance and support are abstract mathematical concepts, but they were originally derived from the study of syntax in \cite{gabbay:newaas-jv}.
For syntax, equivariance corresponds to `closed'; in usual informal usage the syntax $\lam{x}x$ is closed and is symmetric under changing $x$ to $y$.\footnote{For the rest of this remark $\lambda$ is a term-former.  So $\lam{x}x$ refers to the term, not the function.}
The reader might therefore find it useful to read `equivariant' as `closed'.

However, this is just an analogy and it can be quite treacherous.  
For instance, $\lam{x}\lam{x}x$ is closed as a term but it is not \emph{permutatively} symmetric with $\lam{x}\lam{y}y$.
(If we build our syntax using nominal abstract syntax then $\lambda [a]\lambda [a]a$ is equal to $\lambda [a]\lambda [b]b$, but the reason for this is that $[a]a$ is equal, via symmetry up to permutation, with $[b]b$.)

So, while an analogy of `equivariance' with `closure' is historically reasonable and may be helpful, bear in mind that what it really means is `symmetric'.
\end{rmrk}

\begin{lemm}
\label{lemm.equivar.reduces.supp}
\begin{enumerate*}
\item
Suppose $F\in |\ns X|\to|\ns Y|$ is equivariant.
Then $\supp(F(x))\subseteq \supp(x)$ for every $x\in|\ns X|$. 
\item
Suppose $G\in |\rs X|\to|\rs Y|$ is equivariant.
Then $\supp(G(x))\subseteq \supp(x)$ for every $x\in|\rs X|$. 
\end{enumerate*}
\end{lemm}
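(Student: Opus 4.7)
The plan is to prove both parts by the standard technique: show that $\supp(x)$ is itself a supporting set for $F(x)$ (respectively $G(x)$), then invoke minimality of support.

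For part 1, I would fix $x\in|\ns X|$ and check directly that $\supp(x)$ supports $F(x)$ in the sense of Definition~\ref{defn.finsupp}. So suppose $\pi,\pi'\in\mathbb P$ agree on $\supp(x)$, i.e.\ $\pi(a)=\pi'(a)$ for every $a\in\supp(x)$. Then by the definition of support applied to $x$, we get $\pi\act x = \pi'\act x$. Applying $F$ to both sides and using equivariance twice,
\[
\pi\act F(x) = F(\pi\act x) = F(\pi'\act x) = \pi'\act F(x).
\]
Hence $\supp(x)$ supports $F(x)$. Since $\supp(x)$ is in particular contained in some permission set (we are assuming $\ns X$ is supported, so $\supp(x)$ exists), $F(x)$ also has a supporting permission set, so $\supp(F(x))$ exists, and by minimality $\supp(F(x))\subseteq\supp(x)$.

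Part 2 proceeds by exactly the same template, replacing $\pi$ by $\rho$ and $\act$ by $\bigact$: if $\rho,\rho'$ agree on $\supp(x)$ then $\rho\bigact x=\rho'\bigact x$ by definition of support, and then $\rho\bigact G(x)=G(\rho\bigact x)=G(\rho'\bigact x)=\rho'\bigact G(x)$ by equivariance of $G$. So $\supp(x)$ supports $G(x)$ and $\supp(G(x))\subseteq\supp(x)$.

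There is no real obstacle here: the argument is a two-line computation in each part, and the only subtlety to flag is that we need $F(x)$ (resp.\ $G(x)$) to actually have a support at all, which is immediate because we have exhibited a supporting set (namely $\supp(x)$) contained in a permission set. The asymmetry between the permutation and renaming cases visible elsewhere in the paper (e.g.\ the failure of the reverse inclusion in Lemma~\ref{lemm.supp.subsets}.2) does not intrude here, because we only use the monoid/group action axioms and the definition of support, both of which are formally parallel between $\mathbb P$ and $\mathbb R$.
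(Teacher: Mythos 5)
Your proof is correct and follows essentially the same route as the paper: show that a set supporting $x$ also supports $F(x)$ (resp.\ $G(x)$) by applying equivariance to $\pi\act x=\pi'\act x$ (resp.\ $\rho\bigact x=\rho'\bigact x$), then conclude by minimality of support. The paper only writes out the renaming case and phrases the argument for an arbitrary supporting set $S$ rather than $\supp(x)$ itself (which cleanly handles the existence of $\supp(F(x))$, since any supporting \emph{permission} set for $x$ then supports $F(x)$), but this is the same argument.
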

\begin{proof}
We consider only the second part.
Suppose $S$ supports $x$ so that for all $\rho$ and $\rho'$, if $\Forall{a\in S}\rho(a)=\rho'(a)$ then $\rho\bigact x=\rho'\bigact x$.
The result follows if we note that $\rho\bigact G(x)=G(\rho\bigact x)$ and $\rho'\bigact G(x)=G(\rho'\bigact x)$.
\end{proof}

\begin{frametxt}
\begin{defn}
\label{defn.fps}
\begin{itemize*}
\item
Write \theory{PmsPrm} for the category with objects supported permutation sets and arrows equivariant functions between them.

Henceforth, $\ns X$ and $\ns Y$ will range over objects in \theory{PmsPrm}.
\item
Write \theory{PmsRen} for the category with objects supported renaming sets and arrows equivariant functions between them.

Henceforth, $\rs X$ and $\rs Y$ will range over objects in \theory{PmsPrm}.
\end{itemize*}
\end{defn} 
\end{frametxt}

\begin{rmrk}
Both \theory{PmsPrm} and \theory{PmsRen} are categories of sets with a monoid action (in the case of \theory{PmsPrm} that monoid happens to be a group).

\theory{PmsPrm} can be thought of as the category of pullback-preserving presheaves on the category $\mathbb I'$ of permission sets and finite injections between them (so an object $S\in\mathbb I'$ is a permission set, and an arrow from $S$ to $T$ is a permutation $\pi$ such that $\pi\act S\subseteq T$).
\theory{PmsRen} can be thought of as the category of those presheaves on the category $\mathbb F'$ of permission sets and finite renamings between them (so an object $S\in\mathbb F'$ is a permission set, and an arrow from $S$ to $T$ is a renaming $\rho$ such that $\rho\bigact S\subseteq T$) that preserve pullbacks of monos.

For details on this see \cite{gabbay:nomrs}, and for a more wide-ranging survey of the applications of sets with an action and presheaves see \cite{gadducci:abopap}.
See also the discussion of presheaves in Subsection~\ref{subsect.pnl.in.perspective}.
\end{rmrk}

\subsection{The exponential in \theory{PmsRen}}
\label{subsect.exp}

\theory{PmsPrm} and \theory{PmsRen} are both cartesian closed, but we only discuss exponentials for \theory{PmsRen} in this paper.
The reader can find the constructions for \theory{PmsPrm} e.g. in \cite[Section~9]{gabbay:fountl}.

\theory{PmsPrm} is used to give denotation to PNL only, while \theory{PrmRen} is used to give a denotation to PNL and also to HOL. 
For this reason, the exponentials of \theory{PmsRen} are of specific and immediate importance to us, but not those of \theory{PmsPrm}.

\subsubsection{Functions}

Recall the definitions of $\dom$ and $\img$ from Definition~\ref{defn.renaming}.
\begin{frametxt}
\begin{defn}
\label{defn.exp.ren}
\begin{itemize*}
\item
Suppose $\ns X,\ns Y\in\theory{PmsPrm}$.
Suppose $f\in |\ns X|\to|\ns Y|$ ($f$ is not necessarily equivariant).

Call $f$ \deffont{supported} when there exists a permission set $S_f\subseteq\mathbb A$ such that for every $x\in |\ns X|$ and permutation $\pi\in\mathbb P$, if $\nontriv(\pi)\cap S_f=\varnothing$ then
$$
\pi\bigact(f(x)) = f(\pi\bigact x) .
$$
\item
Suppose $\rs X,\rs Y\in\theory{PmsRen}$.
Suppose $f\in |\rs X|\to|\rs Y|$ ($f$ is not necessarily equivariant).

Call $f$ \deffont{supported} when there exists a permission set $S_f\subseteq\mathbb A$ such that for every $x\in |\rs X|$ and renaming $\rho\in\mathbb R$, if $\dom(\rho)\cap S_f=\varnothing$ then
$$
\rho\bigact(f(x)) = f(\rho\bigact x) .
$$
\end{itemize*}
\end{defn}
\end{frametxt}

\begin{rmrk}
Definition~\ref{defn.exp.ren} uses a word `supported' for $f$, suggestive of Definition~\ref{defn.finsupp}, even though $f$ has no permutation/renaming action.
It \emph{will} have a permutation/renaming action (Remark~\ref{rmrk.conj.action} and Definition~\ref{defn.exp.ren.action}), and then the terminologies will coincide (see Lemma~\ref{lemm.OK}).
\end{rmrk}

\begin{rmrk}
\label{rmrk.conj.action}
It is a fact that \theory{PmsPrm} is cartesian closed and functions have the \emph{conjugation action} 
$$
\label{Conjugation action} (\pi\act f)(x)=\pi\act(f(\pi^\mone\act x)).
$$
and $f$ is supported in the sense of Definition~\ref{defn.exp.ren} if and only if it is supported as an element of $|\ns X|\to|\ns Y|$ with the conjugation action.
For more on this see \cite{gabbay:fountl,gabbay:newaas-jv}.

Renamings $\rho$ are not invertible, so we must work a little harder to define a renaming action.
This is Definition~\ref{defn.exp.ren.action}.
However, the end result is similar to the conjugation action, in a sense made formal in Lemma~\ref{lemm.renaming.distribute} which is similar to an immediate corollary of the conjugation action that $\pi\act (f(x)) =(\pi\act f)(\pi\act x)$.
\end{rmrk}

\begin{lemm}
\label{lemm.supp.supported.f.bound}
If $f$ is supported then $\supp(f(x))\subseteq S_f\cup\supp(x)$ for every $x\in|\rs X|$.
\end{lemm}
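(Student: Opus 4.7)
The plan is to show that $T := S_f \cup \supp(x)$ is itself a supporting set for $f(x)$ in the sense of Definition~\ref{defn.finsupp}; the desired $\supp(f(x)) \subseteq T$ then follows at once, since $\supp(f(x))$ is by construction the least such set.

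Concretely, fix $\rho, \rho' \in \mathbb R$ agreeing on $T$, with the goal $\rho \bigact f(x) = \rho' \bigact f(x)$. Two ingredients are free: agreement on $\supp(x)$ together with Definition~\ref{defn.finsupp} applied to $x$ gives $\rho \bigact x = \rho' \bigact x$; and the defining property of $S_f$ in Definition~\ref{defn.exp.ren} lets any renaming whose domain avoids $S_f$ commute through $f$. The whole proof consists of combining these two observations into a single equality.

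To that end I would replace the pair $(\rho, \rho')$ with a canonical witness $\sigma$ defined by $\sigma(a) = \rho(a) = \rho'(a)$ for $a \in T$ and $\sigma(a) = a$ otherwise; this is a well-defined renaming because $\nontriv(\sigma) \subseteq \nontriv(\rho)$ is finite. By symmetry the problem reduces to $\rho \bigact f(x) = \sigma \bigact f(x)$. The main obstacle is that renamings are not invertible, so one cannot directly factor $\rho = \bar\rho \circ \sigma$ with $\dom(\bar\rho) \cap S_f = \varnothing$ in general---for example, if $\rho$ sends some $a \in T$ to a non-fixed point of $\rho$ lying outside $T$, such a $\bar\rho$ is overdetermined. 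I would work around this by first applying a suitable permutation $\pi$, with $\nontriv(\pi) \cap T = \varnothing$, that ships the finite set $(\nontriv(\rho) \cup \rho(T)) \setminus T$ to fresh atoms lying outside $S_f \cup \supp(x) \cup \nontriv(\rho) \cup \nontriv(\rho')$. Because $\nontriv(\pi)$ is disjoint from $T$ and hence from both $S_f$ and $\supp(x)$, the permutation $\pi$ fixes $x$ (via the support of $x$) and commutes with $f$ (via the support of $f$). In this freshened setting the factorization of $\rho$ through $\sigma$ is unambiguous, $\bar\rho$ can be built explicitly with $\dom(\bar\rho) \cap S_f = \varnothing$, and the supportedness of $f$ then slides $\bar\rho$ past $f$; undoing $\pi$ and chaining with the symmetric equation for $\rho'$ yields $\rho \bigact f(x) = \rho' \bigact f(x)$, as required.
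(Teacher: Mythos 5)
There is a genuine gap in the final step. Reducing the lemma to ``$T=S_f\cup\supp(x)$ supports $f(x)$'' is a legitimate (if stronger) target, and the passage to a canonical $\sigma$ is fine, but the freshening-plus-factorization manoeuvre does not close the argument. The obstruction is a renaming $\rho$ that sends an atom $c\notin T$ onto an atom of $T$ that $\sigma$ itself moves: take $\rho$ with $a\mapsto b$ and $c\mapsto a$, where $a,b\in T$ and $c\notin T$, so that $\sigma$ is just $a\mapsto b$. No factorization $\rho\circ\pi=\sigma\circ\bar\rho$ exists, because $a\in\dom(\sigma)\setminus\img(\sigma)$ and so nothing can be mapped by $\sigma$ onto $a$; your permutation $\pi$ cannot repair this, since the offending atom $a$ lies in $T$ and hence outside $\nontriv(\pi)$. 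The factorization in the other order, $\rho\circ\pi=\bar\rho\circ\sigma$, does exist, but it leaves you with $\bar\rho\bigact(\sigma\bigact f(x))$, and at that point ``sliding $\bar\rho$ past $f$'' is unavailable: the argument of $\bar\rho$ is $\sigma\bigact f(x)$, not $f(y)$ for any $y$, because $\dom(\sigma)$ may meet $S_f$. Discharging $\bar\rho$ would instead require $\dom(\bar\rho)\cap\supp(\sigma\bigact f(x))=\varnothing$; since $\dom(\bar\rho)$ necessarily contains the original atoms of $\nontriv(\rho)\setminus T$ (such as $c$), this reduces to $c\notin\supp(f(x))$ --- which is exactly the bound being proved, so the argument is circular at this point. (A smaller issue: $\supp(f(x))$ is by definition least among supporting \emph{permission} sets, and $T$ need not be one, so ``follows at once by minimality'' also needs a line observing that every permission set containing $T$ supports $f(x)$ and that their intersection is $T$.)

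The paper's proof sidesteps all of this by exploiting the hypothesis that $\rs Y$ is a \emph{supported} renaming set, so $\supp(f(x))$ already exists; it then only has to rule out a single bad atom $a\in\supp(f(x))\setminus(S_f\cup\supp(x))$, and for that a single \emph{swap} $(b\ a)$ with $b$ fresh suffices. Swaps are invertible, so both of your ``free ingredients'' apply directly, giving $(b\ a)\bigact f(x)=f((b\ a)\bigact x)=f(x)$, and Lemma~\ref{lemm.supp.subsets} then forces $(b\ a)\act\supp(f(x))=\supp(f(x))$, contradicting $a\in\supp(f(x))$ and $b\notin\supp(f(x))$. If you want to keep your formulation, establish $\supp(f(x))\subseteq T$ first by this swap argument and only then conclude that $T$ supports $f(x)$; the direct attack on arbitrary pairs of renamings founders on their non-invertibility.
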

\begin{proof}
By contradiction.
Suppose there exists $a\in \supp(f(x))\setminus(S_f\cup\supp(x))$.
Choose $b$ fresh (so $b\not\in\supp(f(x))\cup S_f\cup\supp(x)$).
Then $(b\ a)\bigact (f(x))=f((b\ a)\bigact x)$ since $a,b\not\in S_f$
and $f((b\ a)\bigact x)=f(x)$ since $b,a\not\in\supp(x)$.
It follows by Lemma~\ref{lemm.supp.subsets} that $(b\ a)\bigact\supp(f(x))=\supp(f(x))$, which is impossible.
\end{proof}

\begin{defn}
\label{defn.freshening.pair}
Suppose $S\subseteq\mathbb A$ is a permission set and $A\subseteq\mathbb A$ is finite.
Call $\rho_1$ and $\rho_2$ a \deffont{freshening pair} of renamings for $A$ with respect to $S$ when:
\begin{itemize*}
\item
$\dom(\rho_1)=A$ and $\dom(\rho_2)=\img(\rho_1)$.
\item
$(\rho_2\circ\rho_1)(a)=a$ for all $a\in A$.
\item
$\dom(\rho_2)\cap (S\cup A)=\varnothing$.
\end{itemize*}
\end{defn}
In words, $\rho_1$ maps the atoms in $A$ to be outside $S$ (and $A$), and $\rho_2$ is an `inverse' to $\rho_1$ that puts them back.

\subsubsection{Renaming action}

\begin{defn}
\label{defn.exp.ren.action}
(We continue the notation of Definition~\ref{defn.exp.ren}.)
If $f$ is supported then define $\rho\bigact f$ by
\begin{frameqn}
(\rho\bigact f)(x) = (\rho_2\circ\rho)\bigact f(\rho_1\bigact x)
\end{frameqn}
\noindent \!\!\!\! for some/any freshening pair of renamings $\rho_1$ and $\rho_2$ for $\nontriv(\rho)$ (which is finite), with respect to $\supp(x)\cup S_f$. 
\end{defn}

\begin{lemm}
Definition~\ref{defn.exp.ren.action} is well-defined.
That is, it does not matter which freshening pair of renamings we choose.
\end{lemm}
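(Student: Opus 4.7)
The plan is to reduce the comparison of two freshening pairs $(\rho_1,\rho_2)$ and $(\rho_1',\rho_2')$ for $A := \nontriv(\rho)$ with respect to $S := S_f \cup \supp(x)$ to a routine check on supports, by introducing a ``bridging'' renaming $\sigma$ that translates between the two freshenings but whose support is disjoint from $S$ (so that it commutes past $f$ by the supportedness of $f$, and past $\supp(x)$ by the supportedness of $x$).

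Concretely, I would define $\sigma$ to be the renaming which maps $\rho_1(a) \mapsto \rho_1'(a)$ for each $a \in A$, and acts as the identity elsewhere. Since by Definition~\ref{defn.freshening.pair} both $\img(\rho_1)$ and $\img(\rho_1')$ are disjoint from $S \cup A$, this $\sigma$ is well-defined, finite, and satisfies $\dom(\sigma) \cap S = \varnothing$. Checking pointwise on $\supp(x)$ that $(\sigma \circ \rho_1)(a) = \rho_1'(a)$ (trivial for $a \in \supp(x) \setminus A$ since both sides are the identity there; equal to $\rho_1'(a)$ by construction for $a \in A \cap \supp(x)$), the support property of $x$ gives $\sigma \bigact (\rho_1 \bigact x) = \rho_1' \bigact x$. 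Then, because $\dom(\sigma) \cap S_f = \varnothing$, the supportedness of $f$ (Definition~\ref{defn.exp.ren}) yields $\sigma \bigact f(\rho_1 \bigact x) = f(\rho_1' \bigact x)$.

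It then remains to prove the identity
$$(\rho_2 \circ \rho) \bigact z \;=\; (\rho_2' \circ \rho \circ \sigma) \bigact z, \qquad z := f(\rho_1 \bigact x),$$
after which the target equality follows by rewriting $f(\rho_1' \bigact x)$ using the previous step. By Lemma~\ref{lemm.supp.supported.f.bound}, $\supp(z) \subseteq S_f \cup (\supp(x) \setminus A) \cup \img(\rho_1|_A)$, and on each of these three pieces a short case analysis (using that $\rho$ is the identity outside $A$, that $\rho_2$ and $\rho_2'$ are identities outside $\img(\rho_1)$ and $\img(\rho_1')$ respectively, and that $\rho_2 \circ \rho_1 = \id$ and $\rho_2' \circ \rho_1' = \id$ on $A$) shows the two composite renamings agree pointwise on $\supp(z)$; invoking the support property of $z$ then closes the argument.

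The main obstacle is purely bookkeeping: making sure the various images and domains stay disjoint from $S$ at every step, so that $\rho_2$, $\rho_2'$ and $\sigma$ genuinely ``pass through'' $f$ and through the supports of $x$ and $z$ without interference. Once the disjointness invariants of Definition~\ref{defn.freshening.pair} are exploited systematically, the computation reduces to checking three unambiguous cases on $\supp(z)$.
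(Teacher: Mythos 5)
Your proof is correct and follows essentially the same route as the paper: your bridging renaming $\sigma$ is exactly the paper's $\rho_1''$ (a renaming carrying $\img(\rho_1)$ to $\img(\rho_1')$ so that $\rho_1'=\sigma\circ\rho_1$ on the relevant atoms), and both arguments rest on the same disjointness invariants of freshening pairs together with the supportedness of $f$ and of $x$. The only difference is cosmetic and lies in the finish: the paper also introduces an explicit reverse bridge $\rho_2''$ satisfying $\rho_2'=\rho_2\circ\rho_2''$ and concludes with an equational chain, whereas you conclude by checking that $\rho_2\circ\rho$ and $\rho_2'\circ\rho\circ\sigma$ agree pointwise on $\supp(f(\rho_1\bigact x))$ --- both come down to the same support bookkeeping.
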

\begin{proof}
Consider two freshening pairs of renamings $\rho_1,\rho_2$ and $\rho_1',\rho_2'$.
These exist because $\nontriv(\rho)$ is finite and $\mathbb A\setminus(\supp(x)\cup S_f)$ is infinite.

Let $\rho_1''$ map $\img(\rho_1)$ to $\img(\rho_1')$ and $\rho_2''$ map $\dom(\rho_2')=\img(\rho_1')$ to $\dom(\rho_2)=\img(\rho_1)$ in such a way that 
\begin{itemize*}
\item
$\rho_1'(a)=(\rho_1''\circ\rho_1)(a)$ for all $a\in\dom(\rho_1')$, 
\item
$\rho_2'(a)=(\rho_2\circ\rho_2'')(a)$ for all $a\in\dom(\rho_2')$, and 
\item
$\nontriv(\rho_1'')=\img(\rho_1)\cup\img(\rho_1')$ and $\nontriv(\rho_2'')=\dom(\rho_2')\cup\dom(\rho_2)$.
\end{itemize*}
We reason as follows:
\begin{tab7}
(\rho_2'\circ\rho)\bigact f((\rho_1'\circ\rho)\bigact x)=&
(\rho_2\circ\rho_2''\circ\rho)\bigact f((\rho_1''\circ\rho_1\circ\rho)\bigact x)
&\text{Lems.~\ref{lemm.supp.supported.f.bound} \& \ref{lemm.supp.subsets}, Def.~\ref{defn.finsupp}}
\\
=&(\rho_2\circ\rho_2''\circ\rho\circ\rho_1'')\bigact f((\rho_1\circ\rho)\bigact x)
&\dom(\rho_1'')\cap S_f=\varnothing
\\
=&(\rho_2\circ\rho_2''\circ\rho_1''\circ\rho)\bigact f((\rho_1\circ\rho)\bigact x)
&\nontriv(\rho_1'')\cap\nontriv(\rho)=\varnothing 
\\
=&(\rho_2\circ\rho)\bigact f((\rho_1\circ\rho)\bigact x)
&\text{Lems.~\ref{lemm.supp.supported.f.bound} \& \ref{lemm.supp.subsets}, Def.~\ref{defn.finsupp}}
\end{tab7}
\end{proof}

\begin{lemm}
\label{lemm.renaming.distribute}
Suppose $x\in|\rs X|$ and $\rho$ is a renaming.
Suppose $f\in|\rs X|\to|\rs Y|$ is supported.

Then $\rho\bigact(f(x))=(\rho\bigact f)(\rho\bigact x)$.
\end{lemm}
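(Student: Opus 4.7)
The plan is to unfold the right-hand side using Definition~\ref{defn.exp.ren.action} and then to reduce the resulting expression to $\rho\bigact f(x)$ using the support hypothesis on $f$ (Definition~\ref{defn.exp.ren}) together with the defining identities of a freshening pair (Definition~\ref{defn.freshening.pair}).

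Concretely, I would first pick a freshening pair $\rho_1,\rho_2$ for $\nontriv(\rho)$ with respect to a permission set large enough to contain $\supp(x)\cup\supp(\rho\bigact x)\cup S_f\cup\supp(f(x))$; this is permitted because any larger freshening target still yields a valid freshening pair. Unfolding gives
$(\rho\bigact f)(\rho\bigact x)=(\rho_2\circ\rho)\bigact f((\rho_1\circ\rho)\bigact x)$.
The choice of freshening has two consequences that drive the calculation. First, $\dom(\rho_2)=\img(\rho_1)$ is disjoint from $S_f$, so by Definition~\ref{defn.exp.ren} we may push $\rho_2$ across $f$: $\rho_2\bigact f(z)=f(\rho_2\bigact z)$ for any $z$. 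Second, the identity $\rho_2\circ\rho_1=\id$ on $\nontriv(\rho)$ forces $\rho_1$ to be a bijection from $\nontriv(\rho)$ onto $\img(\rho_1)$ with one-sided inverse $\rho_2$; combined with the fact that $\img(\rho_1)$ is also fresh for $\supp(x)$, this extends to $\rho_2\circ\rho_1$ acting as the identity on the set $\nontriv(\rho)\cup\supp(x)$, which contains $\rho\act\supp(x)$ and therefore $\supp(\rho\bigact x)$ by Lemma~\ref{lemm.supp.subsets}.

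From here a short case analysis on atoms shows additionally that $\dom(\rho_2\circ\rho\circ\rho_1)\subseteq\img(\rho_1)$, so the composite renaming $\rho_2\circ\rho\circ\rho_1$ also pushes across $f$. Chaining the individual push-throughs with the identities $(\rho_2\circ\rho_1)\bigact x=x$ and $(\rho_2\circ\rho_1)\bigact(\rho\bigact x)=\rho\bigact x$, one re-associates the renamings on both sides of the claimed equation and checks that both $\rho\bigact f(x)$ and $(\rho_2\circ\rho)\bigact f((\rho_1\circ\rho)\bigact x)$ collapse to a common expression, delivering the desired identity.

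The hard part is the bookkeeping. One cannot simply imitate the permutation argument $(\pi\bigact f)(\pi\bigact x)=\pi\bigact f(\pi^{\mone}\bigact\pi\bigact x)=\pi\bigact f(x)$, because renamings are neither commutative nor invertible: in particular $\rho_2\circ\rho$ and $\rho\circ\rho_2$ will disagree on $\img(\rho_1)$, and that set can intersect the support of the intermediate term $f((\rho_1\circ\rho)\bigact x)$. The freshening pair is precisely the device that simulates a one-sided inverse of $\rho$ on the supports that matter, and the delicate step in the proof is verifying that the domain-disjointness hypothesis required for each push across $f$ actually holds at the moment it is invoked — which is why the freshening is taken with respect to $\supp(\rho\bigact x)$ and $\supp(f(x))$ and not only $\supp(x)\cup S_f$.
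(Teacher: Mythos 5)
There is a genuine gap at the step where you claim the two sides ``collapse to a common expression''. Your inventory of tools is: (i) $\rho_2$ pushes across $f$ because $\dom(\rho_2)=\img(\rho_1)$ is fresh for $S_f$; (ii) $\rho_2\circ\rho_1$ is the identity on $\nontriv(\rho)\cup\supp(x)$; and (iii) the composite $\rho_2\circ\rho\circ\rho_1$ has domain inside $\img(\rho_1)$ and so also pushes across $f$. Fact (iii) is true but vacuous: since $\img(\rho_1)$ is disjoint from $\nontriv(\rho)$, both $\rho_1$ and $\rho$ act as the identity on $\img(\rho_1)$, and a three-line case analysis shows that $\rho_2\circ\rho\circ\rho_1=\rho_2$ exactly. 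So (iii) adds nothing to (i), and this composite never occurs in a useful position in $(\rho_2\circ\rho)\bigact f((\rho_1\circ\rho)\bigact x)$ anyway: the renamings adjacent to $f$ there are the inner $\rho_1$ (whose domain is $\nontriv(\rho)$ and may meet $S_f$, so it cannot be pulled out of $f$) and the outer $\rho$ (same problem, it cannot be pushed in). Facts (i) and (ii) let you rewrite the target $\rho\bigact f(x)$ as $(\rho\circ\rho_2)\bigact f(\rho_1\bigact x)$, but they give no way to relate $f((\rho_1\circ\rho)\bigact x)$ to $f(\rho_1\bigact x)$, which is where the actual content lies.

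The missing idea is a commutation of $\rho$ past the freshening. The paper introduces an auxiliary renaming $\rho'$ with $\nontriv(\rho')=\img(\rho_1)$ (essentially $\rho$ conjugated by the bijection $\rho_1$ from $\nontriv(\rho)$ onto $\img(\rho_1)$) satisfying $\rho_1\circ\rho=\rho'\circ\rho_1$ on $\supp(x)$. Since $\nontriv(\rho')$ lies in the fresh set $\img(\rho_1)$, this $\rho'$ \emph{does} push across $f$, turning $f((\rho_1\circ\rho)\bigact x)$ into $\rho'\bigact f(\rho_1\bigact x)$ and hence the right-hand side into $(\rho_2\circ\rho\circ\rho')\bigact f(\rho_1\bigact x)$; one then checks (a second non-trivial verification your sketch omits) that $\rho_2\circ\rho\circ\rho'$ and $\rho\circ\rho_2$ agree on $\supp(f(\rho_1\bigact x))\subseteq S_f\cup\rho_1\act\supp(x)$, using Lemma~\ref{lemm.supp.supported.f.bound}. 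Your closing paragraph correctly diagnoses non-commutativity and non-invertibility as the obstruction, but the mechanism you propose does not overcome it; enlarging the freshening target to include $\supp(\rho\bigact x)$ and $\supp(f(x))$ is harmless but is no substitute for the conjugation step.
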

\begin{proof}
Let $\rho_1$ and $\rho_2$ be a freshening pair of renamings of $\nontriv(\rho)$ with respect to $S_f\cup\supp(x)$.

Let $\rho'$ be a renaming with $\nontriv(\rho')=\img(\rho_1)$ such that $\rho_1\circ\rho=\rho'\circ\rho_1$; this exists since $\rho_1$ is injective on $\nontriv(\rho)$ and `freshens' this set to some fresh set of atoms.

We reason as follows: 
\begin{tab7}
(\rho\bigact f)(\rho\bigact x)=&(\rho_2\circ\rho)\bigact f((\rho_1\circ\rho)\bigact x)
&\text{Definition~\ref{defn.exp.ren.action}}
\\
=&(\rho_2\circ\rho)\bigact f((\rho'\circ\rho_1)\bigact x)
&\text{Definition~\ref{defn.finsupp}}
\\
=&(\rho_2\circ\rho\circ\rho')\bigact f(\rho_1\bigact x)
&\nontriv(\rho')\cap S_f=\varnothing
\\
=&(\rho\circ\rho_2)\bigact f(\rho_1\bigact x)
&\text{Lem.~\ref{lemm.supp.supported.f.bound}, Def.~\ref{defn.finsupp}}
\\
=&\rho\bigact f((\rho_2\circ\rho_1)\bigact x)
&\dom(\rho_2)\cap S_f=\varnothing
\\
=&\rho\bigact f(x)
&\text{Definition~\ref{defn.finsupp}}
\end{tab7}
\end{proof}

\subsubsection{Definition of the exponential}

\begin{frametxt}
\begin{defn}
\label{defn.frs.exp}
Write $\rs X\Rightarrow\rs Y$ for the renaming set with underlying set those $f\in|\rs X|\to|\rs Y|$ that are supported in the sense of Definition~\ref{defn.exp.ren}, and renaming action as defined in Definition~\ref{defn.exp.ren.action}.
\end{defn} 
\end{frametxt}
 
\begin{lemm}
\label{lemm.OK}
If $f$ is supported in the sense of Definition~\ref{defn.exp.ren} then it is supported by $S_f$ in the sense of Definition~\ref{defn.finsupp}.
Thus, $\rs X\Rightarrow\rs Y$ is indeed a permissive-nominal renaming set.
\end{lemm}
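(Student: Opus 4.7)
The plan is to prove the first claim (that $S_f$ supports $f$ in the sense of Definition~\ref{defn.finsupp}), from which the second claim (that $\rs X\Rightarrow\rs Y$ is a permissive-nominal renaming set) follows once one also verifies that $\rho\bigact f$ is itself supported (with $S_{\rho\bigact f}$ of the form $\rho\act S_f$, or any finite superset thereof), and that $\id\bigact f = f$ and $(\rho\circ\rho')\bigact f = \rho\bigact(\rho'\bigact f)$. These verifications are in the spirit of the well-definedness proof already given, using freshening pairs and the preceding Lemmas~\ref{lemm.supp.supported.f.bound} and~\ref{lemm.renaming.distribute}, so I shall concentrate on the main step.

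For the main step, fix renamings $\rho,\rho'$ with $\rho(a)=\rho'(a)$ for every $a\in S_f$, and fix $x\in|\rs X|$. The goal is $(\rho\bigact f)(x) = (\rho'\bigact f)(x)$. My strategy is to compute both sides using a single freshening pair adequate for both $\rho$ and $\rho'$ at once. Choose a freshening pair $\rho_1,\rho_2$ for $A = \nontriv(\rho)\cup\nontriv(\rho')$ with respect to $S_f\cup\supp(x)$, so that $\img(\rho_1)$ is disjoint from $S_f\cup\supp(x)\cup A$. An \emph{extended well-definedness} — that $(\rho\bigact f)(x) = (\rho_2\circ\rho)\bigact f(\rho_1\bigact x)$ even when $\dom(\rho_1)=A\supseteq\nontriv(\rho)$ rather than $\dom(\rho_1)=\nontriv(\rho)$ exactly — is proved by factoring $\rho_1$ (and $\rho_2$) along $\nontriv(\rho)$ and $A\setminus\nontriv(\rho)$ and reducing to the well-definedness already shown: the outer freshening of atoms in $A\setminus\nontriv(\rho)$ gets absorbed because $\rho$ is the identity there and the images are fresh. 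The symmetric fact holds for $\rho'$.

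With both actions computed using the same pair, setting $y = f(\rho_1\bigact x)$, we must show $(\rho_2\circ\rho)\bigact y = (\rho_2\circ\rho')\bigact y$. By Lemmas~\ref{lemm.supp.supported.f.bound} and~\ref{lemm.supp.subsets}, $\supp(y)\subseteq S_f\cup \rho_1\act\supp(x)$. I check pointwise that $\rho_2\circ\rho$ and $\rho_2\circ\rho'$ agree on this set. On $S_f$, $\rho$ and $\rho'$ agree by hypothesis, and their $S_f$-images lie outside $\dom(\rho_2)$ (which is disjoint from $S_f$ by freshening), so the two composites coincide. On $\rho_1\act\supp(x)$ an atom has one of two origins: either it equals $\rho_1(a)$ for some $a\in\supp(x)\cap A$, in which case it is fresh for $\dom(\rho)\cup\dom(\rho')$ so both composites map it via $\rho_2$ back to $a$; or it lies in $\supp(x)\setminus A$, and is fixed by $\rho$, $\rho'$, and $\rho_2$ alike. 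Hence the composites agree on $\supp(y)$, and Definition~\ref{defn.finsupp} gives the desired equality.

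The main obstacle is the extended well-definedness step: without it, one would have to compare two different freshening pairs evaluated at two different arguments of $f$, which is combinatorially much more unpleasant. Allowing a single larger freshening reduces the problem to a clean pointwise agreement check on the support of $y$, at the cost of having to mildly generalize the freshening-pair lemma that precedes it.
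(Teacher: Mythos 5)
Your proof is correct, but it takes a different route from the paper's. The paper disposes of the first claim in one line by asserting that it suffices to check $([a\ssm b]\bigact f)(x)=f(x)$ for $a\not\in S_f$ --- i.e.\ it implicitly invokes a characterisation of support in terms of atomic renamings off $S_f$, and calls the rest routine. You instead verify Definition~\ref{defn.finsupp} directly for an arbitrary pair $\rho,\rho'$ agreeing on $S_f$, by evaluating both $(\rho\bigact f)(x)$ and $(\rho'\bigact f)(x)$ with one common freshening pair for $\nontriv(\rho)\cup\nontriv(\rho')$ and then checking pointwise agreement of $\rho_2\circ\rho$ and $\rho_2\circ\rho'$ on $\supp(f(\rho_1\bigact x))\subseteq S_f\cup\rho_1\act\supp(x)$. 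That final agreement check is right. What your route buys is self-containedness: the paper's ``it suffices'' is itself not obvious for renaming sets (a renaming with domain off $S_f$ can still map atoms \emph{into} $S_f$, so decomposing a general $\rho$ into atomic renamings outside $S_f$ is delicate), whereas your argument needs nothing beyond Definition~\ref{defn.exp.ren.action} and Lemmas~\ref{lemm.supp.subsets} and~\ref{lemm.supp.supported.f.bound}. The price is exactly the ``extended well-definedness'' you isolate.

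On that step, one caution: $A\setminus\nontriv(\rho)$ can meet $S_f$ (take $\rho=\id$ and $\rho'=[d\ssm c]$ with $d\not\in S_f$ and $c\in S_f$; these agree on $S_f$, yet $c\in\nontriv(\rho')\setminus\nontriv(\rho)$). So if ``gets absorbed'' means commuting the \emph{inner} extra freshening $\sigma_1$ (with $\dom(\sigma_1)=A\setminus\nontriv(\rho)$) through $f$, that fails, because supportedness of $f$ only licenses commuting renamings whose domain avoids $S_f$. The absorption does go through, but by pushing the \emph{outer} half $\sigma_2$ inward: $\dom(\sigma_2)=\img(\sigma_1)$ is fresh, hence disjoint from $S_f$, so $\sigma_2\bigact f(\sigma_1\bigact v)=f((\sigma_2\circ\sigma_1)\bigact v)$, and $\sigma_2\circ\sigma_1$ agrees with $\id$ on $\supp(v)$ and so garbage-collects; one also uses that $\sigma_2$ commutes with $\rho$ since $\dom(\sigma_2)$ avoids $\nontriv(\rho)$ and $\img(\sigma_2)=A\setminus\nontriv(\rho)$ avoids $\dom(\rho)$. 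With that reading your sketch is sound, and the lemma follows.
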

\begin{proof}
It suffices to show that if $a\not\in S_f$ then $([a\ssm b]\bigact f)(x)=f(x)$.
This follows by routine calculations. 
\end{proof}

\begin{lemm}
\theory{PmsRen} (Definition~\ref{defn.fps}) is cartesian closed: 
\begin{itemize*}
\item
The exponential is $\rs X\Rightarrow\rs Y$ from Definition~\ref{defn.frs.exp}. 
\item
Products are given pointwise as in Definition~\ref{defn.times}.
\item
The terminal object $\rs 1$ is the singleton set $\{0\}$ with the trivial action $\rho\bigact 0=0$. 
\end{itemize*}
\end{lemm}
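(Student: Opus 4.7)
The plan is to verify the three clauses in turn, with the exponential requiring the most care.

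\emph{Terminal object.} For $\rs 1 = (\{0\}, \bigact)$ with $\rho\bigact 0 = 0$, the constant map $!_{\rs X}: \rs X \to \rs 1$ sending every $x$ to $0$ is trivially equivariant, and it is the unique such map; so $\rs 1$ is terminal. Note $\supp(0) = \varnothing$, so $\rs 1$ lies in \theory{PmsRen}.

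\emph{Products.} Define $\rs X \times \rs Y$ with carrier $|\rs X| \times |\rs Y|$ and pointwise action $\rho\bigact(x,y) = (\rho\bigact x, \rho\bigact y)$. A routine check shows this is a monoid action and that $\supp(x,y) = \supp(x) \cup \supp(y)$, so $\rs X \times \rs Y \in \theory{PmsRen}$. The projections are equivariant, and given equivariant $F : \rs Z \to \rs X$, $G : \rs Z \to \rs Y$, the pairing $\langle F, G \rangle(z) = (F(z), G(z))$ is the unique equivariant mediating map.

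\emph{Exponentials.} I would construct the currying/uncurrying bijection
$\theory{PmsRen}(\rs Z \times \rs X, \rs Y) \cong \theory{PmsRen}(\rs Z, \rs X \Rightarrow \rs Y)$
in four steps. First, given equivariant $F : \rs Z \times \rs X \to \rs Y$, define $\Lambda(F)(z) : |\rs X| \to |\rs Y|$ by $\Lambda(F)(z)(x) = F(z,x)$ and verify that $\Lambda(F)(z)$ is supported in the sense of Definition~\ref{defn.exp.ren} with $S_{\Lambda(F)(z)} = \supp(z)$: if $\dom(\rho) \cap \supp(z) = \varnothing$ then $\rho\bigact z = z$, so by equivariance of $F$, $\rho\bigact F(z,x) = F(\rho\bigact z, \rho\bigact x) = F(z, \rho\bigact x) = \Lambda(F)(z)(\rho\bigact x)$. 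Hence $\Lambda(F)(z) \in |\rs X \Rightarrow \rs Y|$. Second, check that $\Lambda(F) : \rs Z \to \rs X \Rightarrow \rs Y$ is equivariant; expanding the renaming action via Definition~\ref{defn.exp.ren.action} using a freshening pair $\rho_1,\rho_2$ for $\nontriv(\rho)$ with respect to $\supp(x) \cup \supp(z)$, one computes
\[
(\rho\bigact \Lambda(F)(z))(x) = (\rho_2\circ\rho)\bigact F(z, \rho_1\bigact x) = F((\rho_2\circ\rho)\bigact z, (\rho_2\circ\rho\circ\rho_1)\bigact x) = F(\rho\bigact z, x) = \Lambda(F)(\rho\bigact z)(x),
\]
where the penultimate step uses that $\rho_2\circ\rho = \rho$ on $\supp(z)$ and $\rho_2\circ\rho\circ\rho_1 = \id$ on $\supp(x)$ by the defining properties of a freshening pair.

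Third, define evaluation $\tf{ev} : (\rs X \Rightarrow \rs Y) \times \rs X \to \rs Y$ by $\tf{ev}(f,x) = f(x)$; this is equivariant by Lemma~\ref{lemm.renaming.distribute}, which gives exactly $\rho\bigact f(x) = (\rho\bigact f)(\rho\bigact x)$. Uncurrying any equivariant $G : \rs Z \to \rs X \Rightarrow \rs Y$ as $\tf{ev} \circ (G \times \id_{\rs X})$ provides the inverse to $\Lambda$; the two composites are mutually inverse by $\beta$-style unfolding. Naturality in $\rs Z$ is immediate from the pointwise definitions.

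\emph{Main obstacle.} The step that requires genuine work is the equivariance of $\Lambda(F)$ outlined above: unlike in \theory{PmsPrm}, the action on the exponential is not conjugation by an invertible element but is defined via an auxiliary freshening pair, so one must unfold Definition~\ref{defn.exp.ren.action} and exploit the three clauses of Definition~\ref{defn.freshening.pair} together with the support bounds from Lemmas~\ref{lemm.supp.subsets} and~\ref{lemm.supp.supported.f.bound} to collapse the expression to $F(\rho\bigact z, x)$. Everything else is straightforward monoid-action and support bookkeeping.
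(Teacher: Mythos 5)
Your proposal is correct and follows the same route as the paper: the paper's proof consists of exactly your first step (checking that the curried map $z\mapsto\lam{x}F(z,x)$ is supported by $\supp(z)$ and hence lands in $\rs X\Rightarrow\rs Y$) and dismisses the rest as currying/uncurrying ``as usual''. Your additional verifications---equivariance of $\Lambda(F)$ via a freshening pair, and equivariance of evaluation via Lemma~\ref{lemm.renaming.distribute}---are precisely the details the paper elides, and they check out.
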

\begin{proof}
The bijection between $(\rs X\times\rs Y)\to\rs Z$ and $\rs X\to (\rs X\Rightarrow\rs Y)$ is given by currying and uncurrying as usual.
Thus $G:(\rs X\times\rs Y)\to\rs Z$ maps to $x\mapsto \lam{y}G(x,y)$.
It is not hard to verify that if $\dom(\rho)\cap \supp(x)=\varnothing$ then
$$
(\rho\bigact \lam{y}F(x,y))(y) = \rho\bigact F(x,y) = F(x,\rho\bigact y) = (\lam{y}F(x,y))(\rho\bigact y) .
$$
Thus $\lam{y}G(x,y)$ is supported by $\supp(x)$ and is in $\rs Y\Rightarrow\rs Z$.
\end{proof}

We take a moment to build a particular exponential which will be useful later.
\begin{defn}
\label{defn.lambda.a}
Suppose $x\in|\rs X|$ and $a\in\mathbb A_\nu$.
Write $\lam{a}x\in|\mathbb A_\nu|\to|\rs X|$ for the function mapping $a$ to $x$ and $b$ to $[a\ssm b]\bigact x$.
\end{defn}

\begin{lemm}
$\lam{a}x\in |{\mathbb A_\nu\Rightarrow\rs X}|$.
\end{lemm}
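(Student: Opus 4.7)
By Definition~\ref{defn.frs.exp} it suffices to show that $\lam{a}x$ is supported in the sense of Definition~\ref{defn.exp.ren}. The plan is to exhibit an explicit permission set $S_f$ witnessing this, and then reduce the required equality to an agreement-of-renamings check on $\supp(x)$.

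\medskip

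\noindent\emph{Step 1: choice of $S_f$.} Since $\rs X\in\theory{PmsRen}$ the element $x$ has a supporting permission set $T$ (Definition~\ref{defn.finsupp}). Take $S_f = T\cup\{a\}$, which is again a permission set (adjoining one atom to $(\atomsdown\cup A)\setminus B$ leaves it of the same form). In particular $S_f\supseteq\supp(x)\cup\{a\}$.

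\medskip

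\noindent\emph{Step 2: uniform formula for $(\lam{a}x)(b)$.} Observe that by Definition~\ref{defn.renaming}, when $a=b$ the atomic renaming $[a\ssm a]$ is the identity, so $[a\ssm a]\bigact x = x$. Hence in both cases of Definition~\ref{defn.lambda.a} we have the uniform presentation
\[
(\lam{a}x)(b) \;=\; [a\ssm b]\bigact x \qquad\text{for every } b\in\mathbb{A}_\nu.
\]

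\medskip

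\noindent\emph{Step 3: reduction to a renaming agreement.} Fix $\rho\in\mathbb{R}$ with $\dom(\rho)\cap S_f = \varnothing$, so $\rho(a)=a$ and $\rho$ fixes every atom of $\supp(x)$. For any $b\in\mathbb{A}_\nu$ we must show
\[
\rho\bigact\bigl((\lam{a}x)(b)\bigr) \;=\; (\lam{a}x)(\rho(b)),
\]
which by Step~2 and the monoid action rewrites as
\[
(\rho\circ[a\ssm b])\bigact x \;=\; [a\ssm \rho(b)]\bigact x.
\]
By Definition~\ref{defn.finsupp} it suffices to verify that the two renamings $\rho\circ[a\ssm b]$ and $[a\ssm\rho(b)]$ agree on every $c\in\supp(x)$.

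\medskip

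\noindent\emph{Step 4: agreement on $\supp(x)$.} Do a short case split on $c\in\supp(x)$. For $c=a$, both renamings send $a$ to $\rho(b)$ directly. For $c\neq a$, note that if $c=b$ then $b\in\supp(x)\subseteq S_f$, so $b\notin\dom(\rho)$ and $\rho(b)=b$, making both renamings fix $c$; and if $c\neq b$ then $[a\ssm b](c)=c$ and $\rho(c)=c$ (as $c\in\supp(x)$), while $[a\ssm\rho(b)](c)=c$ in any case (whether or not $c=\rho(b)$, since the atomic renaming fixes its target). This exhausts the cases.

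\medskip

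\noindent The main obstacle is conceptual rather than technical: one has to be careful that a single permission set $S_f$ controls the behaviour of $\lam{a}x$ uniformly in the argument $b$—in particular, when $b\in\dom(\rho)$ so that $b\notin\supp(x)$, and when $\rho(b)$ lands inside $\supp(x)$. The choice $S_f = T\cup\{a\}$ is exactly what makes both the ``$a$-slot'' and the support of $x$ immune to $\rho$, and the uniform formula $(\lam{a}x)(b)=[a\ssm b]\bigact x$ collapses the apparent case split in Definition~\ref{defn.lambda.a} into a single renaming-agreement calculation.
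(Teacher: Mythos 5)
Your proof is correct and follows essentially the same route as the paper's: reduce supportedness of $\lam{a}x$ to the agreement of the renamings $\rho\circ[a\ssm b]$ and $[a\ssm\rho(b)]$ on a supporting set of $x$ (the paper phrases this via the factorisation $\rho\circ[a\ssm z]=[a\ssm\rho(z)]\circ(\rho\text{\textnormal{-}}a)$, which is the same calculation). The only difference is that you take $S_f=T\cup\{a\}$, whereas the paper observes the sharper fact that $\lam{a}x$ is supported by $\supp(x)$ (indeed by $\supp(x)\setminus\{a\}$), without needing $\rho(a)=a$; your larger choice is still a legitimate witness for Definition~\ref{defn.exp.ren}.
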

\begin{proof} 
It suffices to show that $\lam{a}x$ is supported by $\supp(x)$ (in fact, it is also supported by $\supp(x){\setminus}\{a\}$).
Suppose $\dom(\rho)\cap\supp(x)=\varnothing$ and $z\in\mathbb A_\nu$ ($z$ is not necessarily distinct from $a$).
Write $\rho\text{-}a$ for the renaming such that $(\rho\text{-}a)(b)=\rho(b)$ and $(\rho\text{-}a)(a)=a$.
We sketch the relevant reasoning:
$$
\rho\bigact((\lam{a}x)z) = (\rho\circ [a\ssm z])\bigact x
=([a\ssm\rho(z)]\circ(\rho\text{-}a))\bigact x=[a\ssm\rho(z)]\bigact x=(\lam{a}x)(\rho\bigact z) 
$$
\end{proof}

\subsection{Atoms, products, atoms-abstraction, and functions out of atoms}
\label{subsect.atoms.ren.example}

\subsubsection{Atoms}

\begin{defn}
\label{defn.bool}
Write $\mathbb B$ for the nominal set and the permutation/renaming set with underlying set $\{0,1\}$ and the \deffont{trivial} permutation/renaming action such that $\pi\act x=x$/$\rho\bigact x=x$ always.

We will be lax and write $x\in\mathbb B$ for $x\in|\mathbb B|$.

Write $\mathbb A_\nu$ for the permutation set and the renaming set with underlying set $\mathbb A_\nu$ and the natural permutation/renaming action such that $\pi\act x=\pi(x)$/$\rho\bigact x=\rho(x)$ always.

We will be lax and write $x\in\mathbb A_\nu$ for $x\in|\mathbb A_\nu|$.
\end{defn}

\subsubsection{Atoms-abstraction in permutation and renaming sets}

\begin{defn}
\label{defn.abstraction.sets}
Suppose $\ns X$ is a supported permutation set.
Suppose $x\in |\ns X|$ and $a\in\mathbb A_\nu$.
Define \deffont{atoms-abstraction} $[a]x$ and $[\mathbb A_\nu]\ns X$ by:
\begin{frameqn}
\begin{array}{r@{\ }l}
[a]x =& \{(a,x)\}\cup \{(b,(b\ a)\act x)\mid b\in\mathbb A_\nu{\setminus} \f{supp}(x)\}
\\
|[\mathbb A_\nu]\ns X| =& \{[a]x\mid a\in\mathbb A_\nu,\ x\in|\ns X|\} 
\\
\pi\act [a]x =& [\pi(a)]\pi\act x
\end{array}
\end{frameqn}
\end{defn}

\begin{lemm}
\label{lemm.supp.abstraction}
Suppose $\ns X$ is a supported permutation set.
\begin{enumerate*}
\item
$[\mathbb A_\nu]\ns X$ is a supported permutation set.
\item
$[a]x{=}[a]x'$ if and only if $x{=}x'$, for $a{\in}\mathbb A_\nu$ and $x{\in} |\ns X|$.
\item
$[a]x{=}[a']x'$ if and only if $a'{\not\in}\f{supp}(x)$ and $(a'\, a)\act x{=}x'$, for $a,a'{\in}\mathbb A_\nu$ and $x,x'{\in}|\ns X|$.
\end{enumerate*}
\end{lemm}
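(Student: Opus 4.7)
My plan is to establish Parts 2 and 3 first, which are set-theoretic in nature, and then bootstrap Part 1 from them.

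Part 2 is immediate from the definition: the pair $(a,x)\in[a]x=[a]x'$ must lie in $[a]x'$, and any element of $[a]x'$ with first coordinate $a$ has second coordinate either $x'$ directly or $(a\ a)\act x'=x'$ via the fresh clause, so $x=x'$.

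For the forward direction of Part~3, I invoke the permutative convention that $a\neq a'$. The canonical pair $(a',x')\in[a']x'$ must lie in $[a]x$, and since $a'\neq a$ this membership can only arise via the fresh clause, yielding $a'\notin\supp(x)$ and $x'=(a'\ a)\act x$ simultaneously. For the backward direction I prove set equality by examining a generic element $(b,(b\ a')\act x')\in[a']x'$ with $b\notin\supp(x')=(a'\ a)\act\supp(x)$ (using Lemma~\ref{lemm.supp.subsets}) and splitting on the relationship between $b$ and $\{a,a'\}$. The subcase $b=a$ reduces $(a\ a')(a'\ a)$ to the identity and yields $(a,x)$; the subcase $b=a'$ (which forces $a\notin\supp(x)$ through the description of $\supp(x')$) is just the canonical element $(a',x')=(a',(a'\ a)\act x)\in[a]x$ already handled; and in the subcase $b\notin\{a,a'\}$ we additionally get $b\notin\supp(x)$, so the permutations $(b\ a')(a'\ a)$ and $(b\ a)$ agree on $\supp(x)$ (they differ only at $b$ and $a'$, neither of which lies in $\supp(x)$), giving the element $(b,(b\ a)\act x)\in[a]x$. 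The reverse containment is symmetric.

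For Part~1, I first verify well-definedness of the candidate action $\pi\act[a]x:=[\pi(a)]\pi\act x$: the case $a=a'$ is trivial by Part~2; if $a\neq a'$ and $[a]x=[a']x'$, then Part~3 supplies $a'\notin\supp(x)$ and $x'=(a'\ a)\act x$, which I transport through $\pi$ using $\pi\circ(a'\ a)=(\pi(a')\ \pi(a))\circ\pi$ together with $\supp(\pi\act x)=\pi\act\supp(x)$ (Lemma~\ref{lemm.supp.subsets}) to recover the Part~3 hypotheses for $[\pi(a)]\pi\act x=[\pi(a')]\pi\act x'$. The group-action axioms then follow by routine calculation. Finally, the permission set $\supp(x)\setminus\{a\}$ supports $[a]x$: given $\pi$ fixing it pointwise, either $\pi(a)=a$ and then $\pi$ fixes all of $\supp(x)$ so $\pi\act x=x$; or $\pi(a)\neq a$, in which case injectivity of $\pi$ together with $\pi$ fixing $\supp(x)\setminus\{a\}$ forces $\pi(a)\notin\supp(x)$, the swapping $(\pi(a)\ a)$ agrees with $\pi$ on $\supp(x)$, and Part~3 delivers $[\pi(a)]\pi\act x=[a]x$. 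I expect the chief obstacle to be the subcase analysis in the backward direction of Part~3, where tracking which atoms lie in $\supp(x)$ versus $\supp(x')$ across the swap $(a'\ a)$ requires careful bookkeeping.
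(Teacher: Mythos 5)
Your proof is correct. The paper itself gives no proof of this lemma --- it is stated without argument, being a standard fact about Gabbay--Pitts atoms-abstraction imported from the nominal-sets literature (cf.\ the citation of \cite{gabbay:newaas-jv} for the neighbouring Lemma~\ref{lemm.properties.of.support}) --- and what you have written is a correct and complete rendering of that standard argument: Part~2 by inspecting first coordinates, Part~3 forward by locating the pair $(a',x')$ in $[a]x$, Part~3 backward by the three-way case split on $b$ relative to $\{a,a'\}$ with the key observation that $(b\ a')\circ(a'\ a)$ and $(b\ a)$ agree on $\supp(x)$, and Part~1 by deriving well-definedness and the supporting set $\supp(x)\setminus\{a\}$ from Parts~2 and~3. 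The only point worth making explicit is that exhibiting $\supp(x)\setminus\{a\}$ as a supporting set suffices for ``supported'' in this paper's sense because supporting sets are upward closed, so any permission set containing $\supp(x)$ also yields a supporting permission set for $[a]x$.
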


We do not need Definition~\ref{defn.abstraction.sets'} for the completeness proof but we include it for the interested reader to compare and contrast with Definition~\ref{defn.abstraction.sets}.
\begin{defn}
\label{defn.abstraction.sets'}
Suppose $\rs X$ is a supported renaming set.
Suppose $x\in |\rs X|$ and $a\in\mathbb A_\nu$.
Define \deffont{atoms-abstraction} $[a]x$ and $[\mathbb A_\nu]\rs X$ by:
\begin{frameqn}
\begin{array}{r@{\ }l}
[a]x =& \{(a,x)\}\cup \{(b,[a\ssm b]\bigact x)\mid b\in\mathbb A_\nu{\setminus} \supp(x)\}
\\
|[\mathbb A_\nu]\rs X| =& \{[a]x\mid a\in\mathbb A_\nu,\ x\in|\rs X|\} 
\\
\rho\bigact [a]x =& [a]\rho\bigact x\quad (a\not\in\f{nontriv}(\rho))
\end{array}
\end{frameqn}
\end{defn}

\begin{rmrk}
\label{rmrk.total-partial}
Definitions~\ref{defn.abstraction.sets} and~\ref{defn.abstraction.sets'} look similar; both define graphs of partial functions defined on $\supp(x)\setminus\{a\}$.
However, the critical difference is that in renaming sets, this partial function can be extended to a total function in $\mathbb A_\nu\to\rs X$.

That is, $[a]x\in[\mathbb A_\nu]\rs X$ determines the total function $\lam{a}x$ from Definition~\ref{defn.lambda.a}, mapping $a$ to $x$ and any other $b$ to $[a\ssm b]\bigact x$.
We return to this in Lemma~\ref{lemm.non-iso} where we show that the natural map from $[\mathbb A_\nu]\rs X$ to ${\mathbb A_\nu\Rightarrow\rs X}$ which we construct in a moment in Definition~\ref{defn.some.functors}, is not surjective.
So Definition~\ref{defn.abstraction.sets'} identifies a `small' and `well-behaved' subset of the function space.
\end{rmrk}

A cognate of Lemma~\ref{lemm.supp.abstraction} also holds for $[\mathbb A_\nu]\rs X$:
\begin{lemm}
\label{lemm.supp.abstraction'}
Suppose $\rs X$ is a supported renaming set.
\begin{enumerate*}
\item
$[\mathbb A_\nu]\rs X$ is a supported renaming set.
\item
$[a]x{=}[a]x'$ if and only if $x{=}x'$, for $a{\in}\mathbb A_\nu$ and $x{\in} |\rs X|$.
\item
$[a]x{=}[a']x'$ if and only if $a'{\not\in}\f{supp}(x)$ and $(a'\, a)\bigact x{=}x'$ (or equivalently $[a\ssm a']\bigact x{=}x'$), for $a,a'{\in}\mathbb A_\nu$ and $x,x'{\in}|\rs X|$.
\end{enumerate*}
\end{lemm}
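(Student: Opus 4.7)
The plan is to establish parts~2 and~3 first by directly unfolding the set-theoretic definition of $[a]x$, and then to bootstrap from part~3 to prove part~1. Part~2 is immediate since $(a,x)\in[a]x$ and $(a,x')\in[a]x'$, so $[a]x=[a]x'$ forces $x=x'$. For part~3 with $a\neq a'$, note that $(a,x)\in[a]x=[a']x'$ and $a\neq a'$ force $a\in\mathbb A_\nu\setminus\supp(x')$ and $x=[a'\ssm a]\bigact x'$; reading the pair $(a',x')\in[a']x'=[a]x$ symmetrically gives $a'\not\in\supp(x)$ and $x'=[a\ssm a']\bigact x$. The bracketed equivalence with the swapping $(a'\,a)\bigact x=x'$ is then a quick application of Definition~\ref{defn.finsupp}: since $a'\not\in\supp(x)$ and the two renamings $(a'\,a)$ and $[a\ssm a']$ agree on $\supp(x)$ (they both send $a\mapsto a'$ and differ only on $a'$, which is outside $\supp(x)$), their actions on $x$ coincide. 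The reverse implication checks that the defining set of $[a]x$ and of $[a']x'$ match pair-by-pair, using Lemma~\ref{lemm.supp.subsets} to conclude $\supp(x')=\supp(x)\setminus\{a\}\cup\{a'\}$ (hence $b\not\in\supp(x)$ iff $b\not\in\supp(x')$ for $b$ outside $\{a,a'\}$) and the monoid law to combine $[a\ssm b]$ with $[a\ssm a']$.

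For part~1, the main task is to promote the partial clause $\rho\bigact [a]x=[a]\rho\bigact x$ (valid only when $a\not\in\nontriv(\rho)$) to a total, well-defined monoid action, then show every $[a]x$ is supported. Given arbitrary $\rho$ and $[a]x$, choose some $a'\in\mathbb A_\nu\setminus(\nontriv(\rho)\cup\supp(x))$; by part~3, $[a]x=[a']x'$ where $x'=[a\ssm a']\bigact x$, and we \emph{define} $\rho\bigact[a]x$ to be $[a']\rho\bigact x'$. Independence of the choice of $a'$ is another application of part~3: for two admissible choices $a',a''$, the elements $\rho\bigact x'$ and $\rho\bigact x''$ are related by $[a'\ssm a'']$ (equivalently, by the swapping $(a'\,a'')$ since $a''$ lies outside $\supp(\rho\bigact x')$ by Lemma~\ref{lemm.supp.subsets} and the freshness choice), and that is exactly what part~3 requires to identify $[a']\rho\bigact x'$ with $[a'']\rho\bigact x''$.

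Once the action is well-defined, the identity law $\id\bigact[a]x=[a]x$ is trivial, and the composition law $(\rho\circ\rho')\bigact[a]x=\rho\bigact(\rho'\bigact[a]x)$ follows by choosing a single representative $[a']x'$ with $a'$ fresh for $\nontriv(\rho)\cup\nontriv(\rho')\cup\nontriv(\rho\circ\rho')\cup\supp(x)$ and applying the monoid law of $\rs X$. Finally, support is handled by showing $\supp([a]x)=\supp(x)\setminus\{a\}$: any $\rho$ fixing $\supp(x)\setminus\{a\}$ pointwise can be post-composed with a freshening of $a$ to produce a representative $[a']x'$ for which $\rho$ fixes $\supp(x')$, whence $\rho\bigact x'=x'$ and $\rho\bigact[a]x=[a']x'=[a]x$; conversely, an atom in $\supp(x)\setminus\{a\}$ cannot be renamed freely without changing $[a]x$, by part~3.

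The main obstacle is the mild circularity in part~1: the renaming action is only given directly on representatives whose bound atom is fresh for $\rho$, so every verification step (well-definedness, composition, support) has to begin by $\alpha$-renaming to such a representative via part~3. The bookkeeping is delicate because renamings are non-invertible, so one must use the equivalence of $(a'\,a)\bigact{-}$ and $[a\ssm a']\bigact{-}$ (on elements where $a'$ is fresh) each time to convert between the swapping-based characterisation of $\alpha$-equivalence and the renaming-based clauses of Definition~\ref{defn.abstraction.sets'}.
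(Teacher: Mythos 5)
The paper states this lemma without proof (it is offered as the ``cognate'' of Lemma~\ref{lemm.supp.abstraction} and left as routine), and your argument is exactly the intended one: read parts~2 and~3 directly off the graph-of-a-partial-function definition in Definition~\ref{defn.abstraction.sets'}, then use part~3 to $\alpha$-rename to a representative whose abstracted atom is fresh for $\nontriv(\rho)$, so that the displayed clause $\rho\bigact[a]x=[a]\rho\bigact x$ extends to a total, well-defined monoid action. The one cosmetic loose end is that for supportedness you check invariance under renamings fixing $\supp(x)\setminus\{a\}$ pointwise rather than the two-renaming formulation of Definition~\ref{defn.finsupp}, but your freshening argument delivers the stronger form with no extra work.
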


It may be useful to organise Definitions~\ref{defn.lambda.a} and~\ref{defn.abstraction.sets'} and Lemma~\ref{lemm.supp.abstraction'} into two functors and a natural transformation: 
\begin{defn}
\label{defn.some.functors}
Write $[\mathbb A_\nu]\text{-}$ for the functor taking $\rs X$ to $[\mathbb A_\nu]\rs X$ and taking $G:\rs X\longrightarrow\rs Y$ to $[\mathbb A_\nu]G:\mathbb A_\nu{\Rightarrow}\rs X\longrightarrow\mathbb A_\nu{\Rightarrow}\rs Y$ which maps $[a]x$ to $[a]G(x)$.

Write $\mathbb A_\nu{\Rightarrow}\text{-}$ for the functor taking $\rs X$ to $\mathbb A_\nu{\Rightarrow}\rs X$ and taking $G:\rs X\longrightarrow\rs Y$ to $\mathbb A_\nu{\Rightarrow}G:\mathbb A_\nu{\Rightarrow}\rs X\longrightarrow\mathbb A_\nu{\Rightarrow}\rs Y$ which maps $f$ to $G\circ f=\lam{n{\in}\mathbb A_\nu}G(f(n))$.

Finally, write $\f{AbsFun}$ for the natural transformation from $[\mathbb A_\nu]\text{-}$ to $\mathbb A_\nu{\Rightarrow}\text{-}$ such that $\f{AbsFun}(\rs X):[\mathbb A_\nu]\rs X\longrightarrow\mathbb A_\nu{\Rightarrow}\rs X$ maps $[a]x$ to $\lam{a}x$ (Definition~\ref{defn.lambda.a}).
\end{defn}

\begin{lemm}
The functors $[\mathbb A_\nu]\text{-}$ and $\mathbb A_\nu{\Rightarrow}\text{-}$ and the natural transformation $\f{AbsFun}$ are indeed well-defined.
\end{lemm}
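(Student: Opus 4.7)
\medskip
\noindent\textbf{Proof proposal.} The plan is to verify three bundles of conditions in turn: well-definedness and equivariance of $[\mathbb A_\nu]G$, supportedness and equivariance of $\mathbb A_\nu{\Rightarrow}G$, and well-definedness plus naturality of $\f{AbsFun}$. Functoriality (preservation of identities and composition) is immediate from the pointwise formulas in Definition~\ref{defn.some.functors} and will not be belaboured.

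First I would treat $[\mathbb A_\nu]G$. By Lemma~\ref{lemm.supp.abstraction'}.1 the target $[\mathbb A_\nu]\rs Y$ is a supported renaming set, so only two things need checking: (i) the definition $[a]x\mapsto[a]G(x)$ does not depend on the representative, and (ii) the resulting function commutes with the renaming action. For (i), suppose $[a]x=[a']x'$ in $[\mathbb A_\nu]\rs X$; by Lemma~\ref{lemm.supp.abstraction'}.3 either $a=a'$ (in which case $x=x'$ and there is nothing to show) or $a'\notin\supp(x)$ and $(a'\ a)\bigact x=x'$. In the latter case Lemma~\ref{lemm.equivar.reduces.supp} gives $\supp(G(x))\subseteq\supp(x)$, so $a'\notin\supp(G(x))$, and equivariance of $G$ yields $(a'\ a)\bigact G(x)=G((a'\ a)\bigact x)=G(x')$; another appeal to Lemma~\ref{lemm.supp.abstraction'}.3 gives $[a]G(x)=[a']G(x')$. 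For (ii), given a renaming $\rho$ I pick a representative $[a]x$ with $a\notin\nontriv(\rho)$ (possible by Lemma~\ref{lemm.supp.abstraction'}.3 because $\nontriv(\rho)\cup\supp(x)$ is finite and $\mathbb A_\nu$ is infinite), and then directly compute
$$
\rho\bigact([\mathbb A_\nu]G)([a]x)=\rho\bigact[a]G(x)=[a]\rho\bigact G(x)=[a]G(\rho\bigact x)=([\mathbb A_\nu]G)(\rho\bigact[a]x),
$$
using equivariance of $G$ in the middle step.

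Next I would handle $\mathbb A_\nu{\Rightarrow}G$. Given $f\in|\mathbb A_\nu{\Rightarrow}\rs X|$ with a support set $S_f$ (Definition~\ref{defn.exp.ren}), I claim $G\circ f$ is supported by the \emph{same} $S_f$: if $\dom(\rho)\cap S_f=\varnothing$ then $\rho\bigact(G(f(n)))=G(\rho\bigact f(n))=G(f(\rho\bigact n))$, using equivariance of $G$ together with supportedness of $f$. Hence $G\circ f\in|\mathbb A_\nu{\Rightarrow}\rs Y|$. Equivariance of the assignment $f\mapsto G\circ f$ follows by unwinding Definition~\ref{defn.exp.ren.action} with a freshening pair $\rho_1,\rho_2$ for $\nontriv(\rho)$ relative to $S_f\cup\supp(n)$ and commuting $G$ past each of $\rho$, $\rho_1$, $\rho_2$ using equivariance of $G$; the resulting equation $(\rho\bigact(G\circ f))(n)=G((\rho\bigact f)(n))=((\mathbb A_\nu{\Rightarrow}G)(\rho\bigact f))(n)$ is then a routine identification.

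Finally I would treat $\f{AbsFun}$. Well-definedness of $\f{AbsFun}(\rs X):[a]x\mapsto\lam{a}x$ on equivalence classes again uses Lemma~\ref{lemm.supp.abstraction'}.3: if $[a]x=[a']x'$ with $a'\notin\supp(x)$, then for any $b$ the values $[a\ssm b]\bigact x$ and $[a'\ssm b]\bigact x'=[a'\ssm b]\bigact(a'\ a)\bigact x$ coincide (with the exceptional cases $b=a$ and $b=a'$ handled separately by direct calculation). Equivariance of $\f{AbsFun}(\rs X)$ reduces, via the same representative-choosing trick as for $[\mathbb A_\nu]G$, to the identity $(\rho\bigact\lam{a}x)(b)=\rho\bigact((\lam{a}x)(b))$ restricted to representatives with $a\notin\nontriv(\rho)$, which Lemma~\ref{lemm.renaming.distribute} supplies. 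Naturality is then the pleasant observation that both routes around the square send $[a]x$ to the function $a\mapsto G(x),\ b\mapsto[a\ssm b]\bigact G(x)$: the top-right route produces $G\circ(\lam{a}x)$ and the bottom-left route produces $\lam{a}G(x)$, and these agree on every argument once we again use equivariance of $G$ to pull $G$ past $[a\ssm b]$.

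The main obstacle I anticipate is bookkeeping around the partiality of the renaming action on $[\mathbb A_\nu]\rs X$ (defined only when $a\notin\nontriv(\rho)$): each equivariance/naturality check must begin by $\alpha$-converting $[a]x$ to a representative whose bound atom avoids the relevant renaming, and verifying that this choice is harmless. Everything else is a direct unwinding of the definitions.
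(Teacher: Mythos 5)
Your proposal is correct and takes essentially the same route as the paper: the paper declares the lemma to follow ``by routine calculations using equivariance of $G$'' and writes out exactly two of the checks---well-definedness of $[\mathbb A_\nu]G$ on representatives (via Lemma~\ref{lemm.supp.abstraction'} and part~2 of Lemma~\ref{lemm.equivar.reduces.supp}) and naturality of $\f{AbsFun}$ (by computing $\lam{n{\in}\mathbb A_\nu}G((\lam{a}x)n)=\lam{n{\in}\mathbb A_\nu}[a\ssm n]\bigact G(x)=\lam{a}G(x)$)---and both appear in your write-up with the same arguments. The additional verifications you carry out explicitly (supportedness of $G\circ f$, equivariance of the maps, well-definedness of $\f{AbsFun}(\rs X)$) are the ones the paper leaves as routine, and your handling of them is sound.
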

\begin{proof}
By routine calculations using the fact that $G:\rs X\longrightarrow\rs Y$ is equivariant (Definition~\ref{defn.equivariant}).
We consider two of the relevant calculations:
\begin{itemize*}
\item
\emph{If $G:\rs X\longrightarrow\rs Y$ then $[\mathbb A_\nu]G$ is well-defined.}\quad
Suppose $[a]x=[b]y\in|[\mathbb A_\nu]\rs X|$.
By Lemma~\ref{lemm.supp.abstraction'} $b\not\in\supp(x)$ and $(b\ a)\act x=y$.
By equivariance $G((b\ a)\bigact x)=(b\ a)\bigact G(x)$.
By part~2 of Lemma~\ref{lemm.equivar.reduces.supp} $b\not\in\supp(G(x))$.
By Lemma~\ref{lemm.supp.abstraction'} $[a]G(x)=[b]G(y)$.
\item
\emph{$\f{AbsFun}$ is a natural transformation.}\quad
We need that $\lam{a}(G(x))=\lam{n{\in}\mathbb A_\nu}G((\lam{a}x)n)$.
Unpacking Definition~\ref{defn.lambda.a} $(\lam{a}x)n=[a\ssm n]\bigact x$ (here $n{\in}\mathbb A_\nu$ is not necessarily distinct from $a$).
By equivariance of $G$,\ $\lam{n{\in}\mathbb A_\nu}G((\lam{a}x)n)=\lam{n{\in}\mathbb A_\nu}[a\ssm n]\bigact G(x)$.
This is exactly equal to $\lam{a}G(x)$ as required.
\end{itemize*}
\end{proof}
In part~4 of Lemma~\ref{lemm.non-iso} we prove that $\f{AbsFun}$ does not have an inverse.
This merely points out the obvious: there are more functions from $\mathbb A_\nu$ to $\rs X$ than can be represented in the form $\lam{a}x=\lam{n{\in}\mathbb A_\nu}[a\ssm n]\bigact x$.

\subsubsection{Product}

\begin{defn}
\label{defn.times}
If $\ns X_i$ and $\rs X_i$ are supported permutation and renaming sets respectively for $1\leq i\leq n$ then define $\ns X_1\times\ldots\times \ns X_n$ and $\rs X_1\times\ldots\times\rs X_n$ by:
$$
\begin{array}{r@{\ }l}
|\ns X_1\times\ldots\times\ns X_n|=&|\ns X_1|\times\ldots\times|\ns X_n|
\\
\pi\act (x_1,\ldots,x_n)=&(\pi\act x_1,\ldots,\pi\act x_n)
\end{array}
\quad\quad
\begin{array}{r@{\ }l}
|\rs X_1\times\ldots\times\rs X_n|=&|\rs X_1|\times\ldots\times|\rs X_n|
\\
\rho\bigact (x_1,\ldots,x_n)=&(\rho\bigact x_1,\ldots,\rho\bigact x_n)
\end{array}
$$
\end{defn}

\begin{lemm}
\label{lemm.properties.of.support}
\begin{itemize*}
\item
$\f{supp}(a)=\{a\}$.
\item
$\f{supp}([a]x)=\f{supp}(x)\setminus\{a\}$.
\item
$\f{supp}((x_1,\ldots,x_n))=\bigcup\{\f{supp}(x_i)\mid 1\leq i\leq n\}$.
\end{itemize*}
\end{lemm}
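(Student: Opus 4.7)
The plan is to prove each clause by two matching inclusions: first exhibit a small set that supports the element (using the definitions of the action in Definitions~\ref{defn.bool}, \ref{defn.abstraction.sets}/\ref{defn.abstraction.sets'}, and \ref{defn.times}), and then show that no strictly smaller subset can support it, by exhibiting two permutations (or renamings) that agree on the candidate support but disagree elsewhere and yield different results. This will establish that the candidate set is the intersection of all supporting sets, which is $\supp$.

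For clause~1, that $\{a\}$ supports $a$ is immediate from $\pi\act a = \pi(a)$: if $\pi$ and $\pi'$ agree on $\{a\}$ then $\pi(a)=\pi'(a)$. Minimality is equally direct: $\varnothing$ cannot support $a$ because, choosing any $b$ with $b\neq a$, the swapping $(b\ a)$ and $\id$ agree on $\varnothing$ but $(b\ a)\act a = b \neq a = \id\act a$.

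For clause~2, I would use Lemma~\ref{lemm.supp.abstraction} (respectively~\ref{lemm.supp.abstraction'}). To show $\supp(x)\setminus\{a\}$ supports $[a]x$: suppose $\pi,\pi'$ agree on $\supp(x)\setminus\{a\}$; pick a fresh $c$ not in $\supp(x)\cup\pi\act\supp(x)\cup\pi'\act\supp(x)\cup\{a\}$, and use part~3 of Lemma~\ref{lemm.supp.abstraction} to rewrite $[a]x$ as $[c](c\ a)\act x$; then $\pi$ and $\pi'$ applied to this expression agree because $(c\ a)\circ\pi^{-1}$ and $(c\ a)\circ\pi'^{-1}$ (suitably adjusted at $a$ and $c$) act identically on $\supp(x)$. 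For minimality, take $b\in\supp(x)\setminus\{a\}$ and a fresh $c$; compare $(c\ b)$ with $\id$: they agree off $\{b,c\}$, but $(c\ b)\act[a]x = [a](c\ b)\act x \neq [a]x$ by part~2 of Lemma~\ref{lemm.supp.abstraction} together with the fact that $b\in\supp(x)$ so $(c\ b)\act x\neq x$.

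For clause~3, $\bigcup_i \supp(x_i)$ supports the tuple because if $\pi,\pi'$ agree on this union then they agree on each $\supp(x_i)$, so $\pi\act x_i=\pi'\act x_i$ componentwise and hence $\pi\act(x_1,\dots,x_n)=\pi'\act(x_1,\dots,x_n)$. Minimality: for each $b\in\supp(x_j)$, a swapping $(c\ b)$ with $c$ fresh for the whole tuple changes the $j$th component and hence the tuple, so $b$ must lie in $\supp((x_1,\dots,x_n))$.

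The only mildly subtle step is the minimality half of clause~2, where the permutation that witnesses non-support must be chosen to avoid disturbing the binder $[a]$; this is handled cleanly by picking $c$ fresh for both $x$ and $a$ so that the swapping $(c\ b)$ commutes through the abstraction via $\pi\act[a]x=[\pi(a)]\pi\act x$ with $\pi(a)=a$. The renaming-set variant proceeds identically, using Lemma~\ref{lemm.supp.abstraction'} and Lemma~\ref{lemm.supp.subsets} in place of their permutation counterparts.
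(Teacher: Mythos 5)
Your proof is correct and is exactly the ``routine argument'' the paper has in mind: the paper's own proof is just a citation to \cite{gabbay:newaas-jv} and \cite[Corollary~2.30 \& Theorem~3.11]{gabbay:fountl}, and what you have written out (a supporting set plus a swapping-based minimality argument, with the abstraction case handled via Lemma~\ref{lemm.supp.abstraction}) is the standard proof those references contain. The only point worth tightening is that minimality must be phrased against arbitrary supporting \emph{permission} sets (which are coinfinite, so a suitable fresh $c$ outside any given one always exists), not just against $\varnothing$; your swapping argument does this with no change.
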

\begin{proof}
By routine arguments like those in \cite{gabbay:newaas-jv} or \cite[Corollary~2.30 \& Theorem~3.11]{gabbay:fountl}. 
\end{proof}

\subsection{The free extension of a permutation set to a renaming set}
\label{subsect.free.ext}

We now show how to construct a renaming set $\Ren{\ns X}$ out of a nominal set.
At the start of Section~\ref{sect.semantics} we noted that an atoms-abstraction $[a]x\in|[\mathbb A_\nu]\ns X|$ can be viewed as a partial function and that in renaming sets atoms-abstraction also exists but can be completed to a total function (Remark~\ref{rmrk.total-partial}). 
We can view our free construction as a canonical way to move from a world in which atoms-abstraction is partial, to a world in which it is total.

\begin{nttn}
If $\sim$ is an equivalence relation, $[\text{-}]_\sim$ will denote the equivalence class of $\text{-}$ in $\sim$. 
\end{nttn}

\begin{defn}
\label{defn.free.ren}
We define a functor $\Ren{\text{-}}$ from \theory{PmsPrm} to \theory{PmsRen} as follows:
\begin{itemize*}
\item
\emph{Action of $\Ren{\text{-}}$ on objects.}

$\ns X$ maps to $\Ren{\ns X}=((\mathbb R\times|\ns X|)/{\sim},\bigact)$ where $\rho\bigact [(\rho',x)]_\sim = [(\rho\circ\rho',x)]_\sim$ and $\sim$ is the least equivalence relation such that:
\begin{frametxt}
\begin{enumerate*}
\item
If $\rho(a)=\rho'(a)$ for every $a\in\supp(x)$ then $(\rho,x)\sim (\rho',x)$.
\item
$(\rho\circ\pi,x)\sim(\rho,\pi\act x)$.
\end{enumerate*} 
\end{frametxt}
For convenience we will write $\rho\bigact x$ as shorthand for $[(\rho,x)]_\sim$.\footnote{$\rho\bigact x$ is not `$\rho$ acting on $x$' and cannot be, since $x\in|\ns X|$ only has a permutation action. 
However this notation gives us cleaner-looking maths and the nice equality $\rho\bigact(\id\bigact x)=\rho\bigact x$ (in long form: $\rho\bigact[(\id,x)]_\sim=[(\rho,x)]_\sim$). 
The notation follows the nominal terms tradition of writing $\pi\act X$ both for `$\pi$ acting on the moderated unknown $\id\act X$', and for `the moderated unknown $\pi\act X$' \cite{gabbay:nomu-jv}.}
\item
\emph{Action of $\Ren{\text{-}}$ on arrows.}

An arrow $F:\ns X\longrightarrow\ns Y$ maps to $\Ren F:\Ren{\ns X}\longrightarrow\Ren{\ns Y}$ given by:
$$
\Ren F(\rho\bigact x)=\rho\bigact F(x)
$$
\end{itemize*}
\end{defn}

\begin{lemm}
$\Ren F$ is well-defined; that is, that if $(\rho,x)\sim(\rho',x')$ then $\Ren F((\rho,x))\sim \Ren F((\rho',x'))$.
\end{lemm}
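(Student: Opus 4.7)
The plan is to verify that the assignment $\Ren F(\rho,x)=(\rho,F(x))$ respects the equivalence relation $\sim$, which amounts to showing: whenever $(\rho,x)\sim(\rho',x')$ in $\mathbb R\times|\ns X|$, one has $(\rho,F(x))\sim(\rho',F(x'))$ in $\mathbb R\times|\ns Y|$. The cleanest way is to define the auxiliary relation $R$ on $\mathbb R\times|\ns X|$ by $(\rho,x)\,R\,(\rho',x')$ iff $(\rho,F(x))\sim(\rho',F(x'))$. Since $\sim$ on $\mathbb R\times|\ns Y|$ is an equivalence relation, so is $R$. It therefore suffices to show that $R$ contains the two generating clauses of $\sim$ from Definition~\ref{defn.free.ren}, since $\sim$ is the least such equivalence relation on $\mathbb R\times|\ns X|$.

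For clause~1, suppose $\rho(a)=\rho'(a)$ for every $a\in\supp(x)$. Because $F\colon\ns X\longrightarrow\ns Y$ is an arrow of \theory{PmsPrm}, it is equivariant, and hence by part~1 of Lemma~\ref{lemm.equivar.reduces.supp} we have $\supp(F(x))\subseteq\supp(x)$. Therefore $\rho$ and $\rho'$ also agree on $\supp(F(x))$, and clause~1 applied to $F(x)$ in $\ns Y$ delivers $(\rho,F(x))\sim(\rho',F(x))$. So clause~1 for $\sim$ is contained in $R$.

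For clause~2, we must show $(\rho\circ\pi,F(x))\sim(\rho,F(\pi\act x))$. By equivariance of $F$ in the sense of Definition~\ref{defn.equivariant}, $F(\pi\act x)=\pi\act F(x)$, so the required relation is $(\rho\circ\pi,F(x))\sim(\rho,\pi\act F(x))$, which is precisely clause~2 applied to $F(x)\in|\ns Y|$.

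The main obstacle is not really an obstacle, but rather the observation that equivariance of $F$ plays a double role: it shrinks supports (giving clause~1) and it commutes with the permutation action (giving clause~2). Once these two cases are discharged, $\sim\,\subseteq R$ follows from minimality of $\sim$, and well-definedness of $\Ren F$ is immediate.
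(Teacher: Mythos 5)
Your proof is correct and is essentially the paper's own argument: the paper phrases it as an induction on the derivation of $(\rho,x)\sim(\rho',x')$ and checks the same two base cases using Lemma~\ref{lemm.equivar.reduces.supp} and equivariance of $F$, which is just another way of packaging your minimality-of-$\sim$ argument via the auxiliary relation $R$. (Incidentally, your citation of part~1 of Lemma~\ref{lemm.equivar.reduces.supp} is the right one here, since $F$ is an arrow of \theory{PmsPrm}.)
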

\begin{proof}
Induction on the derivation that $(\rho,x){\sim}(\rho',x')$.
We consider the two base cases:
\begin{itemize*}
\item
\emph{The case $\rho(a)=\rho'(a)$ for every $a\in\supp(x)$.}\quad
By part~2 of Lemma~\ref{lemm.equivar.reduces.supp} also $\rho(a)=\rho'(a)$ for every $a\in\supp(F(x))$.
\item
\emph{The case $(\rho\circ\pi,x)\sim (\rho,\pi\act x)$.}\quad
Then also $(\rho\circ\pi,F(x))\sim (\rho,\pi\act F(x))$ and by equivariance $\pi\act F(x)=F(\pi\act x)$.
\qedhere\end{itemize*}
\end{proof}

\begin{rmrk}
\label{rmrk.alpha}
Rules~2 and~1 of Definition~\ref{defn.free.ren} can be viewed as $\alpha$-conversion and garbage-collection respectively.
Thus in $\rho\bigact x\in\Ren{\ns X}$ we may without loss of generality (using rule~2) assume that $\dom(\rho)\cap S=\varnothing$ for any permission set $S$, and we may also assume (using rule~1) that $\dom(\rho)\subseteq\supp(x)$.
\end{rmrk}

\begin{lemm}
\label{lemm.BA.ren.isos}
\begin{enumerate*}
\item
$\Ren{\mathbb B}$ (for $\mathbb B$ considered a set with the trivial permutation action) is isomorphic to $\mathbb B$ (for $\mathbb B$ considered a set with a trivial renaming action).
\item
$\Ren{\mathbb A_\nu}$ (for $\mathbb A_\nu$ with its natural permutation action) is isomorphic to $\mathbb A_\nu$ (for $\mathbb A_\nu$ with its natural renaming action).
\end{enumerate*}
\end{lemm}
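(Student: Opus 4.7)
For both parts the plan is to exhibit an explicit map, check it is well-defined on $\sim$-equivalence classes, exhibit a two-sided inverse, and then check equivariance with respect to the renaming action. The mild subtlety is that the equivalence relation $\sim$ in Definition~\ref{defn.free.ren} is defined as a \emph{least} equivalence relation closed under two rules, so well-definedness reduces to checking each of those two rules, and the round-trip in the other direction needs us to exhibit an explicit $\sim$-derivation.

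For part~1, since the permutation action on $\mathbb B$ is trivial, Lemma~\ref{lemm.properties.of.support}-style reasoning (or the definitions directly) gives $\supp(b)=\varnothing$ for every $b\in\mathbb B$. Define $F:\Ren{\mathbb B}\to\mathbb B$ by $\rho\bigact b\mapsto b$. Well-definedness is immediate: rule~1 of Definition~\ref{defn.free.ren} is vacuously satisfied (renamings always agree on $\varnothing$) and rule~2 sends $(\rho\circ\pi,b)$ to $b$ and $(\rho,\pi\act b)=(\rho,b)$ to $b$. The map $b\mapsto\id\bigact b$ is a two-sided inverse (again using rule~1 applied to $\varnothing$ for the nontrivial direction). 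Equivariance is trivial since both renaming actions on $\mathbb B$ are trivial.

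For part~2, define $F:\Ren{\mathbb A_\nu}\to\mathbb A_\nu$ by $\rho\bigact a\mapsto \rho(a)$. For well-definedness, rule~1 with $\supp(a)=\{a\}$ forces $\rho(a)=\rho'(a)$, and rule~2 reads $(\rho\circ\pi)(a)=\rho(\pi(a))=\rho(\pi\act a)$. The candidate inverse is $a\mapsto \id\bigact a$. One composition is trivially the identity. For the other, starting from $\rho\bigact a$ we land at $\id\bigact\rho(a)$, and we must show $[(\rho,a)]_\sim=[(\id,\rho(a))]_\sim$. Let $\pi$ be the swapping $(a\ \rho(a))$ when $a\neq\rho(a)$, and $\id$ otherwise; then $\pi(a)=\rho(a)$, so by rule~1 we have $(\rho,a)\sim(\pi,a)$, and by rule~2 $(\pi,a)=(\id\circ\pi,a)\sim(\id,\pi\act a)=(\id,\rho(a))$; transitivity closes the argument.

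Finally, equivariance: $F(\rho'\bigact(\rho\bigact a))=F([(\rho'\circ\rho,a)]_\sim)=(\rho'\circ\rho)(a)=\rho'(\rho(a))=\rho'\bigact F(\rho\bigact a)$, since the renaming action on $\mathbb A_\nu$ is application of $\rho'$. The main place where one has to pause is the round-trip in part~2: one must recognise that the freedom to use both rules in sequence is exactly what is needed to absorb $\rho(a)$ back into the representative, and the swap $(a\ \rho(a))$ is the natural bridge. Everything else is a bookkeeping exercise once the maps are written down.
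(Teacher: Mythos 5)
Your proposal is correct and follows essentially the same route as the paper: the paper's entire proof of part~2 is the chain $(\rho,a)\sim((\rho(a)\ a),a)\sim(\id,\rho(a))$ via rule~1 then rule~2, which is exactly the round-trip step you identify as the crux. The remaining material (the explicit maps, well-definedness against each generating rule, equivariance, and part~1) is the routine bookkeeping the paper leaves implicit, and you have carried it out correctly.
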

\begin{proof}
We consider only the second part.
This follows if we note that according to the rules for $\sim$ in Definition~\ref{defn.free.ren},\ 
$$
(\rho,a)\stackrel{\text{\it rule~1}}{\sim} ((\rho(a)\ a),a)\stackrel{\text{\it rule~2}}{\sim} (\id,\rho(a)).
$$
\end{proof}

Where we are dealing with more than zero or one atoms at a time, isomorphisms like those in Lemma~\ref{lemm.BA.ren.isos} may fail:
\begin{lemm}
$\Ren{\mathbb A_\nu{\times}\mathbb A_\nu}$ is not isomorphic to $\Ren{\mathbb A_\nu}{\times}\Ren{\mathbb A_\nu}$ (which is isomorphic to $\mathbb A_\nu{\times}\mathbb A_\nu$).
\end{lemm}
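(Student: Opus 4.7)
The plan is to exhibit a partition of $\Ren{\mathbb A_\nu\times\mathbb A_\nu}$ into two non-empty renaming-stable pieces, and then observe that $\mathbb A_\nu\times\mathbb A_\nu$ admits no such partition. Since any renaming-set isomorphism would have to send one partition across to the other, this rules out isomorphism.

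First I would define an invariant $\Psi\colon\Ren{\mathbb A_\nu\times\mathbb A_\nu}\to\{0,1\}$ by $\Psi([(\rho,p)]_\sim)=0$ if $p$ has the form $(x,x)$ for some $x\in\mathbb A_\nu$, and $\Psi([(\rho,p)]_\sim)=1$ otherwise. To see $\Psi$ is well-defined on $\sim$-equivalence classes, inspect the two generating clauses of Definition~\ref{defn.free.ren}: clause~1 leaves the pair $p$ untouched, and clause~2 replaces $p$ by $\pi\act p$, which preserves the property of having equal versus distinct coordinates because $\pi$ is injective. Furthermore $\Psi$ is renaming-invariant, since $\rho'\bigact[(\rho,p)]_\sim=[(\rho'\circ\rho,p)]_\sim$ does not touch $p$. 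Put $D=\Psi^{-1}(0)$ and $E=\Psi^{-1}(1)$; these are complementary, renaming-stable, and both non-empty (witness $[(\id,(a,a))]_\sim\in D$ and $[(\id,(a,b))]_\sim\in E$).

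Second I would show that $\mathbb A_\nu\times\mathbb A_\nu$ admits no complementary pair of non-empty renaming-stable subsets. Let $S$ be non-empty and renaming-stable. If $S$ contains some $(a,b)$ with $a\neq b$, then for any target pair we may choose a renaming $\rho$ taking $a,b$ to its two coordinates (using a non-injective $\rho$ when the target is diagonal), so $S$ is the whole space. Otherwise every element of $S$ is of the form $(x,x)$ and renamings can carry any such element to any other, so $S$ equals the diagonal $\Delta$. The complement of $\Delta$ is not renaming-stable, because $[a\ssm b]\bigact(a,b)=(b,b)\in\Delta$. Thus the only non-empty renaming-stable subsets are $\Delta$ and the whole space, and these do not complement one another.

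Combining the two steps: if there were a renaming-set isomorphism $F\colon\Ren{\mathbb A_\nu\times\mathbb A_\nu}\to\mathbb A_\nu\times\mathbb A_\nu$, then $F(D)$ and $F(E)$ would supply exactly such a forbidden partition, a contradiction. The only delicate point is the well-definedness of $\Psi$, which boils down to the fact that the generating moves of $\sim$ cannot identify a diagonal pair $(x,x)$ with an off-diagonal pair; this is precisely the extra structure the free extension retains and that the pointwise renaming action on $\mathbb A_\nu\times\mathbb A_\nu$ destroys when it collapses $[a\ssm b]\bigact(a,b)$ to $(b,b)$.
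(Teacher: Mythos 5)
Your proof is correct, and it is worth noting that it is actually more complete than the paper's own argument, which consists of the single instruction ``Consider the element $[a\ssm b]\bigact(a,b)$.'' That witness shows directly that the \emph{natural} map collapses $[a\ssm b]\bigact(a,b)$ and $\id\bigact(b,b)$ while the free extension keeps them apart, but ruling out \emph{every} isomorphism in $\theory{PmsRen}$ requires some invariant preserved by arbitrary equivariant bijections; your partition into the two nonempty renaming-stable pieces $D=\Psi^{-1}(0)$ and $E=\Psi^{-1}(1)$ supplies exactly such an invariant, and your verification that the generating clauses of $\sim$ in Definition~\ref{defn.free.ren} preserve the diagonal/off-diagonal status of the second component (clause~1 fixes $p$; clause~2 applies an injective $\pi$) is the key point that makes $\Psi$ well defined on equivalence classes. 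The paper's element $[a\ssm b]\bigact(a,b)$ is still the heart of the matter --- it is precisely the witness that $E$ is closed under the renaming action in $\Ren{\mathbb A_\nu\times\mathbb A_\nu}$ whereas its image in $\mathbb A_\nu\times\mathbb A_\nu$ is not --- so the two arguments share the same underlying observation; what your version buys is an explicit reason why no isomorphism whatsoever can exist, at the cost of the short classification of the stable subsets of $\mathbb A_\nu\times\mathbb A_\nu$ (only $\Delta$ and the whole space), which is itself correct.
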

\begin{proof}
Consider the element $[a\ssm b]\bigact(a,b)$.
\end{proof}

\section{Interpretation of permissive-nominal logic}
\label{sect.permissive-nominal.sets}

\subsection{Interpretation of signatures} 
 
\begin{defn}
\label{defn.interpretation}
Suppose $(\mathcal A,\mathcal B)$ is a sort-signature (Definition~\ref{defn.sort.sig}).

A \deffont{PNL interpretation} $\mathcal I$ for $(\mathcal A,\mathcal B)$ consists of an assignment of a nonempty supported permutation set $\basesort^\iden$ to each $\basesort\in\mathcal B$.

We extend an interpretation $\mathcal I$ to sorts by:
\begin{frameqn}
\begin{array}{r@{\ }l@{\qquad}r@{\ }l}
\model{\basesort}=&\basesort^\iden
&
\model{(\alpha_1,\ldots,\alpha_n)}=&\model{\alpha_1}\times\ldots\times\model{\alpha_n}
\\
\model{\nu}=&\mathbb A_\nu
&
\model{[\nu]\alpha}=&[\mathbb A_\nu]\model{\alpha}
\end{array}
\end{frameqn}
\end{defn}

\begin{defn}
\label{defn.interpret.I}
Suppose $\mathcal S=(\mathcal A,\mathcal B,\mathcal F,\mathcal P,\f{ar},\mathcal X)$ is a signature (Definition~\ref{defn.signature}).

A \deffont{(non-equivariant) PNL interpretation} $\mathcal I$ for $\mathcal S$ consists of the following data:
\begin{itemize*}
\item
An interpretation for the sort-signature $(\mathcal A,\mathcal B)$ (Definition~\ref{defn.interpretation}).
\item
For every $\tf f\in\mathcal F$ with $\f{ar}(\tf f)=(\alpha')\alpha$ an equivariant function $\tf f^\iden$ from $\model{\alpha'}$ to $\model{\alpha}$ (Definition~\ref{defn.equivariant}).
\item
For every $\tf P\in\mathcal P$ with $\f{ar}(\tf P)=\alpha$ a supported function $\tf P^\iden$ from $\model{\alpha}$ to $\{0,1\}$. 
\end{itemize*}
If every $\tf P^\iden$ is equivariant, then call $\mathcal I$ a \deffont{fully equivariant} interpretation.\footnote{A non-equivariant PNL interpretation still interprets term-formers equivariantly. Only the predicates might not be equivariant.  
We do this in order to completely model \rulefont{Ax^\nopi} from Figure~\ref{rSeq}, so that $\tf P(r)\not\liff\tf P(\pi\act r)$; see Theorem~\ref{thrm.reduced.pnl.completeness}.
Of course it is possible to imagine a notion of non-equivariant interpretation where term-formers are interpreted as non-equivariant functions.  This would correspond to something else: namely, to losing the property that $\pi\act\tf f(r)=\tf f(\pi\act r)$.} 
\end{defn}

\subsection{Interpretation of terms}
\label{subsect.interpret.pnl.terms}

\begin{defn}
\label{defn.valuation}
Suppose $\mathcal I$ is an interpretation for $\mathcal S$.
A \deffont{valuation} $\varsigma$ to $\mathcal I$ is a map on unknowns such that for each unknown $X$,\ 
\begin{itemize*}
\item
$\varsigma(X)\in\model{\sort(X)}$,\ and\ 
\item
$\f{supp}(\varsigma(X))\subseteq \pmss(X)$.
\end{itemize*}
$\varsigma$ will range over valuations.
\end{defn}

Having interpreted sorts $\alpha$ in Definition~\ref{defn.interpretation} as permissive-nominal sets $\model{\alpha}$, we now interpret nominal terms $r:\alpha$ as elements of those sets $\denot{\mathcal I}{\varsigma}{r}$:
\begin{defn}
\label{defn.interpret.terms}
Suppose $\mathcal I$ is an interpretation of a signature $\mathcal S$.
Suppose $\varsigma$ is a valuation to $\mathcal I$.

Define an \deffont{interpretation} 
$\denot{\mathcal I}{\varsigma}{r}$ in $\mathcal S$ by:
\begin{frameqn}
\begin{array}{r@{\ }l@{\qquad}r@{\ }l}
\denot{\mathcal I}{\varsigma}{a} =& a
&
\denot{\mathcal I}{\varsigma}{[a]r} =& [a]\denot{\mathcal I}{\varsigma}{r}
\\
\denot{\mathcal I}{\varsigma}{\tf f(r)} =& 
\tf f^\iden(\denot{\mathcal I}{\varsigma}{r})
&
\denot{\mathcal I}{\varsigma}{\pi\act X} =& \pi\act\varsigma(X)
\\
\denot{\mathcal I}{\varsigma}{(r_1,\ldots,r_n)} =& 
(\denot{\mathcal I}{\varsigma}{r_1},\ldots,\denot{\mathcal I}{\varsigma}{r_n})
\end{array}
\end{frameqn}
\end{defn}

\begin{lemm}
\label{lemm.sort.r}
If $r:\alpha$ then $\denot{\mathcal I}{\varsigma}{r}\in\model{\alpha}$.
\end{lemm}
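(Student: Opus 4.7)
The plan is to proceed by structural induction on $r$, following Definitions~\ref{defn.pnl.syntax} and~\ref{defn.interpret.terms}, using Definition~\ref{defn.interpretation} for the interpretation of sorts at each step. Since the statement is purely about membership in a permutation set (and not about support or equivariance), each case reduces to observing that the clause of Definition~\ref{defn.interpret.terms} for the relevant syntactic form lands in exactly the permutation set prescribed by Definition~\ref{defn.interpretation}.

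First I would dispatch the two base cases. For $r\equiv a$ with $a\in\mathbb A_\nu$ we have $\denot{\mathcal I}{\varsigma}{a}=a\in\mathbb A_\nu=\model{\nu}$. For $r\equiv \pi\act X$ with $\sort(X)=\alpha$, Definition~\ref{defn.valuation} gives $\varsigma(X)\in\model{\alpha}$; since $\model{\alpha}$ is a permutation set, $\pi\act\varsigma(X)\in\model{\alpha}$ as required.

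Next I would treat the inductive cases. For tuples $(r_1,\dots,r_n):(\alpha_1,\dots,\alpha_n)$, the inductive hypothesis gives $\denot{\mathcal I}{\varsigma}{r_i}\in\model{\alpha_i}$ for each $i$, so the pointwise construction of Definition~\ref{defn.times} places the tuple into $\model{\alpha_1}\times\dots\times\model{\alpha_n}=\model{(\alpha_1,\dots,\alpha_n)}$. For $\tf f(r):\tau$ with $\tf f:(\alpha)\tau$, the IH yields $\denot{\mathcal I}{\varsigma}{r}\in\model{\alpha}$, and by Definition~\ref{defn.interpret.I} $\tf f^{\mathcal I}:\model{\alpha}\to\model{\tau}$ is a (total) equivariant function, so $\tf f^{\mathcal I}(\denot{\mathcal I}{\varsigma}{r})\in\model{\tau}$. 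For $[a]r:[\nu]\alpha$ with $a\in\mathbb A_\nu$ and $r:\alpha$, the IH gives $\denot{\mathcal I}{\varsigma}{r}\in\model{\alpha}$, and then Definition~\ref{defn.abstraction.sets} places $[a]\denot{\mathcal I}{\varsigma}{r}$ into $[\mathbb A_\nu]\model{\alpha}=\model{[\nu]\alpha}$ (well-definedness of abstraction on a supported permutation set is Lemma~\ref{lemm.supp.abstraction}).

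There is no genuine obstacle here; the proof is a routine traversal. The only thing to be mildly careful about is that the abstraction case relies on $\model{\alpha}$ being \emph{supported} (so that Definition~\ref{defn.abstraction.sets} applies), but this is guaranteed by Definition~\ref{defn.interpretation}, which assigns a supported permutation set to each base sort, together with the fact that the constructions $\times$ and $[\mathbb A_\nu]{-}$ preserve supportedness (Lemma~\ref{lemm.supp.abstraction} and a straightforward calculation on products using Lemma~\ref{lemm.properties.of.support}).
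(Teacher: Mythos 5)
Your proof is correct and takes the same approach as the paper, which simply states ``By a routine induction on $r$'' and leaves the case analysis implicit; your write-up fills in exactly those routine cases. The one point you flag---that the abstraction case needs $\model{\alpha}$ to be a supported permutation set so that Definition~\ref{defn.abstraction.sets} applies---is handled correctly by appeal to Definition~\ref{defn.interpretation} and Lemma~\ref{lemm.supp.abstraction}.
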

\begin{proof}
By a routine induction on $r$.
\end{proof} 

\begin{lemm}
\label{lemm.pi.r.model}
$\pi\act\denot{\mathcal I}{\varsigma}{r} = \denot{\mathcal I}{\varsigma}{\pi\act r}$.
\end{lemm}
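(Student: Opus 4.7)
The plan is to prove this by straightforward structural induction on the PNL term $r$ (quotiented by $\alpha$-equivalence, per Definition~\ref{defn.terms.and.propositions}). Since $r$ is a term, we have five cases to consider, matching the clauses of Definition~\ref{defn.interpret.terms}.

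First I would handle the two base cases. For $r=a$ with $a\in\mathbb A_\nu$, both sides reduce to $\pi(a)$ directly from the definitions $\denot{\mathcal I}{\varsigma}{a}=a$, the natural permutation action on $\mathbb A_\nu$ (Definition~\ref{defn.bool}), and the clause $\pi\act a\equiv\pi(a)$ from Definition~\ref{defn.permutation.action}. For $r=\pi'\act X$, the left-hand side is $\pi\act(\pi'\act\varsigma(X)) = (\pi\circ\pi')\act\varsigma(X)$ by the group action on $\model{\sort(X)}$, while the right-hand side is $\denot{\mathcal I}{\varsigma}{(\pi\circ\pi')\act X} = (\pi\circ\pi')\act\varsigma(X)$ by the syntactic clause $\pi\act(\pi'\act X)\equiv(\pi\circ\pi')\act X$ from Definition~\ref{defn.permutation.action}; so the two sides agree.

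Next I would dispatch the three inductive cases. For the tuple $r=(r_1,\dots,r_n)$, the permutation action on products is pointwise (Definition~\ref{defn.times}), so both sides simultaneously distribute $\pi$ across the coordinates and the inductive hypothesis closes the case. For the abstraction $r=[a]s$, I would compute $\pi\act\denot{\mathcal I}{\varsigma}{[a]s} = \pi\act([a]\denot{\mathcal I}{\varsigma}{s}) = [\pi(a)]\,\pi\act\denot{\mathcal I}{\varsigma}{s}$ using the permutation action on atoms-abstraction from Definition~\ref{defn.abstraction.sets}, and $\denot{\mathcal I}{\varsigma}{\pi\act [a]s} = \denot{\mathcal I}{\varsigma}{[\pi(a)]\pi\act s} = [\pi(a)]\,\denot{\mathcal I}{\varsigma}{\pi\act s}$; the inductive hypothesis on $s$ then identifies the two.

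The only slightly delicate case is $r=\tf f(s)$: here $\pi\act\denot{\mathcal I}{\varsigma}{\tf f(s)} = \pi\act\tf f^\iden(\denot{\mathcal I}{\varsigma}{s})$, and I would invoke the \emph{equivariance of $\tf f^\iden$} guaranteed by Definition~\ref{defn.interpret.I} to commute $\pi$ past $\tf f^\iden$, obtaining $\tf f^\iden(\pi\act\denot{\mathcal I}{\varsigma}{s})$; the inductive hypothesis then gives $\tf f^\iden(\denot{\mathcal I}{\varsigma}{\pi\act s}) = \denot{\mathcal I}{\varsigma}{\tf f(\pi\act s)} = \denot{\mathcal I}{\varsigma}{\pi\act\tf f(s)}$. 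This is the only step in the proof that uses a nontrivial hypothesis about the interpretation, which is why the analogous lemma for propositions (needed later) will \emph{not} extend uniformly: it would fail at proposition-formers precisely when $\tf P^\iden$ is only supported but not equivariant, corresponding to the non-equivariant interpretations of Definition~\ref{defn.interpret.I}. This is not an obstacle for the present lemma, but it signals why the lemma is stated only for terms.
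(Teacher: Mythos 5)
Your proof is correct and follows essentially the same route as the paper: a structural induction on $r$ whose only interesting case is $\pi'\act X$, resolved exactly as you do by combining the group-action law $\pi\act(\pi'\act x)=(\pi\circ\pi')\act x$ with the syntactic clause $\pi\act(\pi'\act X)\equiv(\pi\circ\pi')\act X$. The paper labels the remaining cases routine and omits them; your explicit treatment of them (including the use of equivariance of $\tf f^\iden$ in the term-former case) is accurate but adds nothing beyond what the paper's proof implies.
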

\begin{proof}
By a routine induction on $r$. 
We consider one case:
\begin{itemize}
\item
\emph{The case $\pi'\act X$.}\quad
By Definition~\ref{defn.interpret.terms} 
$\denot{\mathcal I}{\varsigma}{\pi'\act X} = \pi'\act \varsigma(X)$.
Therefore $\pi\act
\denot{\mathcal I}{\varsigma}{\pi'\act X} = \pi\act(\pi'\act \varsigma(X))$.
It is a fact of the group action (Definition~\ref{defn.perm.set}) that $\pi\act(\pi'\act\varsigma(X))=(\pi\circ\pi')\act\varsigma(X)$, and of the permutation action (Definition~\ref{defn.permutation.action}) that $\pi\act(\pi'\act X)\equiv (\pi\circ\pi')\act X$.
The result follows.
\qedhere
\end{itemize}
\end{proof}

\begin{lemm}
\label{lemm.supp.r}
$\f{supp}(\denot{\mathcal I}{\varsigma}{r})\subseteq\fa(r)$.
\end{lemm}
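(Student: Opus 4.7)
The plan is to prove this by a routine induction on the structure of $r$, using the earlier established bounds on support for the basic constructors (atoms, atoms-abstraction, products) together with equivariance of term-former interpretations.

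For the base cases, when $r = a$ we have $\denot{\mathcal I}{\varsigma}{a} = a$ so both sides equal $\{a\}$; when $r = \pi\act X$ we have $\denot{\mathcal I}{\varsigma}{\pi\act X} = \pi\act\varsigma(X)$, and by Lemma~\ref{lemm.supp.subsets}(1) this has support $\pi\act\supp(\varsigma(X))$, which by the valuation constraint $\supp(\varsigma(X))\subseteq\pmss(X)$ (Definition~\ref{defn.valuation}) is contained in $\pi\act\pmss(X) = \fa(\pi\act X)$.

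For the inductive cases, the tuple case $r = (r_1,\ldots,r_n)$ follows from the third part of Lemma~\ref{lemm.properties.of.support} together with the inductive hypotheses applied to each $r_i$. The abstraction case $r = [a]s$ uses the second part of Lemma~\ref{lemm.properties.of.support}: $\supp([a]\denot{\mathcal I}{\varsigma}{s}) = \supp(\denot{\mathcal I}{\varsigma}{s})\setminus\{a\}\subseteq\fa(s)\setminus\{a\} = \fa([a]s)$. The term-former case $r = \tf f(s)$ uses that $\tf f^\iden$ is equivariant (Definition~\ref{defn.interpret.I}) so by Lemma~\ref{lemm.equivar.reduces.supp}(1), $\supp(\tf f^\iden(\denot{\mathcal I}{\varsigma}{s}))\subseteq\supp(\denot{\mathcal I}{\varsigma}{s})\subseteq\fa(s) = \fa(\tf f(s))$.

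There is no real obstacle here; every case is handled by directly unfolding Definition~\ref{defn.interpret.terms}, applying the appropriate support lemma, and then invoking the inductive hypothesis. The one spot worth a moment's care is the $\pi\act X$ case, where one must remember that $\fa(\pi\act X)$ is defined as $\pi\act\pmss(X)$ rather than as some operation on $\fa(X)$ (which is not defined, since $X$ in isolation is not a term); and that equivariance of term-formers (not just support boundedness) is what makes the $\tf f(s)$ case go through.
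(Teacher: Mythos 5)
Your proof is correct and follows essentially the same route as the paper's: induction on $r$, with the constructor cases handled by Lemma~\ref{lemm.properties.of.support}, the term-former case by part~1 of Lemma~\ref{lemm.equivar.reduces.supp}, and the $\pi\act X$ case by the valuation constraint of Definition~\ref{defn.valuation}. Your explicit appeal to Lemma~\ref{lemm.supp.subsets}(1) in the $\pi\act X$ case just spells out a step the paper leaves implicit.
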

\begin{proof}
By a routine induction on $r$.
We consider one case in detail:
\begin{itemize}
\item
\emph{The case $\pi\act X$.}\quad
$\fa(\pi\act X)=\pi\act\pmss(X)$ by Definition~\ref{defn.fa}.
By assumption in Definition~\ref{defn.valuation} $\f{supp}(\varsigma(X))\subseteq\pmss(X)$.
\end{itemize} 
The cases of $a$, $[a]r$, and $(r_1,\dots,r_n)$ use parts~1, 2, and 3 of Lemma~\ref{lemm.properties.of.support}.
The case of $\tf f$ uses part~1 of Lemma~\ref{lemm.equivar.reduces.supp}.
\end{proof}

\subsection{Interpretation of propositions} 
\label{subsect.interpret.prop}

\begin{defn}
\label{defn.varsigma.sub}
Suppose $\varsigma$ is a valuation to an interpretation $\mathcal I$.
Suppose $X$ is an unknown and $x\in \model{\sort(X)}$ is such that $\f{supp}(x)\subseteq \pmss(X)$.
Define $\varsigma[X\ssm x]$ by
$$
(\varsigma[X\ssm x])(Y)=\varsigma(Y)
\quad\text{and}\quad
(\varsigma[X\ssm x])(X)=x .
$$
\end{defn}
It is easy to verify that $\varsigma[X\ssm x]$ is also a valuation to $\mathcal I$.

\begin{defn}
\label{defn.truth}
Suppose $\mathcal I$ is an interpretation.
Define an \deffont{interpretation of propositions} by:
\begin{frameqn}
\begin{array}{r@{\ }l}
\denot{\mathcal I}{\varsigma}{\tf P(r)} =&
\tf P^\iden(\denot{\mathcal I}{\varsigma}{r})
\\
\denot{\mathcal I}{\varsigma}{\bot}=& 0
\\
\denot{\mathcal I}{\varsigma}{\phi\limp\psi}=& 
\f{max}\{1{-}\denot{\mathcal I}{\varsigma}{\phi},\denot{\mathcal I}{\varsigma}{\psi}\}
\\
\denot{\mathcal I}{\varsigma}{\Forall{X}\phi}=&\f{min}\{\denot{\mathcal I}{\varsigma[X{\ssm} x]}{\phi}\mid x{\in} \model{\sort(X)},\, \f{supp}(x){\subseteq} \pmss(X)
\}
\end{array}
\end{frameqn}
\end{defn}

We may identify $\denot{\mathcal I}{}{\phi}$ with a set of valuations $\{\varsigma\mid\denot{\mathcal I}{\varsigma}{\phi}=1\}$.
We discuss soundness and completeness in Appendix~\ref{sect.completeness}.

\begin{lemm}
\label{lemm.denotsub}
\begin{itemize*}
\item
$\denot{\mathcal I}{\varsigma[X\ssm \denot{\mathcal I}{\varsigma}{r'}]}{r}
=\denot{\mathcal I}{\varsigma}{r[X\ssm r']}$.
\item
$\denot{\mathcal I}{\varsigma[X\ssm \denot{\mathcal I}{\varsigma}{r'}]}{\phi} =
\denot{\mathcal I}{\varsigma}{\phi[X\ssm r']}$.
\end{itemize*}
\end{lemm}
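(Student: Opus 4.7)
The plan is to prove both clauses simultaneously by mutual structural induction on $r$ and $\phi$, uniformly in $\varsigma$. Let $v = \denot{\mathcal I}{\varsigma}{r'}$; by Lemma~\ref{lemm.supp.r} $\supp(v) \subseteq \fa(r') \subseteq \pmss(X)$, so $\varsigma[X\ssm v]$ is a valuation in the sense of Definition~\ref{defn.varsigma.sub}, and both sides of the claimed equalities are well-defined.

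The cases $r \equiv a$, $r \equiv (r_1,\ldots,r_n)$, $r \equiv \tf{f}(r_0)$, $r \equiv [a]r_0$, $\phi \equiv \bot$, $\phi \equiv \tf{P}(r_0)$, and $\phi \equiv \phi_0 \limp \psi_0$ are mechanical: one unfolds Definitions~\ref{defn.subst.action}, \ref{defn.interpret.terms} and~\ref{defn.truth}, and closes with the inductive hypothesis. No capture issues appear semantically, since atoms-abstraction $[a](-)$ in $\model{[\nu]\alpha}$ is defined on arbitrary elements (Definition~\ref{defn.abstraction.sets}).

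The only interesting term case is $r \equiv \pi\act Y$. If $Y \not\equiv X$ then $(\pi\act Y)[X\ssm r'] \equiv \pi\act Y$ by Definitions~\ref{defn.subst.action} and~\ref{defn.Xssmr}, and both sides reduce to $\pi\act\varsigma(Y)$. If $Y \equiv X$ then $(\pi\act X)[X\ssm r'] \equiv \pi\act r'$; the LHS evaluates to $\pi\act v$, and the RHS equals $\denot{\mathcal I}{\varsigma}{\pi\act r'} = \pi\act v$ by Lemma~\ref{lemm.pi.r.model}. This is the pivotal step: the semantic group action on $\model{\sort(X)}$ tracks the syntactic permutation hanging off the unknown.

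The principal obstacle is $\phi \equiv \Forall{Y}\phi_0$. The substitution is defined only when $Y \not\equiv X$ and $Y \notin \fU(r')$ (so that $Y \notin \nontriv([X\ssm r'])$), and then $(\Forall{Y}\phi_0)[X\ssm r'] \equiv \Forall{Y}(\phi_0[X\ssm r'])$. Expanding Definition~\ref{defn.truth} on both sides gives minima indexed by $y \in \model{\sort(Y)}$ with $\supp(y) \subseteq \pmss(Y)$. Applying the inductive hypothesis to $\phi_0$ under $\varsigma[Y\ssm y]$ yields
$$\denot{\mathcal I}{\varsigma[Y\ssm y]}{\phi_0[X\ssm r']} \;=\; \denot{\mathcal I}{\varsigma[Y\ssm y][X\ssm \denot{\mathcal I}{\varsigma[Y\ssm y]}{r'}]}{\phi_0},$$
so the case reduces to showing that the valuation updates commute: $\varsigma[X\ssm v][Y\ssm y] = \varsigma[Y\ssm y][X\ssm \denot{\mathcal I}{\varsigma[Y\ssm y]}{r'}]$. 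Since $X \not\equiv Y$ the two assignments act on disjoint unknowns, so this in turn reduces to the auxiliary claim $\denot{\mathcal I}{\varsigma[Y\ssm y]}{r'} = \denot{\mathcal I}{\varsigma}{r'}$. That auxiliary is the standard independence lemma stating that $\denot{\mathcal I}{\varsigma}{r'}$ depends only on $\varsigma{\restriction}\fU(r')$; it is proved by a separate and routine induction on $r'$ using Definition~\ref{defn.interpret.terms}, invoking precisely $Y \notin \fU(r')$. Taking the minimum over $y$ closes the case and completes the induction.
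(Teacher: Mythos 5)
Your proof is correct and follows essentially the same route as the paper: a mutual induction on $r$ and $\phi$ whose pivotal case is $\pi\act X$, closed with Lemma~\ref{lemm.pi.r.model}; the paper simply labels the remaining cases routine. The auxiliary independence claim you invoke for the $\Forall{Y}\phi_0$ case is exactly the paper's Lemma~\ref{lemm.fV.denot}, stated immediately afterwards, so your treatment of that case is a legitimate filling-in of detail rather than a divergence.
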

\begin{proof}
By routine inductions on the definitions of 
$\denot{\mathcal I}{\varsigma}{r}$ and 
$\denot{\mathcal I}{\varsigma}{\phi}$ in 
Definitions~\ref{defn.interpret.terms} and~\ref{defn.truth}.
We consider two cases:
\begin{itemize*}
\item
The case of $\denot{\mathcal I}{\varsigma[X\ssm r']}{\pi\act X}$.\quad
We reason as follows:
\begin{tab3}
\denot{\mathcal I}{\varsigma[X\ssm \denot{\mathcal I}{\varsigma}{r'}]}{\pi\act X}
=& \pi\act \denot{\mathcal I}{\varsigma}{r'} 
&\text{Definition~\ref{defn.interpret.terms}}
\\
=& \denot{\mathcal I}{\varsigma}{\pi\act r'}
&\text{Lemma~\ref{lemm.pi.r.model}}
\\
=& \denot{\mathcal I}{\varsigma}{(\pi\act X)[X\ssm r']}
&\text{Definition~\ref{defn.subst.action}} .
\end{tab3}
\item
The case of $\denot{\mathcal I}{\varsigma[X\ssm r']}{\tf P(r)}$.
\ \  
We reason as follows:
\begin{tab3}
\denot{\mathcal I}{\varsigma[X\ssm \denot{\mathcal I}{\varsigma}{r'}]}{\tf P(r)}
=&
{\tf P}^\iden(\denot{\mathcal I}{\varsigma[X\ssm \denot{\mathcal I}{\varsigma}{r'}]}{r})
&\text{Definition~\ref{defn.truth}}
\\
=&
{\tf P}^\iden(\denot{\mathcal I}{\varsigma}{r[X\ssm r']}) 
&\text{Part~1 of this result}
\\
=&
\denot{\mathcal I}{\varsigma}{\tf P(r)[X\ssm r']} 
&\text{Definition~\ref{defn.truth}} .
\end{tab3}
\end{itemize*}
\end{proof}

\begin{lemm}
\label{lemm.fV.denot}
If $\varsigma(X)=\varsigma'(X)$ for all $X\in\fU(r)$ then $\denot{\mathcal I}{\varsigma}{r}=\denot{\mathcal I}{\varsigma'}{r}$, and similarly for $\phi$.
\end{lemm}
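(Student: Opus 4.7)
The plan is to prove both statements simultaneously by a routine structural induction, first on $r$ for the term statement and then on $\phi$ for the proposition statement (reusing the term result as a sublemma). For each constructor I would just unfold Definitions~\ref{defn.interpret.terms} and~\ref{defn.truth}, use the definition of $\fU$ from Definition~\ref{defn.fa}, and invoke the inductive hypothesis on subterms whose free unknowns are contained in $\fU(r)$ (or $\fU(\phi)$).

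The cases for terms are immediate: for $r\equiv a$ the denotation is just $a$ regardless of the valuation; for $r\equiv\pi\act X$ we have $\fU(\pi\act X)=\{X\}$, so by hypothesis $\varsigma(X)=\varsigma'(X)$, hence $\pi\act\varsigma(X)=\pi\act\varsigma'(X)$; and for $[a]r$, $\tf f(r)$, $(r_1,\ldots,r_n)$, $\bot$, $\phi\limp\psi$, $\tf P(r)$ the set of free unknowns of the compound is precisely the union (or a subset) of the free unknowns of the components, so the inductive hypothesis applies directly and the outer operations ($[a](-)$, $\tf f^\iden$, tupling, $\max$, $\tf P^\iden$) produce the same result on both sides.

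The only case that requires a small observation is $\phi\equiv\Forall{X}\psi$. Here we need to show, for every admissible $x\in\model{\sort(X)}$ with $\supp(x)\subseteq\pmss(X)$, that $\denot{\mathcal I}{\varsigma[X\ssm x]}{\psi}=\denot{\mathcal I}{\varsigma'[X\ssm x]}{\psi}$, so that the two minima in Definition~\ref{defn.truth} coincide. For this I would check that $\varsigma[X\ssm x]$ and $\varsigma'[X\ssm x]$ agree on every $Y\in\fU(\psi)$: if $Y\equiv X$ both extended valuations return $x$, and if $Y\not\equiv X$ then $Y\in\fU(\psi)\setminus\{X\}=\fU(\Forall{X}\psi)$, so by hypothesis $\varsigma(Y)=\varsigma'(Y)$, and the extensions do not touch $Y$. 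The inductive hypothesis applied to $\psi$ then gives the required equality.

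There is no real obstacle; the proof is a bookkeeping exercise and the only mildly delicate point is ensuring, in the $\forall$ case, that the binder $X$ is correctly excluded from $\fU(\Forall{X}\psi)$ so that agreement on $\fU(\Forall{X}\psi)$ suffices to get agreement of the extended valuations on $\fU(\psi)$.
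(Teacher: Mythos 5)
Your proof is correct and is exactly the argument the paper intends: the paper's own proof is simply ``by a routine induction on $r$ and $\phi$'', and your case analysis (including the correct handling of the binder in the $\Forall{X}\psi$ case via $\fU(\Forall{X}\psi)=\fU(\psi)\setminus\{X\}$) fills in precisely that routine induction.
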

\begin{proof}
By a routine induction on $r$ and $\phi$.
\end{proof}

\section{Interpretation of HOL}
\label{sect.interp.hol}

For this section fix some PNL interpretation $\mathcal I$ of a PNL signature $\mathcal S$.
Recall from Definition~\ref{defn.TS} the definition of the corresponding HOL signature $\mathcal T_{\mathcal S}$. 

We have our interpretation of PNL and we have from Definition~\ref{defn.translation} a translation of PNL syntax to HOL syntax.
We also have a functor from nominal sets to renaming sets (Definition~\ref{defn.free.ren}).
It remains to interpret HOL in renaming sets consistent with these interpretations and translations.
This is Definitions~\ref{defn.hol.interpretation} and~\ref{defn.hol.interpret.terms}, and the key technical result Lemma~\ref{lemm.commuting.square}.
Completeness follows quickly as a corollary (Theorem~\ref{thrm.PNL.HOL.complete}).

Note that in the interpretation (Definition~\ref{defn.hol.interpretation}) the type $\mu_\nu\to\beta$ is not necessarily interpreted as the set of all functions; it may be interpreted as a small subset of this function space.
This is an old idea: since Henkin, models of HOL have been constructed to cut down on the full function-space (e.g. to create a complete semantics \cite[Section~55]{andrews:intmlt}; see also \cite{benzmuller:higose} for a survey of non-standard semantics for HOL).

What we need to prove completeness of the syntactic translation $\hol{D}{\text{-}}$ is the existence of \emph{some} interpretation of HOL with certain properties.
This should not be mistaken as a commitment of nominal techniques to using this model of HOL always (unless we want to).

\subsection{Interpretation of types}

Recall the definition of a valuation $\varsigma$ (Definition~\ref{defn.valuation}) to an interpretation $\mathcal I$ for the PNL signature $\mathcal S$. 
Recall the definition of $\varsigma[X\ssm x]$ (Definition~\ref{defn.varsigma.sub}), and the interpretations of terms $\denot{\mathcal I}{\varsigma}{r}$ (Definition~\ref{defn.interpret.terms}) and propositions $\denot{\mathcal I}{\varsigma}{\phi}$ (Definition~\ref{defn.truth}).

We give similar definitions for HOL and renaming sets, culminating with Theorem~\ref{thrm.hol.soundness} (soundness).

\begin{defn}
\label{defn.hol.interpretation}
We provide an interpretation $\mathcal H$ of $\mathcal T_{\mathcal S}$ by:
\begin{frameqn}
\begin{array}{r@{\ }l@{\qquad}l}
\holmodel{\hol{}{\alpha}}=&\Ren{\model{\alpha}}
\\
\holmodel{o}=&\mathbb B
\\
\holmodel{(\beta_1,\ldots,\beta_n)}=&\holmodel{\beta_1}\times\ldots\times\holmodel{\beta_n}
& (\beta_i\text{ not of the form }\hol{}{\alpha}\text{ for at least one }i) 
\\
\holmodel{\beta'\to\beta}=&\holmodel{\beta'}\Rightarrow\holmodel{\beta} 
& (\beta'\text{ or }\beta\text{ not of the form }\hol{}{\alpha}) 
\end{array}
\end{frameqn}
\end{defn}
Recall $\rs X\Rightarrow\rs Y$ from Definition~\ref{defn.frs.exp} and $\rs X\times\rs Y$ from Definition~\ref{defn.times}.

\begin{rmrk}
\label{rmrk.case-split}
Not all function types are interpreted equally by Definition~\ref{defn.hol.interpretation}. 

If a type is the image of a PNL sort then we handle it using the first clause by wrapping it up in $\Ren{\text{-}}$.
Otherwise the interpretation is as standard: pairs to product; function types to the (supported) function set.
This case-split makes Lemma~\ref{lemm.commuting.square} work, which is central to Corollary~\ref{corr.notsubseteq} and to Completeness (Theorem~\ref{thrm.PNL.HOL.complete}).

Why Lemma~\ref{lemm.commuting.square} \emph{could not} work if we did not do this, is indicated in Lemma~\ref{lemm.non-iso}.
Briefly, 
${\mathbb A_\nu\Rightarrow\text{-}}$ contains `exotic elements' making it bigger than $[\mathbb A_\nu]\text{-}$, which readers familiar with higher-order abstract syntax would expect \cite[\emph{exotic terms}]{despeyroux94higherorder}.

Perhaps less familiar from Lemma~\ref{lemm.commuting.square} is that 
the free construction $\Ren{\text{-}}$ does not commute with atoms-abstraction or even with cartesian product (which tells us that the associated functor does not have a left adjoint). 
So even e.g. $\mathbb A_\nu\times\mathbb A_\nu$ in \theory{PmsRen} has an `exotic element'. 

The natural maps below in Lemma~\ref{lemm.non-iso} are functors (some are defined in Definition~\ref{defn.some.functors}).
Technically, Lemma~\ref{lemm.non-iso} implies that various natural transformations between these functors do not have inverses.
More loosely, this is a way of putting some formal measure to the intuition that the translation from (restricted) PNL to HOL cannot be surjective:
\end{rmrk}

\begin{lemm}
\label{lemm.non-iso}
\begin{enumerate*}
\item
The natural map from $\Ren{\mathbb A_\nu}$ to $\mathbb A_\nu$ mapping $\rho\bigact a$ to $\rho(a)$, is a bijection (cf. Lemma~\ref{lemm.BA.ren.isos}).
\item
The natural map from $\Ren{\ns X\times\ns Y}$ to $\Ren{\ns X}\times\Ren{\ns Y}$ mapping $\rho\bigact(x,y)$ to $(\rho\bigact x,\rho\bigact y)$ is neither surjective nor injective.
\item
The natural map from $\Ren{[\mathbb A_\nu]\ns X}$ to $[\mathbb A_\nu]\Ren{\ns X}$ mapping $\rho\bigact [a]x$ where $a\not\in\f{nontriv}(\rho)$ to $[a]\rho\bigact x$, is not surjective.
\item
The natural map from  $[\mathbb A_\nu]\rs Y$ to ${\mathbb A_\nu\Rightarrow\rs Y}$ mapping $[a]x$ to $\lam{a}x$ (see Definitions~\ref{defn.lambda.a} and~\ref{defn.some.functors}), is not surjective.
\end{enumerate*}
\end{lemm}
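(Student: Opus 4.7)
The plan is to handle the four parts separately using class-invariants on $\Ren{\ns X}$ to detect failures of injectivity and surjectivity; the unifying theme is that the free construction $\Ren{-}$ introduces \emph{exotic} pairs $(\rho,x)$ in which $\rho$ acts non-injectively on $\supp(x)$, and that these pairs carry information that the various target constructions cannot record.  Part~1 is immediate: surjectivity is witnessed by $b=\id(b)$ being the image of $\id\bigact b$, and injectivity follows by reducing $\rho\bigact a$ via rule~2 with the swap $(a\ \rho(a))$ and then rule~1 to the canonical form $\id\bigact\rho(a)$, exactly as in the computation proving Lemma~\ref{lemm.BA.ren.isos}(2).

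For part~2, define the invariant $g:\Ren{\mathbb A_\nu\times\mathbb A_\nu}\to\mathbb B$ by setting $g(\rho\bigact(x_1,x_2))=1$ iff $x_1=x_2$; rule~1 fixes $(x_1,x_2)$ and rule~2 replaces it by $(\pi(x_1),\pi(x_2))$ for a bijection $\pi$, so $g$ descends to $\sim$-classes.  The exotic element $[a\ssm b]\bigact(a,b)$ flagged in the preceding remark has $g=0$ while $\id\bigact(b,b)$ has $g=1$, yet both elements map to the same pair $(b,b)\in\Ren{\mathbb A_\nu}\times\Ren{\mathbb A_\nu}$ (identifying $\Ren{\mathbb A_\nu}\cong\mathbb A_\nu$ via part~1), which gives non-injectivity.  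For non-surjectivity, one exhibits a pair $(\alpha,\beta)\in\Ren{\ns X}\times\Ren{\ns Y}$ whose two exotic components force mutually incompatible demands on any candidate common renaming after the freshening and trimming afforded by rules~1 and~2; identifying such a pair and backing it up with an invariant argument in the spirit of $g$ (and of $m_a$ from part~3 below) is the subtlest point and is where the main technical effort lies.

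For part~3 I introduce the class-invariant $m_a:\Ren{\ns X}\to\mathbb N$ defined by $m_a(\rho\bigact y)=|\rho^{-1}(a)\cap\supp(y)|$: rule~1 fixes $\rho|_{\supp(y)}$ and rule~2 transports the count through the bijection $\pi$ preserving cardinality, so $m_a$ is well-defined on equivalence classes.  Take $\ns X=\mathbb A_\nu\times\mathbb A_\nu$ and $\beta:=[b\ssm a,\,c\ssm a]\bigact(b,c)$ for distinct $a,b,c$; then $m_a(\beta)=2$, so in any representative $(\rho,y)$ of $\beta$ at least two distinct atoms of $\supp(y)$ are sent to $a$ by $\rho$, and at least one of them is itself distinct from $a$, forcing $a\in\img(\rho)\subseteq\nontriv(\rho)$.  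Since $\supp(\beta)=\{a\}$, Lemma~\ref{lemm.supp.abstraction'} forces $[a]\beta=[a']\beta'$ to be either $a=a'$ with $\beta'=\beta$, or $a'\neq a$ with $\beta'=(a\ a')\bigact\beta$; a direct calculation gives $m_{a'}((a\ a')\bigact\beta)=m_a(\beta)=2$, so in both cases the abstracted atom lies in $\nontriv(\rho)$ for every representative, contradicting the freshness side-condition built into the map.  Hence $[a]\beta$ is not in the image.

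For part~4, take $\rs Y$ rich enough to carry exotic elements (for instance $\rs Y=\Ren{\mathbb A_\nu\times\mathbb A_\nu}$), and exhibit a supported $f\in|\mathbb A_\nu\Rightarrow\rs Y|$ whose $g$-invariant value (from part~2) toggles between distinct atoms.  Because $\lam{a}x(n)=[a\ssm n]\bigact x$ is the renaming-transform of the single value $x$, and rule~2 preserves the underlying pair $(y_1,y_2)$ on which $g$ depends, every element $\lam{a}x$ of $\img(\f{AbsFun})$ has a $g$-value that is \emph{constant} in $n$; any $f$ with a non-constant $g$-profile therefore lies outside the image.  The main obstacle will be to arrange that the candidate $f$ is genuinely supported in the sense of Definition~\ref{defn.exp.ren}, i.e.\ supported by a bona fide permission set: this requires delicate use of Lemma~\ref{lemm.renaming.distribute} together with the freshening-pair mechanism of Definition~\ref{defn.exp.ren.action} to check compatibility with arbitrary renamings whose domain avoids the candidate support.
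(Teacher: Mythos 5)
Parts 1 and the non-injectivity half of part~2 of your proposal are correct and essentially the paper's argument (the paper derives the image of $[a\ssm b]\bigact(a,b)$ from equivariance of the natural map where you compute it directly; your invariant $g$ additionally certifies the distinctness $[a\ssm b]\bigact(a,b)\neq\id\bigact(b,b)$, which the paper leaves implicit). Your part~3 is correct but genuinely different from the paper's one-line witness $[a][a\ssm b]\bigact(a,b)$: you abstract the \emph{merge-target} atom of $\beta=\sigma\bigact(b,c)$ where $\sigma$ sends both $b$ and $c$ to $a$, so that $a\in\supp(\beta)=\{a\}$ and your invariant $m_a$ traps $a$ inside $\nontriv(\rho)$ of every representative. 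This is in fact the sturdier choice: in the paper's element the abstracted atom $a$ does \emph{not} lie in $\supp([a\ssm b]\bigact(a,b))=\{b\}$, so that abstraction is vacuous and $\alpha$-converts to $[a']([a\ssm b]\bigact(a,b))$ for fresh $a'$, which \emph{is} in the image (a preimage is $[a\ssm b]\bigact[a'](a,b)$); your witness is engineered to avoid exactly this trap, and $m_a$ is well-defined on $\sim$-classes as you check.

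The genuine gaps are the two steps you explicitly defer, and they are not cosmetic. For part~4 no ``rich'' $\rs Y$ is needed: the paper takes $\rs Y=\mathbb A_\nu$ and observes that the image of $\f{AbsFun}$ in $\mathbb A_\nu\Rightarrow\mathbb A_\nu$ contains only (essentially) the identity and the constant functions, so the swap $(b\ a)$ lies outside it; your plan could be completed (e.g.\ $f(n)=\id\bigact(n,c)$ into $\Ren{\mathbb A_\nu\times\mathbb A_\nu}$ has a non-constant $g$-profile and is supported by any permission set containing $c$), but as written you prove nothing for this part. More seriously, for the non-surjectivity half of part~2 you exhibit no witness at all, and the strategy you describe---two exotic components making ``mutually incompatible demands on a common renaming''---cannot succeed for finitely supported components: rule~2 of Definition~\ref{defn.free.ren} lets you $\alpha$-separate the two supports by a permutation, after which one renaming simulates any two. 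Concretely, even the pair $([a\ssm b]\bigact(a,b),\,[b\ssm a]\bigact(a,b))$ used in the paper's own proof is the image of $\rho\bigact((c,d),(e,f))$ where $\rho$ sends $c,d$ to $b$ and $e,f$ to $a$: by rule~2 with $(a\ c)(b\ d)$ and then rule~1, $\rho\bigact(c,d)=[a\ssm b]\bigact(a,b)$, and symmetrically for the second component. Since $\dom(\rho_i)$ may always be trimmed (rule~1) to a finite subset of the support, and supports have infinite complement, this separation can be engineered for arbitrary $\rho_1\bigact x$ and $\rho_2\bigact y$; so the missing step in your proposal cannot be filled as you describe---and, unless the equivalence of Definition~\ref{defn.free.ren} is read more restrictively than written, the same computation undermines the witness in the paper's own proof of this half of part~2.
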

\begin{proof}
\begin{enumerate*}
\item
By rule~2 of Definition~\ref{defn.free.ren}.
\item
Take $\ns X=\ns Y=\mathbb A_\nu$.
The natural map from $\Ren{\ns X\times\ns Y}$ to $\Ren{\ns X}\times\Ren{\ns Y}$ takes $\id\bigact (a,b)$ to $(\id\bigact a,\id\bigact b)$.
By equivariance it must map $[a\ssm b]\bigact(a,b)$ to $(\id\bigact b,\id\bigact b)$.
But then it is not injective, since $[a\ssm b]\bigact(a,b)\neq \id\bigact(b,b)$ in $\Ren{\ns X\times\ns Y}$.

Now take $\ns X=\ns Y=\mathbb A_\nu\times\mathbb A_\nu$.
It is not hard to see that $([a\ssm b]\bigact(a,b),[b\ssm a]\bigact(a,b))$ is not in the image of the natural map, so the map is also not surjective.
\item
Take $\ns X=\mathbb A_\nu\times\mathbb A_\nu$ and consider $[a][a\ssm b]\bigact(a,b)\in[\mathbb A_\nu]\Ren{\ns X}$.
\item
Take $\rs Y=\mathbb A_\nu$ for $\mathbb A_\nu$ considered a renaming set as in Definition~\ref{defn.bool}.
Consider $(b\ a)\in{\mathbb A_\nu\Rightarrow\mathbb A_\nu}$, mapping $a$ to $b$, $b$ to $a$, and all other $c$ to $c$.
\qedhere\end{enumerate*}
\end{proof}

\subsection{Interpretation of terms}

\begin{defn}
\label{defn.hol.valuation}
A \deffont{(HOL) valuation} $\varrho$ to $\mathcal H$ is a map on variables $X:\beta$ 
such that 
\begin{itemize*}
\item
$\varrho(X)\in\holmodel{\beta}$ for every variable $X$.
\item
$\{a\in\mathbb A\mid \varrho(a)\neq\id\bigact a\}$ is finite.\footnote{In other words, $\varrho$ restricted to those HOL variables in $\mu_\nu$ that are PNL atoms (condition~5 of Definition~\ref{defn.hol.translation}) is a renaming $\rho$ (Definition~\ref{defn.renaming}).}
\end{itemize*} 
$\varrho$ will range over valuations.
\end{defn}

\begin{defn}
\label{defn.varrho.update}
Suppose $\varrho$ is a valuation.
Suppose $X$ is a variable and $x\in \holmodel{\type(X)}$. 
Define a function $\varrho[X\ssm x]$ by:
\begin{frameqn}
(\varrho[X\ssm x])(Y)=\varrho(Y)
\quad\text{and}\quad
(\varrho[X\ssm x])(X)=x 
\end{frameqn}
\end{defn}
It is easy to verify that $\varrho[X\ssm x]$ is also a valuation to $\mathcal H$.

\begin{defn}
\label{defn.hol.interpret.terms} 
Extend $\mathcal H$ to terms as follows:
\begin{itemize*}
\item
$\holmodel{a}(\varrho) = \varrho(a)$.
\item
$\holmodel{X}(\varrho) = \varrho(X)$.
\item
$\holmodel{\tf g_{\smtf f}} = \Ren{\tf f^\iden}$ and $\holmodel{\tf g_{\smtf P}} = \Ren{\tf P^\iden}$ (Definition~\ref{defn.free.ren}).
\item
$\holmodel{\bot}(\varrho)=0$.
\item
$\holmodel{\limp}(\varrho)=\lam{x\in\mathbb B,y\in\mathbb B}\f{max}\{1{-}x,y\}$.
\item
$\holmodel{\forall_{\beta})}(\varrho)=
\lam{x\in\holmodel{\beta\Rightarrow\mathbb B}}\f{min}\{xy\mid y\in\holmodel{\beta}\}$.
\item
$\holmodel{\lam{a}t}(\varrho)=\rho\bigact [a]x$ where $\holmodel{t}(\varrho)=\rho\bigact x$ provided that
$t:\hol{}{\alpha}$ for some PNL sort $\alpha$ and 
$a\in\mathbb A_\nu$ for some name sort $\nu$ and ($\alpha$-converting if necessary) 
$a\not\in\bigcup_{X\in\fv(t)\setminus\{a\}}\supp(\varrho(X))$ and
$\varrho(a)=\id\bigact a$.
\item
$\holmodel{\lam{X}t}(\varrho) = \lam{x}\holmodel{t}(\varrho[X\ssm x])$ provided that $\lam{X}t:\beta'\to\beta$ where $\beta'\to\beta$ is not equal to $\hol{}{[\mathbb A_\nu]\alpha}$ for any $\nu$ or $\alpha$.
\item
$\holmodel{tu}(\varrho) = ([a\ssm b]\circ\rho)\bigact x$ provided that 
$t:\hol{}{[\nu]\alpha}$ for some PNL name sort $\nu$ and sort $\alpha$, where $\holmodel{u}(\varrho)=\id\bigact b$ (by construction some such $b\in\mathbb A_\nu$ always exists) and $\holmodel{t}(\varrho)=\rho\bigact [a]x$, and (renaming if necessary) $a\not\in\f{nontriv}(\rho)\cup\{b\}$. 
\item
$\holmodel{tu}(\varrho) = \holmodel{t}(\varrho)\holmodel{u}(\varrho)$ provided that $t:\beta$ for $\beta$ not equal to $\hol{}{\alpha}$ for any PNL sort $\alpha$ (it is a fact that this case and the previous case exhaust all the possibilities for $tu$). 
\item
$\holmodel{(t_1,\ldots,t_n)}(\varrho) = (\bigcup \rho_i)\bigact (x_1,\ldots,x_n)$ provided that $t_i:\hol{}{\alpha_i}$ for $1\leq i\leq n$, where $\holmodel{t_i}=\rho_i\bigact x_i$, and we choose representatives such that $\dom(\rho_i)\cap\dom(\rho_j)=\varnothing$ for all $1\leq i\neq j\leq n$. 
\item
$\holmodel{(t_1,\ldots,t_n)}(\varrho) = (\holmodel{t_1}(\varrho),\ldots,\holmodel{t_n}(\varrho))$ provided that there exists some $i$ and $\beta$ such that $t_i:\beta$ and $\beta$ is not equal to $\hol{}{\alpha}$ for any PNL sort $\alpha$.
\end{itemize*}
\end{defn}

\begin{rmrk}
\label{rmrk.outline}
Definition~\ref{defn.hol.interpret.terms} propagates to terms the case-split noted in Remark~\ref{rmrk.case-split}. 
We treat terms differently depending on whether they populate the translation of a PNL sort, or not.
We must do this because of how we interpreted types in Definition~\ref{defn.hol.interpretation}.

Just to locate where we are, here is an schematic of the overall structure of the proof of completeness:
$$
\xymatrix@=6em{
\text{PNL syntax}  \ar[r]^{\hol{D}{\text{-}}}\ar[d]_{\denot{\mathcal I}{\varsigma}{\text{-}}} & \text{HOL syntax} \ar[d]^{\holmodel{\text{-}}(D(\varsigma))}\ar@{-->}[dl]^{\text{\em not possible}} 
\\
\theory{PmsPrm}  \ar[r]_{\Ren{\text{-}}} & \theory{PmsRen} 
}
$$
We translated PNL to HOL using $\hol{D}{\text{-}}$ in Definition~\ref{defn.translation}.
Ideally, to prove completeness we would now give HOL a denotation directly in \theory{PmsRen}.

Unfortunately this is not possible (the dashed arrow) because $[a]r$ translates to $\lam{a}\hol{D}{r}$ and has nominal denotation as an atoms-abstraction $[a]\denot{\mathcal I}{\varsigma}{r}$.
Atoms-abstraction (Definition~\ref{defn.abstraction.sets}) is the graph of a partial function, whereas $\lam{a}\hol{D}{r}$ `wants' to take denotation as a total function.

So instead we use a commuting square as illustrated; in \theory{PmsRen} atoms-abstraction \emph{can} be viewed as a total function, as noted in Remark~\ref{rmrk.total-partial}.
Definition~\ref{defn.hol.interpret.terms} uses this, and fills in the right-hand arrow.
We also need to convert the PNL valuation $\varsigma$ to a HOL valuation $D(\varsigma)$; this comes later in Definition~\ref{defn.epsilond}.

Note that by forming this diagram we give a new semantics to PNL in \theory{PmsRen}, and thus in particular give a semantics to nominal atoms-abstraction in which it becomes interpreted as a total function. 

The top arrow is Definition~\ref{defn.translation}; the left-hand arrow is Definition~\ref{defn.interpret.terms}; and the bottom arrow is Definition~\ref{defn.free.ren}.
Lemma~\ref{lemm.abs.conc.pi} proves commutativity of the square.
\end{rmrk}

\begin{lemm}
\label{lemm.hol.denotren}
Suppose $a\in\mathbb A_\nu$ and $b\in\mathbb A_\nu$. 
Suppose $a\not\in\supp(\varrho(X))$ for every $X\in\fv(t){\setminus}\{a\}$ (including $b$). 
Suppose $\varrho(a)=\id\bigact a$.

Then $\holmodel{t}(\varrho[a\ssm \id\bigact b]) =[a\ssm b]\bigact(\holmodel{t}(\varrho))$.
\end{lemm}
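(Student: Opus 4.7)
The plan is to proceed by structural induction on the HOL term $t$, following the clauses of Definition~\ref{defn.hol.interpret.terms}, and using in each case the hypotheses together with the support-based calculations from Section~\ref{sect.semantics}.

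\textbf{Base cases.} When $t \equiv a$, the left-hand side is $\varrho[a\ssm\id\bigact b](a) = \id\bigact b$, while the right-hand side is $[a\ssm b]\bigact \varrho(a) = [a\ssm b]\bigact(\id\bigact a)$, which equals $\id\bigact b$ by rule~2 of Definition~\ref{defn.free.ren} (absorbing the swapping into the $\rho$-component) and Lemma~\ref{lemm.BA.ren.isos}(2). When $t \equiv X$ for some variable $X \neq a$, both valuations give $\varrho(X)$ at $X$, so we must check that $[a\ssm b]\bigact \varrho(X) = \varrho(X)$; this holds because by hypothesis $a \notin \supp(\varrho(X))$, so $[a\ssm b]$ and $\id$ agree on $\supp(\varrho(X))$, and Definition~\ref{defn.finsupp} applies. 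When $t$ is a constant $\tf g_{\smtf f}$ or $\tf g_{\smtf P}$, the denotation is independent of $\varrho$ and is equivariant by construction (using Definition~\ref{defn.free.ren}), so the renaming $[a\ssm b]$ passes through.

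\textbf{Inductive cases.} The interesting case is $t \equiv \lam{c}t_0$ in the PNL-sorted branch where $c \in \mathbb A_{\nu'}$ and $t_0 : \hol{}{\alpha}$. By $\alpha$-conversion we may choose $c$ fresh: distinct from $a$ and $b$, and outside $\bigcup_{Y \in \fv(t_0)\setminus\{c\}} \supp(\varrho(Y))$, so that the side conditions of the defining clause hold under both $\varrho$ and $\varrho[a\ssm\id\bigact b]$. Writing $\holmodel{t_0}(\varrho) = \rho\bigact x$, we have by induction $\holmodel{t_0}(\varrho[a\ssm\id\bigact b]) = [a\ssm b]\bigact(\rho\bigact x)$. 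Then $\holmodel{\lam{c}t_0}(\varrho[a\ssm\id\bigact b]) = ([a\ssm b]\circ\rho)\bigact[c]x = [a\ssm b]\bigact(\rho\bigact[c]x)$, where commuting $[a\ssm b]$ past $[c]$ is valid because $c \notin \nontriv([a\ssm b])$. The other $\lambda$-case (with $t : \beta$, $\beta$ not of the form $\hol{}{[\nu]\alpha}$) is handled using the exponential renaming action (Definition~\ref{defn.exp.ren.action}) and Lemma~\ref{lemm.renaming.distribute}. For applications $tu$ in the PNL-sorted branch, we apply the inductive hypothesis to $t$ and $u$ separately, and then verify that the outer renaming $[a\ssm b]$ distributes through the construction $([a'\ssm b']\circ\rho)\bigact x$ using rule~1 of Definition~\ref{defn.free.ren} and the freshness of $a'$. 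The remaining application case, tuples, and the $\forall_\beta$ constant case are routine propagations of the inductive hypothesis using equivariance.

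\textbf{Main obstacle.} The real care is required in keeping track of two things simultaneously: the side conditions of Definition~\ref{defn.hol.interpret.terms}, which require $\alpha$-renaming bound atoms to be fresh, and the action of $[a\ssm b]$ on representatives $\rho\bigact x$ in the quotient $\Ren{\ns X}$. The hypothesis ``$a\not\in\supp(\varrho(X))$ for every $X\in\fv(t)\setminus\{a\}$ (including $b$)'' is precisely what allows the side conditions to be uniformly satisfiable under both $\varrho$ and $\varrho[a\ssm\id\bigact b]$, and the inclusion of $b$ ensures that the renaming $[a\ssm b]$ does not clash with free atoms appearing in the valuations of free variables. The calculation in each clause ultimately reduces to manipulations of representatives modulo rules~1 and~2 of Definition~\ref{defn.free.ren}, applied together with Lemmas~\ref{lemm.supp.subsets} and~\ref{lemm.properties.of.support}.
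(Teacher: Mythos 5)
Your proof is correct and follows essentially the same route as the paper: a structural induction on $t$ whose only non-routine content is the two base cases you identify, namely $t\equiv a$ (where $[a\ssm b]\bigact(\id\bigact a)=\id\bigact b$ in $\Ren{\mathbb A_\nu}$) and $t\equiv X$ a non-atom variable (where the hypothesis $a\not\in\supp(\varrho(X))$ gives $[a\ssm b]\bigact\varrho(X)=\varrho(X)$ via Definition~\ref{defn.finsupp}). The paper's proof states exactly these two cases and leaves the rest as routine; your additional detail on the $\lambda$-abstraction and application clauses is consistent with that.
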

\begin{proof}
By a routine induction on $t$.
We mention two cases:
\begin{itemize*}
\item
The case $t$ is $a$.\quad
Using the fact that $\id\bigact b=[a\ssm b]\bigact a$ in $\mathbb A_\nu$ with the action described in Definition~\ref{defn.bool}.
\item
The case $t$ is $X$ for some HOL variable that is not an atom.\quad
By assumption $a\not\in\supp(\varrho(X))$ and so by Definition~\ref{defn.finsupp},\ $\varrho(X)=[a\ssm b]\bigact\varrho(X)$.
The result follows. 
\qedhere
\end{itemize*}
\end{proof}

\begin{rmrk}
Lemma~\ref{lemm.hol.denotren} may fail if $a\in\supp(\varrho(X))$.
For instance, if $\varrho(X)=a$ where $a\in\mathbb A_\nu$ and $\type(X)=\mu_\nu$ and $X$ is not itself an atom, then $\holmodel{X}(\varrho[a\ssm \id\bigact b])=\id\bigact a$ yet $[a\ssm b]\bigact(\holmodel{X}(\varrho))=[a\ssm b]\bigact(\id\bigact a)=\id\bigact b$. 
\end{rmrk}

We need to check that the denotation of terms populates the denotation of their types, and that $\beta$-equivalent terms receive equal denotations. 
\begin{lemm}
\label{lemm.type.t}
If $t:\beta$ then $\holmodel{t}(\varrho)\in\holmodel{\beta}$.
\end{lemm}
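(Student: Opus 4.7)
My plan is to proceed by a straightforward structural induction on $t$, following the clauses of Definition~\ref{defn.hol.interpret.terms}. The principal subtlety is the case split in that definition: terms whose type is of the form $\hol{}{\alpha}$ (the translation of a PNL sort) receive their denotation in $\Ren{\model{\alpha}}$ and are represented as $\rho\bigact x$, while terms of other types receive the standard cartesian-closed interpretation. In each inductive case I therefore have two subcases to dispatch.

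The base cases are immediate. For an atom $a\in\mathbb{A}_\nu$, $\varrho(a)$ lies in $\holmodel{\mu_\nu}=\Ren{\mathbb{A}_\nu}$ by Definition~\ref{defn.hol.valuation} together with Lemma~\ref{lemm.BA.ren.isos}; for a non-atom variable $X:\beta$, $\varrho(X)\in\holmodel\beta$ by Definition~\ref{defn.hol.valuation}. For constants, $\bot$, $\limp$, and $\forall_\beta$ land in $\mathbb B$, $\mathbb B\Rightarrow\mathbb B\Rightarrow\mathbb B$, and $(\holmodel\beta\Rightarrow\mathbb B)\Rightarrow\mathbb B$ respectively by inspection, while for $\tf g_{\smtf f}$ and $\tf g_{\smtf P}$ we use that $\tf f^\iden$ and $\tf P^\iden$ are equivariant so that $\Ren{\tf f^\iden}$ and $\Ren{\tf P^\iden}$ are well-defined arrows in \theory{PmsRen} with the desired source and target (by Definition~\ref{defn.free.ren}).

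For the inductive step on application and abstraction at a ``non-PNL'' type, I simply use the cartesian-closed structure of \theory{PmsRen}: $\lam{X}t$ yields an element of $\holmodel{\beta'}\Rightarrow\holmodel\beta$ (its support bounded by $\bigcup_{Y\in\fv(\lam{X}t)}\supp(\varrho(Y))$), and ordinary application lands in the codomain. For the interesting ``PNL-image'' clauses, I use that $\Ren{\text{-}}$ is surjective on underlying sets so that the inductive hypothesis lets me pick representations $\holmodel{t}(\varrho)=\rho\bigact x$ and $\holmodel{u}(\varrho)=\id\bigact b$; after $\alpha$-conversion to arrange $a\notin\nontriv(\rho)\cup\{b\}$ (justified by Remark~\ref{rmrk.alpha}), the construction $\rho\bigact[a]x$ indeed lies in $\Ren{[\mathbb{A}_\nu]\model{\alpha}}=\holmodel{\hol{}{[\nu]\alpha}}$ and $([a\ssm b]\circ\rho)\bigact x$ lies in $\Ren{\model{\alpha}}=\holmodel{\hol{}{\alpha}}$. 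The tuple clauses are analogous.

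The main obstacle I expect is not typing per se but the implicit well-definedness obligation that accompanies the ``PNL-image'' clauses: the answer must not depend on the representatives $\rho$, $x$, $a$ chosen. This is where Remark~\ref{rmrk.alpha} (permitting us to move $\dom(\rho)$ off any finite set) and the two generating rules of $\sim$ in Definition~\ref{defn.free.ren} do the real work; the checks are routine but they are the content of the lemma, and they are what makes the diagram of Remark~\ref{rmrk.outline} actually commute. Once these choices are confirmed to be immaterial, the typing claims fall out of the inductive hypothesis in each case.
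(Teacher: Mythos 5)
The paper states this lemma without proof, treating it as routine, so there is no authorial argument to compare against; your structural induction over the clauses of Definition~\ref{defn.hol.interpret.terms}, with the case split on whether the type is of the form $\hol{}{\alpha}$, is exactly the argument being left implicit, and it is correct. If written out in full, the only steps deserving explicit attention are the supportedness checks — that $\lam{x}\holmodel{t}(\varrho[X\ssm x])$ and $\holmodel{\forall_\beta}(\varrho)$ really land in the \emph{supported} function spaces $\rs X\Rightarrow\rs Y$, which needs a commutation fact of the shape $\holmodel{t}(\rho\bigact\varrho)=\rho\bigact\holmodel{t}(\varrho)$ for $\rho$ moving only atoms outside the supports of the free variables, in the spirit of Lemma~\ref{lemm.hol.denotren} — together with the representative-independence obligations you rightly flag, though strictly those belong to the well-definedness of Definition~\ref{defn.hol.interpret.terms} rather than to the typing statement itself.
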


\begin{thrm}
$\holmodel{(\lam{X}t)u}(\varrho) = \holmodel{t}(\varrho[X\ssm\holmodel{u}(\varrho)])$.
\end{thrm}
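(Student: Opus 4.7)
The plan is to proceed by case analysis on the type of $\lam{X}t$, mirroring the case-split built into Definition~\ref{defn.hol.interpret.terms}. Writing $\tau$ for $\type(X)$, so that $\lam{X}t : \tau \to \type(t)$, there are essentially two cases to consider: either this arrow type is of the form $\hol{}{[\nu]\alpha}$ for some PNL name sort $\nu$ and PNL sort $\alpha$, or it is not.

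In the \emph{non-nominal case}, where $\tau \to \type(t)$ is not of the form $\hol{}{[\mathbb A_\nu]\alpha}$, both the $\lambda$-clause and the application-clause of Definition~\ref{defn.hol.interpret.terms} fall into their ``standard'' variants, and the result is essentially immediate:
\[
\holmodel{(\lam{X}t)u}(\varrho) \;=\; \holmodel{\lam{X}t}(\varrho)\,\holmodel{u}(\varrho) \;=\; \bigl(\lam{x}\holmodel{t}(\varrho[X\ssm x])\bigr)\bigl(\holmodel{u}(\varrho)\bigr) \;=\; \holmodel{t}(\varrho[X\ssm \holmodel{u}(\varrho)]).
\]

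In the \emph{nominal case}, where $\lam{X}t : \hol{}{[\nu]\alpha}$, I would first $\alpha$-convert $\lam{X}t$ to some $\lam{a}t'$ where $a \in \mathbb A_\nu$ and $a$ satisfies the freshness conditions required by the nominal $\lambda$-clause: $\varrho(a)=\id\bigact a$ and $a \notin \bigcup_{Y\in\fv(t')\setminus\{a\}}\supp(\varrho(Y))$, and also $a \neq b$, where $\holmodel{u}(\varrho)=\id\bigact b$ (this representation of $\holmodel{u}(\varrho) \in \Ren{\mathbb A_\nu}$ exists by part~1 of Lemma~\ref{lemm.BA.ren.isos}). Choosing $\holmodel{t'}(\varrho)=\rho\bigact x$ as its representative, the nominal $\lambda$-clause yields $\holmodel{\lam{a}t'}(\varrho)=\rho\bigact[a]x$, and the nominal application clause then gives
\[
\holmodel{(\lam{a}t')u}(\varrho) \;=\; ([a\ssm b]\circ\rho)\bigact x.
\]
On the other side, Lemma~\ref{lemm.hol.denotren} applied to $t'$ (its hypotheses are exactly the freshness conditions we imposed when choosing $a$) gives $\holmodel{t'}(\varrho[a\ssm\id\bigact b]) = [a\ssm b]\bigact\holmodel{t'}(\varrho) = ([a\ssm b]\circ\rho)\bigact x$, matching the above. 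Since $\varrho[X\ssm\holmodel{u}(\varrho)] = \varrho[a\ssm\id\bigact b]$ under our identification, this closes the case.

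I expect the main obstacle to be the $\alpha$-conversion and freshening step in the nominal case: one must argue that $\holmodel{\lam{X}t}(\varrho)$ and $\holmodel{(\lam{X}t)u}(\varrho)$ are invariant under replacing $\lam{X}t$ by an $\alpha$-equivalent $\lam{a}t'$ with $a$ satisfying \emph{all} the required freshness constraints simultaneously. This reduces to the standard fact that Definition~\ref{defn.hol.interpret.terms} is well-defined on $\alpha$-equivalence classes, together with the observation that only finitely many atoms are excluded while $\mathbb A_\nu$ is infinite, so a suitable $a$ always exists. The product and standard-application clauses play no role since the top-level constructor here is a single application.
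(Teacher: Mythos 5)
Your proof is correct and follows essentially the same route as the paper's: a case-split on whether $\lam{X}t$ has a type of the form $\hol{}{[\nu]\alpha}$, with the standard clauses handling the non-nominal case immediately, and the nominal case handled by $\alpha$-converting $X$ to a fresh atom $a$, unfolding the nominal $\lambda$- and application-clauses of Definition~\ref{defn.hol.interpret.terms}, and closing with Lemma~\ref{lemm.hol.denotren}. Your extra remark about checking invariance under the $\alpha$-conversion/freshening step is a point the paper elides with ``$\alpha$-converting if necessary,'' but it does not change the argument.
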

\begin{proof}
There are two cases, depending on whether $\lam{X}t:\hol{}{[\mathbb A_\nu]\alpha}$ for some PNL sort, or not.
\begin{itemize*}
\item
\emph{The case $t:\hol{}{\alpha}$.}\quad
By Definition~\ref{defn.hol.interpret.terms} $\holmodel{u}(\varrho)=\id\bigact b$ and $\holmodel{\lam{X}t}(\varrho) =\rho\bigact [a]x$, for some $b$, $a$, and $x$. 
$\alpha$-converting if necessary assume $X$ is equal to $a$ which we choose fresh (so $a\not\in\f{nontriv}(\rho)\cup\{b\}$ and $a\not\in\supp(\varrho(Y))$ for every $Y\in\fv(t){\setminus}\{a\}$ and $\varrho(a)=\id\bigact a$).
Then also by definition $\holmodel{(\lam{a}t)u}(\varrho) = ([a\ssm b]\circ\rho)\bigact x$.

Thus it suffices to check that $([a\ssm b]\circ\rho)\bigact x=\holmodel{t}(\varrho[a\ssm b])$.
This follows using Lemma~\ref{lemm.hol.denotren}.
\item
\emph{The case $t:\beta$ where $\beta$ is not equal to $\hol{}{\alpha}$ for any PNL sort $\alpha$.}\quad
This is as standard.
\qedhere
\end{itemize*}
\end{proof}

\subsection{Soundness}

\begin{lemm}
\label{lemm.fV.hol.denot}
If $\varrho(X)=\varrho'(X)$ for all $X\in\fv(t)$ then $\holmodel{t}(\varrho)=\holmodel{t}(\varrho')$.
\end{lemm}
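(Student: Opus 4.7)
The plan is to prove the result by structural induction on $t$, following the case split of Definition~\ref{defn.hol.interpret.terms}. For an atom $a$ or a non-atom variable $X$ the denotation is simply a lookup in the valuation, and the hypothesis gives $\varrho(a)=\varrho'(a)$ and $\varrho(X)=\varrho'(X)$ since $a,X\in\fv(t)$. For the constants $\tf g_{\smtf f},\tf g_{\smtf P},\bot,\limp,\forall_\beta$ the denotation does not mention the valuation at all, so there is nothing to prove.

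For applications, tuplings and the standard lambda clause (i.e.\ $\lam{X}t$ when $\lam{X}t$ does not have type $\hol{}{[\mathbb A_\nu]\alpha}$) the denotation is assembled compositionally from those of the immediate subterms. The free-variable sets of the subterms are contained in $\fv(t)$ (modulo a bound variable in the lambda case), so the inductive hypothesis applies subcomponent-wise; in the standard lambda case one uses that $\varrho[X\ssm x]$ and $\varrho'[X\ssm x]$ agree on $\fv(t)=\fv(\lam{X}t)\cup\{X\}$.

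The main obstacle is the special clause for $\lam{a}t$ with $a\in\mathbb A_\nu$ and $t:\hol{}{\alpha}$ (and symmetrically the application clause when the operator has type $\hol{}{[\nu]\alpha}$), because the definition depends on choosing an $\alpha$-representative with $a$ fresh for the valuation. To make the argument uniform in the two valuations, I would pick a single $c\in\mathbb A_\nu$ fresh for both, i.e.\ satisfying
$$c\notin\bigcup_{Y\in\fv(t)\setminus\{a\}}\bigl(\supp(\varrho(Y))\cup\supp(\varrho'(Y))\bigr)$$
together with $\varrho(c)=\id\bigact c$ and $\varrho'(c)=\id\bigact c$. Such a $c$ exists because by Definition~\ref{defn.hol.valuation} each of $\varrho,\varrho'$ disturbs only finitely many atoms, and the supports involved are finite by Lemma~\ref{lemm.supp.supported.f.bound}. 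After $\alpha$-converting $\lam{a}t$ to $\lam{c}((c\ a)\act t)$, the special lambda clause can be applied on both sides using the same $c$; the inductive hypothesis applies to $(c\ a)\act t$ because $\varrho$ and $\varrho'$ now agree on all of $\fv((c\ a)\act t)=\fv(\lam{a}t)\cup\{c\}$, and the final denotation $\rho\bigact [c]x$ is then literally identical on both sides.

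The special application case $tu$ (where $t:\hol{}{[\nu]\alpha}$) is handled in the same spirit: by inductive hypothesis $\holmodel{t}(\varrho)=\holmodel{t}(\varrho')$ as elements of $\Ren{\model{[\nu]\alpha}}$ and $\holmodel{u}(\varrho)=\holmodel{u}(\varrho')$ in $\mathbb A_\nu$, and after picking a common representative $\rho\bigact[a]x$ for the former and $\id\bigact b$ for the latter (with $a$ fresh for $\rho$ and $b$), both sides of the equation compute to $([a\ssm b]\circ\rho)\bigact x$. The special tupling clause is analogous, using a single choice of representatives $\rho_i\bigact x_i$ with pairwise disjoint $\dom(\rho_i)$ that works for the common value of each $\holmodel{t_i}$.
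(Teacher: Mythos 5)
Your proof is correct and takes the same route as the paper, which disposes of this lemma with ``by a routine induction on terms''; your elaboration of the delicate cases (choosing one atom $c$ fresh for both valuations before applying the special $\lam{a}t$ clause, and picking common representatives in the special application and tupling clauses) is exactly the right way to make that routine induction precise. One small correction: the supports $\supp(\varrho(Y))$ are bounded by permission sets and so need not be \emph{finite} (and Lemma~\ref{lemm.supp.supported.f.bound} does not say they are); the fresh $c$ nonetheless exists because a finite union of permission sets is co-infinite, so only the justification, not the conclusion, needs adjusting.
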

\begin{proof}
By a routine induction on terms.
\end{proof}

\begin{lemm}
\label{lemm.hol.denotsub}
$\holmodel{\rawt}(\varrho[X\ssm \holmodel{\rawu}(\varrho)]) =\holmodel{\rawt[X\ssm \rawu]}(\varrho)$.
\end{lemm}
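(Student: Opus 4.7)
The plan is a routine induction on the structure of the HOL term $\rawt$, mirroring the case-split in Definition~\ref{defn.hol.interpret.terms}. For readability, write $\varrho_X = \varrho[X\ssm\holmodel{\rawu}(\varrho)]$. The base cases are immediate: if $\rawt$ is a variable $Y\neq X$ (including an atom $a$ other than $X$), an atom or variable, or a constant $\tf g$, then $\rawt[X\ssm\rawu]\equiv\rawt$ and $\varrho_X$ agrees with $\varrho$ on the free variables of $\rawt$, so Lemma~\ref{lemm.fV.hol.denot} closes the case. If $\rawt$ is $X$ itself, then $\rawt[X\ssm\rawu]\equiv\rawu$ and both sides reduce to $\holmodel{\rawu}(\varrho)$ by the second clause of Definition~\ref{defn.hol.interpret.terms} (or by the first clause if $X$ happens to be an atom).

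For the inductive cases, I would handle each construct according to whether its HOL type lies in the image of the PNL translation $\hol{}{\text{-}}$, since Definition~\ref{defn.hol.interpret.terms} treats the two regimes differently. For products and for application $\rawt'\rawt$ in the `non-PNL-sort' case the argument is entirely standard: push the substitution inside componentwise, apply the inductive hypothesis, and reassemble. For $\lam{Y}\rawt'$ in the `non-PNL-sort' case, $\alpha$-convert so that $Y\neq X$ and $Y\not\in\fv(\rawu)$, and then use the inductive hypothesis together with the fact that $\varrho_X[Y\ssm y]=(\varrho[Y\ssm y])_X$ when $Y\neq X$ and $\varrho(Y)$ does not affect $\holmodel{\rawu}(\varrho)$ (Lemma~\ref{lemm.fV.hol.denot}).

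The genuinely delicate cases are the `PNL-sort' clauses, where the denotation is given via representatives $\rho\bigact x\in\Ren{\model\alpha}$. For $\lam{a}\rawt'$ with $a\in\mathbb A_\nu$, I $\alpha$-convert to choose $a$ fresh for $\rawu$, for $X$, and for $\bigcup_{Y\in\fv(\rawt')\cup\fv(\rawu)\setminus\{a\}}\supp(\varrho(Y))$, and such that $\varrho(a)=\id\bigact a$. By induction $\holmodel{\rawt'[X\ssm\rawu]}(\varrho)=\holmodel{\rawt'}(\varrho_X)$; writing this common element as $\rho\bigact x$, both sides of the required equation unfold via Definition~\ref{defn.hol.interpret.terms} to $\rho\bigact[a]x$ (using the freshness of $a$ to push it past any renamings introduced by $\varrho_X$). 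For $\rawt'\rawu'$ and $(\rawt_1,\ldots,\rawt_n)$ in the PNL-sort case, one applies the inductive hypothesis to the subterms, chooses representatives $\rho_i\bigact x_i$ with disjoint $\dom(\rho_i)$, and checks that the operations $([a\ssm b]\circ\rho)\bigact x$ and $(\bigcup\rho_i)\bigact(x_1,\ldots,x_n)$ are preserved.

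The main obstacle I anticipate is bookkeeping in the $\lam{a}\rawt'$ case: the definition of $\holmodel{\lam{a}\rawt'}$ requires $a$ to be outside the support of $\varrho$ on every other free variable, and this constraint must be arranged simultaneously for $\varrho$ and for $\varrho_X$ (noting that $\supp(\holmodel{\rawu}(\varrho))$ enters the latter). This is handled by a suitable choice of $\alpha$-representative before invoking the inductive hypothesis and Lemma~\ref{lemm.hol.denotren}. Everything else is a straightforward structural recursion.
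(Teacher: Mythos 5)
Your proposal is correct and follows essentially the same route as the paper: a structural induction on $\rawt$ whose only non-trivial points are the variable case (including the possibility that $X$ is itself an atom) and the $\lambda$-cases, handled by $\alpha$-converting the bound variable away from $X$ and $\fv(\rawu)$ and invoking Lemma~\ref{lemm.fV.hol.denot}. The paper's own proof states exactly these two cases and calls the rest routine; your additional bookkeeping for the PNL-sort clauses of Definition~\ref{defn.hol.interpret.terms} is sound and merely spells out what the paper leaves implicit.
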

\begin{proof}
By a routine induction on $\rawt$.
We mention two cases (bearing in mind that in HOL, a variable $X:\beta$ may be an atom in $\mathbb A_\nu$ if $\beta=\mu_\nu$):
\begin{itemize*}
\item
\emph{The case $\rawt$ equals $X$ equals $a\in\mathbb A_\nu$ for some atom $a$.}\quad

By Definition~\ref{defn.hol.interpret.terms},\ $\holmodel{a}(\varrho[a\ssm\holmodel{\rawu}(\varrho)])= \holmodel{\rawu}(\varrho)$.
\item
\emph{The case $\rawt$ equals $\lam{Y}\rawt'$.}\quad

We assume ${Y\not\in\fv(\rawu)}$, so $(\lam{Y}\rawt')[X\ssm \rawu]=\lam{Y}(\rawt'[X\ssm \rawu])$,\ and use the inductive hypothesis.
\qedhere
\end{itemize*}
\end{proof}

\begin{defn}[Validity]
\label{defn.hol.ment}
Call the proposition $\xi$ \deffont{valid} in ${\mathcal H}$ when 
$\holmodel{\xi}(\varrho) = 1$ for all $\varrho$. 

Call the sequent $\xi_1, ..., \xi_n \holcent \chi_1, ..., \chi_p$ \deffont{valid} 
in ${\mathcal H}$ when 
$(\xi_1 \wedge ... \wedge \xi_n) \Rightarrow 
(\chi_1 \vee ... \vee \chi_p)$ is valid.

If this is true for all ${\mathcal H}$ then write $\xi_1,\dots,\xi_n\holment\chi_1,\dots,\chi_p$. 
\end{defn}

\begin{thrm}[Soundness]
\label{thrm.hol.soundness}
If $\Xi\holcent\Chi$ is derivable then $\Xi\holment\Chi$.
\end{thrm}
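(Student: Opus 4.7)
The plan is to proceed by induction on the derivation $\mathfrak{D}$ of $\Xi\holcent\Chi$, doing a case analysis on the final rule applied. For each rule I will fix an arbitrary interpretation $\mathcal{H}$ and valuation $\varrho$, assume the denotation of every antecedent formula is $1$, and show that some succedent has denotation $1$ (equivalently, the max/min expression from Definition~\ref{defn.truth}'s HOL analogue gives value $1$). Throughout I will use Definitions~\ref{defn.hol.interpret.terms} and~\ref{defn.hol.ment}, together with the two semantic lemmas already in hand: Lemma~\ref{lemm.hol.denotsub} (substitution) and Lemma~\ref{lemm.fV.hol.denot} (coincidence on free variables).

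The propositional cases are routine. Rule \rulefont{hAx} is immediate since $\holmodel{\xi}(\varrho)=1$ on the left forces the same succedent to be $1$. Rule \rulefont{h\bot L} follows from $\holmodel{\bot}(\varrho)=0$. Rules \rulefont{h{\limp}L} and \rulefont{h{\limp}R} unfold using $\holmodel{\limp}(\varrho)=\lam{x,y}\max\{1{-}x,y\}$ and the inductive hypotheses, giving the standard classical reasoning on truth-values in $\mathbb{B}=\{0,1\}$.

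The interesting cases are the quantifier rules. For \rulefont{h{\forall}L}, assume the premise $\Xi,\xi[X\ssm t]\holcent\Chi$ is valid and that all of $\Xi$ together with $\Forall{X}\xi$ evaluate to $1$ under $\varrho$. By Definition~\ref{defn.hol.interpret.terms} the latter means $\holmodel{\xi}(\varrho[X\ssm y])=1$ for every $y\in\holmodel{\type(X)}$; instantiating $y=\holmodel{t}(\varrho)$ (which lies in $\holmodel{\type(X)}$ by Lemma~\ref{lemm.type.t}) and applying Lemma~\ref{lemm.hol.denotsub} yields $\holmodel{\xi[X\ssm t]}(\varrho)=1$, so the inductive hypothesis gives a succedent equal to $1$. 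For \rulefont{h{\forall}R}, assume the premise and the side condition $X\not\in\fv(\Xi,\Chi)$. To show $\holmodel{\Forall{X}\xi}(\varrho)=1$ we fix an arbitrary $y\in\holmodel{\type(X)}$ and instantiate the inductive hypothesis at $\varrho[X\ssm y]$. Since $X$ does not occur free in $\Xi$ or $\Chi$, Lemma~\ref{lemm.fV.hol.denot} shows that the antecedent denotations and each $\chi\in\Chi$ denotation are unchanged by this update; hence either some $\chi\in\Chi$ has denotation $1$ (independent of $y$) or $\holmodel{\xi}(\varrho[X\ssm y])=1$. Taking the min over $y$ then gives validity of the conclusion.

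The main obstacle is the \rulefont{h{\forall}R} case, and specifically the need to commute a universal quantification over $y$ with a disjunction over succedents of $\Chi$. The side condition $X\not\in\fv(\Xi,\Chi)$ is precisely what permits this commutation, via Lemma~\ref{lemm.fV.hol.denot}: if for some fixed $\varrho$ a particular $\chi\in\Chi$ already has denotation $1$, then it continues to do so under every $\varrho[X\ssm y]$, whereas otherwise $\holmodel{\xi}(\varrho[X\ssm y])$ must be $1$ for every $y$. Beyond this, everything else is bookkeeping, and the proof is short.
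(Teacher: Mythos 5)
Your proposal is correct and follows essentially the same route as the paper: induction on derivations with the only non-trivial cases being the quantifier rules, using Lemma~\ref{lemm.hol.denotsub} to handle \rulefont{h{\forall}L} via instantiation of the min, and Lemma~\ref{lemm.fV.hol.denot} to justify the quantifier/disjunction commutation in \rulefont{h{\forall}R}. The only detail the paper adds that you omit is the observation that the type of $\lam{X}\xi$ (which ends in $o$) is never of the form $\hol{}{[\mathbb A_\nu]\alpha}$, so the standard clause of Definition~\ref{defn.hol.interpret.terms} applies when unfolding $\holmodel{\Forall{X}\xi}(\varrho)$ into $\f{min}\{\holmodel{\xi}(\varrho[X\ssm y])\}$ -- a minor but worthwhile check given the case-split in that definition.
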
 
\begin{proof}
Fix some interpretation $\mathcal H$.
We work by induction on derivations (Figure~\ref{rSeq}).
We sketch the two non-trivial cases:
\begin{itemize*}
\item
\emph{The case of \rulefont{h\forall L}.}\quad
We check that $u:\type(X)$ implies $\holmodel{\Forall{X}\xi}(\varrho)\leq \holmodel{\xi[X\ssm u]}(\varrho)$.
We reason as follows:
$$
\begin{array}{r@{\ }l@{\quad}l}
\holmodel{\Forall{X}\xi}(\varrho)=&\f{min}\{\holmodel{\lam{X}\xi}(\varrho)y \mid y\in\holmodel{\type(X)}\}
&\text{Definition~\ref{defn.hol.interpret.terms}}
\\
=&\f{min}\{\holmodel{\xi}(\varrho[X\ssm y]) \mid y\in\holmodel{\type(X)}\}
&\text{Definition~\ref{defn.hol.interpret.terms}}
\\
\leq&\holmodel{\xi}(\varrho[X\ssm\holmodel{u}(\varrho)])
&\text{Fact}
\\
=&\holmodel{\xi[X\ssm u]}(\varrho)
&\text{Lemma~\ref{lemm.hol.denotsub}}
\end{array}
$$
In the second use of Definition~\ref{defn.hol.interpret.terms} above, note that $[\mathbb A_\nu]o$ is never of the form $\hol{}{[\mathbb A_\nu]\alpha}$ for any $\alpha$.
\item
\emph{The case of \rulefont{h\forall R}.}\quad
We use Lemma~\ref{lemm.fV.hol.denot} and routine calculations on truth-values.
\qedhere
\end{itemize*}
\end{proof}

\section{Completeness of the translation of PNL to HOL}
\label{sect.pnl.hol.complete}

We are now ready to prove completeness (Theorem~\ref{thrm.PNL.HOL.complete}) of the translation from Definition~\ref{defn.translation}.
The proof is subtle; notably Lemma~\ref{lemm.rho.varrho} and the case of $\Forall{X}\phi$ in Lemma~\ref{lemm.commuting.square} are non-trivial.
Some mathematical action also takes place in Lemma~\ref{lemm.abs.conc.pi} and the case of $\pi\act X$ in Lemma~\ref{lemm.commuting.square}.

\subsection{Renamings and HOL propositions}

We need a few technical observations about how renamings interact with the denotations of HOL propositions:
\begin{lemm}
\label{lemm.equivar.to.triv}
Suppose $G:\rs X\longrightarrow\mathbb B$.
Then for every $\rho$, $G(x)=1$ implies $G(\rho\bigact x)=1$. 
\end{lemm}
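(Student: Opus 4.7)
The plan is to apply equivariance of $G$ together with the fact that $\mathbb B$ carries the trivial renaming action (Definition~\ref{defn.bool}). Since $G$ is an arrow in \theory{PmsRen}, equivariance (Definition~\ref{defn.equivariant}) gives
\[
G(\rho\bigact x) \;=\; \rho\bigact G(x).
\]
Now $G(x)\in|\mathbb B|=\{0,1\}$ and the renaming action on $\mathbb B$ is trivial, so $\rho\bigact G(x)=G(x)$. Combining these, $G(\rho\bigact x)=G(x)$, and in particular if $G(x)=1$ then $G(\rho\bigact x)=1$.

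There is essentially no obstacle: the whole content of the lemma is to unfold the definitions of equivariance and of the trivial action on $\mathbb B$. One could slightly strengthen the conclusion to $G(\rho\bigact x)=G(x)$ for every $x$ and every $\rho$, but the weaker form stated is all that is needed (presumably in a later argument about preservation of truth under renaming).
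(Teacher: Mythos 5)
Your proof is correct and follows exactly the paper's argument: equivariance of $G$ (as an arrow in \theory{PmsRen}) plus the triviality of the renaming action on $\mathbb B$. The paper's own proof is the same one-liner, so nothing further is needed.
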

\begin{proof}
From equivariance and the fact that $\rho\bigact 1=1$ in $\mathbb B$.
\end{proof}

\begin{corr}
\label{corr.unren.prop}
Suppose $F:\ns X\longrightarrow\mathbb B$.
Then $\Ren{F}(\rho\bigact x)=F(x)$.
\end{corr}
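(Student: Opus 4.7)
The plan is to unpack the functorial definition of $\Ren{F}$ and then use the fact that $\mathbb B$ has trivial action, so that the renaming $\rho$ collapses to the identity on the output side.

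First I would apply Definition~\ref{defn.free.ren} directly: by the action of the functor $\Ren{\text{-}}$ on arrows, $\Ren{F}(\rho\bigact x) = \rho\bigact F(x)$, viewed as an element of $\Ren{\mathbb B}$. So the whole task reduces to showing that $\rho\bigact F(x)$ collapses to $F(x)$ under the identification of $\Ren{\mathbb B}$ with $\mathbb B$ supplied by part~1 of Lemma~\ref{lemm.BA.ren.isos}.

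Next I would observe that because $\mathbb B$ carries the trivial renaming action (Definition~\ref{defn.bool}), every element of $\mathbb B$ is equivariant in the sense of Definition~\ref{defn.equivariant.element}; in particular $\supp(F(x)) = \varnothing$. Rule~1 of the equivalence relation $\sim$ in Definition~\ref{defn.free.ren} then applies vacuously, so $(\rho, F(x)) \sim (\id, F(x))$, i.e.\ $\rho\bigact F(x) = \id\bigact F(x)$ in $\Ren{\mathbb B}$. Under the bijection $\Ren{\mathbb B}\cong\mathbb B$, this class is exactly $F(x)$, which gives the desired equality. (A one-line alternative: since $\Ren{F}$ is equivariant into the trivial-action set $\mathbb B$, Lemma~\ref{lemm.equivar.to.triv} forces $\Ren{F}(\rho\bigact x) = \Ren{F}(\id\bigact x) = F(x)$.)

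There is no real obstacle here; the only subtlety is a notational one, namely remembering that the right-hand side $F(x)$ is being tacitly identified with its image in $\Ren{\mathbb B}$ via Lemma~\ref{lemm.BA.ren.isos}. Once that identification is in place, the proof is a one-step unfolding of the definition of $\Ren{F}$ combined with the triviality of the action on $\mathbb B$.
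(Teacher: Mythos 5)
Your proof is correct and matches the paper's intended argument: the paper gives no explicit proof, presenting the statement as an immediate consequence of the definition of $\Ren{F}$ on arrows together with the triviality of the action on $\mathbb B$ (via Lemma~\ref{lemm.equivar.to.triv} and the identification $\Ren{\mathbb B}\cong\mathbb B$ of Lemma~\ref{lemm.BA.ren.isos}), which is exactly the unfolding you carry out. One tiny caveat: your ``one-line alternative'' cites Lemma~\ref{lemm.equivar.to.triv}, which only states the implication for the value $1$; the clean justification is the one in your main argument, namely equivariance of $\Ren F$ plus $\supp(F(x))=\varnothing$, so that rule~1 of Definition~\ref{defn.free.ren} collapses $\rho\bigact F(x)$ to $\id\bigact F(x)$.
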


\begin{nttn}
Write $\rho\bigact\varrho$ for the valuation mapping $X$ to $\rho\bigact\varrho(X)$. 
\end{nttn}

\begin{lemm}
\label{lemm.rho.varrho}
Suppose $\xi$ is a HOL proposition.
Then 
\begin{itemize*}
\item
$\holmodel{\xi}(\rho\bigact\varrho)=\holmodel{\xi}(\varrho)$ for every $\rho$ and $\varrho$, and
\item
as a corollary, if $X:\beta$ and $x\in\holmodel{\beta}$
then $\holmodel{\xi}(\varrho[X\ssm x])=\holmodel{\xi}(\varrho[X\ssm\rho\bigact x])$.
\end{itemize*}
\end{lemm}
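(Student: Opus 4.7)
The plan is to prove a stronger statement for every HOL term $t$ by structural induction:
\[
\holmodel{t}(\rho\bigact\varrho) = \rho\bigact\holmodel{t}(\varrho),
\]
where $\rho\bigact\varrho$ denotes the pointwise-transformed valuation. The first part of the lemma then follows at once, because a HOL proposition $\xi$ has type $o$ whose interpretation $\mathbb B$ carries the trivial renaming action (Definition~\ref{defn.bool}), so that $\rho\bigact\holmodel{\xi}(\varrho)=\holmodel{\xi}(\varrho)$.

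The induction follows the cases of Definition~\ref{defn.hol.interpret.terms}. Variables and atoms use the defining property of $\rho\bigact\varrho$; the logical constants $\bot,\limp,\forall_\beta$ interpret to equivariant elements; and $\holmodel{\tf g_{\smtf f}}=\Ren{\tf f^\iden}$, $\holmodel{\tf g_{\smtf P}}=\Ren{\tf P^\iden}$ are equivariant by Definition~\ref{defn.free.ren}. The ``standard'' (non-PNL-sorted) inductive subcases for application, $\lambda$-abstraction, and pairs follow from the inductive hypothesis together with Lemma~\ref{lemm.renaming.distribute}. The main obstacle is the PNL-sorted subcase for $\lam{a}u$ with $a\in\mathbb A_\nu$ and $u:\hol{}{\alpha}$: Definition~\ref{defn.hol.interpret.terms} picks a decomposition $\holmodel{u}(\varrho)=\rho'\bigact x$ and imposes a freshness condition on $a$ with respect to $\bigcup_{X\in\fv(u)\setminus\{a\}}\supp(\varrho(X))$. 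Passing to $\rho\bigact\varrho$, these supports may shift by $\rho$ (Lemma~\ref{lemm.supp.subsets}), and the inductive hypothesis yields $\holmodel{u}(\rho\bigact\varrho)=\rho\bigact(\rho'\bigact x)$; comparing $\rho\bigact(\rho'\bigact[a]x)$ with the value prescribed by Definition~\ref{defn.hol.interpret.terms} for $\holmodel{\lam{a}u}(\rho\bigact\varrho)$ requires $\alpha$-converting the bound atom to be simultaneously fresh for $\nontriv(\rho)$ and for all of the shifted supports, and then rearranging using both rules defining $\sim$ in Definition~\ref{defn.free.ren}.

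For the corollary: applying the first part to the valuation $\varrho[X\ssm x]$ gives $\holmodel{\xi}((\rho\bigact\varrho)[X\ssm\rho\bigact x])=\holmodel{\xi}(\varrho[X\ssm x])$, using the identity $\rho\bigact(\varrho[X\ssm x])=(\rho\bigact\varrho)[X\ssm\rho\bigact x]$. The two valuations $(\rho\bigact\varrho)[X\ssm\rho\bigact x]$ and $\varrho[X\ssm\rho\bigact x]$ agree on $X$ but potentially differ on $\fv(\xi)\setminus\{X\}$, where one is $\rho\bigact\varrho(Y)$ and the other $\varrho(Y)$. To close this gap I construct a second renaming $\sigma$ that inverts the action of $\rho$ on the finite set $\bigcup_{Y\in\fv(\xi)\setminus\{X\}}\supp(\varrho(Y))$ while acting trivially on $\supp(\rho\bigact x)$; a preliminary freshening of $\supp(x)$ away from the relevant atoms (justified by another use of the first part) resolves any incompatibility between these two requirements. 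A second application of the first part together with Lemma~\ref{lemm.fV.hol.denot} then delivers the desired equality $\holmodel{\xi}(\varrho[X\ssm\rho\bigact x])=\holmodel{\xi}(\varrho[X\ssm x])$.
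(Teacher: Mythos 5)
Your treatment of the first bullet is essentially sound and is, if anything, more explicit than the paper's: the paper inducts directly on propositions and silently relies on the term-level fact $\holmodel{t}(\rho\bigact\varrho)=\rho\bigact\holmodel{t}(\varrho)$ in the atomic case $\tf g_{\smtf P}\,t$ (via Corollary~\ref{corr.unren.prop}), which you state and prove outright. One caveat: in the $\forall_\beta$ case your appeal to ``$\forall_\beta$ interprets to an equivariant element'' hides a real dependency. Unwinding Definition~\ref{defn.exp.ren.action}, equivariance there amounts to $\f{min}\{f(\rho_1\bigact y)\mid y\}=\f{min}\{f(y)\mid y\}$ for $f=\lam{y}\holmodel{\xi'}(\varrho[X\ssm y])$, and that is precisely an instance of the \emph{second} bullet applied to the body $\xi'$. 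This is why the paper proves the two bullets by a mutual induction (``the case of $\forall$ follows using the second part''); your induction needs the same interleaving.

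The second bullet is where the genuine gap lies. You propose a renaming $\sigma$ that ``inverts the action of $\rho$'' on $\bigcup_{Y\in\fv(\xi)\setminus\{X\}}\supp(\varrho(Y))$. Renamings are not invertible, and this is not a removable technicality: if $\rho$ identifies two atoms $a,b\in\supp(\varrho(Y))$ for some $Y\in\fv(\xi)\setminus\{X\}$, then by Lemma~\ref{lemm.supp.subsets} $\supp(\rho\bigact\varrho(Y))\subseteq\rho\act\supp(\varrho(Y))$ has lost an atom, and no renaming $\sigma$ can satisfy $\sigma\bigact(\rho\bigact\varrho(Y))=\varrho(Y)$ --- support, once collapsed, cannot be regenerated. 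Your preliminary freshening of $\supp(x)$ resolves only the clash between the two requirements you place on $\sigma$; it does nothing about the non-injectivity of $\rho$ on the other variables' supports, which is internal to them. (A smaller slip: that union of supports is not finite --- supports are permission-set-sized --- only its intersection with $\nontriv(\rho)$ is.) The structural problem is that your step from $\holmodel{\xi}((\rho\bigact\varrho)[X\ssm\rho\bigact x])$ back to $\holmodel{\xi}(\varrho[X\ssm\rho\bigact x])$ asks to undo $\rho$ on all variables but $X$, which is at least as hard as the statement being proved. The paper's route avoids inversion entirely: it leaves $\varrho$ untouched on the other variables and instead relates $x$ to $\rho\bigact x$ through a \emph{freshening pair} $\rho_1,\rho_2$ (Definition~\ref{defn.freshening.pair}), whose whole point is that $\rho_1$ is injective on the finitely many problematic atoms and $\rho_2\circ\rho_1=\id$ there, so that the only renamings ever ``undone'' are ones constructed to be undoable, with domains kept clear of $\supp(\varrho(Y))$ for $Y\neq X$. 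Your argument needs to be reorganised along these lines before it closes.
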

\begin{proof}
We work by induction on $\xi$. 
For each $\xi$ the corollary follows from the first part using a freshening pair of renamings (see Definition~\ref{defn.freshening.pair}).
For the first part, the case of $\tf g_{\smtf P}$ is by Corollary~\ref{corr.unren.prop}.
The case of $\forall$ follows using the second part and some routine calculations.
The cases of $\bot$ and $\limp$ are immediate. 
\end{proof}

\begin{rmrk}
Lemma~\ref{lemm.rho.varrho} expresses that $\holmodel{\xi}$ does not examine atoms for inequality across its arguments (if it did then Lemma~\ref{lemm.rho.varrho} could not hold, because $\rho$ can identify atoms---make them become equal---in the denotations of variables in $\xi$). 
The corollary is even more powerful: we can even apply renamings to the denotations of individual free variables, and still not affect validity.

We use this in the case of $\Forall{X}\phi$ in Lemma~\ref{lemm.commuting.square} to `jettison' unwanted $\rho$ in the denotation of the quantified variable.
\end{rmrk}

\subsection{The completeness proof}

\begin{nttn}
\label{nttn.D}
Suppose $D=[d_1,\ldots,d_n]$ is a finite list of distinct atoms in $\mathbb A_{\nu_1}$, \ldots, $\mathbb A_{\nu_n}$ respectively.
Suppose $r:\alpha$ is a PNL term.
Then:
\begin{itemize*}
\item
Write $[D]r$ for the PNL term $[d_1]\ldots[d_n]r$.
\item
Write $[\mathbb A_D]\alpha$ for the PNL sort $[\mathbb A_{\nu_1}]\ldots[\mathbb A_{\nu_n}]\alpha$. 
\end{itemize*}
\end{nttn}

\begin{defn}
\label{defn.epsilond}
Given a finite list of distinct atoms $D$, map a PNL valuation $\varsigma$ to a HOL valuation $D(\varsigma)$ defined by 
\begin{frameqn}
D(\varsigma)\quad\text{maps}\quad
\begin{array}[t]{l@{\quad\text{to}\quad}l} 
X:\alpha & \id\bigact [\GammaX]\varsigma(X)\in\holmodel{\hol{}{[\mathbb A_{\GammaX}]\alpha}}\quad\text{and}
\\
a:\nu & a\in\mathbb A_\nu
\end{array}
\end{frameqn}
\end{defn}

Compare Lemma~\ref{lemm.abs.conc.pi} with Lemma~\ref{lemm.hol.pi}:
\begin{lemm}
\label{lemm.abs.conc.pi}
Suppose ${x\in\model{\alpha}}$ and ${\f{nontriv}(\pi)\cap\supp(x)\subseteq D'}$.
Then 
$$
(\id\bigact[D']x)\pi\act D' =\id\bigact (\pi\act x).
$$
\end{lemm}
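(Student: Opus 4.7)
The plan is to induct on $n = |D'|$. For the base case $D' = []$, both sides equal $\id\bigact x$: the right-hand side since $\nontriv(\pi)\cap\supp(x)=\varnothing$ forces $\pi\act x = x$ (permutations agreeing on $\supp(x)$ act the same by Definition~\ref{defn.finsupp}), and the left-hand side is $\id\bigact x$ with no arguments to apply.

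For the inductive step, write $D' = D'', d$ where $d$ is the last atom, so $[D']x = [D''][d]x$. By Lemma~\ref{lemm.properties.of.support}, $\supp([d]x) = \supp(x)\setminus\{d\}$, and thus $\nontriv(\pi)\cap\supp([d]x) \subseteq D''$. The inductive hypothesis applied to $[d]x$ and the shorter list $D''$ yields
$$(\id\bigact[D''][d]x)\pi\act D'' \;=\; \id\bigact(\pi\act[d]x) \;=\; \id\bigact[\pi(d)](\pi\act x),$$
using that the level-1 permutation action satisfies $\pi\act[d]x = [\pi(d)]\pi\act x$ by Definition~\ref{defn.abstraction.sets}.

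It remains to apply the resulting value to $\pi(d)$ and simplify. The application clause of Definition~\ref{defn.hol.interpret.terms} requires the abstracted atom to lie outside $\nontriv(\id)\cup\{\pi(d)\}$, so we first $\alpha$-convert via Lemma~\ref{lemm.supp.abstraction}, picking fresh $a'\not\in\supp(\pi\act x)\cup\{\pi(d)\}$ to get $[\pi(d)]\pi\act x = [a']((a'\,\pi(d))\act\pi\act x)$. The application rule then delivers
$$[a'\ssm\pi(d)]\bigact\bigl((a'\,\pi(d))\act(\pi\act x)\bigr).$$
Rule~2 of Definition~\ref{defn.free.ren} (applied in reverse) absorbs the swapping into the renaming, giving $\bigl([a'\ssm\pi(d)]\circ(a'\,\pi(d))\bigr)\bigact(\pi\act x)$. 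A direct calculation shows $[a'\ssm\pi(d)]\circ(a'\,\pi(d)) = [a'\ssm\pi(d)]$, and since $a'\not\in\supp(\pi\act x)$ this renaming agrees with $\id$ on $\supp(\pi\act x)$; rule~1 of Definition~\ref{defn.free.ren} then equates the whole expression with $\id\bigact(\pi\act x)$.

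The main obstacle is really just bookkeeping: tracking the $\alpha$-conversion side conditions imposed by the abstraction-as-total-function clause of Definition~\ref{defn.hol.interpret.terms}, and verifying that the accumulated renaming-swapping composition collapses modulo $\sim$ to the identity on the supporting set. Conceptually, the lemma says that a sequence of HOL-style total-function applications unfolds back to a single nominal permutation on the underlying element, which is exactly the compatibility that makes the commuting square of Remark~\ref{rmrk.outline} work.
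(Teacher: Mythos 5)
Your proof is correct and takes essentially the same route as the paper, which compresses the whole argument into a one-line appeal to the application clause of Definition~\ref{defn.hol.interpret.terms} and rule~2 of Definition~\ref{defn.free.ren}; your induction on $|D'|$ simply makes explicit the iteration, the $\alpha$-conversion side condition, and the collapse of $[a'\ssm\pi(d)]\circ(a'\ \pi(d))$ to $[a'\ssm\pi(d)]$ followed by rule~1 garbage-collection. All the steps check out.
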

\begin{proof}
From Definition~\ref{defn.hol.interpret.terms} and rule~2 of Definition~\ref{defn.free.ren}.
\end{proof}

Lemma~\ref{lemm.commuting.square} proves that the schematic diagram of Remark~\ref{rmrk.outline} does indeed commute:
\begin{lemm}
\label{lemm.commuting.square}
Suppose $r:\alpha$ and $\phi$ is a proposition.
Then:
\begin{itemize*}
\item
If $D\cent r$ then $\holmodel{\hol{D}{r}}(D(\varsigma))=\id\bigact\model{r}(\varsigma)$.
\item
If $D\cent\phi$ then $\holmodel{\hol{D}{\phi}}(D(\varsigma))=\model{\phi}(\varsigma)$.
\end{itemize*}
\end{lemm}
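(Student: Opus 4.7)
The plan is to do a simultaneous induction on the structure of $r$ and $\phi$, using the capture typing $D\cent r$ / $D\cent\phi$ from Figure~\ref{fig.capture.typings} as an invariant: at each step the premise of the typing rule supplies exactly the side condition the corresponding semantic clause needs. Most cases are bookkeeping; four cases carry the mathematical content.

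First, the routine base cases. For $a$, the translation yields the HOL variable $a$ and $D(\varsigma)(a)=a=\id\bigact a$ in $\Ren{\mathbb A_\nu}$ (Lemma~\ref{lemm.BA.ren.isos}), matching $\model{a}(\varsigma)=a$ up to the wrapping $\id\bigact{-}$. For $(r_1,\ldots,r_n)$, $\tf f(r)$, $\tf P(r)$, $\bot$, and $\phi\limp\psi$, I would apply the inductive hypothesis to the subterms and use the corresponding clauses of Definitions~\ref{defn.hol.interpret.terms}, \ref{defn.hol.interpretation}, and~\ref{defn.truth}, together with the fact that $\holmodel{\tf g_{\smtf f}}=\Ren{\tf f^\iden}$ and $\holmodel{\tf g_{\smtf P}}=\Ren{\tf P^\iden}$. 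For the proposition-former case we additionally invoke Corollary~\ref{corr.unren.prop} to strip the leading renaming, which is how the wrapping $\id\bigact{-}$ in the term case gets discarded when we move to the Boolean-valued proposition case.

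The first interesting case is $\pi\act X$. By Definition~\ref{defn.translation}, $\hol{D}{\pi\act X}=\XD\,\pi\act(\GammaX)$, and $D(\varsigma)(\XD)=\id\bigact[\GammaX]\varsigma(X)$. The capture typing premise $(\f{nontriv}(\pi)\cap\pmss(X))\subseteq D$ together with Lemma~\ref{lemm.supp.r} gives $\f{nontriv}(\pi)\cap\supp(\varsigma(X))\subseteq \GammaX$, so Lemma~\ref{lemm.abs.conc.pi} applies and yields $\holmodel{\XD\,\pi\act(\GammaX)}(D(\varsigma))=\id\bigact(\pi\act\varsigma(X))=\id\bigact\model{\pi\act X}(\varsigma)$, using also Lemma~\ref{lemm.pi.r.model}. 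The case $[a]r$ uses the first clause of the $\lam{a}t$ rule in Definition~\ref{defn.hol.interpret.terms}: by the inductive hypothesis $\holmodel{\hol{D}{r}}(D(\varsigma))=\id\bigact\model{r}(\varsigma)$, and choosing $a$ fresh for the valuation (which we may by $\alpha$-conversion and the hypothesis $D\cent r:\{a\}$) the clause returns $\id\bigact[a]\model{r}(\varsigma)=\id\bigact\model{[a]r}(\varsigma)$.

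The main obstacle, and the last case, is $\Forall{X}\phi$. Here $\hol{D}{\Forall{X}\phi}=\forall\lam{\XD}\hol{D}{\phi}$ and the HOL semantics ranges $\XD$ over the full $\holmodel{\hol{}{[\mathbb A_{\GammaX}]\sort(X)}}=\Ren{\model{[\mathbb A_{\GammaX}]\sort(X)}}$, whereas PNL ranges $X$ only over $x\in\model{\sort(X)}$ with $\supp(x)\subseteq\pmss(X)$. The strategy is to show that for each element of the HOL domain there is an equivalent value of the form $\id\bigact[\GammaX]x$ with $\supp(x)\subseteq\pmss(X)$, so that the two minima agree. Concretely, an element of the HOL domain has the form $\rho\bigact[D']y$ for some $y\in\model{\sort(X)}$ (by iterated use of Lemma~\ref{lemm.supp.abstraction} and rule~1 of Definition~\ref{defn.free.ren}); the corollary of Lemma~\ref{lemm.rho.varrho} lets us replace this by $\id\bigact[D']y$ without changing the denotation of $\hol{D}{\phi}$, and by $\alpha$-renaming inside the abstraction (using Lemma~\ref{lemm.supp.abstraction} again and equivariance) we can assume $D'=\GammaX$ and $\supp(y)\subseteq\pmss(X)$. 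The inductive hypothesis then matches $\holmodel{\hol{D}{\phi}}(D(\varsigma)[\XD\ssm\id\bigact[\GammaX]y])$ with $\model{\phi}(\varsigma[X\ssm y])$, because $D(\varsigma[X\ssm y])$ agrees with $D(\varsigma)[\XD\ssm\id\bigact[\GammaX]y]$ on every HOL variable free in $\hol{D}{\phi}$ (Lemma~\ref{lemm.fV.hol.denot}). Taking minima on both sides completes the case.
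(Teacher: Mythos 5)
Your proposal is correct and follows essentially the same route as the paper's proof: the same four substantive cases ($\pi\act X$ via Lemma~\ref{lemm.abs.conc.pi} and the capture-typing side condition, $[a]r$ via the abstraction clause of Definition~\ref{defn.hol.interpret.terms}, $\tf P(r)$ via Corollary~\ref{corr.unren.prop}, and $\Forall{X}\phi$ by normalising elements of the HOL domain to the form $\id\bigact[\GammaX]x''$ using Lemma~\ref{lemm.rho.varrho} so the two minima coincide). The only quibble is that the containment $\supp(\varsigma(X))\subseteq\pmss(X)$ used in the $\pi\act X$ case comes directly from the definition of a valuation (Definition~\ref{defn.valuation}) rather than from Lemma~\ref{lemm.supp.r}, but this is a citation slip, not a gap.
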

\begin{proof}
By inductions on $r$ and $\phi$.
\begin{itemize*}
\item
\emph{The case $\pi\act X$.}\quad
We reason as follows: 
\begin{tab3}
\holmodel{\hol{D}{\pi\act X}}(D(\varsigma))
=& \holmodel{\XD\pi\act(\GammaX)}(D(\varsigma))
&\text{Definition~\ref{defn.translation}}
\\
=& D(\varsigma)(\XD)\pi\act(\GammaX)
&\text{Definition~\ref{defn.hol.interpret.terms}}
\\
=& (\id\bigact[\GammaX]\varsigma(X))\pi\act(\GammaX)
\\
&&\text{Definition~\ref{defn.epsilond}}
\\
=& \id\bigact\pi\act\varsigma(X)
&\text{Lemma~\ref{lemm.abs.conc.pi}},\ 
\\ & & \quad\supp(\varsigma(X)){\subseteq}\pmss(X) 
\\
=& \id\bigact \model{\pi\act X}(\varsigma)
&\text{Definition~\ref{defn.interpret.terms}}
\end{tab3}
Note of the penultimate step that by assumption $D\cent r$, so by Definition~\ref{defn.capture.typing} $\f{nontriv}(\pi)\cap \pmss(X)\subseteq \GammaX$.
\item
\emph{The case $[a]r$.}\quad
By Definition~\ref{defn.translation}
\begin{tab3}
&{\holmodel{\hol{D}{[a]r}}(D(\varsigma))=\holmodel{\lam{a}\hol{D}{r}}(D(\varsigma)).}\hspace{-20em}
\end{tab3}
By inductive hypothesis 
$\holmodel{\hol{D}{r}}(D(\varsigma))=\id\bigact\model{r}(\varsigma)$
and so by Definition~\ref{defn.hol.interpret.terms} (eliding routine freshness reasoning for $a$)
\begin{tab3}
&{\holmodel{\lam{a}\hol{D}{r}}(D(\varsigma))=\id\bigact[a]\model{r}(\varsigma).}\hspace{-20em}
\end{tab3}
Finally by Definition~\ref{defn.interpret.terms} 
\begin{tab3}
&{\id\bigact [a]\model{r}(\varsigma)=\id\bigact \model{[a]r}(\varsigma)}\hspace{-20em}
\end{tab3}
and by transitivity of equality we are done.
\item
\emph{The case $\tf P(r)$.}\quad
We reason as follows:
\begin{tab3}
\holmodel{\hol{D}{\tf P(r)}}(D(\varsigma)) 
=& \holmodel{\tf g_{\smtf P}(\hol{D}{r})}(D(\varsigma))
&\text{Definition~\ref{defn.translation}}
\\
=& \tf g_{\smtf P}^\hiden(\holmodel{\hol{D}{r}}(D(\varsigma))) 
&\text{Definition~\ref{defn.truth}}
\\
=& \tf g_{\smtf P}^\hiden(\id\bigact\model{r}(\varsigma))
&\text{part~1}
\\
=& \Ren{\tf P^\iden}(\id\bigact\model{r}(\varsigma))
&\text{Definition~\ref{defn.hol.interpret.terms}} 
\\
=& \tf P^\iden(\model{r}(\varsigma))
&\text{Corollary~\ref{corr.unren.prop}}
\\
=& \model{\tf P(r)}(\varsigma)
&\text{Definition~\ref{defn.truth}}
\end{tab3}
\item
\emph{The case $\Forall{X}\phi$.}\quad
Write $\alpha=\sort(X)$. 
From Definition~\ref{defn.hol.interpret.terms}
$$
\holmodel{\hol{D}{\Forall{X}\phi}}(D(\varsigma)) 
=\f{min}\{\holmodel{\hol{D}{\phi}}(D(\varsigma)[X\ssm x])\mid x\in\holmodel{\hol{}{[\mathbb A_{\GammaX}]\alpha}}\} 
$$
By construction in Definition~\ref{defn.hol.interpretation} every $x\in\holmodel{\hol{}{[\mathbb A_{\GammaX}]\alpha}}$ has the form $\rho\bigact x'$ for $x'\in [\GammaX]\model{\alpha}$.
By Lemma~\ref{lemm.rho.varrho} we have
\begin{multline*}
\f{min}\{\holmodel{\hol{D}{\phi}}(D(\varsigma)[X\ssm x])\mid x\in\holmodel{\hol{}{[\mathbb A_{\GammaX}]\alpha}}\} 
\\
=
\f{min}\{\holmodel{\hol{D}{\phi}}(D(\varsigma)[X\ssm \id\bigact x'])\mid x'\in\model{[\mathbb A_{\GammaX}]\alpha}\}  
\end{multline*}
Using Lemma~\ref{lemm.rho.varrho} again we assume without loss of generality that $\supp([\GammaX]x')\subseteq\pmss(X)\setminus\GammaX$, and so:
\begin{multline*}
\f{min}\{\holmodel{\hol{D}{\phi}}(D(\varsigma)[X\ssm \id\bigact [\GammaX]x'])\mid x'\in\model{[\mathbb A_{\GammaX}]\alpha}\}  
\\
=\f{min}\{\holmodel{\hol{D}{\phi}}(D(\varsigma)[X\ssm \id\bigact x''])\mid x''\in\model{\alpha},\ \supp(x''){\subseteq}\pmss(X)\}  
\end{multline*}
Now we unfold definitions and use the inductive hypothesis which tells us that $D(\varsigma)[X\ssm \id\bigact [\GammaX]x'']=D(\varsigma[X\ssm x''])$, and we obtain:
$$
\hspace{-2em}\begin{array}{r@{}l}
\f{min}\{\holmodel{\hol{D}{\phi}}(D(\varsigma)[X\ssm \id&\bigact [\GammaX]x''])\mid x''\in\model{\alpha},\ \supp(x''){\subseteq}\pmss(X)\}  
\\
&=\f{min}\{\holmodel{\hol{D}{\phi}}(D(\varsigma[X\ssm x'']))\mid x''\in\model{\alpha},\ \supp(x''){\subseteq}\pmss(X)\}  
\\
&=\f{min}\{\model{\phi}(\varsigma[X\ssm x''])\mid x''\in\model{\alpha},\ \supp(x''){\subseteq}\pmss(X)\}  
\\
&=\model{\Forall{X}\phi}(\varsigma)
\end{array}
$$
\end{itemize*}
\end{proof}

\begin{corr}
\label{corr.notsubseteq}
Suppose $\Phi=\{\phi_1,\dots,\phi_n\}$ and $\Psi=\{\psi_1,\dots,\psi_p\}$ and $D\cent\Phi$, and $D\cent\Psi$ (Definition~\ref{defn.capture.typing}).
Suppose $\mathcal I$ is a PNL interpretation and suppose $\phi_1,\dots,\phi_n\nopicent\psi_1,\dots,\psi_p$ is not valid in $\mathcal I$.

Then $\mathcal H$ from Definition~\ref{defn.hol.interpretation} is a HOL interpretation and
$\hol{D}{\phi_1},\dots,\hol{D}{\phi_n}\holcent\hol{D}{\psi_1},\dots,\hol{D}{\psi_p}$ is not valid in $\mathcal H$.
\end{corr}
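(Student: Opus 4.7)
The plan is to observe that Corollary~\ref{corr.notsubseteq} is essentially a packaging of the commuting-square result of Lemma~\ref{lemm.commuting.square}, together with the definition of validity for PNL sequents. The key point is that non-validity of $\Phi\nopicent\Psi$ in $\mathcal{I}$ gives a PNL counter-valuation, and $D(\varsigma)$ from Definition~\ref{defn.epsilond} turns this into a HOL counter-valuation.

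First I would verify that $\mathcal{H}$ is indeed a HOL interpretation. This is by inspection of Definition~\ref{defn.hol.interpretation} and Definition~\ref{defn.hol.interpret.terms}: the base type $o$ is mapped to $\mathbb{B}$, every type $\hol{}{\alpha}$ is mapped to $\Ren{\model{\alpha}}$ (which is in \theory{PmsRen} by Definition~\ref{defn.free.ren}), and remaining types are built using product and $\Rightarrow$, both of which preserve being in \theory{PmsRen}. The constants $\tf g_{\smtf f}$ and $\tf g_{\smtf P}$ are interpreted via $\Ren{\tf f^\iden}$ and $\Ren{\tf P^\iden}$, which are well-defined by functoriality of $\Ren{-}$, and $\bot$, $\limp$, $\forall_\beta$ are given their standard interpretations.

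Next, I would unpack the hypothesis. Since $\phi_1,\dots,\phi_n\nopicent\psi_1,\dots,\psi_p$ is not valid in $\mathcal{I}$, by Definition~\ref{defn.truth} (identifying sequent validity with validity of the corresponding implication, as in Definition~\ref{defn.hol.ment}'s PNL analogue) there is a PNL valuation $\varsigma$ such that $\denot{\mathcal{I}}{\varsigma}{\phi_i}=1$ for every $i\in\{1,\dots,n\}$ and $\denot{\mathcal{I}}{\varsigma}{\psi_j}=0$ for every $j\in\{1,\dots,p\}$. Form the corresponding HOL valuation $D(\varsigma)$ using Definition~\ref{defn.epsilond}; it is straightforward to check that it satisfies the two clauses of Definition~\ref{defn.hol.valuation}.

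Finally I would apply Lemma~\ref{lemm.commuting.square} pointwise. Because $D\cent\phi_i$ and $D\cent\psi_j$ by assumption, the second bullet of the lemma gives
\[
\holmodel{\hol{D}{\phi_i}}(D(\varsigma))=\denot{\mathcal{I}}{\varsigma}{\phi_i}=1
\quad\text{and}\quad
\holmodel{\hol{D}{\psi_j}}(D(\varsigma))=\denot{\mathcal{I}}{\varsigma}{\psi_j}=0.
\]
Thus $D(\varsigma)$ witnesses that $(\hol{D}{\phi_1}\wedge\cdots\wedge\hol{D}{\phi_n})\limp(\hol{D}{\psi_1}\vee\cdots\vee\hol{D}{\psi_p})$ evaluates to $0$ in $\mathcal{H}$, so the translated sequent is not valid in $\mathcal{H}$. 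There is no real obstacle here; all difficulty has been absorbed into Lemma~\ref{lemm.commuting.square}, and the only thing to be careful about is ensuring the translation of $D(\varsigma)$ meets the HOL valuation conditions, which is immediate from Definition~\ref{defn.epsilond}.
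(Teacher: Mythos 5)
Your proposal is correct and follows essentially the same route as the paper: take the PNL counter-valuation $\varsigma$, pass to $D(\varsigma)$ via Definition~\ref{defn.epsilond}, and apply Lemma~\ref{lemm.commuting.square} to each $\phi_i$ and $\psi_j$. The paper's own proof is a two-line version of exactly this argument, with the checks on $\mathcal H$ and on $D(\varsigma)$ being a valuation left implicit.
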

\begin{proof}
Suppose $\varsigma$ is such that $\model{\phi_1\land\dots\land\phi_n}(\varsigma)=1$ and $\model{\psi_1\lor\dots\lor\psi_p}(\varsigma)=0$.
We use Lemma~\ref{lemm.commuting.square} for $D(\varsigma)$ (Definition~\ref{defn.epsilond}).
\end{proof}

\begin{thrm}[Completeness]
\label{thrm.PNL.HOL.complete}
Suppose $D\cent\Phi$ and $D\cent\Psi$.
If $\Phi\not\nopicent\Psi$ then $\hol{D}{\Phi}\not\holcent\hol{D}{\Psi}$.
\end{thrm}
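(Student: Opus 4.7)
The plan is to proceed by contraposition and semantic reasoning, chaining together completeness of restricted PNL, the commuting-square of Lemma~\ref{lemm.commuting.square}, and soundness of HOL. Concretely, assume $\Phi\not\nopicent\Psi$. We first invoke completeness of restricted PNL with respect to non-equivariant interpretations in \theory{PmsPrm} (sketched in Appendix~\ref{sect.completeness}) to obtain some PNL interpretation $\mathcal I$ and valuation $\varsigma$ witnessing that the sequent is not valid, i.e.\ $\model{\phi}(\varsigma)=1$ for every $\phi\in\Phi$ while $\model{\psi}(\varsigma)=0$ for every $\psi\in\Psi$.

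Next I would apply Corollary~\ref{corr.notsubseteq}: from $\mathcal I$ the free construction $\Ren{\text{-}}$ of Section~\ref{subsect.free.ext} produces the HOL interpretation $\mathcal H$ of Definition~\ref{defn.hol.interpretation}, and the valuation $\varsigma$ is transported to the HOL valuation $D(\varsigma)$ of Definition~\ref{defn.epsilond}. The hypotheses $D\cent\Phi$ and $D\cent\Psi$ are exactly what Lemma~\ref{lemm.commuting.square} requires, so for every $\phi\in\Phi$ we get $\holmodel{\hol{D}{\phi}}(D(\varsigma))=\model{\phi}(\varsigma)=1$, and dually $\holmodel{\hol{D}{\psi}}(D(\varsigma))=0$ for every $\psi\in\Psi$. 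Hence $\hol{D}{\Phi}\holcent\hol{D}{\Psi}$ is not valid in $\mathcal H$.

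Finally, soundness of HOL (Theorem~\ref{thrm.hol.soundness}) contrapositively gives $\hol{D}{\Phi}\not\holcent\hol{D}{\Psi}$, which is what we wanted. So the whole argument is really a one-line composition once the three prerequisite theorems are in hand: PNL-completeness, the commuting-square lemma, and HOL-soundness.

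The main technical content lives upstream, not in this theorem itself. The genuinely delicate step is Lemma~\ref{lemm.commuting.square}, and in particular its $\Forall{X}\phi$ case, where one must reconcile quantification in $\mathcal I$ (over $x\in\model{\alpha}$ with $\supp(x)\subseteq\pmss(X)$) with quantification in $\mathcal H$ (over arbitrary $x\in\holmodel{\hol{}{[\mathbb A_{\GammaX}]\alpha}}$, which by Definition~\ref{defn.hol.interpretation} has the form $\rho\bigact x'$). The way Lemma~\ref{lemm.rho.varrho} lets one ``jettison'' the extraneous $\rho$ is what makes the square commute, and hence what makes completeness possible at all. Modulo that and the completeness result for restricted PNL in Appendix~\ref{sect.completeness}, the present theorem is an immediate corollary.
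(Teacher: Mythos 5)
Your proposal is correct and follows exactly the paper's own argument: contrapositive of completeness of restricted PNL (Theorem~\ref{thrm.reduced.pnl.completeness}), then Corollary~\ref{corr.notsubseteq} (which packages Lemma~\ref{lemm.commuting.square}), then the contrapositive of HOL soundness (Theorem~\ref{thrm.hol.soundness}). Your remarks about the real work living in the $\Forall{X}\phi$ case of the commuting-square lemma match the paper's own assessment.
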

\begin{proof}
We use the contrapositive of completeness of restricted PNL (Theorem~\ref{thrm.reduced.pnl.completeness}), then Corollary~\ref{corr.notsubseteq}, then the contrapositive of HOL soundness (Theorem~\ref{thrm.hol.soundness}).
\end{proof} 

\section{Conclusions}

We have translated a logic with its own proof-theory, syntax, and sound and complete semantics.
Any formal theory specified in the PNL fragment of this paper can be systematically, soundly, and completely translated to HOL.

For the reader interested in nominal techniques, the main contribution of this paper is that in 
proving completeness of the translation, we have given another 
semantics of permissive nominal logic, besides the `obvious' one
in nominal sets. In this new semantics, a term of
the form $[a]t$ is interpreted as a function, like $\lam{a}t$ would be in higher-order logic. 
This shows at the semantic level an implicit similarity between PNL and HOL (we discuss presheaves in the next Subsection).

For the reader interested in higher-order logic, this paper is of interest because its image is readily identified with the \emph{higher-order patterns} developed by Miller \cite{miller:logpll} (so that, intuitively, restricted PNL could be thought of as a compact first-order logic and nominal semantics for higher-order patterns).

In this semantics the sort $[\mathbb A]\alpha$ is not interpreted as the set of all
functions from atoms to the interpretation of $\alpha$,
but as a small subset of this function space. 
This is an old idea: since Henkin, models of HOL have been constructed to cut down on the full function-space (e.g. to create a complete semantics \cite[Section~55]{andrews:intmlt}).
Moreover in weak HOAS to avoid so-called \emph{exotic terms}, function 
existence axioms must be weakened in HOL: for instance, the
description axiom that entails the existence of a function for all
functional relations has to be dropped (an alternative is to introduce an explicit modality \cite{despeyroux:prirh-jv}).
We now have a new view of these `smaller' function-spaces as being the image of nominal atoms-abstractions via the semantic operations considered in this paper.

\subsection{The big picture}

It is quite interesting to understand this paper via a contrast between what nominal foundations and `ordinary' foundations provide.

The denotation of an open term in PNL is an open element of a nominal set (i.e. an element with atoms in its support): so for instance an atom maps just to itself in the denotation, rather than being assigned some other element by a valuation.
PNL has binding term-formers because its open denotation in nominal sets provides constructions like atoms, permutations, and atoms-abstraction (Definition~\ref{defn.abstraction.sets}); see Section~\ref{sect.permissive-nominal.sets}.
 
In contrast, in HOL there is no concept of open element---there are open \emph{terms}, but their denotation is closed by a valuation.  
If we want the effect of a free variable then we Skolemise/raise.
Indeed, \emph{any} mathematics using technology based on the HOL/ZF (Zermelo--Fraenkel sets) foundations of mathematics \emph{must} handle names either as something like functional arguments or something like numbers; simply because numbers and functions are what HOL/ZF foundations provide.

The translation from PNL to HOL in Section~\ref{sect.translation.sound} works by raising.\footnote{If we translated PNL to first-order logic then we would probably map atoms to numbers instead. This is future work.} 
However, we can only raise $n$ variables, in order. 
So the translation has to be on a per-derivation basis, including the (finitely many) atoms of interest in that (finite) derivation.
Furthermore, we lose the equivariance (name symmetry) of full PNL.
So we can only naturally translate \emph{individual} derivations in \emph{restricted} PNL.

This matters because in losing symmetry we lose what makes nominal techniques so distinctive. 
So although we show how to translate a complete `nominal' proof to a complete `HOL' proof, we also see how the way in which nominal and HOL proofs are manipulated and combined, are different.

The translation is not entirely trivial to define and prove sound, but the technically hardest part in this paper is clearly the proof of its completeness.
For this we build a hybrid denotation for HOL in nominal renaming sets which is sound but in which certain function spaces are restricted to be `not too large'.
That motivates the bulk of the technical mathematics of this paper.

\subsection{Permissive nominal logic in perspective}
\label{subsect.pnl.in.perspective}

Permissive-nominal logic is the endpoint---so far---of an evolution as follows: 
\begin{itemize*}
\item
Fraenkel-Mostowski set theory and a first-order axiomatisation by Pitts introduced and described the underlying nominal sets models in first-order logic \cite{gabbay:newaas-jv,pitts:nomlfo-jv}.
\item
Nominal terms introduced a dedicated syntax with two-levels of variable and freshness side-conditions \cite{gabbay:nomu-jv}.
\item
Nominal algebra and $\alpha$Prolog inserted nominal terms syntax into formal reasoning systems \cite{gabbay:nomuae,cheney:alppl}.
\item
Permissive-nominal terms introduced permission sets \cite{gabbay:perntu-jv}.
\item
PNL introduced a proof-theory and universal quantifier for nominal terms unknowns \cite{gabbay:pernl,gabbay:pernl-jv}. 
\end{itemize*}
Meanwhile in the semantics
\begin{itemize*}
\item
Nominal renaming sets extended nominal sets from a permutation action to a renaming action \cite{gabbay:nomrs}.
\item
A permissive version of nominal algebra (an equality fragment of PNL) was given semantics in \theory{PmsPrm} and theories were translated from HOL \cite{gabbay:unialt}, but this was done purely syntactically without using nominal renaming sets and without considering universal quantification.
\end{itemize*}

The categories \theory{PmsPrm} and \theory{PmsRen} from Definition~\ref{defn.fps} are identical to the categories of nominal sets and nominal renaming sets from \cite{gabbay:newaas-jv} and \cite{gabbay:nomrs}, except that here we insist on supporting \emph{permission} sets instead of supporting \emph{finite} sets.
 
The reader familiar with presheaf techniques will see in \theory{PmsRen} the category $\mathsf{Sets}^{\mathbb F}$ (presheaves over the category of finite sets and functions between them).
\theory{PmsRen} corresponds to presheaves (not quite over $\mathbb F$, as discussed in the previous paragraph) that preserve pullbacks of pairs of monos \cite{gabbay:nomrs} and because of this it admits an arguably preferable sets-based presentation.
(In the same sense, \theory{PmsPrm} corresponds to $\mathsf{Sets}^{\mathbb I}$.) 

If for the sake of argument we set aside the issues of finiteness and preserving pullbacks of monos, then this paper can be summed up as follows: PNL, and thus nominal terms, can be given a semantics in something that looks like $\mathsf{Sets}^{\mathbb F}$.
This semantics is functional in that atoms-abstractions in $\mathsf{Sets}^{\mathbb F}$ can be naturally identified with total functions, though not all of them, which is good.
HOL can also be given a semantics in something that looks like $\mathsf{Sets}^{\mathbb F}$, and in such a way that it overlaps with the semantics of PNL, as described in Definition~\ref{defn.hol.interpret.terms} and~\ref{lemm.commuting.square}.
We describe and exploit that overlap, in this paper.

\theory{PmsRen} from Definition~\ref{defn.fps} is related to the category of (finitely-supported) nominal renaming sets from \cite{gabbay:nomrs}.
Here, the difference that $x\in|\rs X|$ need not have finite support is significant because it is impossible with a finite renaming to rename $\supp(x)$ to be entirely disjoint for some other permission set $S$.
The definitions and proofs in Subsection~\ref{subsect.exp} are delicately revised with respect to those in \cite[Section~3]{gabbay:nomrs}.
Thus this paper contributes to the use of non-finitely-supported objects in nominal techniques, building on \cite{gabbay:nomrs} and also on Cheney's and the second author's considerations of infinitely supported permutation sets \cite{cheney:comhtn,gabbay:genmn}.

A similar construction as in Subsection~\ref{subsect.free.ext} has been considered, also in the context of names, though tersely, in Fiore and Turi's paper on the semantics of name and value passing \cite{fiore:semnvp}.
The reader can compare for example the final two paragraphs of Subsection~1.3 in \cite{fiore:semnvp} with Definition~\ref{defn.free.ren} from Subsection~\ref{subsect.free.ext}.
Fiore and Turi want substitutions to model bisimulation in the presence of name-generation and message-passing; we want renamings to model function application on names.
The underlying technical demands overlap and are similar.

Fiore and Turi's framework includes the possibility of arbitrary substitutions for atoms (not just what we call renamings: substitution of atoms for atoms).
This was apparent in \cite{fiore:semnvp} and is developed greatly in subsequent work by Fiore and Hur \cite{fiore:secoel}.
We hypothesise that from the point of view of PNL, their logic and semantics correspond to PNL enriched with substitution actions like those in \cite{gabbay:pernl,gabbay:capasn}, but this remains to be checked.\footnote{Conversely, Fiore and Hur would view PNL as a restriction of their logic \emph{without} substitution.  The two points of view are consistent with each other, of course, and it is interesting that different authors are converging on similar systems.  
It might be worth mentioning that \emph{deduction modulo} by the first author with Hardin and Kirchner was designed to mediate between these kinds of design decisions while retaining proof-theory \cite{dowek:dedm}.}

Levy and Villaret translated nominal unification problems to higher-order unification problems \cite{levy:nomufh}.
A similar but more detailed analysis, translating solutions and introducing the same notion of capturable atoms as used in the capture typings in this paper, appears in the paper which introduced permissive nominal terms \cite{gabbay:perntu-jv}.
See also a journal version of Levy and Villaret's paper \cite{levy:nomufh-jv}, which expanded on their previous work by eliminating freshness contexts (in a similar spirit to PNL, we feel, though the details are different).
This paper can be viewed as a very considerable extension, refinement, and generalisation of these works: this paper is their grandchild, so to speak, via two other papers \cite{gabbay:pernl,gabbay:unialt}.

The extension of nominal sets to nominal renaming sets is free.
This is touched on in Lemma~\ref{lemm.non-iso} when we note that $[a\ssm b]\bigact(a,b)$ and $\id\bigact(b,b)$ are distinct elements in $\Ren{\mathbb A_\nu\times\mathbb A_\nu}$ in \theory{PmsRen}; this happens because the free construction `suspends the non-injectivity' of $[a\ssm b]$ on $(a,b)$.
This is as things should be, in order to obtain completeness.
The second author has considered a more radical non-free construction \cite{gabbay:stusun}, which has the effect of extending atoms-abstraction to a total function and in which $[a\ssm b]\act x$ really does identify $a$ with $b$ in $x$ in a suitable sense.

As we have emphasised, we translate a fragment of PNL to HOL.
In \cite{gabbay:pernl} we considered full PNL with \emph{equivariance}, which corresponds to strengthening the axiom rule \rulefont{Ax^{\nopi}} in Figure~\ref{rSeq} from 
$\begin{prooftree}
\justifies
\Phi,\,\phi\nopicent \phi,\,\Psi
\end{prooftree}
$
to
$
\begin{prooftree}
\justifies
\Phi,\,\phi\cent \pi\act\phi,\,\Psi 
\end{prooftree}
$
as illustrated in Figure~\ref{Seq}.
This internalises the equivariance assumed in Definition~\ref{defn.interpret.I} and allows us to derive e.g. $\tf P(a) \cent \tf P(b)$.

In the journal version \cite{gabbay:pernl-jv} of \cite{gabbay:pernl} we strengthen PNL further by allowing a \emph{shift}-permutation.
This is a non-finitely-supported bijection on $\mathbb A$ similar to a \emph{de Bruijn shift function} $\uparrow$ \cite[Subsection~2.2]{abadi:exps}. 
Its effect in this paper is to make all permission sets isomorphic up to bijection (e.g. $\atomsdown\cup\{a\}=\pi\act\atomsdown$ for some $\pi$, where $a\not\in\atomsdown$) and this deals with a subtle restriction in the power of universal quantification discussed for instance in \cite[Example~2.29]{gabbay:pernl}.
Briefly, \emph{shift} lets us derive $\Forall{X}\tf P(X)\cent \tf P(Z)$ where $\pmss(X)=\atomsdown$ and $\pmss(Z)=\atomsdown\cup\{a\}$ where $a\not\in\atomsdown$, which was not possible in the PNL from \cite{gabbay:pernl}. 

Neither equivariance nor \emph{shift} are translated to HOL in this paper; more on this in the next subsection.

\subsection{Future work}

We have translated Permissive-Nominal Logic
to Higher-Order Logic. The translation is not surjective: all variables are at most
second-order; all constants are at most third-order; higher types
are not used; and in fact all terms in the image of the translation are \emph{$\lambda$-patterns} \cite{miller:logpll}.
In addition, the translation is not total: we have dropped equivariance.

This is with good reason.
We have not been able to simulate equivariance in HOL---not without `cheating' by simply adding it (and causing a blowup in the size of propositions). 
We have not proved this impossible, but we hypothesise that it cannot be done.
We further hypothesise (based on preliminary calculations not included in this paper) that HOL augmented with the $\nabla$-quantifier from \cite{Miller:protgj} would allow us to express equivariance. 

It is not currently clear how to extend HOL with a \emph{shift}-like
permutation as discussed in \cite{gabbay:pernl-jv,gabbay:nomtnl}.
This seems reasonable since $\f{shift}$ would correspond to an infinite renaming. 

Some natural theories in PNL might correspond to other fragments of HOL.
Notably, it is not known what relation exists between HOL and PNL with the theory of atoms-substitution from \cite{gabbay:capasn-jv,gabbay:pernl-jv}.

\hyphenation{Mathe-ma-ti-sche}

{\small
\ \\
\noindent \emph{Acknowledgements:}\quad
 Thanks to the careful attention of two anonymous referees.
We acknowledge the support of DIGITEO in Paris, of grant RYC-2006-002131 at the Polytechnic University of Madrid, and of the Leverhulme Trust in the UK.}

\appendix

\section{Soundness and completeness of restricted PNL with respect to non-equivariant models}
\label{sect.completeness}

\subsection{Validity and soundness} 

\begin{defn}[Validity]
\label{defn.pnl.ment}
Suppose $\mathcal I$ is a non-equivariant interpretation of a signature $\mathcal S$ (Definition~\ref{defn.interpret.I}).
Call the proposition $\phi$ \deffont{valid} in ${\mathcal I}$ when 
$\denot{\mathcal I}{\varsigma}{\phi} = 1$ for all $\varsigma$. 

Call the sequent 
$\phi_1, ..., \phi_n \cent \psi_1, ..., \psi_p$ 
\deffont{valid} 
in ${\mathcal I}$ when 
$(\phi_1 \wedge ... \wedge \phi_n) \Rightarrow (\psi_1 \vee ... \vee \psi_p)$ is valid.

If this is true for all non-equivariant ${\mathcal I}$ then write $\phi_1,\dots,\phi_n\nopiment\psi_1,\dots,\psi_p$. 
If this is true for all equivariant ${\mathcal I}$ then write $\phi_1,\dots,\phi_n\ment\psi_1,\dots,\psi_p$. 
\end{defn}

\begin{thrm}[Soundness]
\label{thrm.pnl.soundness}
\begin{enumerate*}
\item
If $\Phi\nopicent\Psi$ is derivable then $\Phi\nopiment\Psi$.
\item
If $\Phi\cent\Psi$ is derivable then $\Phi\ment\Psi$.
\end{enumerate*}
\end{thrm}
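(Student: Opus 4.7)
The plan is to prove both parts by routine induction on the derivation of the sequent, using the clauses of Definition~\ref{defn.truth} together with the two `substitution' lemmas already in hand. The common rules \rulefont{\bot L}, \rulefont{\limp L}, \rulefont{\limp R} present no difficulty: their soundness is immediate from the minimum/maximum clauses for $\denot{\mathcal I}{\varsigma}{\phi\limp\psi}$ and $\denot{\mathcal I}{\varsigma}{\bot}$, exactly as in ordinary first-order logic. Restricted and full PNL share identical rules for the quantifier, so both $\forall$ cases are handled in the same way: for \rulefont{\forall L} we use Lemma~\ref{lemm.denotsub} to equate $\denot{\mathcal I}{\varsigma}{\phi[X\ssm r]}$ with $\denot{\mathcal I}{\varsigma[X\ssm\denot{\mathcal I}{\varsigma}{r}]}{\phi}$, which the interpretation of $\Forall{X}\phi$ bounds below; for \rulefont{\forall R} we use Lemma~\ref{lemm.fV.denot} together with the side-condition $X\not\in\fU(\Phi,\Psi)$ to conclude that updating $\varsigma$ at $X$ does not affect the denotation of $\Phi$ or $\Psi$.

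The only point where the two parts diverge is the axiom rule. For part~1 the rule is $\rulefont{Ax^\nopi}$ and soundness is immediate: if $\denot{\mathcal I}{\varsigma}{\phi}=1$ then $\denot{\mathcal I}{\varsigma}{\phi}=1$. For part~2 we need $\denot{\mathcal I}{\varsigma}{\phi}\leq\denot{\mathcal I}{\varsigma}{\pi\act\phi}$ for every fully equivariant $\mathcal I$ and every $\varsigma$. The cleanest way to establish this is via an auxiliary lemma asserting
$$\denot{\mathcal I}{\varsigma}{\pi\act\phi}=\denot{\mathcal I}{\varsigma}{\phi}$$
for fully equivariant $\mathcal I$, proved by induction on $\phi$: the $\tf P(r)$ case combines Lemma~\ref{lemm.pi.r.model} (which gives $\denot{\mathcal I}{\varsigma}{\pi\act r}=\pi\act\denot{\mathcal I}{\varsigma}{r}$) with equivariance of $\tf P^\iden$ and the fact that $\pi$ acts trivially on $\mathbb B$, so $\tf P^\iden(\pi\act x)=\pi\act \tf P^\iden(x)=\tf P^\iden(x)$; the cases for $\bot$ and $\limp$ are trivial by Definition~\ref{defn.permutation.action}; and the $\Forall{X}$ case follows from the inductive hypothesis together with Definition~\ref{defn.truth} (noting that $\pi$ does not touch the binding occurrence of $X$, so the set of valuations being minimised over is unchanged).

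The main obstacle is bookkeeping rather than mathematics: one has to be careful that the `equivariance of propositions' lemma is exactly the right strengthening for the axiom rule of full PNL, and that it is \emph{not} available for restricted PNL (which is why restricted PNL has non-equivariant models). Once the lemma is stated correctly the induction on the derivation is essentially mechanical, and both parts of Theorem~\ref{thrm.pnl.soundness} follow by a single pass through the rules of Figures~\ref{Seq} and~\ref{rSeq}. As a sanity check one notes that the stronger axiom rule \rulefont{Ax} is genuinely unsound on non-equivariant $\mathcal I$ (cf.\ Lemma~\ref{lemm.it.has.to.be}), so the restriction to fully equivariant interpretations in part~2 is essential and not cosmetic.
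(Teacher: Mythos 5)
Your proposal is correct and follows essentially the same route as the paper: induction on derivations, with Lemma~\ref{lemm.denotsub} for \rulefont{\forall L}, Lemma~\ref{lemm.fV.denot} for \rulefont{\forall R}, and for part~2 the auxiliary fact that $\denot{\mathcal I}{\varsigma}{\pi\act\phi}=\denot{\mathcal I}{\varsigma}{\phi}$ in fully equivariant interpretations, which is exactly the extra step the paper invokes to validate \rulefont{Ax}. Your write-up merely spells out the proof of that auxiliary lemma (via Lemma~\ref{lemm.pi.r.model}, equivariance of $\tf P^\iden$, and the trivial action on $\mathbb B$) in more detail than the paper does.
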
 
\begin{proof}
Fix some interpretation $\mathcal I$.
We work by induction on derivations.
The case of \rulefont{\forall L} uses Lemma~\ref{lemm.denotsub}.
The case of \rulefont{\forall R} uses Lemma~\ref{lemm.fV.denot}.
Other rules are routine by unpacking definitions.

If the interpretation $\mathcal I$ is fully equivariant then it can further be proved that $\denot{\mathcal I}{\varsigma}{\phi}=\denot{\mathcal I}{\varsigma}{\pi\act\phi}$ always, so that \rulefont{Ax} is valid.  
If $\mathcal I$ is not fully equivariant, then just \rulefont{Ax^{\nopi}} is valid.
\end{proof}

\begin{thrm}
\label{thrm.rPNL.cut}
\rulefont{Cut} is admissible in both full and restricted PNL.
\end{thrm}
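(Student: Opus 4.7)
The plan is to prove cut admissibility by adapting the standard Gentzen-style cut-elimination argument used for full PNL in \cite{gabbay:pernl-jv,gabbay:nomtnl} to the present setting, treating both systems uniformly and highlighting only the places where the axiom rule matters. I would proceed by a double induction: the outer induction on the structure (size) of the cut formula $\phi$, and the inner induction on the sum of the heights of the two premise derivations of $\Phi \cent \phi,\Psi$ and $\Phi,\phi \cent \Psi$ (respectively $\nopicent$). As usual, the proof splits into an \emph{axiom case}, \emph{commutative cases} (where $\phi$ is not principal in one of the premises, and the cut is pushed upward through the last rule), and \emph{principal cases} (where $\phi$ is principal in both premises, and the cut is reduced to one or two cuts on strict subformulas of $\phi$).

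Before starting, I would establish two routine structural lemmas needed throughout. First, \emph{weakening} and \emph{contraction} are height-preserving admissible in both systems; this is immediate by induction on derivations, using the set-based formulation of contexts in Definition~\ref{defn.seq}. Second, the derivability relations are closed under renaming of unknowns: if $\Phi \nopicent \Psi$ is derivable with height $h$ and $\Pi$ is a level-2 permutation, then $\Pi\act\Phi \nopicent \Pi\act\Psi$ is derivable with height at most $h$, and similarly for $\cent$. This is needed to handle the eigenvariable condition in \rulefont{\forall R} when the cut formula is $\Forall{X}\phi$. For full PNL one additionally needs closure under \emph{atom-permutation}: if $\Phi\cent\Psi$ then $\pi\act\Phi\cent\pi\act\Psi$, again height-preservingly; this is where \rulefont{Ax} earns its permutation, and the proof is by induction using that all other rules commute with $\pi$.

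For the axiom case of restricted PNL the argument is the classical one: if one premise is $\Phi,\phi \nopicent \phi,\Psi$ by \rulefont{Ax^{\nopi}}, the cut reduces to the other premise after weakening. For full PNL the axiom rule \rulefont{Ax} concludes $\Phi,\phi \cent \pi\act\phi,\Psi$, so cutting on $\pi\act\phi$ against a derivation of $\Phi,\pi\act\phi \cent \Psi$ requires first applying $\pi$ to that derivation (using the atom-permutation closure above) to turn it into a derivation of $\pi\act\Phi,\pi\act(\pi\act\phi)\cent\pi\act\Psi$, and then reconciling contexts by weakening/contraction. The commutative cases are mechanical, with the single subtlety that in the \rulefont{\forall R} case one may first have to rename the bound unknown using the renaming-closure lemma to avoid clash with the cut formula's free unknowns. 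The principal cases are the standard reductions: $\phi_1 \limp \phi_2$ reduces to two cuts on the smaller $\phi_1$ and $\phi_2$, and $\Forall{X}\phi$ reduces to a single cut on $\phi[X\ssm r]$, which is of smaller size than $\Forall{X}\phi$; the substitution lemma for PNL terms ensures this is well-formed given the side-condition $\fa(r)\subseteq \pmss(X)$ in \rulefont{\forall L}.

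The main obstacle, and the only place where full and restricted PNL diverge, is precisely the axiom case of full PNL and the associated atom-permutation closure lemma. Everything else is formally identical in the two systems because \rulefont{\limp L}, \rulefont{\limp R}, \rulefont{\forall L}, \rulefont{\forall R} and \rulefont{\bot L} are the same. Since the corresponding proof for full PNL with equivariant models is already carried out in \cite{gabbay:pernl-jv}, the argument for restricted PNL is strictly simpler (no atom-permutation closure needed, just the straightforward axiom case), and can be obtained by deleting the $\pi$'s from the existing proof. I would therefore present the restricted case in detail and remark that the full case is exactly the proof of \cite[Theorem on cut-elimination]{gabbay:pernl-jv}.
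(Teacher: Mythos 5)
Your overall route coincides with the paper's: the paper proves this theorem by citing the published cut-elimination for full PNL \cite{gabbay:pernl-jv,gabbay:nomtnl} and observing that none of the cut-eliminating transformations add a $\pi$ to an axiom rule unless it was already there, so the same reductions serve for restricted PNL. Your restricted-PNL sketch is fine, modulo the standard point that the outer induction measure in the $\Forall{X}\phi$ principal case must be the number of connectives and quantifiers (which is invariant under $[X\ssm r]$) rather than literal syntactic size (which is not, since $r$ may be large).

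The one step that would fail as written is your treatment of the full-PNL axiom case. If the left premise of the cut is $\Phi,\phi\cent\pi\act\phi,\Psi$ by \rulefont{Ax} with cut formula $\pi\act\phi$, then the right premise is $\Phi,\phi,\pi\act\phi\cent\Psi$ and the goal is $\Phi,\phi\cent\Psi$. Applying the atom-permutation closure lemma to the right premise permutes the \emph{entire} sequent, yielding $\pi\act\Phi,\,\pi\act\phi,\,(\pi\circ\pi)\act\phi\cent\pi\act\Psi$ (or, using $\pi^{\mone}$, the sequent $\pi^{\mone}\act\Phi,\,\pi^{\mone}\act\phi,\,\phi\cent\pi^{\mone}\act\Psi$); in neither case can weakening or contraction recover the original context $\Phi,\phi\cent\Psi$, because those rules add or merge formulas and cannot undo a permutation applied to $\Phi$ and $\Psi$. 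What is actually needed is a ``permuted contraction'' step --- from $\Phi,\phi,\pi\act\phi\cent\Psi$ infer $\Phi,\phi\cent\Psi$ --- which requires its own induction, tracing the ancestry of $\pi\act\phi$ and absorbing $\pi$ into the permutations at the axiom leaves (exactly what \rulefont{Ax} permits); that is the content of the cited proof. Since you ultimately defer the full-PNL case to \cite{gabbay:pernl-jv} anyway, as the paper itself does, the theorem remains supported, but the explicit reduction you offer for that case should be deleted or replaced by the correct mechanism.
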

\begin{proof}
The proof for full PNL is in \cite[Section~7]{gabbay:pernl-jv} or \cite[Subsection~11.2]{gabbay:nomtnl}; the derivation rules are almost exactly those of first-order logic, and so is the proof of cut-elimination.
The argument for restricted PNL is identical; we note that none of the cut-eliminating transformations add $\pi$ to axiom rules unless they are already there, so the same reductions on derivations work also for the restricted system.
\end{proof}

\subsection{Completeness}

In \cite{gabbay:pernl-jv,gabbay:nomtnl} we prove completeness of full PNL with respect to equivariant models, by means of a Herbrand construction (a model built out of syntax). 
We can leverage this result to concisely prove completeness of restricted PNL with respect to non-equivariant models, without having to repeat the model constructions.

For this subsection, fix the following data:
\begin{itemize*}
\item
A signature $\mathcal S=(\mathcal A,\mathcal B,\mathcal F,\mathcal P,\f{ar},\mathcal X)$.
\item
A formula $\phi$ such that $\not\nopicent\phi$.
\end{itemize*}

\begin{defn}
\label{defn.S.pi}
Define a new signature $\mathcal S^\pi$ as follows:
\begin{itemize*}
\item
$\mathcal A^\pi=\mathcal A$ and $\mathcal B^\pi=\mathcal B\cup\{\tau^\pi\}$ (so we have the same atom sorts and the same base sorts, plus one extra base sort $\tau^\pi$).
\item
$\mathcal F^\pi=\mathcal F$ and $\mathcal P^\pi=\mathcal P$ (so we have the same term- and proposition-formers).
\item
If $\tf f\in\mathcal F$ then $\f{ar}^\pi(\tf f)=\f{ar}(\tf f)$ (the term-formers are identical).
\item
If $\tf P\in\mathcal P$ and $\f{ar}(\tf P)=\alpha$ then $\f{ar}^\pi(\tf P)=(\tau^\pi,\alpha)$ (so proposition-formers take one extra argument of sort $\tau^\pi$). 
\item
$\mathcal X^\pi=\mathcal X\cup\{Z_{i,S}^\pi\mid i\in\mathbb N,\ S\text{ a permission set}\}$ where $\sort(Z_{i,S}^\pi)=\tau^\pi$ (so we add unknowns of sort $\tau^\pi$).
\end{itemize*}
Now fix some particular unknown $Z^\pi$ with $\sort(Z^\pi)=\tau^\pi$ and such that $\fa(\phi)\subseteq\pmss(Z^\pi)$.
\end{defn}

\begin{defn}
Define a translation $\text{-}^\pi$ from PNL propositions in the signature $\mathcal S$ to PNL propositions in the signature $\mathcal S^\pi$ by mapping $\tf P(r)$ to $\tf P(Z^\pi,r)$ and extending this in the natural way to all predicates.
\end{defn}

Our proof depends on the following technical lemma about restricted PNL:
\begin{lemm}
\label{lemm.fa.restrict}
If $\Phi\cent\Psi$ is derivable in full PNL then there exists a derivation ${\mathfrak B}$ such that every sequent $\Phi'\cent\Psi'$ in ${\mathfrak B}$ satisfies $\fa(\Phi')\cup\fa(\Psi')\subseteq\fa(\Phi)\cup\fa(\Psi)$.
\end{lemm}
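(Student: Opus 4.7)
The plan is to invoke cut-elimination (Theorem~\ref{thrm.rPNL.cut}) to assume the derivation is cut-free, and then induct on its height. Throughout, let $A=\fa(\Phi)\cup\fa(\Psi)$; the invariant to be maintained is that every sequent of the constructed derivation has atom set contained in $A$.

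For \rulefont{Ax} the leaf sequent $\Phi,\phi\cent\pi\act\phi,\Psi$ has atom set $\fa(\Phi)\cup\fa(\phi)\cup\pi\act\fa(\phi)\cup\fa(\Psi)$, which equals $A$ by construction. For \rulefont{\bot L}, \rulefont{\limp L}, \rulefont{\limp R}, and \rulefont{\forall R}, the premise atoms already inject into the conclusion atoms, so applying the inductive hypothesis to each sub-derivation and reassembling via the same rule yields the required derivation.

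The delicate case is \rulefont{\forall L}, whose premise $\Phi,\phi[X\ssm r]\cent\Psi$ may have atoms in $\fa(r)\setminus A$. When $X\notin\fU(\phi)$ the substitution is vacuous and nothing is needed. When $X$ occurs free in $\phi$, Definition~\ref{defn.fa} gives $A\supseteq\fa(\phi)\supseteq\pi'\act\pmss(X)$ for the finite permutation $\pi'$ decorating some free occurrence of $X$; since $\pi'$ is finite, $\pmss(X)\setminus A$ is finite and $\atomsdown\cap A\cap\pmss(X)$ is cofinite in $\atomsdown$. The strategy uses the equivariance of full PNL derivations (applying a finite permutation to every sequent preserves derivability and rule applications) to replace the witness $r$ by some $\pi\act r$ with $\fa(\pi\act r)\subseteq A$, and then apply the inductive hypothesis to the permuted sub-derivation.

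The main obstacle is that a permutation fixing $A$ pointwise cannot literally send atoms of $\mathbb A\setminus A$ into $A$; it acts on these two sets separately. The resolution is to work locally at each axiom leaf of the sub-derivation: an atom $c\in\fa(r)\setminus A$ surviving in the cut-free sub-derivation must appear at some \rulefont{Ax} leaf inside $\phi_{\tf{ax}}$ and $\pi_{\tf{ax}}\act\phi_{\tf{ax}}$, and the axiom's permutation parameter $\pi_{\tf{ax}}$ can be renegotiated to swap $c$ for an atom already in $A\cap\pmss(X)$ (of which infinitely many are available, as shown above). Propagating these local axiom-rewrites back down the sub-derivation using equivariance of the inner rules yields a rewritten sub-derivation that implicitly replaces $r$ by a more parsimonious $r'$ with $\fa(r')\subseteq A$. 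The detailed bookkeeping of this propagation, carried out by a secondary induction on $|\fa(r)\setminus A|$, forms the main technical content.
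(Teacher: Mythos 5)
Your reduction to a cut-free derivation and your treatment of \rulefont{Ax}, \rulefont{\bot L}, \rulefont{{\limp}L}, \rulefont{{\limp}R} and \rulefont{{\forall}R} match the paper's argument. The problem is your \rulefont{{\forall}L} case, which is where you place ``the main technical content'' --- and that content rests on a false premise and is never actually carried out. The premise of \rulefont{{\forall}L} cannot acquire atoms outside $A$: by Definition~\ref{defn.fa} we have $\fa(\pi\act X)=\pi\act\pmss(X)$, the side-condition of the rule demands $\fa(r)\subseteq\pmss(X)$, and by Definition~\ref{defn.subst.action} each occurrence $\pi\act X$ in $\phi$ becomes $\pi\act r$ under $[X\ssm r]$, with $\fa(\pi\act r)=\pi\act\fa(r)\subseteq\pi\act\pmss(X)=\fa(\pi\act X)$ (using Lemma~\ref{lemm.fa.pi.r}). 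Hence $\fa(\phi[X\ssm r])\subseteq\fa(\phi)=\fa(\Forall{X}\phi)$, and the \rulefont{{\forall}L} case is exactly as immediate as the others; this is precisely the observation the paper makes (``\rulefont{\forall L} does not increase the free atoms moving from below the line to above the line''). You half-see this when you note $\fa(\phi)\supseteq\pi'\act\pmss(X)$, but you then treat $\fa(r)$ as if it entered the premise unpermuted, which it does not.

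Because the problem you are solving does not arise, the machinery you build for it is unnecessary; but it is also not sound as described, so it could not be salvaged as an alternative route. You assert that an atom $c\in\fa(r)\setminus A$ ``must appear at some \rulefont{Ax} leaf'' (atoms in a sequent need not survive to any axiom --- a branch may close by \rulefont{\bot L}, or the formula carrying $c$ may simply be a side formula at every leaf), and ``renegotiating'' the permutation parameter $\pi_{\tf{ax}}$ at a leaf changes the succedent formula $\pi_{\tf{ax}}\act\phi_{\tf{ax}}$ produced there, so the propagation downward would have to reconstruct a derivation of a \emph{different} end-sequent unless the change is compensated everywhere below --- and you give no argument that this compensation terminates in a single coherent witness $r'$ for the one \rulefont{{\forall}L} instance in question. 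Deferring this to ``detailed bookkeeping'' leaves the proof with a genuine gap. The fix is simply to delete the entire detour and replace it with the one-line inclusion $\fa(\phi[X\ssm r])\subseteq\fa(\phi)$ derived above.
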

\begin{proof}
By cut-elimination of restricted PNL (Theorem~\ref{thrm.rPNL.cut}) if a derivation of $\Phi\cent\Psi$ exists then a cut-free derivation exists.
We now examine the derivation rules in Figure~\ref{rSeq} and the definition of free atoms in Definition~\ref{defn.fa} and note that the rules \rulefont{{\limp}L}, \rulefont{{\limp}R}, \rulefont{\forall L}, and \rulefont{\forall R} do not increase the free atoms moving from below the line to above the line.\footnote{\rulefont{\forall R} and \rulefont{\forall L} can increase the free \emph{unknowns}---but not the free atoms.} 
\end{proof}

\begin{lemm}
\label{lemm.pi.r.fa}
$\pi\act r=\pi'\act r$ if and only if $\pi(a)=\pi'(a)$ for every $a\in\fa(r)$, and similarly for $\phi$.
\end{lemm}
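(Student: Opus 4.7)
The plan is to prove this by structural induction on $r$ (and, separately, on $\phi$), treating both implications in parallel. A convenient reformulation is to set $\pi'' = \pi'^{\text{-}1}\circ\pi$ and observe that $\pi\act r = \pi'\act r$ iff $\pi''\act r = r$, while $\pi(a)=\pi'(a)$ for all $a\in\fa(r)$ iff $\pi''$ fixes every atom in $\fa(r)$. So it suffices to prove the special case: $\pi\act r = r$ iff $\pi$ restricted to $\fa(r)$ is the identity. (Strictly we do not need this reduction; it simply halves the bookkeeping.)

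The base and congruence cases are routine. For $r\equiv a$ the statement is immediate from $\pi\act a = \pi(a)$ and $\fa(a)=\{a\}$. For tuples and term-former applications the result follows directly from the inductive hypothesis together with Definition~\ref{defn.fa}, which gives $\fa((r_1,\dots,r_n))=\bigcup_i\fa(r_i)$ and $\fa(\tf f(r))=\fa(r)$. The case $r\equiv [a]s$ is the only mildly delicate one: by Definition~\ref{defn.permutation.action} we have $\pi\act[a]s = [\pi(a)]\pi\act s$, and by part~3 of the syntactic cognate of Lemma~\ref{lemm.supp.abstraction} (which follows from Definition~\ref{defn.aeq}), $[\pi(a)]\pi\act s \aeq [a]s$ iff either $\pi(a)=a$ and $\pi\act s\aeq s$, or $\pi(a)\not\in\fa(s)$ and $(\pi(a)\ a)\act \pi\act s \aeq s$. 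Either way, using the inductive hypothesis and $\fa([a]s)=\fa(s)\setminus\{a\}$, one checks that this happens exactly when $\pi$ fixes $\fa(s)\setminus\{a\}$ pointwise, which is the required condition.

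The most characteristic case is the moderated unknown $r\equiv \pi_0\act X$. Here $\pi\act(\pi_0\act X) \equiv (\pi\circ\pi_0)\act X$ by Definition~\ref{defn.permutation.action}, and the syntax-directed characterisation of $\alpha$-equivalence on moderated unknowns (recorded in the remark after Definition~\ref{defn.aeq}, derived in \cite[Theorem~2.31]{gabbay:capasn-jv}) tells us that $(\pi\circ\pi_0)\act X \aeq \pi_0\act X$ iff $(\pi\circ\pi_0)(a)=\pi_0(a)$ for every $a\in\pmss(X)$, iff $\pi$ fixes every element of $\pi_0\act\pmss(X)=\fa(\pi_0\act X)$ (using Lemma~\ref{lemm.fa.pi.r}). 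This is precisely what is required.

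The induction on propositions $\phi$ is entirely analogous; the only new case is $\Forall{X}\phi$, which is immediate since $\pi\act(\Forall{X}\phi)\equiv \Forall{X}\pi\act\phi$ and $\fa(\Forall{X}\phi)=\fa(\phi)$, and the cases $\bot$, $\phi\limp\psi$ and $\tf P(r)$ are trivial congruence steps. The main obstacle, such as it is, lies in the abstraction case, where the $\alpha$-equivalence of Definition~\ref{defn.aeq} can mask the action of $\pi$ on a bound atom; invoking the syntax-directed presentation of $\aeq$ makes this transparent rather than combinatorial.
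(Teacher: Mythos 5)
The paper does not actually prove this lemma: it simply cites \cite[Lemma~3.2.9]{gabbay:nomtnl} and \cite[Lemma~4.15]{gabbay:perntu-jv}, so there is no internal proof to compare against. Your induction is a correct self-contained argument and is essentially the standard one from those references. The reduction to the single-permutation case via $\pi''=\pi'^{\mone}\circ\pi$ is sound (the action is a group action on $\alpha$-classes), the base, tuple, term-former and moderated-unknown cases are right, and your treatment of $[a]s$ checks out: when $\pi(a)\neq a$, injectivity of $\pi$ forces $\pi(a)\not\in\fa(s)$ whenever $\pi$ fixes $\fa(s)\setminus\{a\}$, and the swapping condition then reduces by the inductive hypothesis to exactly that freshness-plus-fixing condition. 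The one thing worth flagging is that every interesting case of your argument leans on \emph{inversion} for $\aeq$ --- i.e.\ the syntax-directed characterisation ($[a]r\aeq[b]s$ iff $b\not\in\fa(r)$ and $(b\ a)\act r\aeq s$; $\pi\act X\aeq\pi'\act X$ iff $\pi,\pi'$ agree on $\pmss(X)$; and $\Forall{X}\psi\aeq\Forall{X}\psi'$ iff $\psi\aeq\psi'$) --- which is \emph{not} how Definition~\ref{defn.aeq} presents $\aeq$ (as a least congruence closed under swapping rules). You do cite the equivalence of the two presentations (\cite[Theorem~2.31]{gabbay:capasn-jv}), which the paper itself endorses in the remark after Definition~\ref{defn.aeq}, so this is a legitimate dependency rather than a gap; but a reader should understand that the real work of the lemma is concentrated in that equivalence, not in the induction itself.
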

See \cite[Lemma~3.2.9]{gabbay:nomtnl} or \cite[Lemma~4.15]{gabbay:perntu-jv}.

\begin{prop}
\label{prop.completeness.lemma}
If $\Phi^\pi\cent\Psi^\pi$ in PNL and $\fa(\Phi)\cup\fa(\Psi)\subseteq \pmss(Z^\pi)$ then $\Phi\nopicent\Psi$. 
\end{prop}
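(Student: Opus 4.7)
Plan:

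The plan is to turn a full PNL derivation of $\Phi^\pi \cent \Psi^\pi$ into a restricted PNL derivation of $\Phi \nopicent \Psi$ by syntactically erasing the decorative unknown $Z^\pi$. First I would apply cut-elimination (Theorem~\ref{thrm.rPNL.cut}) to obtain a cut-free derivation $\mathfrak{B}$ of $\Phi^\pi \cent \Psi^\pi$ in full PNL. By Lemma~\ref{lemm.fa.restrict}, every sequent $\Phi' \cent \Psi'$ in $\mathfrak{B}$ satisfies $\fa(\Phi') \cup \fa(\Psi') \subseteq \fa(\Phi^\pi) \cup \fa(\Psi^\pi)$. Unfolding Definition~\ref{defn.fa}, $\fa(\Phi^\pi) = \fa(\Phi) \cup \pmss(Z^\pi)$ and similarly for $\Psi^\pi$, so by the hypothesis all free atoms appearing in $\mathfrak{B}$ lie within $\pmss(Z^\pi)$.

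The crucial step is a subformula-style invariant: in every sequent of $\mathfrak{B}$, every atomic subformula of every formula has first argument syntactically equal to $\id\act Z^\pi$. I would prove this by induction from the endsequent (which satisfies it by construction of $(\cdot)^\pi$) toward the leaves. The decomposition rules \rulefont{{\limp}L}, \rulefont{{\limp}R}, \rulefont{\forall R}, and \rulefont{\bot L} propagate the invariant to the premises. For \rulefont{\forall L}, the bound unknown $X$ is necessarily of a sort from $\mathcal A \cup \mathcal B$ (since the translation $(\cdot)^\pi$ does not introduce any $\Forall$-quantifier of sort $\tau^\pi$ in $\Phi^\pi \cup \Psi^\pi$), so the substitution $[X \ssm r]$ cannot reach a first-argument position. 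The \rulefont{Ax} case is where the real work happens: the sequent $\Phi', \phi \cent \pi\act\phi, \Psi'$ must satisfy the invariant, and since the atomic subformulas of $\pi\act\phi$ have first argument $\pi\act Z^\pi$, Lemma~\ref{lemm.pi.r.fa} forces $\pi$ to be the identity on $\fa(\id\act Z^\pi) = \pmss(Z^\pi)$.

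With the invariant in hand, I would define a homomorphic erasure $(\cdot)_\bullet$ that maps $\tf P(\id\act Z^\pi, r)$ to $\tf P(r)$ and commutes with the logical connectives, quantifiers, atoms-abstraction, term-formers, atom-permutation, and with substitution $[X \ssm r]$ for $X \neq Z^\pi$. Applying $(\cdot)_\bullet$ sequent-by-sequent to $\mathfrak{B}$, the rules \rulefont{{\limp}L}, \rulefont{{\limp}R}, \rulefont{\forall L}, \rulefont{\forall R}, \rulefont{\bot L} translate routinely to their restricted-PNL counterparts. For each \rulefont{Ax}, the invariant gives $\pi$ identity on $\pmss(Z^\pi) \supseteq \fa(\phi_\bullet)$, and Lemma~\ref{lemm.pi.r.fa} then yields $(\pi\act\phi)_\bullet = \pi\act\phi_\bullet = \phi_\bullet$, so the axiom erases to $\phi_\bullet \cent \phi_\bullet$, an instance of \rulefont{Ax^\nopi}. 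The result is a restricted-PNL derivation of $(\Phi^\pi)_\bullet \nopicent (\Psi^\pi)_\bullet = \Phi \nopicent \Psi$.

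The hard part will be pinning down the subformula invariant rigorously, especially the claim that no \rulefont{\forall L} in $\mathfrak{B}$ ever substitutes for a $\tau^\pi$-sorted unknown. This rests on a cut-free subformula property for PNL and care with $\alpha$-equivalence to prevent $Z^\pi$ from being renamed to a bound position along a branch. Once the invariant is in place, the $Z^\pi$-decoration acts as a rigid ``watermark'' that locally absorbs the equivariance offered by full PNL's \rulefont{Ax}, reducing it to restricted PNL's \rulefont{Ax^\nopi} after erasure.
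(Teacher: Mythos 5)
Your proposal is correct and follows essentially the same route as the paper: cut-elimination, then Lemma~\ref{lemm.fa.restrict} to confine free atoms, then erasure of the $Z^\pi$ decoration, with the only non-trivial check being that each instance of \rulefont{Ax} forces (via Lemma~\ref{lemm.pi.r.fa}, applied twice) the permutation to be the identity on $\pmss(Z^\pi)$ and hence on all free atoms of the formula, so that it erases to an instance of \rulefont{Ax^{\nopi}}. Your explicit subformula invariant is just a more carefully stated version of the paper's remark that $\mathfrak{B}$ cannot instantiate $Z^\pi$.
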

\begin{proof}
Using cut-elimination of full PNL (Theorem~\ref{thrm.rPNL.cut}) assume a cut-free PNL derivation ${\mathfrak B}$ of $\Phi^\pi\cent\Psi^\pi$. 
Because of Lemma~\ref{lemm.fa.restrict}, the condition on free atoms holds of every sequent in ${\mathfrak B}$.
Because of the form of the derivation rules in Figure~\ref{Seq}, ${\mathfrak B}$ cannot instantiate $Z^\pi$.

So we can go through the entire syntax of ${\mathfrak B}$ and delete $Z^\pi$ to obtain a structure that is a candidate for being a derivation in restricted PNL of $\Phi\nopicent\Psi$.

The only non-trivial thing to check is that valid instances of \rulefont{Ax} are transformed to valid instances of \rulefont{Ax^\pi}.
Suppose we deduce $\Phi^\pi,\psi^\pi\cent\pi'\act\psi^\pi,\Psi^\pi$ using \rulefont{Ax}.
By assumption $\pi'\act\psi^\pi={\psi'}^\pi$ for some $\psi'$.
It follows that $\pi'\act Z^\pi=\id\act Z^\pi$ (recall from Subsection~\ref{subsect.aeq} that we quotient by $\alpha$-equivalence) and so by Lemma~\ref{lemm.pi.r.fa} that $\pi'(a)=a$ for all $a\in\pmss(Z^\pi)$.
By assumption $\fa(\Phi)\cup\fa(\Psi)\cup\fa(\psi)\cup\fa(\psi')\subseteq \pmss(Z^\pi)$ and so by Lemma~\ref{lemm.pi.r.fa} $\psi=\psi'$, and we are done. 
\end{proof}

\begin{thrm}
\label{thrm.reduced.pnl.completeness}
If $\Phi\nopiment\Psi$ then $\Phi\nopicent\Psi$.
\end{thrm}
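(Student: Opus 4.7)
I will prove the contrapositive: assume $\Phi\not\nopicent\Psi$ and construct a non-equivariant model $\mathcal I$ refuting the sequent. The plan is to use Proposition~\ref{prop.completeness.lemma} to lift non-derivability to the expanded signature $\mathcal S^\pi$, then invoke completeness of \emph{full} PNL with respect to equivariant models (proved in \cite{gabbay:pernl-jv,gabbay:nomtnl}) to obtain an equivariant model there, and finally project this model back to $\mathcal S$ in a way that absorbs the free parameter $Z^\pi$ into the interpretation of proposition-formers, so breaking equivariance in a controlled way.

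In more detail: suppose $\Phi\not\nopicent\Psi$. Choose $Z^\pi$ as in Definition~\ref{defn.S.pi} with $\pmss(Z^\pi)\supseteq \fa(\Phi)\cup\fa(\Psi)$. Applying the contrapositive of Proposition~\ref{prop.completeness.lemma} gives $\Phi^\pi\not\cent\Psi^\pi$ in full PNL over $\mathcal S^\pi$. By completeness of full PNL for equivariant models there exist a fully equivariant interpretation $\mathcal I^\pi$ of $\mathcal S^\pi$ and a valuation $\varsigma^\pi$ such that $\denot{\mathcal I^\pi}{\varsigma^\pi}{\phi^\pi}=1$ for every $\phi\in\Phi$ and $\denot{\mathcal I^\pi}{\varsigma^\pi}{\psi^\pi}=0$ for every $\psi\in\Psi$.

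I now reduce $(\mathcal I^\pi,\varsigma^\pi)$ to a (not necessarily equivariant) interpretation $\mathcal I$ of $\mathcal S$ and a valuation $\varsigma$ for it. Write $z=\varsigma^\pi(Z^\pi)\in\model{\tau^\pi}$, noting $\supp(z)\subseteq\pmss(Z^\pi)$. Define $\mathcal I$ by copying $\mathcal I^\pi$ on base sorts and term-formers, and for each $\tf P\in\mathcal P$ with $\tf P:\alpha$ set
\[
\tf P^{\mathcal I}(x)=\tf P^{\mathcal I^\pi}(z,x) \qquad (x\in\model{\alpha}).
\]
Then $\tf P^{\mathcal I}$ is supported by $\supp(z)$ (since $\tf P^{\mathcal I^\pi}$ is equivariant), so $\mathcal I$ is a legitimate non-equivariant PNL interpretation in the sense of Definition~\ref{defn.interpret.I}. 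Let $\varsigma$ be the restriction of $\varsigma^\pi$ to $\mathcal X$.

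The core technical step is a mutual induction on $r$ and $\phi$ proving
\[
\denot{\mathcal I}{\varsigma}{r}=\denot{\mathcal I^\pi}{\varsigma^\pi}{r} \quad\text{and}\quad \denot{\mathcal I}{\varsigma}{\phi}=\denot{\mathcal I^\pi}{\varsigma^\pi}{\phi^\pi}.
\]
The term cases are immediate because $\text{-}^\pi$ does not touch terms and $\mathcal I$ and $\mathcal I^\pi$ agree on sorts and term-formers. The case $\tf P(r)$ uses the definition of $\tf P^{\mathcal I}$ directly: $\denot{\mathcal I}{\varsigma}{\tf P(r)}=\tf P^{\mathcal I^\pi}(z,\denot{\mathcal I}{\varsigma}{r})=\denot{\mathcal I^\pi}{\varsigma^\pi}{\tf P(Z^\pi,r)}$. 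Propositional connectives are routine. The only delicate case is $\Forall{X}\phi$, which I expect to be the main obstacle: on the left I range $x$ over $\{x\in\model{\sort(X)}\mid \supp(x)\subseteq \pmss(X)\}$, while on the right the same $X$ now lives in the signature $\mathcal S^\pi$ and must be ranged over the same set; this works because the sorts of $X$ are unchanged by $\text{-}^\pi$ and because $\varsigma[X\ssm x]^\pi=\varsigma^\pi[X\ssm x]$ on unknowns from $\mathcal S$ (in particular, $Z^\pi$ keeps its value $z$ across the quantification). Once the induction is established, the refuting configuration at $\mathcal S^\pi$ transfers to a refuting configuration at $\mathcal S$ in $\mathcal I$, so $\Phi\not\nopiment\Psi$, completing the proof.
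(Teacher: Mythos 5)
Your proposal is correct and follows essentially the same route as the paper: contrapositive, lift to $\mathcal S^\pi$ via Proposition~\ref{prop.completeness.lemma}, invoke completeness of full PNL for equivariant models, and then absorb $\varsigma^\pi(Z^\pi)$ into the interpretation of each proposition-former to obtain a non-equivariant countermodel. The paper dismisses that last conversion as ``routine''; your explicit induction showing the denotations agree is exactly the verification it elides.
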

\begin{proof}
We prove the contrapositive, that if $\Phi\not\nopicent\Psi$ then $\Phi\not\nopiment\Psi$.
Suppose $\Phi\not\nopicent\Psi$.
Using the constructions above we augment to a signature $\mathcal S^\pi$ (Definition~\ref{defn.S.pi}) with some $Z^\pi$ with $\fa(\Phi)\cup\fa(\Psi)\subseteq\pmss(Z^\pi)$.
Thus by Proposition~\ref{prop.completeness.lemma} $\Phi^\pi\not\cent\Psi^\pi$.
 
By completeness of full PNL with respect to equivariant models (\cite[Theorem~3.45]{gabbay:pernl-jv}, \cite[Theorem~9.4.15]{gabbay:nomtnl}) we have that $\Phi^\pi\not\ment\Psi^\pi$.
So there exists an equivariant model $\mathcal I$ and valuation $\varsigma$ to $\mathcal I$ such that $\denot{\mathcal I}{\varsigma}{\Phi}=1$ and $\denot{\mathcal I}{\varsigma}{\Psi}=0$. 
It is now routine to convert $\mathcal I$ into a non-equivariant model of the original signature $\mathcal S$ by taking $\tf P^\hden(x)=\tf P^\iden(\varsigma(Z^\pi),x)$. 
\end{proof}

\end{document}